\DeclareMathOperator{\Cl}{\mathcal{C}\!\ell}
\DeclareMathOperator{\Pauli}{\mathcal{P}}
\DeclareMathOperator{\ord}{\mathrm{ord}}
\newcommand{\CNOT}[1][]{\mathrm{CNOT}_{#1}}
\DeclareMathOperator{\stab}{\mathrm{Stab}}
\DeclareMathOperator{\Aut}{\mathcal{A}\hspace*{-0.5\mu}\mathit{ut}}
\theoremstyle{plain}
\newtheorem{theorem}{Theorem}[section]
\newtheorem{proposition}[theorem]{Proposition}
\newtheorem{corollary}[theorem]{Corollary}
\theoremstyle{definition}
\newtheorem{definition}[theorem]{Definition}
\newtheorem{example}[theorem]{Example}
\theoremstyle{remark}
\newtheorem{remark}[theorem]{Remark}
\definecolor{zlog}{HTML}{C83C4F} 
\definecolor{xlog}{HTML}{6974b3} 
\definecolor{zphys}{HTML}{979A1B} 
\definecolor{xphys}{HTML}{1E9A87} 
\definecolor{clgrey}{HTML}{888888}
\newcommand{\violate}[1]{\overset{\scalebox{0.385}{$\times$}}{#1}}
\newcommand{\trivial}[1]{\overset{\scalebox{0.47}{$\:\bullet$}}{#1}}
\newcommand{\trivialpm}[1]{\overset{\scalebox{0.47}{$\bullet, \pm$}}{#1}}
\newcommand{\lname}[2]{\ell_{#1}^{\, \scriptscriptstyle #2}} 
\newcommand{\state}[1][\psi]{\vert #1\rangle}
\newcommand{\npq}{n} 
\newcommand{\ipq}{j} 
\newcommand{\ipqhat}{\hat{\jmath}} 
\newcommand{\ipqcheck}{\check{\jmath}}  
\newcommand{\itq}{t} 
\newcommand{\itqhat}{\hat{\itq}}  
\newcommand{\icq}{c} 
\newcommand{\icqhat}{\hat{\icq}}  
\newcommand{\ep}{\kappa} 
\newcommand{\itime}{\lambda} 
\newcommand{\itimeend}{\Lambda} 
\newcommand{\idmatrix}{\mathbb{I}}
\newcommand{\reprpauli}[1][\npq]{\mathbb{Z}_{4}\times\mathbb{F}_2^{#1}\times\mathbb{F}_2^{#1}}
\newcommand{\logicalflowp}[1][]{\logicalpaulivec{\ep_{#1}}{\bmlogicalxi[#1]}{\bmlogicalzeta[#1]}}
\newcommand{\physicalflowp}[1][]{\physicalpaulivec{\eta_{#1}}{\bmphysicalxi[#1]}{\bmphysicalzeta[#1]}}
\newcommand{\flowp}[1][]{\paulivec{\ep_{#1}}{\bm{\xi_{#1}}}{\bm{\zeta_{#1}}}}
\newcommand{\logicalpaulivec}[3]{%
    \left({#1} \vert {\color{xlog}{#2}} \vert {\color{zlog}{#3}} \right)%
}
\newcommand{\physicalpaulivec}[3]{%
    \left({#1} \vert {\color{xphys}{#2}} \vert {\color{zphys}{#3}}\right)%
}
\newcommand{\paulivec}[3]{%
    \left({#1} \vert {#2} \vert {#3}\right)%
}
\newcommand*{\logicalxi}[1][]{{\color{xlog}\xi_{#1}}}
\newcommand*{\logicalzeta}[1][]{{\color{zlog}\zeta_{#1}}}
\newcommand*{\bmlogicalxi}[1][]{{\color{xlog}\bm{\xi_{#1}}}}
\newcommand*{\bmlogicalzeta}[1][]{{\color{zlog}\bm{\zeta_{#1}}}}
\newcommand*{\physicalxi}[1][]{{\color{xphys}x_{#1}}}
\newcommand*{\physicalzeta}[1][]{{\color{zphys}z_{#1}}}
\newcommand*{\bmphysicalxi}[1][]{{\color{xphys}\bm{x_{#1}}}}
\newcommand*{\bmphysicalzeta}[1][]{{\color{zphys}\bm{z_{#1}}}}
\newcommand{\logicalpauliop}[3]{%
     {#1}{\color{xlog}\foreach \x in {#2}{\bar{X}_{\x}}}{\color{zlog}\foreach \z in {#3}{\bar{Z}_{\z}}}
}
\newcommand{\pauliop}[3]{
     {#1}{\foreach \x in {#2}{X_{\x}}}{\foreach \z in {#3}{Z_{\z}}}
}
\newcommand{\physicalpauliop}[3]{
     {#1}{\color{xphys}\foreach \x in {#2}{X_{\x}}}{\color{zphys}\foreach \z in {#3}{Z_{\z}}}
}
\newcommand{\blackphysicalpauliop}[3]{
     {#1}{\foreach \x in {#2}{X_{\x}}}{\foreach \z in {#3}{Z_{\z}}}
}
\newcommand{\logicalp}{\bar{P}}
\newcommand{\labelname}[1][]{\ell_{#1}} 
\newcommandx*{\rowcolrestr}[3][1,3]{{}_{\scriptstyle #1}#2_{\scriptstyle #3}}
\newcommand{\msone}{\mspace{0.5mu}}
\newcommand{\mstwo}{\mspace{0.2mu}}
\newcommand{\yourfavouriteleftparanthesis}{\langle}
\newcommand{\yourfavouriterightparanthesis}{\rangle}
\newcommandx{\flowlabel}[3]{%
#1\ifthenelse{\equal{#2}{} \AND \equal{#3}{} \OR \equal{#1}{}}{}{\msone}%
\ifthenelse{\equal{#2}{}}{}{\textcolor{xlog}{\yourfavouriteleftparanthesis\mstwo#2\mstwo \yourfavouriterightparanthesis}}%
\ifthenelse{\equal{#1}{} \AND \equal{#2}{} \OR \equal{#3}{}}{}{\msone}{\textcolor{zlog}{#3}}%
}
\def\thickhline{\noalign{\hrule height.8pt}}
\newcommandx*{\labelbox}[6]{\mbox{\footnotesize$\begin{array}{c}
    \flowlabel{#1}{#2}{#3}\\\thickhline
    \flowlabel{#4}{#5}{#6}
\end{array}$}}
\newcommandx*{\mathbox}[2]{\mbox{\footnotesize$\begin{array}{c}
    #1\\\thickhline
    #2
\end{array}$}}
\newcommandx*{\cnotlabelbox}[2]{\labelbox{}{#1}{}{}{}{#2}}
\newcommandx{\smallflowlabel}[3]{%
{\scriptscriptstyle #1}\ifthenelse{\equal{#2}{} \AND \equal{#3}{} \OR \equal{#1}{}}{}{\scriptscriptstyle\msone}%
\ifthenelse{\equal{#2}{}}{}{\scriptscriptstyle\textcolor{xlog}{\yourfavouriteleftparanthesis\mstwo#2\mstwo \yourfavouriterightparanthesis}}%
\ifthenelse{\equal{#1}{} \AND \equal{#2}{} \OR \equal{#3}{}}{}{\msone}{\textcolor{zlog}{\scriptscriptstyle #3}}%
}
\begin{document}

\author{Berend~Klaver}
\affiliation{Institute for Theoretical Physics, University of Innsbruck, A-6020 Innsbruck, Austria}
\affiliation{Parity Quantum Computing GmbH, A-6020 Innsbruck, Austria}

\author{Katharina~Ludwig}
\affiliation{Parity Quantum Computing Germany GmbH, Schauenburgerstraße 6, 20095 Hamburg, Germany}

\author{Anette~Messinger}
\affiliation{Parity Quantum Computing GmbH, A-6020 Innsbruck, Austria}

\author{Stefan~M.A.~Rombouts}
\affiliation{Parity Quantum Computing Germany GmbH, Schauenburgerstraße 6, 20095 Hamburg, Germany}

\author{Michael~Fellner}
\affiliation{Institute for Theoretical Physics, University of Innsbruck, A-6020 Innsbruck, Austria}
\affiliation{Parity Quantum Computing GmbH, A-6020 Innsbruck, Austria}

\author{Kilian~Ender}
\affiliation{Institute for Theoretical Physics, University of Innsbruck, A-6020 Innsbruck, Austria}

\author{Wolfgang~Lechner}
\affiliation{Institute for Theoretical Physics, University of Innsbruck, A-6020 Innsbruck, Austria}
\affiliation{Parity Quantum Computing GmbH, A-6020 Innsbruck, Austria}
\affiliation{Parity Quantum Computing Germany GmbH, Schauenburgerstraße 6, 20095 Hamburg, Germany}

\begin{abstract}
We propose the Parity Flow formalism, a method for tracking the information flow in quantum circuits.
This method adds labels to quantum circuit diagrams such that the action of Clifford gates can be understood as a recoding of quantum information.
The action of non-Clifford gates in the encoded space can be directly deduced from those labels without backtracking.
An application of flow tracking is to design resource-efficient quantum circuits by changing any present encoding via a simple set of rules.
Finally, the Parity Flow formalism can be used in combination with stabilizer codes to further reduce quantum circuit depth and to reveal additional operations that can be implemented in parallel.

\end{abstract}

\date{\today}

\title{The Parity Flow Formalism:\\Tracking Quantum Information Throughout Computation}

\maketitle
\section{Introduction}
The field of quantum computing is currently maturing from academic research to industrial application.
This shift, driven by the prospect of solving classically intractable computational problems, still needs to bridge the gap between the available quantum resources \cite{Cirac1995, Lloyd1996, Raussendorf2001, Koch2007, Haeffner2008, DiCarlo2009, Watson2018, arute2019, Bluvstein_2023} and the resources required to faithfully execute useful quantum algorithms \cite{Deutsch1992, Shor1997, Bernstein1997, Terhal2015, Babbush2018}.
One approach for bridging this gap is to consider how an initially abstract algorithm is physically implemented on given quantum hardware, which typically involves a decomposition into a set of feasible quantum operations~\cite{Reck1994, DiVincenzo1995, Barenco1995_Elementary}.
The decomposition of a quantum algorithm is not unique, resulting in a wide variety of possible implementations, each with its own resource requirements.
To assist in finding efficient implementations for a given hardware and to gain insight into the workings of a quantum circuit, it is possible to classically track how quantum information passes through a circuit, either on the level of individual qubits~\cite{Kivlichan2018, cowtan2019qubitrouting, ogorman2019, nannicini2021optimalqubitassignmentrouting}, or by tracking components of that information \cite{paykin2023, Kissinger2019, schmitz2023graphoptimizationperspectivelowdepth, Meijer_van_de_Griend_2023}.

In this work, we propose the \textit{Parity Flow formalism}, a method that efficiently tracks the components of the information of each individual qubit by means of generalized parity labels.
In this formalism, the effect of Clifford gates is interpreted as a recoding of information by updating these labels with a simple set of rules.
Importantly, \textit{flow tracking} remains classically efficient when the circuit is extended, and is thus suitable to understand and design quantum circuits and algorithms.
We compare flow tracking with other methods that track Clifford circuits~\cite{AaronsonGottesman2004, gottesman1998heisenberg, de_Beaudrap_2013,Gidney2021stimfaststabilizer, Winderl2023}, categorizing them either as co- or contravariant Clifford tracking, and include a method to simultaneously do both efficiently within this formalism.
As an application, flow tracking allows one to utilize auxiliary qubits for the reduction of the circuit depth and the identification of easily implementable logical multi-qubit operations in stabilizer codes. 
The Parity Flow formalism can be used equivalently as either labels directly written on the quantum circuit or as the \textit{(combined) flow tableau}. In the main text we describe the formalism in its label form and show its usefulness in circuit design, while the appendices give the rigorous definitions and proofs mainly using the tableau form.

\section{Flow tracking of quantum circuits}
We consider a quantum circuit as a dynamic code, where each tracked Clifford gate changes the encoding of the quantum state.
The role of each qubit within the encoding is tracked via labels, such that each label states the logical effect on the encoded quantum state when operating on this qubit with a non-Clifford gate.
To illustrate our approach, we start with the simple case of tracking circuits which consist only of $\mathrm{CNOT}$ gates. 
In this case, the introduced labels have a direct correspondence to the parity maps described in Refs.~\cite{Kissinger2019, Nash2020}. 
\begin{figure*}[ht!]
    \includegraphics[width=0.95\textwidth]{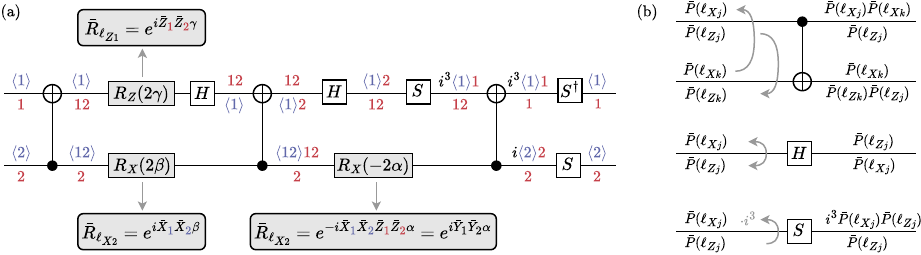}
   \caption{
   (a) Labeled quantum circuit with tracked single qubit Clifford gates (in white) and $\mathrm{CNOT}$ gates, and untracked interleaved Pauli rotations defined as $R_P(\alpha)=\exp(iP\alpha/2)$ (grey gates).
   The labels allow us to read off the logical effect of the physically applied rotations  (see the grey boxes). 
   The circuit demonstrates the logical application of the Heisenberg model~\cite{pengetal2022heisenberg}  $e^{i\alpha\bar{Y}_{1}\bar{Y}_{2}}e^{i\beta\bar{X}_{1}\bar{X}_{2}}e^{i\gamma\bar{Z}_{1}\bar{Z}_{2}}$ as $C_{3}e^{-i\alpha X_{2}}C_{2}e^{i\beta X_{2}}e^{i\gamma Z_{1}}C_{1}$ [cf.\ Eq.~\eqref{eq:alternating_unitary}]. 
   Note that the label transformation of the $S^\dagger$ gate can be obtained from that of three sequential $S$ gates.
   (b) Update rules for general labels $\labelname[X\small{j}]$ and $\labelname[Z\small{j}]$ under the action of a set of generators of the Clifford group, which are defined via their corresponding operators [see Eq.~\eqref{eq:update_2}].
   }
   \label{fig:full_circuit}
\end{figure*}
The effect of a $\mathrm{CNOT}$ gate with control qubit $1$ and target qubit $2$ on Pauli-$Z$ eigenstates $\ket{a_{1}}$ and $\ket{a_{2}}$, respectively, is described by
\begin{equation}\label{eq:CNOT_on_Z_state}
    \CNOT[1, 2] \ket{a_{1}} \ket{a_{2}} = \ket{a_1} \ket{a_{1}\oplus a_{2}},
\end{equation}
where $a_1, a_2 \in \{0,1\}$ and $\oplus$ denotes addition modulo~2. 
After application of the $\mathrm{CNOT}$ gate, no information is lost, but it is encoded with a linear map, i.e., the two physical qubits now hold the logical information of the state $a_1$ and the parity of $a_1$ and $a_2$.
Thus, for any quantum state, applying a $Z_2$-rotation after the $\CNOT[1, 2]$-gate has the same effect as a $Z_1 Z_2$-rotation would have had before this gate.
This effect can be tracked by assigning a $Z$-label to qubit~2, which contains the indices of the $a_i$ appearing in its state [see Eq.~\eqref{eq:CNOT_on_Z_state}]. 
For this example, these indices would be $\{1,2\}$. 
The transformation on Pauli-$X$ eigenstates (${\ket{0}_x=\ket+}$ and ${\ket{1}_x=\ket-}$) is calculated similarly as $\CNOT[1,2] \ket{a_1}_x \ket{a_2}_x = \ket{a_1 \oplus a_2}_x \ket{a_2}_x$ and analogous labels can be defined for the $X$ operators of each qubit (here $\{1,2\}$ as the $X$-label of qubit 1).

When tracking only $\mathrm{CNOT}$ gates, the $Z$-labels correspond directly to the parity labels introduced in Ref.~\cite{klaver2024}, and the $X$-labels are uniquely defined by the $Z$-labels (see Appendix Cor.~\ref{cor:cnot_only}).
In this case, tracking the evolution of Pauli eigenstates is equivalent to tracking the (unitary) transformation of the Pauli operators themselves.
As an example, the transformations ${\text{CNOT}^{\dagger}_{1,{2}}}Z_{2}{\text{CNOT}^{}_{1,{2}}}=\bar{Z}_{1}\bar{Z}_{2}$ and ${\text{CNOT}^{\dagger}_{1,{2}}}X_{1}{\text{CNOT}^{}_{1,{2}}}=\bar{X}_{1}\bar{X}_{2}$ are also described by the $X$- and $Z$-labels introduced above.
Note that throughout this manuscript, operators which are physically applied to an encoded state are denoted without a bar, while their logical effects are denoted with a bar.
By interpreting the labels on an operator level, it is easy to extend the label-tracking to arbitrary Clifford circuits.

In general, a label $\labelname[P]$ assigns a logical Pauli operator $\bar{P}$ to a physical Pauli operator $P$.
When appropriately updating the labels after a Clifford circuit $C$, they describe the logical effect $\bar{P}$ of applying a Pauli operator $P$ after this circuit, such that the label describes the conjugation ${P \mapsto \bar P(\labelname[P]^{\,\scriptscriptstyle C}) = C^\dagger P C}$. 
Note that we usually omit the superscript that indicates the preceding Clifford circuit on the label, except when strictly necessary.
In order to efficiently and iteratively track the complete Pauli group $\mathcal{P}$ on $n$ qubits, one needs to choose an independent set of $2n+1$ generators~\cite{gottesman1998heisenberg}, and track the labels for those.
In our case, we choose to track the labels of all single-qubit $X$- and $Z$-operators (and $i\mathbb{I}$, on which any Clifford transformation acts trivially), and any appearing Pauli operator will be described in terms of these generators ${g_m\in \{i\mathbb{I}, X_j, Z_j \,|\, { j =1,\dots, n}\}}$. 
Other generating sets also form a valid choice for expressing and tracking Pauli operators, for example all single-qubit $X$, $Y$, and $Z$-operators \cite{AaronsonGottesman2004}.
However, our choice of independent generators, also used for example in \cite{dehaenedemoor2003, GossetGrierKerznerSchaeffer2024}, allows us to uniquely write each Pauli operator as a vector such that the vector presentation is isomorphic to the Pauli group (see Appendix Prop.~\ref{prop:group_isomorphism}).
With this choice of generators, a label $\labelname[P]$ consists of a phase $i^\kappa$ with ${\kappa\in\{0,1,2,3\}}$ and two sets of indices for the appearing $X$- and $Z$-operators.
We write a label as $\labelname[P]=\flowlabel{i^\kappa}{\bar{X} \text{-indices}}{\bar{Z} \text{-indices}}=\flowlabel{i^\kappa}{j_1 j_2 ...}{k_1 k_2 ...}$, and the logical Pauli operator $\bar{P}$ corresponding to label $\ell_{P}$ is
\begin{equation}\label{eq:standard_form_pauli_main}
    \bar{P}({\ell_P}) = i^\kappa \, \prod_{\langle j\rangle\in \labelname[P]} \bar{X}_j \prod_{k \in \labelname[P]} \bar{Z}_k.
\end{equation}
Note that when iterating over a label we use angled brackets around the $\bar X_{j}$-indices, and a bare index for the $\bar Z_{k}$-indices, such that the colors only act as a visual aid.
Furthermore, we choose to write all Pauli operators in the standard order of $X$ operators placed left of the $Z$ operators, such that there is a unique label for each Pauli operator. 
As a result, given a Pauli operator $\bar P$ expressed in the tracked generators in standard order, it is straightforward to read off the corresponding unique label.

In a quantum circuit, we write the labels $\labelname[X_j]$ and $\labelname[Z_j]$ above and below the qubit line $j$, respectively. 
Equivalently, one can also write the operators $\bar{P}(\labelname[X_j])$ and $\bar{P}(\labelname[Z_j])$ instead of the label, which can be more practical depending on the use-case.
Since the conjugation captured by a label directly extends to complex exponentials of Pauli operators, the labels allow us to read off the logical effect of an applied physical rotation at a given moment in the circuit (see Fig.~\ref{fig:full_circuit}(a) for an example).
The tracking begins with an identity mapping between the physical and logical operators, such that
${\bar{P}(\labelname[g_m]^{\,{\text{\tiny id}}})=g_{m}}$, or equivalently ${\lname{X_j}{\text{id}}=\langle j\rangle}$ and ${\lname{Z_k}{\text{id}}= k}$.
As mentioned above, these labels change when a Clifford gate is applied and tracked. 
Given labels $\lname{g_{j}}{C_1}$ after an arbitrary Clifford circuit $C_{1}$, applying a second Clifford circuit $C_{2}$ transforms $\lname{g_{j}}{C_{1}}$ into $\lname{g_j}{'}:=\lname{g_j}{C_{2}C_{1}}$. 
The new label is calculated as follows:
We first decompose the corresponding Clifford conjugation of $g_{j}$ by $C_2$ into a product of the chosen generators as
\begin{equation}
\label{eq:update_1}
    C_2^{\dagger} g_{j} C_2=i^{\kappa'}\prod_{g_m\in G} g_m,
\end{equation}
resulting in a phase $i^{\kappa'}$ and a subset $G$ of the generators, which can be computed via Pauli decomposition~\cite{Hantzko_2024}.
We then calculate the Pauli operator described by $\lname{g_j}{'}$ as
\begin{equation}
\label{eq:update_2}
    \bar{P}(\lname{g_j}{'}) = i^{\kappa'}\prod_{g_m\in G} \bar{P}(\lname{g_m}{C_1}),
\end{equation}
from which one obtains $\lname{g_j}{'}$ after reordering $\bar{P}(\lname{g_j}{'})$ into standard order.
Importantly, the update rule for the label of each generator is calculated via Eq.~\eqref{eq:update_1} and depends only on the applied gate $C_{2}$, and not on the preceding labels $\lname{g_j}{C_{1}}$.
Therefore, it is easy to apply the update rules iteratively, as the transformation of labels defined by $C_{2}$ is applied to the labels resulting from the previously tracked circuit $C_1$ (see Appendix Prop.~\ref{prop:concatenation_flow}).
Applying these rules to the Hadamard and $S$ gate, we find that the non-trivial transformations are 
\begin{align*}
H^{\dagger}ZH&=\bar{X},\\
H^{\dagger}XH&=\bar{Z}\\
\text{and }S^{\dagger}XS&=-\bar{Y}=i^{3}\bar{X}\bar{Z},
\end{align*}
resulting in the update rules for labels visualized in Fig.~\ref{fig:full_circuit}(b).

By decomposing a unitary $U$ into Pauli rotations $\bar R_j(\theta) = e^{i\frac{\theta}{2}\bar P_j}$, it can be written as an alternating sequence of Clifford operations $C_j$ and single-qubit Pauli rotations $R_j$ as 
\begin{equation}
\label{eq:alternating_unitary}
    U = \bar{R}_d \cdots\bar R_2 \bar R_1 = C_{d+1}R_d\cdots C_{3}R_{2} C_{2}R_{1}C_{1}.
\end{equation}
As a result, the labels allow us to interpret and efficiently design quantum circuits when written in the form of Eq.~\eqref{eq:alternating_unitary}, even when perpetually appending Pauli rotations or Clifford gates. 
Additionally, products of physical operators have the logical effect of the product of the Pauli operators represented by the corresponding labels.
Thus, physical multi-qubit rotations can also be incorporated.
An important strength of such a circuit representation is that it circumvents the requirement to synthesize each multi-body rotation operator separately by fully conjugating a single-body operator with a Clifford circuit $\tilde{C}_j$ (i.e., encoding, rotating and decoding~\cite{phase_gadgets_cowtan}). 
Rather, in a given encoding, one can continue from there to directly implement the next operator via recoding, as the subsequent labels can be derived from the latest labels:
\begin{equation}
\label{eq:en_de_coding}
  \cdots C_{3} R_2 C_{2}R_1 C_{1} = \cdots\underbrace{(\tilde{C}^{\dagger}_{2} e^{i\beta P_2} \tilde{C}_{2})}_{\bar{R}_2(2\beta)}(\tilde{C}^{\dagger}_{1}R_1\tilde{C}_{1}).
\end{equation}
In this formulation, $C_{j}$ denotes the Clifford circuit required to change between two encodings $j-1$ and $j$, while $\tilde{C}_{j}=C_{j}\cdots C_{1}$ are the Clifford circuits used for the full conjugation. 
In many cases, avoiding the detour of completely encoding and decoding the quantum state, and instead embracing the code changes, removes an avoidable gate overhead \cite{schmitz2023graphoptimizationperspectivelowdepth, domínguez2024runtimereductionlinearquantum}.
By using the full Clifford group for code changes, any logical Pauli string can be constructed on one of the labels of a single physical-qubit, starting from any other previous encoding.
Besides the efficient synthesis of multi-qubit Pauli rotations, which have already been proven useful in various applications like QFT and QAOA \cite{klaver2024, dreier2025connectivityawaresynthesisquantumalgorithms}, we expect many other practical uses of this tracking formalism. 
The labels can, for example, be used to debug quantum circuits or show circuit identities: If two such circuits transform initial labels to the same output labels and result in the same sequence of logical Pauli rotations, they implement the same unitary. 

\section{Covariant and contravariant Clifford tracking}
\label{sec:covariant_and_contravariant_clifford_tracking}
To see how flow tracking fits in the context of Clifford tracking, we first consider the workings of the original Clifford tableau~\cite{AaronsonGottesman2004}, which is used for \textit{covariant} Clifford tracking of a quantum state.
Consider a quantum state described by the density matrix $\rho$ written in the Pauli decomposition as
\begin{equation}
    \label{eq:density_matrix}
    \rho =\sum_{\bar{P}} a_{\bar{P}}\bar{P},
\end{equation}
where the coefficients $a_{\bar{P}}$ are real numbers indicating the state components and the sum is over the Pauli operator basis~\cite{koska2024treeapproachpaulidecompositionalgorithm}. 
Then a Clifford circuit $C_{1}$ acting on this quantum state performs a covariant transformation of the basis according to
\begin{equation}
C_{1}\rho C_{1}^{\dagger}=\sum_{\bar{P}}a_{\bar{P}}C_{1}\bar{P}C^{\dagger}_{1}.
\end{equation}
This basis transformation under Clifford circuits forms a bijective map from the Pauli group~$\mathcal{P}$ to itself, and for the covariant tracking describes the mapping of logical operators $\bar{P}$ to physical operators $P$.
One could in principle compute this transformation for every element of the exponentially large Pauli group, which would also reveal the inverse map.
Alternatively, one can restrict the tracking to a linear-sized set of generators~$\bar{g}_{m}$ of the Pauli group~$\mathcal{P}$.
In this case, one can still compute how any logical Pauli operator $\bar{P}=\textstyle\prod_{j}\bar{g}_{j}$ maps to a physical Pauli operator via its decomposition into generators as $C_{1}\bar{P}C^{\dagger}_{1}=\textstyle\prod_{j}C_{1}\bar{g}_{j}C_{1}^{\dagger}=P$.
While the restriction to only generators makes the tracking computationally efficient, it also obscures the information required for the inverse map, i.e., mapping physical operators to logical ones.

On the other hand, the \textit{contravariant} Clifford tracking of $C_{1}\rho C_{1}^{\dagger}$ is given by the inverse transformation on the state components as 
\begin{equation}
C_{1}\rho C_{1}^{\dagger}=\sum_{P}a_{C^{\dagger}_{1}PC_{1}} P.
\end{equation}
Consequently, when only tracking generators, the ease of mapping logical operators to physical operators and the difficulty of mapping physical operators to logical operators is reversed with respect to the covariant case.

Besides the difference in use case for the two types of tracking, the conjugation method for iteratively tracking concatenations of circuits also differs.
That is, when appending a second transformation $C_{2}$, we find 
\begin{align}
C_{2}C_{1}\rho C_{1}^{\dagger}C_{2}^{\dagger}&=\sum_{\bar{P} }a_{\bar{P}}C_{2}C_{1}\bar{P}C^{\dagger}_{1}C_{2}^{\dagger}\label{eqn:covariant_tracking}\\
&=\sum_{P}a_{C^{\dagger}_{1}C^{\dagger}_{2}PC_{2}C_{1}}P,\label{eqn:contravariant_tracking}\end{align}
for the covariant and contravariant transformation, respectively. 
Importantly, the covariant Clifford tracking computes the basis change $C_{2}C_{1}\bar{P}C^{\dagger}_{1}C_{2}^{\dagger}$ iteratively by \textit{outward} conjugation of $C_{1}\bar{P}C^{\dagger}_{1}$ with $C_{2}$, while the contravariant Clifford tracking computes $C^{\dagger}_{1}C^{\dagger}_{2}PC_{2}C_{1}$ from $C^{\dagger}_{1}PC_{1}$ iteratively by \textit{inward} conjugation with $C^{\dagger}_{2}$ (see Eq.~\eqref{eq:update_2} and Ref.~\cite{Gidney2021stimfaststabilizer}).
Generally, both the inward and outward conjugation can be used to do covariant and contravariant tracking, but in order to be computationally efficient when appending Clifford circuits (i.e., by iteratively using intermediate results) the outward conjugation is suited for covariant tracking, while the inward conjugation is suited for contravariant tracking.
Using outward conjugation for contravariant tracking one could compute $C^{\dagger}_{1}C^{\dagger}_{2}PC_{2}C_{1}$ by outward conjugation of $C^{\dagger}_{2}PC_{2}$ with $C^{\dagger}_{1}$, but then $C^{\dagger}_{1}PC_{1}$ must be computed separately, resulting in an increasing computational cost for each appended Clifford gate. A similar argument holds for using inward conjugation to do covariant tracking. 
Note that prepending $C_{2}$ to $C_{1}$, rather than appending, swaps the preferred conjugation to inward conjugation for contravariant tracking and outward conjugation for covariant tracking.
For a more thorough discussion on the various choices of iteratively tracking added Clifford circuits and the resulting conjugation required, see Appendix Fig.~\ref{fig:decision_tree}.

Importantly, using the non-preferred conjugation iteratively requires tracking a symplectic minimal generating set of the Pauli group, contrary to the preferred conjugation, where the transformation of one individual Pauli operator can be tracked iteratively.
Comparing the two variants of tracking with the decomposition of the unitary in Eq.~\eqref{eq:en_de_coding}, we see that the contravariant tracking of Pauli operators of the flow tracking yields the logical meaning $\bar{R}_j$ for each physical rotation $R_j$, since $\bar{R}_1=C_{1}^{\dagger}R_{1}C_{1}$, $\bar{R}_2=C_{1}^{\dagger}C_{2}^{\dagger}R_{2}C_{2}C_{1}$, etc.

Although numerous methods track covariant Clifford transformations \cite{AaronsonGottesman2004} or contravariant Clifford transformations \cite{schmitz2023graphoptimizationperspectivelowdepth} (see Appendix Rmk.~\ref{rmk:existing_tableaus} for details), we propose a method to efficiently track both transformations in a single tableau (already hinted at in \cite{Gidney2021stimfaststabilizer}). 
To keep track of both the co- and contravariant transformations of the Pauli group one could naively track them in parallel with both the Clifford and the flow tableau. 
Alternatively, one could track only one of the tableaus and invert it after each time step to get the other (see Appendix Prop.~\ref{prop:inverse_flow_tableau}).
However, using the symplectic property of the tableaus we propose a more efficient method to track both mappings.
For this purpose, we consider the situation that the flow labels $\labelname[X_k]$ and $\labelname[Z_k]$ from contravariant tracking are given and demonstrate how to obtain the inverse transformations from the flow labels directly.
That is, the covariant transformation of the generators is given by
\begin{equation}
\label{eq:inverse_X}
    C\bar{X}_{j}C^\dagger =i^{\eta_{ j}} \prod_{{k}:\,{j \in \labelname[Z_{k}]}}X_{k} \prod_{{k}:\,{ j\in \labelname[X_{k}]}}Z_{k},
\end{equation}
\begin{equation}
\label{eq:inverse_Z}
    C\bar{Z}_{j}C^\dagger =i^{\eta_{\langle j \rangle }} \prod_{{k}:\,{\langle j\rangle\in \labelname[Z_{k}]}}X_{k} \prod_{{k}:\,{\langle j\rangle\in \labelname[X_{k}]}}Z_{k},
\end{equation}
for suitable \textit{Clifford phases} $\eta_{j},\eta_{\langle j\rangle} \in \{0,1,2,3\}$ (see Appendix Cor.~\ref{cor:combined_tableau_permutation}).
These $2n$ Clifford phases can be tracked in addition to the flow labels to avoid the time intensive need of inverting a tableau or the memory intensive tracking of an additional tableau in case both the co- and the contravariant transformations are needed.
The update rule for each Clifford phase can be computed analogous to Eq.~\ref{eq:update_1}, but with the conjugation inverted, i.e. $C\bar{g}_{j}C^{\dagger}$.
The resulting phase factor together with the additional phase obtained from reordering the generators into standard order tells us how to update these Clifford phases. 
Consequently, when appending $H_{t}$, the Clifford phases update according to
\begin{align}
    {\eta_{j}} &\to {\eta_{j}}\oplus2 
    && \text{if } j \in \ell_{X_{t}} \text{ and } j \in \ell_{Z_{t}} \,, \label{eq:H_X_clifford_phase}\\
    {\eta_{\langle j \rangle}} &\to \,{\eta_{\langle j \rangle}\oplus2} && \text{if } \langle j \rangle \in \ell_{X_{t}} \text{ and } \langle j \rangle \in \ell_{Z_{t}} \,, \label{eq:H_Z_clifford_phase}
\end{align}
and when appending $S_{t}$ according to
\begin{align}
    {\eta_{j}} &\to {\eta_{j}}\oplus 1 
    && \text{if } j \in \ell_{Z_{t}}  \,, \label{eq:S_X_clifford_phase}\\
    {\eta_{\langle j \rangle}} &\to \,{\eta_{\langle j \rangle}\oplus 1} && \text{if }  \langle j \rangle \in \ell_{Z_{t}} \,, \label{eq:S_Z_clifford_phase}
\end{align}
where the addition $\oplus$ is modulo $4$. 
Note that the $\text{CNOT}$ gate does not require updating of the Clifford phases, since it acts trivially. 
The time complexity of tracking the flow labels scales as $\mathcal{O}(n)$ per tracked elementary Clifford gate and the additional tracking of Clifford phases only changes the constant factor of the leading term.
This additional tracking adds $4n$ bits of memory requirement, but circumvents the $\mathcal{O}(n^{3})$ operations needed for the inversion of the tableau.
Having access to both the co- and contravariant transformations of the generators proves especially useful when designing quantum circuits in combination with stabilizers.
\section{Labeling stabilizer codes}
The Parity Flow formalism can also track the logical effects of operations on physical qubits in stabilizer codes and simultaneously track the stabilizers that define that code.
This reveals stabilizer-equivalent operations from both a co- and contravariant perspective and thereby flexibility in how a unitary can be applied. 
Furthermore, it shows how the auxiliary qubits in such codes can be used to implement more (commuting) operations in parallel, and thus can lead to significant reductions in quantum circuit depth \cite{klaver2024, Messinger_2023, Fellner2022_applications}.

\begin{figure*}[ht!]
    \centering
    \includegraphics[width=0.95\linewidth]{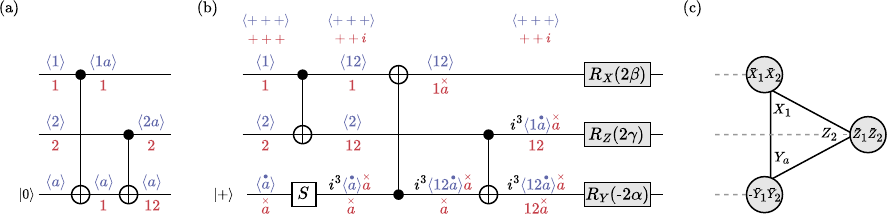}
    \caption{(a) $\mathrm{CNOT}$ circuit to obtain the effect of $\bar Z_{1}\bar Z_{2}$ on an initially empty auxiliary qubit $a$. The Clifford phases are omitted, since the action of the $\mathrm{CNOT}$ gates is trivial.
    (b) Encoding circuit to implement all logical operators of the Heisenberg model $e^{i\alpha\bar{Y}_{1}\bar{Y}_{2}}e^{i\beta\bar{X}_{1}\bar{X}_{2}}e^{i\gamma\bar{Z}_{1}\bar{Z}_{2}}$ in parallel after encoding redundantly with an auxiliary qubit. 
    Dots and crosses on top of an index denote stabilized and destabilized operators, respectively.
    After the encoding circuit, a physical $Y$ rotation on the auxiliary qubit 
    results in a valid logical $-\bar Y_1\bar Y_2$ operator, since
    $\textcolor{xlog}{\trivial{a}}$ 
    can be omitted from the label $\ell_{{X}_{a}}$ for interpreting its logical effect.
    The Clifford phases \textcolor{xlog}{$\langle i^{\eta_{\langle 1\rangle}}\dotsc\rangle$}\textcolor{zlog}{$i^{\eta_1}\dotsc$} are tracked on top of the circuit according to Eq.~\eqref{eq:S_X_clifford_phase}.
    (c) The resulting encoding of the circuit in (b) showing the stabilizer at the time step of the rotations. 
    The circles show the three qubits, labeled with the non-stabilizer violating logical operators (i.e., no crossed indices) corresponding to the local physical operators on the corners of the triangle. 
    The triangle indicates that the stabilizer $\bar{X}_{a}$ is mapped to $X_{1}Z_{2}Y_{a}$ by covariant tracking, revealing that the logical operator on $Y_{a}$ must be equal to the logical operator that corresponds to operator $X_{1}Z_{2}$.
    }
    \label{fig:auxiliary_qubits}
\end{figure*}

We start with the simplest case, where the encoding circuit of a stabilizer code consists only of $\mathrm{CNOT}$ gates\footnote{In this case, updating of Clifford phases is not yet needed, since $\mathrm{CNOT}$s act trivially.} and auxiliary qubits initialized in the $\ket{0}$ state.  
Since an auxiliary qubit initialized in the state $\ket{0}$ is indifferent to a Pauli-$Z$ rotation, we can assign an empty $Z$ label to such an auxiliary qubit at the start of a quantum circuit \cite{Amy_2018}.
Then, any $\mathrm{CNOT}$ gates that target the auxiliary qubit can change its $Z$ label to obtain a logical effect involving other qubits.
For example, the $Z$ operator of an auxiliary qubit~$a$ can obtain the logical effect of the two-qubit operator $\bar Z_{1}\bar Z_{2}$ through the encoding shown in Fig.~\ref{fig:auxiliary_qubits}(a).
Note that the initial $X$ label of the auxiliary qubit is not empty as the corresponding operation $\bar{X}_{a}$ has a non-trivial effect on the quantum state and it is therefore important to follow its presence throughout the labels.
We can describe initialized auxiliary qubits as part of a trivial stabilizer code, where the stabilizers are the operators corresponding to empty labels.
Importantly, we can only keep the labels empty as long as the quantum state remains within this stabilized code space.
The $Z$ stabilizers are tracked by the flow labels according to Eq.~\eqref{eq:inverse_Z}, resulting in $C\bar{Z}_{a}C^{\dagger}= Z_{1}Z_{2}Z_{a}$ for the Clifford circuit in Fig.~\ref{fig:auxiliary_qubits}(a), thereby tracking the code space deformation induced by the Clifford gates. 
Whether a physical non-Clifford operation commutes with the stabilizers can be determined directly from these labels: Any operation on a label which contains a non-omitted auxiliary index (in our example, any label containing $\flowlabel{}{a}{}$) would violate the stabilizer and result in a state outside the code space, which would invalidate the trivial action of the omitted auxiliary label.
Codes obtained by encoding circuits consisting only of $\mathrm{CNOT}$ gates and qubits initialized in the $\ket{0}$ state result in parity codes in which the stabilizers, labels, and label indices are of the same Pauli type~\cite{Messinger_2023}.
Tracking the (empty) labels during the encoding circuit of such codes can reveal a large set of locally implementable logical multi-qubit operations.
As a prominent example, the LHZ layout \cite{Lechner2015, Fellner2022_prl} encodes $k$ logical qubits in $n=k(k+1)/2$ physical qubits, such that each possible pair of logical qubit indices is represented on one of the physical qubit labels. 
This allows one to implement all pairwise logical $R_{ZZ}$ rotations via physical single-qubit $R_{Z}$ rotations in a single time step.

The $Z$ labels of parity codes are always restricted to only products of $Z$ operators.
However, with the flow labels we can generalize parity codes by extending the logical effect of single-qubit rotations from only products of $Z$ operators to products of arbitrary single-qubit Pauli operators and a phase.
Instead of setting the initial labels of auxiliary qubits to empty, we mark the indices corresponding to a stabilized generator by a dot, and the ones corresponding to an anti-commuting generator by a cross. 
Consequently, until a non-Clifford operator is applied that contains a cross in its label, all indices with dots can be omitted.
This has the same effect as setting the labels empty, but additionally allows for tracking the correct phases throughout the circuit.
An example of marked auxiliary qubits is shown in Fig.~\ref{fig:auxiliary_qubits}(b).
There, the logical operators $\bar{X}_{1}\bar{X}_{2}$, $\bar{Y}_{1}\bar{Y}_{2}$ and $\bar{Z}_{1}\bar{Z}_{2}$ are encoded onto physical single-qubit operators that can only be applied in parallel through the use of a stabilizer.
Note that the marked labels handle the specific case where auxiliary qubits are initialized in the $+1$-eigenstate of the single-qubit generators $X$ or $Z$ at the start of the quantum circuit.
However, more general cases where stabilizers are defined mid-circuit or as products of generators, for instance by multi-qubit measurements during the quantum circuit, are also possible (see Appendix Prop.~\ref{prop:multiqubitpauli_withauxies}).
To see how the stabilizer equivalence in Fig.~\ref{fig:auxiliary_qubits}(b) induces the simultaneous encoding of $\bar{X}_{1}\bar{X}_{2}$, $\bar{Y}_{1}\bar{Y}_{2}$ and $\bar{Z}_{1}\bar{Z}_{2}$ we read off $C\bar{X}_{a}C^{\dagger}=i\,X_{1}X_{a}\,Z_{2}Z_{a}=X_{1}Z_{2}Y_{a}$ using Eq.~\eqref{eq:inverse_X} for $\bar{X}_{a}$ with the Clifford phase $\textcolor{zlog}{i}$ and the labels $\textcolor{zlog}{\violate{a}}$ in Fig.~\ref{fig:auxiliary_qubits}(b).
This results in Fig.~\ref{fig:auxiliary_qubits}(c), along with single-qubit labels that do not violate this stabilizer.

This method applies to any stabilized Pauli operator $\bar{P}_{+}$ with quantum state $\ket{\psi}$ for which it is known that $\bar{P}_{+}\ket{\psi}=\ket{\psi}$.
From the covariant transformations in Eqs.~(\ref{eq:inverse_X}) and (\ref{eq:inverse_Z}) applied to the stabilizer in terms of its generators, we find the physical Pauli operator $P_{+}=C\bar{P}_{+}C^{\dagger}$.
This operator induces an equivalence in the logical effect of any two commuting Pauli operators for which ${P}_{1}{P}_{2}={P}_{2}{P}_{1}=P_{+}$ holds. 
Then, from the contravariant tracking in the flow labels, we find the logical effect of these physical Pauli operators as $\bar{P}=C^{\dagger}P_{1}C=C^{\dagger}P_{2}{P}_{+}C$, resulting in ${P}_{1}C\ket{\psi}=P_{2}C\ket{\psi}=C\bar{P}\ket{\psi}$.
Consequently, we can choose to apply a logical rotation of $\bar{P}$ via the physical rotation of $P_{1}$ or the physical rotation of $P_{2}$.
In addition to providing different decompositions [as in Eq.~\eqref{eq:alternating_unitary}] of a given unitary by choosing between Pauli rotations with the same logical effect, the combined tracking also reveals how to perform more operations in parallel, potentially reducing circuit depth, as demonstrated in Refs.~\cite{Fellner2022_prl, klaver2024}.

\section{Summary}
We have presented the Parity Flow formalism, a method for efficiently tracking contravariant transformations of quantum states and operators for appending Clifford circuits.
This formalism adds additional information to quantum circuit diagrams and thereby provides an intuitive way to understand unitaries composed of Pauli rotations and Clifford gates. 
Consequently, it aids in synthesis of such unitaries on real quantum hardware and enables us to find unitary decompositions with reduced total quantum gate depth and count.
Additionally, we highlight the relation between co- and contravariant Clifford transformations and propose a method to efficiently track both types within the Parity Flow formalism.
Lastly, we have shown how the combined tracking helps to utilize auxiliary qubits for parallel operations, and how the labels assist in quantum circuit design by revealing different ways to implement the same logical operations in stabilizer codes.

\section{Acknowledgments}
The authors thank F. Dreier, J. Farnsteiner, J. Kysela, N. Sakharwade and M. Traube for valuable comments and discussions.
This study was supported by NextGenerationEU via FFG and Quantum Austria (FFG Project No.~FO999896208). 
This research was funded in part by the Austrian Science Fund (FWF) under Grant-DOI 10.55776/F71.
This project was funded within the QuantERA II Programme that has received funding from the European Union’s Horizon 2020 research and innovation programme under Grant Agreement No. 101017733. 
This publication has received funding under Horizon Europe Programme HORIZON-CL4-2022-QUANTUM-02-SGA via the project 101113690 (PASQuanS2.1). 
This study was supported by the Austrian Research Promotion Agency (FFG Project No. FO999924030, FFG Basisprogramm).
For the purpose of open access, the authors have applied a CC BY public copyright license to any Author Accepted Manuscript version arising from this submission.

\clearpage

\appendix 
\appendixpage
\renewcommand\thefigure{\thesection.\arabic{figure}} 
\renewcommand\thetable{\thesection.\arabic{table}} 
\setcounter{figure}{0}
\setcounter{table}{0}
\counterwithin*{figure}{section}
\counterwithin*{table}{section}

While the main text focuses on the explanation and usage of the Parity flow labels for circuit design, the appendix provides the interested reader with the mathematical background of the formalism. For this, we aim at a self-contained introduction to the (re)presentation of the Pauli group and the representation of Clifford operators by tableaus in order to explain the different choices made for the formalism in comparison to the existing Clifford tableaus, e.g., \cite{dehaenedemoor2003, AaronsonGottesman2004, Gidney2021stimfaststabilizer, vandaele2022, Winderl2023, schmitz2023graphoptimizationperspectivelowdepth, GossetGrierKerznerSchaeffer2024}.

The first section introduces the chosen presentation (see Rmk.~\ref{rmk:grouppresentation}) of the Pauli group via the flow labels and their connection to the symplectic (re)presentation of the phaseless Pauli strings. The representation of Clifford circuits is covered in the second section, where we classify the representing tableaus into covariant Clifford tableaus and contravariant \emph{flow tableaus}. The tableaus used in some of the literature are compared with respect to this classification in Rmk.~\ref{rmk:existing_tableaus}. The combination of the two tableaus into the \emph{combined flow tableau} leads to an efficient way to combine the tracking with stabilizer codes in a circuit with auxiliary qubits and is covered in the last section.

Note that the proofs use the (combined) flow tableau instead of the flow labels (with Clifford phases). Both notations contain exactly the same information, but while the labels are more suitable for the (possibly even hand-written) annotation of quantum circuits, the tableau notation is more suitable for the mathematical background.

\section{Presentation of Pauli operators}
\subsection{Flow labels and vector-like notation for Pauli operators}\label{sec:flow_labels}

We start with an overview of notations we use for Pauli operators. The Pauli group on one qubit is generated by the Pauli matrices 
\begin{equation}
    Z=\left(\begin{smallmatrix}
        1&0\\0&-1
    \end{smallmatrix}\right)\quad X=\left(\begin{smallmatrix}
        0&1\\1&0
    \end{smallmatrix}\right)\quad Y=\left(\begin{smallmatrix}
        0&-i\\i&0
    \end{smallmatrix}\right),
\end{equation} 
alternatively, it is generated by $X$, $Z$ and $i\idmatrix$. The Pauli group $\bar{\Pauli}=\bar{\Pauli}_n$ on $\npq$ qubits consists of tensor products of Pauli operators on these $\npq$ qubits. In order to have a short hand notation the label tracking in the flow formalism is done in a dense set-theoretic form, where the short hand label
\[
\labelname=\flowlabel{-}{12}{3}
\] 
(``minus X12 Z3'') corresponds to the Pauli operator
\[
\bar{P}=\bar{P}(\labelname)=\logicalpauliop{-}{1,2}{3}, 
\]
where the index of $\textcolor{xlog}{\bar{X}_{\ipq}}$ etc.\ denotes that the single-qubit operator $X$ etc.\ acts on the $\ipq^\text{th}$ qubit and we use a bar on top of the Pauli operators to denote the \emph{logical} operators as will be explained later (p.~\pageref{sec:clifford_circuits}). 
We use the brackets $\flowlabel{}{\;}{}$ to differentiate between the $\logicalpauliop{}{ }{}$- and the $\logicalpauliop{}{}{ }$-labels and add the colors as an additional visual guide. Using either the brackets or the colors would be enough to give a unique meaning to such a \emph{flow label}, but removing brackets \emph{and} color would make the notation ambigious, e.g., $\bar{P}=\logicalpauliop{-}{1,2}{3}$ 
and $\bar{P}'=\logicalpauliop{-}{1}{2,3}$ 
do not correspond to the same label $-123$ but to $\flowlabel{-}{12}{3}$ vs. $\flowlabel{-}{1}{23}$.\footnote{To be precise, the $\logicalpauliop{}{ }{}$-indices $\textcolor{xlog}{\langle\ipq\rangle}$ and the $\logicalpauliop{}{}{ }$-indices $\textcolor{zlog}{\ipq}$ are elements in the disjoint union $\{\flowlabel{}{1}{},\dotsc,\flowlabel{}{\npq}{}\}\sqcup\{\flowlabel{}{}{1},\dotsc,\flowlabel{}{}{\npq}\}$ of two identical copies of $\{1,\dotsc,\npq\}$ and the colors and/or the brackets indicate, whether an index $\ipq$ lives in the first copy, written as $\flowlabel{}{\ipq}{}$, translating to $\logicalpauliop{}{\ipq}{}$, or in the second copy, written as $\flowlabel{}{}{\ipq}$, translating to $\logicalpauliop{}{}{\ipq}$.}

This shorthand heavily relies on the choice of a standard form for Pauli operators: 

Every Pauli operator $\bar{P}$ can be written \emph{uniquely} as a product of a phase, a product of single-qubit Pauli-$\logicalpauliop{}{ }{}$-operators $\logicalpauliop{}{\ipq}{}$ and a product of single-qubit Pauli-$\logicalpauliop{}{}{ }$-operators $\logicalpauliop{}{}{\ipq}$
\begin{equation}\label{eq:standard_form_pauli}
    \logicalp
    = i^\ep \cdot \prod_{\ipq=1}^\npq \logicalpauliop{}{\ipq}{}^{\logicalxi[\ipq]}\cdot\prod_{\ipq=1}^\npq \logicalpauliop{}{}{\ipq}^{\logicalzeta[\ipq]}
    =:i^\ep \bm{\logicalpauliop{}{ }{}}^{\bmlogicalxi} 
    \bm{\logicalpauliop{}{}{ }}^{\bmlogicalzeta},
\end{equation}
which we fix as its \emph{standard form}. This follows from the fact that the Pauli group is generated by $g_0=i\idmatrix, g_{\flowlabel{}{j}{}}=\logicalpauliop{}{\ipq}{}$ and $g_{\flowlabel{}{}{j}}=\logicalpauliop{}{}{\ipq}$ and their commutativity properties.

Its ``\emph{vector form}'' is then the element 
\begin{equation}
    \varrho(\logicalp):=\logicalflowp\in \reprpauli.
\end{equation}
Note that this is not a vector in the strict sense, see Rmk.~\ref{rmk:no-vectorspace} for details. By abuse of notation, the binary vectors $\bmlogicalxi$ resp.\ $\bmlogicalzeta$ can also be interpreted as sets of indices, the $\logicalpauliop{}{ }{}$- and $\logicalpauliop{}{}{ }$-components of the flow label, 
    \begin{equation}\begin{split}
        {\bmlogicalxi}&=(\logicalxi[1],\dotsc,\logicalxi[\npq])\in\mathbb{F}_2^\npq\\
        &\mapsto \mathrm{supp}(\bmlogicalxi):=\{\flowlabel{}{\ipq}{} \vert \logicalxi[\ipq]=1\}=:\bmlogicalxi\subset\{\flowlabel{}{1}{},\dotsc,\flowlabel{}{\npq}{}\},\\
        {\bmlogicalzeta}&=(\logicalzeta[1],\dotsc,\logicalzeta[\npq])\in\mathbb{F}_2^\npq\\
        &\mapsto \mathrm{supp}(\bmlogicalzeta):=\{\flowlabel{}{}{\ipq} \vert \logicalzeta[\ipq]=1\}=:\bmlogicalzeta\subset\{\flowlabel{}{}{1},\dotsc,\flowlabel{}{}{\npq}\}
    \end{split}\end{equation}
where only the indices of the non-trivial Pauli operators that actually appear in the product are written down making the notation dense
\begin{equation}\label{eq:standard_form_pauli_dense_label}
    \logicalp
    = i^\ep \cdot \prod_{\ipq=1}^\npq \logicalpauliop{}{\ipq}{}^{\logicalxi[\ipq]}\cdot\prod_{\ipq=1}^\npq \logicalpauliop{}{}{\ipq}^{\logicalzeta[\ipq]}
    =:i^\ep\prod_{m\in\labelname}g_m= \logicalp(\labelname),
\end{equation}
for the flow label $\labelname=i^\kappa\flowlabel{}{\bm{\xi}}{}\flowlabel{}{}{\bm{\zeta}}$.
In these flow labels, we include the whole phase $i^\ep$ in order to differentiate the phase label from the indices in the $\logicalpauliop{}{ }{}$- and $\logicalpauliop{}{}{ }$-labels, while in the ``vector form'' only the exponent $\ep$ is written down. An overview of the equivalent notations is given in Tab.~\ref{tab:paulinotations}, where $\bm{e_\ipq}$ is the $\ipq^{\text{th}}$ vector of the canonical basis. The flow label notation is optimized for handwriting in analyzing a circuit and the ``vector form'' is optimized for group theory and calculations, while the Pauli operators are the objects of interest. 

\begin{table}[!ht]
    \centering
        \begin{tabular}{p{12mm}|l|l|l|l|l}
        form & 
            example $\logicalp$ &
            $\logicalp=\idmatrix$ &
            $\logicalp=\bar{X}_\ipq$&$\logicalp=\bar{Z}_\ipq$&$\logicalp=\bar{Y}_\ipq$\\[1mm]\hline&&&&&\\[-2ex]
        flow label & 
        $\flowlabel{i}{23}{13}$ &
            &
            $\flowlabel{}{\ipq}{}$ &
            $\flowlabel{}{}{\ipq}$ &
            $\flowlabel{i}{\ipq}{\ipq}$\\[1mm]\hline
        ``vector form'' & 
            $\logicalpaulivec{1}{011}{101}$ &
             $\logicalpaulivec{0}{\bm{0}}{\bm{0}}$ &
             $\logicalpaulivec{0}{\bm{e_{\ipq}}}{\bm{0}}$ &
             $\logicalpaulivec{0}{\bm{0}}{\bm{e_{\ipq}}}$ &
             $\logicalpaulivec{1}{\bm{e_{\ipq}}}{\bm{e_{\ipq}}}$\\[1mm]\hline&&&&&\\[-2ex]
        operator &
            $\logicalpauliop{i}{2,3}{1,3}$ &
            $\idmatrix$ &
            $\logicalpauliop{}{\ipq}{}$&$\logicalpauliop{}{}{\ipq}$ &
            $\logicalpauliop{i}{\ipq}{\ipq}$\\[1mm]
         forms& $i{\color{xlog}\bm{X}^{(011)}}{\color{zlog}\bm{Z}^{(101)}}$&
            $\textcolor{xlog}{\bm{X}_{\ipq}^{\bm{0}}}\textcolor{zlog}{\bm{Z}_{\ipq}^{\bm{0}}}$ & 
            $\textcolor{xlog}{\bm{X}^{\bm{e_{\ipq}}}}\textcolor{zlog}{\bm{Z}^{\bm{0}}}$&
            $\textcolor{xlog}{\bm{X}^{\bm{0}}}\textcolor{zlog}{\bm{Z}^{\bm{e_{\ipq}}}}$ &
            $i\textcolor{xlog}{\bm{X}^{\bm{e_{\ipq}}}}\textcolor{zlog}{\bm{Z}^{\bm{e_{\ipq}}}}$\\[1mm]
        \end{tabular}
    \caption{Overview of the different notations. All notations are equivalent, respecting the group structures.}
    \label{tab:paulinotations}
\end{table}

\subsection{Group presentation of Pauli operators}

We can make the bijective map
\begin{equation}\begin{split}
    \varrho:\bar\Pauli&\longrightarrow \reprpauli\\
    \logicalp = i^\ep \bm{\logicalpauliop{}{ }{}}^{\bmlogicalxi}\bm{\logicalpauliop{}{}{ }}^{\bmlogicalzeta} &\longmapsto \varrho(\logicalp)=\logicalflowp,
\end{split}\end{equation}
into a group isomorphism by defining a compatible group operation $\circledast$ on the group $\reprpauli$, which we abbreviate as $G$ throughout the rest of the appendices.

\begin{definition}
    On $G=\reprpauli$ define the operation $\circledast$ by setting
    \begin{multline}
\logicalflowp[1]\circledast\logicalflowp[2]\\
:=\left(\ep_1\oplus_4\ep_2\oplus_4 2\bmlogicalzeta[1]\cdot\bmlogicalxi[2]^T|\bmlogicalxi[1]\oplus_2\bmlogicalxi[2]|\bmlogicalzeta[1]\oplus_2\bmlogicalzeta[2]\right),
\end{multline}
    where $\oplus_4$ denotes addition modulo $4$ in the ring $\mathbb{Z}_4$, $\oplus_2$ denotes addition modulo $2$ in the field $\mathbb{F}_2$ and $\cdot$ denotes standard matrix multiplication.
\end{definition}

Note that the correction term $2\bmlogicalzeta[1]\cdot\bmlogicalxi[2]^T$ in the phase exponent just mimics the sign change introduced by swapping the Pauli-$\logicalpauliop{}{ }{}$-operators of the second Pauli operator through the Pauli-$\logicalpauliop{}{}{ }$-operators of the first Pauli operator in order to get the standard form in Eq.~\eqref{eq:standard_form_pauli}, e.g.,
\begin{equation}\label{exp:groupop}\begin{split}
&(\logicalpauliop{i}{2,3}{}\overbrace{\logicalpauliop{}{}{1,3})\cdot(\logicalpauliop{-}{1}{}}^{\text{swap giving } (-1)^1}\logicalpauliop{}{}{3})\\
=&(-i)\cdot(-1)^{1}(\logicalpauliop{}{1,2,3}{1,3,3})=\logicalpauliop{i}{1,2,3}{1}\\
\end{split}\end{equation}
can be written as
\begin{equation}\begin{split}
&\Big(i^1 {\color{xlog}\bm{X}^{(011)}}\overbrace{{\color{zlog}\bm{Z}^{(101)}}\Big)\cdot \Big(i^2 {\color{xlog}\bm{X}^{(100)}}}^{\text{swap}}{\color{zlog}\bm{Z}^{(001)}} \Big)\\
=&\;i^{1\oplus_42}\cdot(-1)^{{\color{zlog}(101)}\cdot{\color{xlog}(100)^T}}{\color{xlog}\bm{X}^{(011)\oplus(100)}}{\color{zlog}\bm{Z}^{(101)\oplus(001)}}
\end{split}\end{equation}
\begin{equation}\begin{split}
=&\;i^{1\oplus_42\oplus_4 2{\color{zlog}(101)}\cdot{\color{xlog}(100)^T}}{\color{xlog}\bm{X}^{(011)\oplus(100)}}{\color{zlog}\bm{Z}^{(101)\oplus(001)}}\\
=&\;i^1{\color{xlog}\bm{X}^{(111)}}{\color{zlog}\bm{Z}^{(100)}},
\end{split}\end{equation}
in the ``vector form'' this corresponds to 
\begin{equation}\begin{split}
    &\logicalpaulivec{1}{011}{101}\circledast\logicalpaulivec{2}{100}{001}\\
    =&\logicalpaulivec{1\oplus_4 2\oplus_4 2\textcolor{zlog}{(101)}\textcolor{xlog}{(100)^T}}{011\oplus_2100}{101\oplus_2001}\\
    =&\logicalpaulivec{1}{111}{100},
\end{split}\end{equation}
i.e., only the exponents are written down, which makes the group isomorphism $\varrho$ sort of a generalized logarithm. In the flow label notation this corresponds to the set-theoretic operations in Fig.~\ref{fig:group_op}.

\begin{figure}[!htb]
    \centering
    \begin{picture}(250,45)
    \put(0,0){\includegraphics[width=0.45\linewidth]{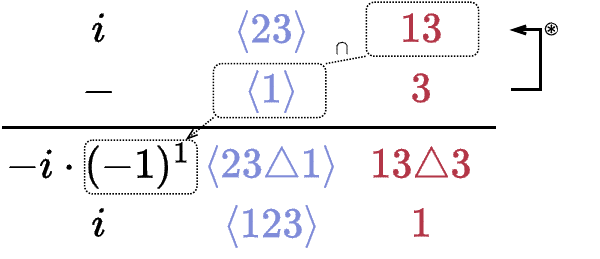}}
    \put(0,45){$a)$}
    \put(110,10){\includegraphics[width=0.55\linewidth]{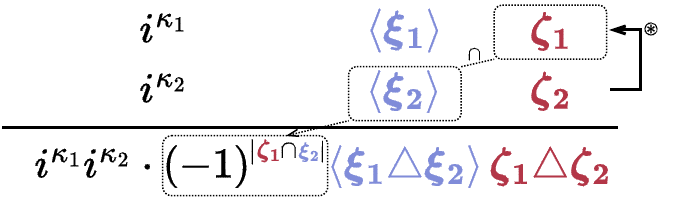}}
    \put(110,45){$b)$}
    \end{picture}
    \caption{The group operation $\circledast$ written in the flow label notation a) in the example of Eq.~\eqref{exp:groupop}, b) in the general case. 
    For $\bmlogicalxi$ and $\bmlogicalzeta$ just take the symmetric difference $\triangle$ of the sets. For the phase multiply the phases with the sign correction, which is $(-1)$ to the power of the number of elements $\vert\bmlogicalzeta[1]\cap\bmlogicalxi[2]\vert$ in the intersection marked by the rounded rectangles (identifying $\textcolor{xlog}{\langle\ipq\rangle}$ and $\textcolor{zlog}{\ipq}$), corresponding to the number of anticommuting swaps in order to produce the standard form of Pauli operators.}\label{fig:group_op}
\end{figure}

\begin{remark}
    For computational purposes it is advantageous to represent the phase exponent $\ep\in\mathbb{Z}_4$ as a 2-bit-number $\ep=\ep_1\cdot 2^1+\ep_0\cdot 2^0=(\ep_1,\ep_0)_2\in\mathbb{F}_2^2$. This corresponds to writing $i^\ep=(-1)^{\ep_1} i^{\ep_0}$ with binary exponents, which is nearly the same standard form as in~\cite{GossetGrierKerznerSchaeffer2024} or~\cite{dehaenedemoor2003}, see Rmk.~\ref{rmk:exponent_binary_number_dehaenedemoor} for details. Nevertheless, the group operation on the phase information remains addition modulo 4, which is the addition with carry-on on $\mathbb{F}_2^2$, where a possible overflow to the third lowest bit is ignored, corresponding to reduction modulo $4=2^2$. Since this might easily lead to confusion with the bitwise operations on the two other components $\mathbb{F}_2^\npq$ of $G$, we choose to use the $\mathbb{Z}_4$-component.
\end{remark}
\begin{proposition}\label{prop:groupG}
    $(G,\circledast)$ is a non-abelian group with identity element $\mathrm{id}_G=\logicalpaulivec{0}{\bm{0}}{\bm{0}}$, the inverse element of $\bm{\varrho}=\logicalflowp\in G$ is given by $\circleddash 
\bm{\varrho}=\circleddash \logicalflowp=\left(-\ep \oplus_4 2\bmlogicalzeta\cdot\bmlogicalxi^T\left\vert\bmlogicalxi\right\vert\bmlogicalzeta\right)$ and the centralizer of $G$ is $Z(G)=\{\logicalpaulivec{\ep}{\bm{0}}{\bm{0}}\vert\ep\in\mathbb{Z}_4\}\cong\mathbb{Z}_4$.
\end{proposition}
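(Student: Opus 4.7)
The plan is to transport the group structure from $\bar\Pauli$ along the bijection $\varrho$ of Eq.~\eqref{eq:standard_form_pauli}: the operation $\circledast$ was \emph{defined} by the formula one obtains by multiplying two Pauli operators in standard form and bringing the product back to standard form, as illustrated in Eq.~\eqref{exp:groupop}. Consequently each group axiom reduces to a short manipulation that either descends from the known group structure on $\bar\Pauli$ or is immediate from the definition of $\circledast$.

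For associativity I would not grind through coordinates. Given three elements $\bm{\varrho_k}=\logicalflowp[k]$ with $k\in\{1,2,3\}$, the three Pauli operators $\varrho^{-1}(\bm{\varrho_k})$ associate in $\bar\Pauli$ and, by uniqueness of the standard form Eq.~\eqref{eq:standard_form_pauli}, their triple product has a single representative of the form $i^\ep\bm{\logicalpauliop{}{ }{}}^{\bmlogicalxi}\bm{\logicalpauliop{}{}{ }}^{\bmlogicalzeta}$; applying $\varrho$ to either bracketing therefore yields the same element of $G$. The identity $\mathrm{id}_G=\logicalpaulivec{0}{\bm{0}}{\bm{0}}$ is neutral because substituting it into either slot of $\circledast$ makes every XOR and every dot product vanish, returning the other argument unchanged. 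For the inverse I would substitute the candidate $\circleddash\bm{\varrho}=(-\ep\oplus_4 2\bmlogicalzeta\cdot\bmlogicalxi^T\mid\bmlogicalxi\mid\bmlogicalzeta)$ into $\circledast$ from both sides: the $\bmlogicalxi$- and $\bmlogicalzeta$-slots collapse to $\bm{0}$ because $\bm{v}\oplus_2\bm{v}=\bm{0}$, and the $\mathbb{Z}_4$-slot collapses because the correction $2\bmlogicalzeta\cdot\bmlogicalxi^T$ appears twice and $4\bmlogicalzeta\cdot\bmlogicalxi^T\equiv 0\pmod 4$. Non-abelianness is then witnessed by $\logicalpaulivec{0}{\bm{e_1}}{\bm{0}}$ and $\logicalpaulivec{0}{\bm{0}}{\bm{e_1}}$, whose two products differ by $2$ in the phase slot and agree elsewhere, reflecting $XZ=-ZX$.

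For the center I would compare $\bm{\varrho_1}\circledast\bm{\varrho_2}$ with $\bm{\varrho_2}\circledast\bm{\varrho_1}$. The $\bmlogicalxi$- and $\bmlogicalzeta$-components coincide by XOR-commutativity, and the phases differ by $2\bigl(\bmlogicalzeta[1]\cdot\bmlogicalxi[2]^T+\bmlogicalzeta[2]\cdot\bmlogicalxi[1]^T\bigr)\pmod 4$, using $-2\equiv 2\pmod 4$. Hence $\bm{\varrho_2}$ is central iff $\bmlogicalzeta[1]\cdot\bmlogicalxi[2]^T+\bmlogicalzeta[2]\cdot\bmlogicalxi[1]^T\equiv 0\pmod 2$ holds for every $\bm{\varrho_1}$; testing against $\bm{\varrho_1}=\logicalpaulivec{0}{\bm{e_\ipq}}{\bm{0}}$ and $\bm{\varrho_1}=\logicalpaulivec{0}{\bm{0}}{\bm{e_\ipq}}$ for each $\ipq=1,\dotsc,\npq$ forces $\bmlogicalzeta[2]=\bmlogicalxi[2]=\bm{0}$. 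The remaining subset $\{\logicalpaulivec{\ep}{\bm{0}}{\bm{0}}\mid\ep\in\mathbb{Z}_4\}$ is manifestly closed under $\circledast$, on which the operation reduces to addition in $\mathbb{Z}_4$, so the center is isomorphic to $\mathbb{Z}_4$. The main bookkeeping burden throughout is keeping signs in the phase correction term consistent modulo $4$; associativity would otherwise be the only genuinely tedious step, and the isomorphism with $\bar\Pauli$ sidesteps it entirely.
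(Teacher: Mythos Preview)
Your proof is correct. For the identity, inverse, and center you follow essentially the paper's route---substitute into the definition of $\circledast$ and compare the $\mathbb{Z}_4$-component---though you spell out the computations more explicitly and test the commutation condition against the standard-basis ``vectors'' $\logicalpaulivec{0}{\bm{e_\ipq}}{\bm{0}}$ and $\logicalpaulivec{0}{\bm{0}}{\bm{e_\ipq}}$ rather than arguing directly that $\bmlogicalzeta[1]\bmlogicalxi[2]^T=\bmlogicalzeta[2]\bmlogicalxi[1]^T$ for all $\bmlogicalxi[2],\bmlogicalzeta[2]$ forces $\bmlogicalxi[1]=\bmlogicalzeta[1]=\bm{0}$. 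The one genuine difference is associativity: the paper verifies it by direct expansion of the formula for $\circledast$, whereas you transport it from $\bar\Pauli$ along the bijection $\varrho$, using that $\circledast$ was \emph{constructed} so that $\varrho(\logicalp\bar Q)=\varrho(\logicalp)\circledast\varrho(\bar Q)$. This is cleaner and sidesteps the bookkeeping, but it effectively front-loads the content of Prop.~\ref{prop:group_isomorphism}; since that proposition's proof is precisely the one-line observation you are invoking, no circularity arises---you have simply reordered the two propositions. Your explicit non-commuting pair replaces the paper's remark that $Z(G)\subsetneq G$ already implies non-abelianness; either suffices.
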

\begin{proof}
    The group axioms of associativity $(\bm{\varrho}_1\circledast \bm{\varrho}_2)\circledast \bm{\varrho}_3=\bm{\varrho}_1\circledast (\bm{\varrho}_2\circledast \bm{\varrho}_3)$, identity element $\bm{\varrho}\circledast\mathrm{id}_G=\mathrm{id}_G\circledast \bm{\varrho}=\bm{\varrho}$ and inverse element $\bm{\varrho}\circledast (\circleddash \bm{\varrho})=(\circleddash \bm{\varrho})\circledast \bm{\varrho}=\mathrm{id}_G$ can be shown by plugging in the definition of $\circledast$ and using the well known properties of the standard operations in $\mathbb{F}_2$, $\mathbb{Z}_4$ and the matrix multiplication.

    The centralizer of $G$ consists of all elements $\bm{\varrho}_1$ of $G$, such that $\bm{\varrho}_1$ commutes with \emph{all} elements $\bm{\varrho}_2\in G$, i.e., 
    \[
    \bm{\varrho}_1\circledast \bm{\varrho}_2=\bm{\varrho}_2\circledast \bm{\varrho}_1\qquad\forall \bm{\varrho}_2.
    \]
    Since the addition modulo $2$ in the $\bmlogicalxi$- and the $\bmlogicalzeta$-component is always commutative, the claim now follows from comparing the phase information:
    \[
    \begin{array}{rll}
    \ep_1\oplus_4\ep_2\oplus_4 2\bmlogicalzeta[1]{\bmlogicalxi[2]}^T&=\ep_2\oplus_4\ep_1\oplus_4 2\bmlogicalzeta[2]{\bmlogicalxi[1]}^T&\\
    2\bmlogicalzeta[1]{\bmlogicalxi[2]}^T&=2\bmlogicalzeta[2]{\bmlogicalxi[1]}^T&\forall \bmlogicalxi[2], \bmlogicalzeta[2]\\
    \Leftrightarrow\qquad\bmlogicalxi[1]&=\bmlogicalzeta[1]=\bm{0}.&
    \end{array}
    \]
    Since the centralizer is obviously not the whole group $G$, the group operation $\circledast$ is non-abelian.
\end{proof}
\begin{proposition}\label{prop:group_isomorphism}
    The map 
    \begin{equation}\begin{split}
        \varrho:\bar\Pauli&\longrightarrow G=\reprpauli\\
        \logicalp=i^\ep\bm{\logicalpauliop{}{ }{}}^{\bmlogicalxi}\bm{\logicalpauliop{}{}{ }}^{\bmlogicalzeta}&\longmapsto \logicalflowp
    \end{split}\end{equation}
    is a group isomorphism.
\end{proposition}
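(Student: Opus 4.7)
The plan is to show bijectivity first and then verify that $\varrho$ respects the group operations, i.e.\ $\varrho(\logicalp_1\cdot\logicalp_2)=\varrho(\logicalp_1)\circledast\varrho(\logicalp_2)$ for all $\logicalp_1,\logicalp_2\in\bar\Pauli$. Bijectivity is essentially a restatement of the existence and uniqueness of the standard form in Eq.~\eqref{eq:standard_form_pauli}: existence gives a well-defined map from $\bar\Pauli$ onto $G$, and uniqueness (already established there from the commutation relations of the generators $g_0,g_{\flowlabel{}{\ipq}{}},g_{\flowlabel{}{}{\ipq}}$) shows that different Pauli operators have different images, so $\varrho$ is a bijection between the underlying sets.

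For the homomorphism property, I would take two Pauli operators in standard form,
\begin{equation}
\logicalp_k = i^{\ep_k}\bm{\logicalpauliop{}{ }{}}^{\bmlogicalxi[k]}\bm{\logicalpauliop{}{}{ }}^{\bmlogicalzeta[k]},\qquad k=1,2,
\end{equation}
and compute their product. The phases $i^{\ep_1}$ and $i^{\ep_2}$ are central and combine to $i^{\ep_1+\ep_2}$. The only obstruction to standard form is then the middle factor $\bm{\logicalpauliop{}{}{ }}^{\bmlogicalzeta[1]}\bm{\logicalpauliop{}{ }{}}^{\bmlogicalxi[2]}$, which must be reordered. Since $\logicalpauliop{}{}{\ipq}\logicalpauliop{}{\ipq}{}=-\logicalpauliop{}{\ipq}{}\logicalpauliop{}{}{\ipq}$ on the same qubit and all $Z$-$X$-pairs on different qubits commute, each qubit $\ipq$ with $\logicalzeta[1,\ipq]=\logicalxi[2,\ipq]=1$ contributes one sign flip. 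The total sign is thus $(-1)^{\bmlogicalzeta[1]\cdot\bmlogicalxi[2]^T}$, which in the $\mathbb{Z}_4$ exponent of $i$ is exactly the correction $2\bmlogicalzeta[1]\cdot\bmlogicalxi[2]^T$ appearing in the definition of $\circledast$. After this swap, the $\logicalpauliop{}{ }{}$-factors combine via $\logicalpauliop{}{\ipq}{}^2=\idmatrix$ to give $\bm{\logicalpauliop{}{ }{}}^{\bmlogicalxi[1]\oplus_2\bmlogicalxi[2]}$, and similarly the $\logicalpauliop{}{}{ }$-factors combine to $\bm{\logicalpauliop{}{}{ }}^{\bmlogicalzeta[1]\oplus_2\bmlogicalzeta[2]}$. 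Reading off the components yields exactly $\varrho(\logicalp_1)\circledast\varrho(\logicalp_2)$. The identity element $\idmatrix$ is mapped to $\logicalpaulivec{0}{\bm 0}{\bm 0}=\mathrm{id}_G$, so by the standard theorem this bijective homomorphism is an isomorphism.

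The only genuinely non-trivial step is the sign computation from commuting $\bm{\logicalpauliop{}{}{ }}^{\bmlogicalzeta[1]}$ past $\bm{\logicalpauliop{}{ }{}}^{\bmlogicalxi[2]}$, which I expect to be the main (and essentially only) technical point. This is precisely the algebraic content already illustrated in Eq.~\eqref{exp:groupop}; all other steps are bookkeeping using $\idmatrix$-squared relations and the commutativity of the central phase factor. Since Prop.~\ref{prop:groupG} already guarantees that $(G,\circledast)$ is a group with the claimed identity and inverse, no further structural verification is needed.
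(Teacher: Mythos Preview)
Your proof is correct and follows essentially the same approach as the paper. The paper's proof is extremely terse---it simply observes that the homomorphism property holds ``by definition of $\circledast$'' and that bijectivity follows from the uniqueness of the standard form---whereas you spell out the sign computation from commuting $\bm{\logicalpauliop{}{}{ }}^{\bmlogicalzeta[1]}$ past $\bm{\logicalpauliop{}{ }{}}^{\bmlogicalxi[2]}$ explicitly; this is exactly the content the paper relegates to the discussion and worked example around Eq.~\eqref{exp:groupop}.
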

\begin{proof}
    For $\varrho$ to be a group homomorphism we have to prove $\varrho(\bar P\bar Q)=\varrho(\bar P)\circledast \varrho(\bar Q)$ for all $\bar P,\bar Q\in\bar\Pauli$, but that is exactly the definition of the operation $\circledast$.
    Since every $\logicalp\in\bar\Pauli$ can be uniquely written as $\logicalp=i^\ep\bm{\logicalpauliop{}{ }{}}^{\bmlogicalxi}\bm{\logicalpauliop{}{}{ }}^{\bmlogicalzeta}$ the map $\varrho$ is bijective and therefore a group isomorphism.
\end{proof} 
\begin{remark}\label{rmk:exponent_binary_number_dehaenedemoor}
    This proposition together with the definition of the group operation $\circledast$ is a variant of Lemma~1 in \cite{dehaenedemoor2003} (see also Lemma~33 in \cite{GossetGrierKerznerSchaeffer2024}). Because of their choice to write the phase exponent as a reversed 2-bit-binary-number their sign correction term contains one correction corresponding to the carry on of addition of 2-bit-numbers while the other correction corresponds to the anticommuting pairs of Pauli $X$- and $Z$-operators.
\end{remark}
\begin{remark}\label{rmk:scalar}
    The action of an element $\logicalpaulivec{\ep'}{\bm{0}}{\bm{0}}\in Z(G)$ on an arbitrary element $\varrho(\logicalp)=(\ep\vert\bmlogicalxi\vert\bmlogicalzeta)\in G$ corresponds to multiplying $i^{\ep'}\idmatrix\in\bar\Pauli$ and $\logicalp$ under this isomorphism, therefore giving the multiplication with a scalar $i^\ep$ in the Pauli group $\bar\Pauli$ (which is \emph{not} a scalar multiplication on the presentation $\reprpauli$, see Rmk.~\ref{rmk:no-vectorspace}). 
    We use the short hand $\ep'\circledast \varrho(\logicalp):=\logicalpaulivec{\ep'}{\bm{0}}{\bm{0}}\circledast (\ep\vert\bmlogicalxi\vert\bmlogicalzeta)=(\ep'\oplus_4\ep\vert\bmlogicalxi\vert\bmlogicalzeta)$ for this action.
    Note that $\ep'\circledast \varrho(\logicalp)=\varrho(\logicalp) \circledast \ep'$.
\end{remark}
\begin{remark}\label{rmk:grouppresentation}
    The group isomorphism $\varrho$ is a \emph{presentation} of the Pauli group in the following (broadened) sense. The Pauli group $\Pauli$ can be presented (in the strict group theoretic sense) as the free group on the symbols $i\idmatrix$, $\pauliop{}{\ipq}{}$, $\pauliop{}{}{\ipq}$ quotiented out by the relations given by their orders, i.e., $(i\idmatrix)^4=\pauliop{}{\ipq}{}^2=\pauliop{}{}{\ipq}^2=\idmatrix$ as well as those given by the (anti-)commutativity properties of the Pauli operators, e.g., $\pauliop{}{\ipq}{}\pauliop{}{}{\ipq}=(i\idmatrix)^2\pauliop{}{}{\ipq}\pauliop{}{\ipq}{}$. While the relations given by the orders and commutativity properties correspond to the commutative addition modulo $4$ resp.\ $2$ in the components $\mathbb{Z}_4$ resp.\ $\mathbb{F}_2$ in $G$, the relations corresponding to the anticommutativity properties get translated into the special operation $\circledast$. We will not use the free group in the following, but we will refer to $\varrho$ as presentation.
\end{remark}

\begin{remark}\label{rmk:no-vectorspace}
We call the presentation $\varrho(\logicalp)=\logicalflowp$ the ``vector form'' since it looks like a vector, even if its first entry is in $\mathbb{Z}_4$. As an algebraic object it is \emph{not} a vector because of the missing scalar multiplication. The group $G=\reprpauli$ being non-commutative can neither be a vector space (over a field) nor a module (over some unital ring) since the distributive laws needed would directly imply that $G$ has to be commutative. When trying to define an operation as near as possible to a scalar multiplication there are some candidates each one with its own draw backs:

\begin{description}
    \item[Idea 1] We can try to push the exponentiation of Pauli operators with integers to the presentation. If we consider $\logicalp^a$ for natural numbers $a\in\mathbb{N}$ and $\logicalp\in\bar\Pauli$ with $\varrho(\logicalp)=\logicalflowp$, then we can of course write 
    \begin{align}
        \logicalp^a&=\underbrace{\logicalp\cdot\cdots\cdot \logicalp}_{a} \mapsto\underbrace{\logicalflowp\circledast\cdots\circledast\logicalflowp}_{a} \nonumber\\
        &=\left(\left.\left.a\ep\oplus_42\cdot\left\lfloor \frac{a}{2}\right\rfloor\bmlogicalzeta\cdot\bmlogicalxi^T\right|a\bmlogicalxi\right|a\bmlogicalzeta\right)\label{eq:scalarmultiplication} .
    \end{align}
    It would be natural to define $a\cdot\logicalflowp$ as this last expression, which works similarly for $\logicalp^{-a}=(\logicalp^{-1})^a$.  
    In that case, we can define $\cdot:\mathbb{Z}\times G\rightarrow G$ and check which axioms of a $\mathbb{Z}$-module our group $G=\reprpauli$ with this operation fulfills. For this operation to be a scalar multiplication, we need to check whether 
    $a(\logicalflowp[1]\circledast\logicalflowp[2])=a\logicalflowp[1]\circledast a\logicalflowp[2]$. This corresponds to the equation $(\logicalp\bar Q)^a=\logicalp^a\bar Q^a$, which in general is not true if $\logicalp$ and $\bar Q$ don't commute. The other three axioms for a scalar multiplication correspond to $\logicalp^a\cdot \logicalp^b=\logicalp^{a+b}$, $(\logicalp^a)^b=\logicalp^{ab}$ and $\logicalp^1=\logicalp$ and are therefore fulfilled $a\logicalflowp$, since $\logicalp$ always commutes with itself.
    \item[Idea 2] We can restrict the above operation to the field $\mathbb{F}_2$ and consider $\cdot:\mathbb{F}_2\times G\rightarrow G$, which corresponds to just looking for $\logicalp^0=\idmatrix$ and $\logicalp^1=\logicalp$. This removes the problem of the first idea, since indeed $(\logicalp\bar Q)^a=\logicalp^a\bar Q^a$ if we only consider $a=0$ and $a=1$. But now we get $\idmatrix=\logicalp^{0}=\logicalp^{1\oplus_21}=\logicalp^1\cdot \logicalp^1=\logicalp^2\overset{\text{i.g.}}{\not=}\idmatrix$, which is only true for Pauli operators with order at most $2$ and false if $\logicalp\in\Pauli$ has $\ord(\logicalp)=4$, e.g., $\logicalp=iX$.
    \item[Idea 3] We can restrict the operation instead to the ring $\mathbb{Z}_4$, which solves the problem of idea 2, but we are back to the problem of idea 1.
    \item[Idea 4] In another direction we could try for a scalar multiplication coming from the centralizer of the group, i.e., $\circledast:\mathbb{Z}_4\times G\rightarrow G$ with $\ep'\circledast\logicalflowp=\logicalpaulivec{\ep'}{\bm{0}}{\bm{0}}\circledast\logicalflowp$, but since this corresponds to multiplying with $i^{\ep'}\idmatrix$ we get $i^{\ep'}\idmatrix\cdot(\logicalp\cdot\bar Q)=(i^{\ep'}\idmatrix\cdot\logicalp)\cdot \bar Q=\logicalp\cdot(i^{\ep'}\idmatrix\cdot \bar Q)\not=(i^{\ep'}\idmatrix\cdot\logicalp)\cdot(i^{\ep'}\idmatrix\cdot \bar Q)$, such that the axiom $\ep'\circledast(\logicalflowp[1]\circledast\logicalflowp[2])=(\ep'\circledast\logicalflowp[1])\circledast(\ep'\circledast\logicalflowp[2])$ is not true in general. Note that in the matrix algebra with matrix addition, matrix multiplication and scalar multiplication this is indeed the well known scalar multiplication $\alpha M$ of a matrix by a scalar. But this fulfills the axioms of scalar multiplication with respect to the matrix addition and not with respect to the matrix multiplication, which is the group operation of the Pauli group.
\end{description}
Having said that, we nevertheless use the definition of a multiple of a group element $a\cdot\logicalflowp$ in Eq.~\eqref{eq:scalarmultiplication}, especially in the cases $a\in\mathbb{F}_2$ or $a\in\mathbb{Z}_4$, but bear in mind that not all rules for a scalar multiplication are fulfilled. 
\end{remark}

Since the group operation $\circledast$ on $G$ is defined in such a way that the presentation $\varrho$ is a group isomorphism, we get the following corollary on the ``vector'' presentation of a product of multiple Pauli operators. For compact formulas here and in the rest of the paper we first give the following

\begin{definition}
    For any row vector $\bm{\varrho}=(\varrho_{1},\dotsc,\varrho_{m})\in\mathbb{F}_2^m$ we define $Q(\bm{\varrho})$ as the \textbf{str}ictly \textbf{up}per triangular part ($\mathrm{strup}$ for short) of the $m\times m$ matrix $\bm{\varrho}^T\bm{\varrho}$, i.e., $Q(\bm{\varrho}):=\mathrm{strup}(\bm\varrho^T\bm\varrho)$
    \begin{equation}
        Q(\bm{\varrho}):=\left(\begin{array}{cccccc}
        0&\varrho_{1}\varrho_{2}&\varrho_{1}\varrho_{3}&\dots&\dots&\varrho_{1}\varrho_{m}  \\
             0&0&\varrho_{2}\varrho_{3}&\dots&\dots&\varrho_{2}\varrho_{m} \\
             0&0&0&\ddots&&\varrho_{3}\varrho_{m} \\
             \vdots& \vdots&\vdots&\ddots&\ddots&\vdots \\
             0& 0& 0&\dots&0&\varrho_{m-1}\varrho_{m} \\
             0&0&0&\dots&0&0 \\
        \end{array}\right).
    \end{equation}
    For a row vector $\bm{\varrho}=(\varrho_{0}\vert\varrho_{1}\dotsc\varrho_{m})\in\mathbb{Z}_4\times\mathbb{F}_2^{m}$ that has a $0^\text{th}$ element, we define $Q(\bm\varrho)$ by the exact above definition, i.e., the $0^\text{th}$ element gets omitted.
    For any Pauli operator $\logicalp\in\bar\Pauli$ with ``vector form'' $\varrho(\logicalp)=\logicalflowp$ we define the triangular $2\npq\times2\npq$-matrix $Q(\logicalp):=Q((\bmlogicalxi\vert\bmlogicalzeta))$.
\end{definition}

\begin{corollary}\label{cor:multi_pauli_product}[see Cor.~34 in \cite{GossetGrierKerznerSchaeffer2024}]
    The ``vector'' presentation $\varrho(\logicalp)$ of the product $\logicalp=\logicalp_{1}\cdots \logicalp_{m}$ of Pauli operators $\logicalp_i\in\bar\Pauli$ with ``vectors'' $\bm{\varrho_i}=\varrho(\logicalp_i)=\logicalpaulivec{\ep_i}{\bm{\xi_i}}{\bm{\zeta_{i}}}$ is given by
    \begin{equation}
        \varrho(\logicalp)=\bigg(
            \sum_i \ep_i \oplus_4 2s 
            \bigg\vert
                \sum_{i}\bmlogicalxi[i]
            \bigg\vert
                \sum_{i}\bmlogicalzeta[i]
        \bigg),
    \end{equation}
    where the sign correction is given by 
    \begin{equation}
        s=(\bmlogicalzeta[1],\dotsc,\bmlogicalzeta[m])\cdot Q(\bm{1}) \cdot \left(\begin{array}{c}{\bmlogicalxi[1]}^T\\\vdots\\{\bmlogicalxi[m]}^T\end{array}\right)
    \end{equation} with $\bm{1}=(1,\dotsc,1)\in\mathbb{F}_2^m$.
\end{corollary}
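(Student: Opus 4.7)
The plan is proof by induction on $m$, using the fact (Proposition~\ref{prop:group_isomorphism}) that $\varrho$ is a group homomorphism, so $\varrho(\logicalp_1\cdots \logicalp_m) = \varrho(\logicalp_1)\circledast\cdots\circledast\varrho(\logicalp_m)$, together with the explicit formula for the binary operation $\circledast$.

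\textbf{Base cases.} For $m=1$ the statement is a tautology: the sum over $i<j$ is empty, so $s=0$, matching the fact that $\varrho(\logicalp_1)$ is already in standard form. For $m=2$, the matrix $Q(\bm{1})$ with $\bm{1}=(1,1)$ has a single nonzero entry (equal to $1$) in position $(1,2)$, so the block product yields $s=\bmlogicalzeta[1]\bmlogicalxi[2]^T$, and the claim reduces to the very definition of $\circledast$.

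\textbf{Inductive step.} Assume the formula holds for $m-1$ factors. Write $\logicalp'=\logicalp_1\cdots \logicalp_{m-1}$; the inductive hypothesis gives
\[
\varrho(\logicalp')=\Big(\textstyle\sum_{i=1}^{m-1}\ep_i\oplus_4 2s' \,\Big|\, \sum_{i=1}^{m-1}\bmlogicalxi[i]\,\Big|\,\sum_{i=1}^{m-1}\bmlogicalzeta[i]\Big),\quad s'=\sum_{1\le i<j\le m-1}\bmlogicalzeta[i]\bmlogicalxi[j]^T.
\]
By the homomorphism property $\varrho(\logicalp)=\varrho(\logicalp')\circledast\varrho(\logicalp_m)$. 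The $\bmlogicalxi$- and $\bmlogicalzeta$-components simply add to $\sum_{i=1}^m\bmlogicalxi[i]$ and $\sum_{i=1}^m\bmlogicalzeta[i]$. For the phase, the definition of $\circledast$ supplies the new cross-term $2\bigl(\sum_{i=1}^{m-1}\bmlogicalzeta[i]\bigr)\bmlogicalxi[m]^T = 2\sum_{i=1}^{m-1}\bmlogicalzeta[i]\bmlogicalxi[m]^T$, which combined with $2s'$ yields $2\sum_{1\le i<j\le m}\bmlogicalzeta[i]\bmlogicalxi[j]^T$, matching the claimed $2s$.

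\textbf{Identifying the matrix form.} Finally, observe that expanding the block matrix product $(\bmlogicalzeta[1],\dotsc,\bmlogicalzeta[m])\cdot Q(\bm{1})\cdot (\bmlogicalxi[1]^T,\dotsc,\bmlogicalxi[m]^T)^T$ gives precisely $\sum_{1\le i<j\le m}\bmlogicalzeta[i]\bmlogicalxi[j]^T$, since the $(i,j)$-entry of $Q(\bm{1})$ is $1$ if $i<j$ and $0$ otherwise. Thus the $s$ produced by the induction equals the $s$ defined in the statement.

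The only point requiring care is the bookkeeping in the inductive step, namely checking that appending $\logicalp_m$ to the partial product corresponds exactly to extending the sign matrix from $Q(\bm{1})\in\mathbb{F}_2^{(m-1)\times(m-1)}$ to $Q(\bm{1})\in\mathbb{F}_2^{m\times m}$ by a new last column of $1$'s above the diagonal; this is immediate from the triangular structure of $Q(\bm{1})$ and does not involve any further computation.
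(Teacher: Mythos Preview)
Your proof is correct and follows essentially the same approach as the paper: both use that $\varrho$ is a group homomorphism and then unfold the iterated $\circledast$-product, with you making the induction explicit where the paper simply writes ``repeatedly using the definition of $\circledast$.'' Your identification of the double sum $\sum_{i<j}\bmlogicalzeta[i]\bmlogicalxi[j]^T$ with the block matrix expression via $Q(\bm{1})$ also matches the paper's own remark following the corollary.
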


\begin{remark}
    Note that the formula for the sign correction $s$ is only a formal notation, since the entries of the ``row vector'' $(\bmlogicalzeta[1],\dotsc,\bmlogicalzeta[m])$ are themselves row vectors, while the entries of the ``column vector'' on the right ${\bmlogicalxi_i}^T$ are column vectors themselves. Treating the $\bmlogicalxi[\ipq]$ and $\bmlogicalzeta[\ipq]$ as formal symbols yields
    \begin{equation}\begin{split}
        s&=(0, \bmlogicalzeta[1], \bmlogicalzeta[1]\oplus_2\bmlogicalzeta[2],\dotsc, \bmlogicalzeta[1]\oplus_2\dots\oplus_2\bmlogicalzeta[m-1])\cdot \left(\begin{array}{c}{\bmlogicalxi[1]}^T\\\vdots\\{\bmlogicalxi[m]}^T\end{array}\right)\\
        &=\sum_{i=1}^{m-1}\sum_{j=i+1}^{m}\bmlogicalzeta[i]\bmlogicalxi[j]^T.
    \end{split}\end{equation}
    Alternatively, this double sum is exactly the sum of all entries of the strictly upper triangular matrix
    \begin{equation}\label{eq:commutation_multiple_paulis}
        \mathrm{strup}\left(\begin{array}{ccc}
            \bmlogicalzeta[1]\bmlogicalxi[1]^T & \cdots & \bmlogicalzeta[1]\bmlogicalxi[m]^T \\
            \vdots&&\vdots\\
            \bmlogicalzeta[m]\bmlogicalxi[1]^T & \cdots & \bmlogicalzeta[m]\bmlogicalxi[m]^T,
        \end{array}\right).
    \end{equation}
    Note that the full matrix $(\bmlogicalzeta[i]\bmlogicalxi[j])_{i,j}$ contains all the information on the sign corrections $s_{i,j}$ when writing the product of any pair of two Pauli operators $\logicalp_{i}\cdot\logicalp_{j}$ with $1\leq i,j\leq m$ in the standard form of Eq.~\eqref{eq:standard_form_pauli}. Because of the ordering in the product $\logicalp_{1}\cdots\logicalp_{m}$ we only need those sign corrections corresponding to $i<j$, i.e., the strictly upper part of the matrix. Hence, the sign correction can be calculated by multiplying the triangular matrix by the row vector $\bm{1}$ from the left and by the column vector $\bm{1}^T$ from the right. 
\end{remark}
\begin{proof}
    Since $\varrho$ is a group homomorphism we get
    \begin{equation}
        \begin{split}
            \varrho(\logicalp)&=\varrho(\logicalp_{1}\cdots \logicalp_{m})=\varrho(\logicalp_{1})\circledast \cdots\circledast\varrho(\logicalp_{m})\\
            &=\logicalpaulivec{\ep_1}{\bm{\xi_1}}{\bm{\zeta_{1}}}\circledast \cdots\circledast\logicalpaulivec{\ep_m}{\bm{\xi_m}}{\bm{\zeta_{m}}}.
        \end{split}
    \end{equation}
    Repeatedly using the definition of $\circledast$ gives the linear part 
    \begin{equation}\label{eq:linear_part_Pauli_product}
\bigg(\sum_{i}\ep_{i}\bigg\vert\sum_{i}\bmlogicalxi[i]\bigg\vert\sum_{i}\bmlogicalzeta[i]\bigg)
    \end{equation}
    and the phase correction
    \begin{equation}\label{eq:double_sum_phase_correction}
        2s=2\cdot\sum_{i=1}^{m-1}\sum_{j=i+1}^{m}\bmlogicalzeta[i]\bmlogicalxi[j]^T,
    \end{equation}
    as claimed. 
\end{proof}
\begin{remark}\label{rmk:runtime_pauliproduct}    
    The most expensive part of the calculation of the sign correction corresponds to calculating (approximately half of) the matrix multiplication of two $m\times m$-matrices in order to get Eq.~\eqref{eq:commutation_multiple_paulis}, i.e., the run time in the naive matrix-multiplication is $\mathcal{O}(m^3)$ and with fast matrix-multiplication it is $\mathcal{O}(m^\omega)$, where $2\leq\omega<2{,}371339$ is the matrix multiplication exponent (see Alman et~al.~\cite{alman2024asymmetryyieldsfastermatrix} for the bound and \cite{GossetGrierKerznerSchaeffer2024} for using fast matrix multiplication in the context of Clifford tableaus). Since the calculation of the linear part of $\varrho(\logicalp)$ (Eq.~\eqref{eq:linear_part_Pauli_product}) is $\mathcal{O}(mn)$ we get a run time of $\mathcal{O}(\max(mn,m^\omega))$.
\end{remark}

\subsection{Vector space presentation of Pauli operators without phase}
If the phases of the Pauli operators are not needed, for instance if only $\CNOT$-gates are used, the somewhat complicated group operation $\circledast$ and the problem that we work with only ``nearly vectors'' vanishes, see for example~\cite{Winderl2023}, Vandaele et~al.~\cite{vandaele2022}. This corresponds to the induced presentation $\tilde\varrho$ on the quotient group $\tilde\Pauli:=\bar\Pauli/\langle i\idmatrix\rangle$, which is well defined since the subgroup $\langle i\idmatrix\rangle$ generated by the phases is a normal subgroup. Furthermore, the quotient group is commutative and an $\mathbb{F}_2$-vector space in the canonical way. The following proposition is well known:
\begin{proposition}
    The induced presentation
    \begin{equation}\begin{split}
        \tilde\varrho:\tilde\Pauli&\longrightarrow \mathbb{F}_2^{\npq}\times \mathbb{F}_2^{\npq}\\
        \left[\logicalp\right]=\left[i^\ep \bm{\logicalpauliop{}{ }{}}^{\bmlogicalxi}\bm{\logicalpauliop{}{}{ }}^{\bmlogicalzeta}\right]&\longmapsto (\bmlogicalxi\vert\bmlogicalzeta),
    \end{split}\end{equation}
    where $[\logicalp]$ is the equivalence class of $\logicalp$ in $\tilde\Pauli$, is a vector space isomorphism.
\end{proposition}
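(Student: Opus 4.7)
The plan is to reduce the statement to the group isomorphism $\varrho$ of Prop.~\ref{prop:group_isomorphism} and then promote it from an isomorphism of abelian groups to an isomorphism of $\mathbb{F}_2$-vector spaces.

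First, I would record that $\langle i\idmatrix\rangle$ is indeed a normal subgroup of $\bar\Pauli$: under $\varrho$ it corresponds to $Z(G)=\{\logicalpaulivec{\ep}{\bm{0}}{\bm{0}}\mid\ep\in\mathbb{Z}_4\}$ (Prop.~\ref{prop:groupG}), and the center of any group is normal. Hence the quotient $\tilde\Pauli=\bar\Pauli/\langle i\idmatrix\rangle$ exists as a group and the induced map $\tilde\varrho$ is the map one gets from $\varrho$ by composing with the projection $G\twoheadrightarrow G/Z(G)$.

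Next I would verify well-definedness and the homomorphism property in one stroke. Two representatives of the same class $[\logicalp]$ differ by a factor $i^{\ep'}\idmatrix$, which under $\varrho$ only changes the $\mathbb{Z}_4$-entry. Forgetting that entry therefore gives a well-defined map into $\mathbb{F}_2^\npq\times\mathbb{F}_2^\npq$. For multiplicativity, I would note that in the definition of $\circledast$ the $\bmlogicalxi$- and $\bmlogicalzeta$-components are \emph{independent} of the phase and combine by plain XOR, while the sign-correction term $2\bmlogicalzeta[1]\cdot\bmlogicalxi[2]^T$ lives entirely in the $\mathbb{Z}_4$-slot that is projected out. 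So
\[
\tilde\varrho([\logicalp][\bar Q]) = (\bmlogicalxi[1]\oplus_2\bmlogicalxi[2]\mid\bmlogicalzeta[1]\oplus_2\bmlogicalzeta[2]) = \tilde\varrho([\logicalp])+\tilde\varrho([\bar Q]),
\]
which is the abelian group operation on $\mathbb{F}_2^\npq\times\mathbb{F}_2^\npq$. Commutativity of $\tilde\Pauli$ then follows automatically (or, equivalently, from the fact that the commutator $[\logicalp,\bar Q]=(-1)^{\bmlogicalzeta[1]\bmlogicalxi[2]^T+\bmlogicalzeta[2]\bmlogicalxi[1]^T}\idmatrix\in\langle i\idmatrix\rangle$).

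For bijectivity, surjectivity is immediate since any $(\bmlogicalxi\mid\bmlogicalzeta)$ is hit by $[\bm{\bar X}^{\bmlogicalxi}\bm{\bar Z}^{\bmlogicalzeta}]$; injectivity amounts to checking that $\ker\tilde\varrho=\{[\bar P]\mid \bmlogicalxi=\bmlogicalzeta=\bm 0\}=[\langle i\idmatrix\rangle]=\{[\idmatrix]\}$, which is the defining property of the quotient. Finally, to upgrade this group isomorphism to a vector-space isomorphism, I would observe that every element of $\tilde\Pauli$ has order dividing $2$ (since $\logicalp^2\in\langle i\idmatrix\rangle$ for any Pauli operator), so the abelian group $\tilde\Pauli$ carries a unique $\mathbb{F}_2$-module structure, and this structure is tautologically preserved by any group homomorphism between $\mathbb{F}_2$-vector spaces. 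The only mildly subtle point — and hence the main thing to be careful about — is precisely the vanishing of the sign-correction term under the projection, which one has to trace back to the fact that $\langle i\idmatrix\rangle$ absorbs \emph{all} phase data coming from $\circledast$; once that is observed, everything else is routine.
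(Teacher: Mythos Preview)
Your argument is correct and complete. Note, however, that the paper does not actually supply a proof of this proposition: it is stated as ``well known,'' preceded only by the remarks that $\langle i\idmatrix\rangle$ is normal and that the quotient is commutative and an $\mathbb{F}_2$-vector space ``in the canonical way.'' Your write-up simply fleshes out exactly those hints---normality via the center, the collapse of the $\circledast$-correction term under the phase projection, and the automatic $\mathbb{F}_2$-linearity once every element squares into $\langle i\idmatrix\rangle$---so there is no substantive difference in approach, only in level of detail.
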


\subsection{The $XYZ$-mapping $r$}
Often, see for example the original paper~\cite{AaronsonGottesman2004} on the simulation of stabilizer circuits, another vector-like notation for the Pauli group is used. The chosen standard form for Pauli operators there is
\begin{equation}\label{eqn:XYZ-standard-form}
\logicalp=(-1)^\delta\prod_{j=1}^{\npq}\logicalp_{\ipq}, \text{ where }\logicalp_{\ipq}\in\{\idmatrix, \logicalpauliop{}{\ipq}{}, \bar{Y}_{\ipq}, \logicalpauliop{}{}{\ipq}\}
\end{equation}
and $\logicalp$ is represented by 
\[
r(\logicalp)=\left((x_1,\dotsc,x_{\npq}), (z_1,\dotsc,z_{\npq}),\delta\right),
\]
where $(x_{\ipq},z_{\ipq})$ is $(0,0), (1,0), (1,1)$ resp.\ $(0,1)$ if $\logicalp_{\ipq}$ is $\idmatrix, \logicalpauliop{}{\ipq}{}, \bar{Y}_{\ipq}$ resp.\ $\logicalpauliop{}{}{\ipq}$ and $\delta$ is the sign-exponent. 
The advantage of this notation is that only the Pauli operators which have a non-trivial $+1$-eigenspace are considered (and $\pm\idmatrix$), see also Rmk.~\ref{rmk:properPaulis}. 
These are the ``interesting'' operators in the stabilizer formalism in the sense that they actually define stabilizer states. The disadvantage is that the mapping is not a group isomorphism, since the Pauli operators of the above form do not form a group and the group structure of the full Pauli group given by the multiplication of two operators is not respected by this notation. Let us look more closely into this:

\begin{definition}
    Let $\logicalp\in\bar\Pauli$ be a Pauli operator. The order $\ord(\logicalp)$ of the operator is the smallest positive integer $a$, such that $\logicalp^a=\idmatrix$.

    Denote by $\bar\Pauli_{\leq2}\subsetneq\bar\Pauli$ the set of Pauli operators with $\ord(\logicalp)\leq2$.

    A Pauli operator $\logicalp\in\bar\Pauli_{\leq2}$ with $\logicalp\not=\pm\idmatrix$ is called \emph{proper}. A ``vector'' $\bm\varrho\in G$ is \emph{proper} if it is the image $\varrho(P)$ of a proper Pauli operator $P$.
\end{definition}
\begin{remark}\label{rmk:orderPauli}
    Since $\bar\Pauli$ is a finite group, every Pauli operator has finite order. Actually, the order of any Pauli operator is $1, 2$ or $4$: For every $\logicalp\in\bar\Pauli$ we have
    \begin{equation}\begin{split}
        \logicalp^2=\logicalp\cdot \logicalp&\mapsto 2\cdot\logicalflowp\\
        &=\paulivec{2\ep\oplus_4 2\bmlogicalzeta\cdot\bmlogicalxi^T}{\bm{0}}{\bm{0}}\\
        &\mapsto i^{2\ep\oplus_42\bmlogicalzeta\cdot\bmlogicalxi^T}\idmatrix\\
        &= (-1)^{\ep\oplus_2\bmlogicalzeta\cdot\bmlogicalxi^T}\idmatrix
    \end{split}\end{equation}
    and therefore $\logicalp^4=\idmatrix$ and $\ord(\logicalp)$ divides $4$.
    
    \centerline{\begin{tabular}{c|c|l}
        $\ord(\logicalp)$ & condition & examples \\\hline
        $1$ & $\logicalflowp=(0\vert\textcolor{xlog}{\bm{0}}\vert\textcolor{zlog}{\bm{0}})$ & $\idmatrix$\\\hline
        &&\\[-11pt]
        $2$ & $\ep\oplus_2\bmlogicalzeta\cdot\bmlogicalxi^T=0$ & $\logicalpauliop{}{ }{}=i^0\logicalpauliop{}{ }{}^1\logicalpauliop{}{}{ }^0$, \\
        &$\logicalflowp\not=(0\vert\textcolor{xlog}{\bm{0}}\vert\textcolor{zlog}{\bm{0}})$&$\bar{Y}=i^1\logicalpauliop{}{ }{}^1\logicalpauliop{}{}{ }^1$\\\hline
        &&\\[-11pt]
        $4$ & $\ep\oplus_2\bmlogicalzeta\cdot\bmlogicalxi^T=1$& $i\logicalpauliop{}{ }{}=i^1\logicalpauliop{}{ }{}^1\logicalpauliop{}{}{ }^0$, \\
        &&$-\logicalpauliop{}{ }{ }=i^2\logicalpauliop{}{ }{}^1\logicalpauliop{}{}{ }^1$
    \end{tabular}}

    The proper Pauli operators are exactly the Hermitian Pauli operators $\not=\pm\idmatrix$, i.e., those operators with $\ord(\logicalp)\leq2$ that are not a scalar multiple of the identity.
\end{remark}

It is easy to switch between the two vector-like forms by redistributing factors of $i$ in the two standard forms: 
\begin{proposition}\label{prop:randrho}
    Consider the following diagram of maps of sets
    \[\begin{tikzcd}[every cell/.style={inner xsep=1ex, inner ysep=0.85ex}]
        \bar\Pauli \arrow[r, "\varrho"] \arrow[rd, phantom, "\circlearrowleft", shift right=2pt]                            
            & G = \reprpauli   \\
        \bar\Pauli_{\leq2} \arrow[u, hook, "\iota"] \arrow[r, "r"'] 
            & \mathbb{F}_2^{n}\times \mathbb{F}_2^{n}\times \mathbb{F}_2 \arrow[u, "\psi"']
    \end{tikzcd}\]
    where $\varrho$ and $r$ are the $XZ$-presentation and the $XYZ$-mapping respectively and $\psi:\mathbb{F}_2^{n}\times \mathbb{F}_2^{n}\times \mathbb{F}_2\rightarrow G=\reprpauli$ is given by $\psi(\bmlogicalxi,\bmlogicalzeta,\delta):=(2\delta + \bmlogicalzeta\cdot\bmlogicalxi^T\vert\bmlogicalxi\vert\bmlogicalzeta)$.
    
    Then this diagram commutes, i.e., the concatenation $\psi\circ r$ is the same as the concatenation $\varrho\circ\imath$ of the inclusion $\imath:\bar\Pauli_{\leq2}\hookrightarrow\bar\Pauli$ and the presentation $\varrho$. In particular, the map $\psi$ is injective, but not surjective with image $G_{\leq2}=\varrho(\bar\Pauli_{\leq2})\subsetneq G$.
\end{proposition}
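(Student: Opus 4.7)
The plan is to verify commutativity by explicit computation, exploiting the relation $\bar{Y}_\ipq = i\bar{X}_\ipq\bar{Z}_\ipq$ to convert between the $XYZ$- and $XZ$-standard forms. Starting with $\logicalp \in \bar\Pauli_{\leq 2}$ in $XYZ$-form $\logicalp = (-1)^\delta \prod_{\ipq=1}^\npq \logicalp_\ipq$ (so $r(\logicalp) = (\bmlogicalxi, \bmlogicalzeta, \delta)$), that identity lets me write each single-qubit factor uniformly as $\logicalp_\ipq = i^{\xi_\ipq \zeta_\ipq}\, \bar{X}_\ipq^{\xi_\ipq}\, \bar{Z}_\ipq^{\zeta_\ipq}$, reading $\xi_\ipq \zeta_\ipq \in \{0,1\}$ as an ordinary integer; this is checked directly in all four cases for $(\xi_\ipq, \zeta_\ipq)$. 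Since operators on different qubits commute, I can rearrange the product freely and collect the scalar $i$-factors:
\[
\prod_{\ipq=1}^\npq \logicalp_\ipq = i^{w}\,\prod_\ipq \bar{X}_\ipq^{\xi_\ipq}\,\prod_\ipq\bar{Z}_\ipq^{\zeta_\ipq},\qquad w := \sum_{\ipq=1}^\npq \xi_\ipq \zeta_\ipq \in \mathbb{N}.
\]
Multiplying by $(-1)^\delta = i^{2\delta}$ puts $\logicalp$ into the $XZ$-standard form of Eq.~\eqref{eq:standard_form_pauli} with phase exponent $2\delta + w$, giving $\varrho(\iota(\logicalp)) = (2\delta + w \mid \bmlogicalxi \mid \bmlogicalzeta)$ in $G$.

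To match this against $\psi(r(\logicalp)) = (2\delta + \bmlogicalzeta \cdot \bmlogicalxi^T \mid \bmlogicalxi \mid \bmlogicalzeta)$, the inner product $\bmlogicalzeta \cdot \bmlogicalxi^T$ sitting in the $\mathbb{Z}_4$-component has to be read as the integer sum $\sum_\ipq \zeta_\ipq \xi_\ipq$ via the natural lift $\{0,1\} \subset \mathbb{Z}_4$ (reduced mod~$4$), rather than as an $\mathbb{F}_2$-valued dot product. I expect this conventional subtlety to be the one conceptual hurdle worth flagging explicitly, since examples such as $\bar{Y}_1 \bar{Y}_2$ (where $w=2$) already show that the $\mathbb{F}_2$ reading would produce the wrong phase. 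With that convention $w = \bmlogicalzeta \cdot \bmlogicalxi^T$, so $\psi(r(\logicalp)) = \varrho(\iota(\logicalp))$ and the diagram commutes.

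The remaining claims then drop out of the formula. Injectivity of $\psi$: equality of the $\bmlogicalxi$- and $\bmlogicalzeta$-components forces equality of those inputs, after which $2\delta_1 \equiv 2\delta_2 \pmod{4}$ forces $\delta_1 = \delta_2$ in $\mathbb{F}_2$. For the image I combine the explicit form of $\psi$ with Rmk.~\ref{rmk:orderPauli}: every $\psi(\bmlogicalxi, \bmlogicalzeta, \delta)$ has phase exponent $\ep$ satisfying $\ep \oplus_2 \bmlogicalzeta \cdot \bmlogicalxi^T = 0$, which by that remark characterises $G_{\leq 2} = \varrho(\bar\Pauli_{\leq 2})$; conversely, any element of $G_{\leq 2}$ is hit by choosing $\delta \in \mathbb{F}_2$ to absorb the difference $\ep - \bmlogicalzeta\cdot\bmlogicalxi^T \in \{0,2\} \subset \mathbb{Z}_4$. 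Finally, $(1 \mid \bm{0} \mid \bm{0}) \in G$ is the image under $\varrho$ of $i\idmatrix$ (which has order $4$) and violates the parity condition, so $G_{\leq 2} \subsetneq G$ and $\psi$ is not surjective.
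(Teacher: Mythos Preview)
Your proof is correct and follows essentially the same route as the paper: both convert the $XYZ$-standard form to the $XZ$-standard form via $\bar{Y}_\ipq = i\bar{X}_\ipq\bar{Z}_\ipq$, count the resulting $i$-factors as $\sum_\ipq \xi_\ipq\zeta_\ipq$, and conclude commutativity from uniqueness of the standard forms. Your explicit flagging of the $\mathbb{Z}_4$-versus-$\mathbb{F}_2$ reading of $\bmlogicalzeta\cdot\bmlogicalxi^T$ is a useful clarification the paper leaves implicit, and your direct verification of injectivity and the image description (via the parity condition of Rmk.~\ref{rmk:orderPauli}) is a minor variation on the paper's argument, which instead transfers these properties from the inclusion $\iota$ through the commuting square.
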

\begin{proof}
    Let $\logicalp\in\bar\Pauli$ be a Pauli operator in the $XYZ$ standard form of Eq.~\eqref{eqn:XYZ-standard-form} which we want to bring into $XZ$ standard form
    \begin{equation}\label{eq:changestdform}\begin{split}
        \logicalp&=(-1)^\delta\prod_{\ipq=1}^{\npq}\logicalp_{\ipq}=i^\ep \prod_{\ipq=1}^{\npq}\logicalpauliop{}{\ipq}{}^{\logicalxi_{\ipq}}\logicalpauliop{}{}{\ipq}^{\logicalzeta[\ipq]}.
    \end{split}\end{equation}
    Since we have
    \begin{equation}
        (\logicalxi[\ipq], \logicalzeta[\ipq]) = \left\{\begin{array}{ll}
            (0,0) & \text{if } \logicalp_{\ipq}=\idmatrix \\
            (1,0) & \text{if } \logicalp_{\ipq}=\logicalpauliop{}{\ipq}{}\\
            (1,1) & \text{if }\logicalp_{\ipq}=\bar{Y}_{\ipq}=i\logicalpauliop{}{\ipq}{\ipq}\\
            (0,1) & \text{if }\logicalp_{\ipq}=\logicalpauliop{}{}{\ipq}
        \end{array}\right.
    \end{equation}
    only the third case needs to be considered in order to compute the phase $\kappa$, this case occurs $\bmlogicalzeta\cdot\bmlogicalxi^T$ times, leading to 
    \begin{equation}\begin{split}
        \logicalp=(-1)^\delta\cdot i^{\bmlogicalzeta\cdot\bmlogicalxi^T} \prod_{\ipq=1}^{\npq}\logicalpauliop{}{\ipq}{}^{\logicalxi[\ipq]}\logicalpauliop{}{}{\ipq}^{\logicalzeta[\ipq]}&=i^{2\delta+\bmlogicalzeta\cdot\bmlogicalxi^T} \prod_{\ipq=1}^{\npq}\logicalpauliop{}{\ipq}{}^{\logicalxi[\ipq]}\logicalpauliop{}{}{\ipq}^{\logicalzeta[\ipq]}.
    \end{split}\end{equation}
    Therefore, the definition of $\psi$ and the fact that both standard forms are unique directly yield the commutativity of the diagram. Since $r$ and $\varrho$ are bijections, the properties of $\iota$, being injective but not surjective, since there exist elements of order $4$ in $\bar\Pauli$, transfer to $\psi$.
\end{proof}

\begin{corollary}\label{cor:forgetful_presentations}
    The forgetful presentations $\tilde{r}:\tilde\Pauli\rightarrow \mathbb{F}_{2}^{\npq}\times \mathbb{F}_{2}^{\npq}$ and $\tilde{\varrho}:\rightarrow \mathbb{F}_{2}^{\npq}\times \mathbb{F}_{2}^{\npq}$ defined on the quotient group $\tilde\Pauli=\bar\Pauli/\langle i\idmatrix\rangle$ induced by $r$ resp.\ $\varrho$, i.e., forgetting the phase factor, coincide.
\end{corollary}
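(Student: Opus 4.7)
The plan is to reduce the equality of $\tilde r$ and $\tilde\varrho$ to the commutativity of the diagram in Proposition~\ref{prop:randrho}. First, for any proper Pauli $P\in\bar\Pauli_{\leq2}$ with $r(P)=(\bmlogicalxi,\bmlogicalzeta,\delta)$, Proposition~\ref{prop:randrho} gives
\[
\varrho(P)=\psi(r(P))=\left(2\delta\oplus_4\bmlogicalzeta\cdot\bmlogicalxi^T\,\big|\,\bmlogicalxi\,\big|\,\bmlogicalzeta\right),
\]
so the $\bmlogicalxi$- and $\bmlogicalzeta$-components of $\varrho(P)$ coincide exactly with the first two components of $r(P)$. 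The only difference lies in the phase: $\delta\in\mathbb{F}_2$ on the $r$-side and $\ep=2\delta\oplus_4\bmlogicalzeta\cdot\bmlogicalxi^T\in\mathbb{Z}_4$ on the $\varrho$-side. Since both $\tilde r$ and $\tilde\varrho$ are by definition obtained by forgetting this phase component, they agree on the image of $\iota$ in the quotient.

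Second, I would verify that $\tilde r$ is well-defined on all of $\tilde\Pauli$, which is the step that requires some care because $r$ itself is only defined on the subset $\bar\Pauli_{\leq2}\subsetneq\bar\Pauli$. Here I would invoke Remark~\ref{rmk:orderPauli}: membership in $\bar\Pauli_{\leq2}$ is controlled by the bit $\ep\oplus_2\bmlogicalzeta\cdot\bmlogicalxi^T$, and replacing $P$ by $iP$ shifts $\ep\mapsto\ep\oplus_4 1$, flipping this bit. Hence every coset $[P]=\{P,iP,-P,-iP\}\in\tilde\Pauli$ contains at least one (in fact exactly two) representatives in $\bar\Pauli_{\leq2}$, and the two such representatives differ only by a global sign $(-1)$, which affects only the $\delta$-component of $r$. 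Consequently, the composition of $r$ with the projection onto the first two factors descends to a well-defined map $\tilde r:\tilde\Pauli\to\mathbb{F}_2^{\npq}\times\mathbb{F}_2^{\npq}$.

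Combining these two steps: given $[P]\in\tilde\Pauli$, pick any Hermitian representative $P_0\in\bar\Pauli_{\leq2}\cap[P]$. Then
\[
\tilde r([P])=(\bmlogicalxi\,|\,\bmlogicalzeta)=\tilde\varrho([P]),
\]
where the middle equality is the forgetful projection of $\varrho(P_0)$ and independence of representative on the $\varrho$-side is exactly the statement of the preceding proposition on $\tilde\varrho$. The main (mild) obstacle is the well-definedness issue in the second step; after that, everything follows immediately from the commutative square of Proposition~\ref{prop:randrho}.
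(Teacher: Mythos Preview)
Your argument is correct and is precisely the elaboration the paper has in mind: the corollary is stated there without proof, as an immediate consequence of Proposition~\ref{prop:randrho}, and your first step is exactly that reduction. Your second step, checking that $\tilde r$ is well-defined on all of $\tilde\Pauli$ despite $r$ living only on $\bar\Pauli_{\leq2}$, is a point the paper glosses over entirely; your treatment via Remark~\ref{rmk:orderPauli} is the right way to fill that gap and makes the argument more complete than the original.
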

\begin{remark}\label{rmk:phaseless_tilde}
    In the following we often have to refer to the phaseless version of some object, e.g., we need $(\bmlogicalxi\vert\bmlogicalzeta)$ of some $\bm{\varrho}=\logicalflowp$ or we need all entries of a ``matrix'' $M$ but those corresponding to phases. We will denote such a phaseless version of the object with a tilde, e.g., $\tilde{\bm\varrho}$ or $\tilde{M}$.
\end{remark}

\begin{remark}\label{rmk:properPaulis}
    Coming back to the proper Pauli operators, we have seen up to now that the $XYZ$-mapping is defined on Pauli operators with order up to $2$, i.e., $\logicalp^2=\idmatrix$. Since in the stabilizer formalism a stabilizer state $\state$ is characterized by the group $\stab\state$ of those Pauli operators $\logicalp$ that fix the state, i.e., $\logicalp\state=\state=+1\state$, any Pauli operator appearing in a stabilizer group has to have a non-trivial $+1$-eigenspace in the first place. Since $\logicalp$ is unitary, all eigenvalues $\lambda$ have absolute value $1$. And since $\logicalp^4=\idmatrix$, only $\pm1$ and $\pm i$ can be eigenvalues of $\logicalp$.

    The condition that $\ord(\logicalp)\leq2$ means that $\logicalp$ is Hermitian, since then $\logicalp^2=\idmatrix$, and therefore, we get real eigenvalues for $\logicalp$, more precisely $\pm1$. Note that the elements $\pm\idmatrix\in\bar\Pauli_{\leq2}$ by definition are \emph{not} proper Pauli operators, since they have either the whole space or the trivial subspace as $+1$-eigenspace. If you compare this to Sec.~10.5.1 on the stabilizer formalism in~\cite{NielsenChuang2011}, especially Prop.~10.5, this corresponds to the careful avoidance of the element $-\idmatrix$ in the stabilizer group.

    From Prop.~\ref{prop:randrho} we conclude that $\logicalflowp$ corresponds to a proper Pauli operator, iff $\ep\oplus_2\bmlogicalzeta\cdot\bmlogicalxi^T=0$ and $(\bmlogicalxi\vert\bmlogicalzeta)\not=(\textcolor{xlog}{\bm0}\vert\textcolor{zlog}{\bm0})$. This condition than makes sure that we can write the Pauli operator in the $XYZ$ standard form of Eq.~\eqref{eqn:XYZ-standard-form}.
\end{remark}
\begin{remark}
    Note that the $XYZ$-mapping $r$ is not a group isomorphism. On the one hand that is the case, since it is only defined on $\bar\Pauli_{\leq2}\subset\bar\Pauli$, which is not even a group. For example, $\logicalpauliop{}{ }{}$ and $\logicalpauliop{}{}{ }$ are in $\bar\Pauli_{\leq2}$ because $\logicalpauliop{}{ }{}^2=\logicalpauliop{}{}{ }^2=\idmatrix$, but their product $\logicalpauliop{}{ }{ }$ is not in $\bar\Pauli_{\leq2}$, since $(\logicalpauliop{}{ }{ })^2=\logicalpauliop{}{ }{ }\logicalpauliop{}{ }{ }=-\logicalpauliop{}{ }{}^2\logicalpauliop{}{}{ }^2=-\idmatrix$, hence $\ord(\logicalpauliop{}{ }{ })=4$. Equivalently, we could look on the eigenspaces and see that $\logicalpauliop{}{ }{}$ and $\logicalpauliop{}{}{ }$ each have a one-dimensional $+1$-eigenspace (and an orthogonal one-dimensional $-1$-eigenspace), while $\logicalpauliop{}{ }{ }$ has $\pm i$ as eigenvalues. Hence, $\bar\Pauli_{\leq2}$ is not closed under multiplication and not a subgroup. 

    On the other hand, writing $+\bar{Y}$ as $(1,1,0)$ is not compatible with writing $(1,0,0)$ and $(0,1,0)$ for $+\logicalpauliop{}{ }{}$ and $+\logicalpauliop{}{}{ }$ -- at least for the natural addition on $\mathbb{F}_{2}^{n}\times\mathbb{F}_{2}^{n}\times\mathbb{F}_{2}$ -- since
    \begin{equation}
        +\bar{Y}\mapsto(1,1,0)=(1,0,0)\oplus_2(0,1,0)\mapsto \logicalpauliop{}{ }{}\cdot \logicalpauliop{}{}{ }, 
    \end{equation}
    but $\bar{Y}=\logicalpauliop{i}{ }{ }\not=\logicalpauliop{}{ }{ }$.
\end{remark}

\subsection{(Anti-)commutativity}
Recall the following (anti-)commutativity properties of Pauli operators:
\begin{itemize}
    \item Two Pauli operators $\logicalp, \bar{Q}\in\bar\Pauli$ either commute or anticommute. In either of the two standard forms (see Eqns.~\eqref{eq:standard_form_pauli} and \eqref{eqn:XYZ-standard-form}) count the number of qubits, such that the $\ipq^\text{th}$ single-qubit factors of $\logicalp$ and $\bar Q$ anticommute. If this number is even, $\logicalp$ and $\bar{Q}$ commute, if it is odd, they anticommute.
    \item Denote by $\omega:\mathbb{F}_2^{2\npq}\times \mathbb{F}_2^{2\npq}\rightarrow\mathbb{F}_2$ the symplectic inner product given by $\Omega=\left(\begin{array}{c|c}
        0 & \mathbb{I} \\\hline
        \mathbb{I} & 0
    \end{array}\right)$,
    i.e., for $(\bmlogicalxi[1]\vert\bmlogicalzeta[1])$ and $(\bmlogicalxi[2]\vert\bmlogicalzeta[2])\in\mathbb{F}_2^{\npq}\times\mathbb{F}_2^{\npq}=\mathbb{F}_2^{2\npq}$ we define
    \begin{equation}\begin{split}
        \omega((\bmlogicalxi[1]\vert\bmlogicalzeta[1]),(\bmlogicalxi[2]\vert\bmlogicalzeta[2]))&:=(\bmlogicalxi[1]\vert\bmlogicalzeta[1])\Omega(\bmlogicalxi[2]\vert\bmlogicalzeta[2])^T\\
        &\phantom{:}=\bmlogicalzeta[1]\bmlogicalxi[2]^T\oplus \bmlogicalxi[1]\bmlogicalzeta[2]^T
    \end{split}\end{equation}
    The symplectic inner product of $\tilde\varrho(P)$ and $\tilde\varrho(Q)$ for two Pauli operators $\logicalp, \bar Q$ exactly counts the number of anticommuting single-qubit factors of $\logicalp$ and $\bar Q$, hence we have $\logicalp\bar{Q}=(-1)^\nu \bar{Q}\logicalp$ if and only if $\omega(\tilde\varrho(\logicalp),\tilde\varrho(\bar{Q}))=\tilde\varrho(\logicalp)\Omega \tilde\varrho(\bar{Q})^T=\nu$.
\end{itemize}
Note that the sign correction term $\bmlogicalzeta[1]\bmlogicalxi[2]^T$ (omitting the factor $2$) of the $\circledast$-sum $\varrho(\logicalp)\circledast\varrho(\bar Q)=\logicalflowp[1]\circledast\logicalflowp[2]=(\ep_1\oplus_4\ep_2\oplus_42\bmlogicalzeta[1]\bmlogicalxi[2]^T\vert\bmlogicalxi[1]\oplus\bmlogicalxi[2]\vert\bmlogicalzeta[1]\oplus\bmlogicalzeta[2])$ gives the first summand of the symplectic product, this corresponds to bringing the product $\logicalp\bar Q$ to the standard form. The second summand corresponds analogously to bringing the product $\bar Q\logicalp$ to the standard form. The factor $2$ is omitted since we stay in $\mathbb{F}_2$. The sum $\bmlogicalzeta[1]\bmlogicalxi[2]^T\oplus \bmlogicalxi[1]\bmlogicalzeta[2]^T$ then just adds the swaps of pairs $\logicalpauliop{}{\ipq}{}$ and $\logicalpauliop{}{}{\ipq}$ needed in going from $\logicalp\bar Q$ via the standard form to $\bar Q\logicalp$ and therefore, decides whether $\logicalp \bar Q=\bar Q\logicalp$ or $\logicalp\bar Q=-\bar Q\logicalp$. Note also that $\bmlogicalzeta[1]\bmlogicalxi[2]^T=(\bmlogicalxi[1]\vert\bmlogicalzeta[1])\left(\begin{smallmatrix}0&0\\\idmatrix&0\end{smallmatrix}\right)(\bmlogicalxi[2]\vert\bmlogicalzeta[2])^T$ (where $\left(\begin{smallmatrix}0&0\\\idmatrix&0\end{smallmatrix}\right)$ is $U^T$ of \cite{dehaenedemoor2003}). 

In order to simplify the notation, we also write $\omega$ for the concatenations $\omega\circ(\tilde\varrho, \tilde\varrho)$ and $\omega\circ(\tilde r, \tilde r)$, i.e., $\omega(\logicalflowp[1],\logicalflowp[2])$ is defined by forgetting the phase exponents and plugging the results into $\omega$ and analogously for $(\bmlogicalxi[i],\bmlogicalzeta[i],\delta_i)$.

\begin{corollary}\label{cor:proper_product_of_proper}
    Let $\logicalp_1, \logicalp_2\in\bar{\Pauli}$ be two Hermitian Pauli operators. Their product $\logicalp_1\cdot\logicalp_2$ is also Hermitian if and only if $\logicalp_1$ and $\logicalp_2$ commute.
\end{corollary}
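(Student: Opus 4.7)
The plan is to give a direct computational proof reducing both Hermiticity and commutativity to conditions on the symplectic components of the presentation.

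First, I would translate the two sides of the equivalence into explicit algebraic conditions using the presentation $\varrho$. By the table in Rmk.~\ref{rmk:orderPauli}, a Pauli operator $\logicalp$ with $\varrho(\logicalp)=\logicalflowp$ is Hermitian (i.e.\ has $\ord(\logicalp)\leq 2$) if and only if $\ep\oplus_2\bmlogicalzeta\bmlogicalxi^T=0$. On the other hand, by the discussion of the symplectic product $\omega$ above the statement, $\logicalp_1$ and $\logicalp_2$ commute if and only if $\omega(\tilde\varrho(\logicalp_1),\tilde\varrho(\logicalp_2))=\bmlogicalzeta[1]\bmlogicalxi[2]^T\oplus\bmlogicalxi[1]\bmlogicalzeta[2]^T=0$. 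So the task reduces to showing these two $\mathbb{F}_2$-conditions agree under the hypothesis that $\logicalp_1$ and $\logicalp_2$ are individually Hermitian.

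Next, I would apply the definition of $\circledast$ to obtain
\begin{equation*}
\varrho(\logicalp_1\logicalp_2)=\bigl(\ep_1\oplus_4\ep_2\oplus_4 2\bmlogicalzeta[1]\bmlogicalxi[2]^T\,\big|\,\bmlogicalxi[1]\oplus\bmlogicalxi[2]\,\big|\,\bmlogicalzeta[1]\oplus\bmlogicalzeta[2]\bigr),
\end{equation*}
and plug the components into the Hermiticity criterion. The key point is that the phase-correction summand $2\bmlogicalzeta[1]\bmlogicalxi[2]^T$ vanishes modulo $2$, so after reducing mod $2$ the Hermiticity condition for $\logicalp_1\logicalp_2$ reads
\begin{equation*}
\ep_1\oplus_2\ep_2\oplus_2(\bmlogicalzeta[1]\oplus\bmlogicalzeta[2])(\bmlogicalxi[1]\oplus\bmlogicalxi[2])^T=0.
\end{equation*}

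Then I would expand the bilinear cross-term into the four pieces $\bmlogicalzeta[i]\bmlogicalxi[j]^T$ and use the assumed Hermiticity $\ep_i\oplus_2\bmlogicalzeta[i]\bmlogicalxi[i]^T=0$ of each $\logicalp_i$ to cancel the diagonal pieces $\bmlogicalzeta[1]\bmlogicalxi[1]^T$ and $\bmlogicalzeta[2]\bmlogicalxi[2]^T$ against $\ep_1,\ep_2$. What remains is exactly $\bmlogicalzeta[1]\bmlogicalxi[2]^T\oplus\bmlogicalzeta[2]\bmlogicalxi[1]^T=\omega(\tilde\varrho(\logicalp_1),\tilde\varrho(\logicalp_2))$, proving the equivalence.

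I do not expect a real obstacle here: the entire argument is a one-line mod-$2$ computation once the right ingredients (the Hermiticity criterion from Rmk.~\ref{rmk:orderPauli}, the product formula from $\circledast$, and the symplectic characterization of commutativity) are in place. The only small subtlety is being careful that the $2\bmlogicalzeta[1]\bmlogicalxi[2]^T$ phase correction drops out mod $2$, so that the asymmetric form of $\circledast$ ends up yielding the symmetric symplectic form $\omega$.
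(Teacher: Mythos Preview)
Your proposal is correct and follows essentially the same approach as the paper: both compute $\ep\oplus_2\bmlogicalzeta\bmlogicalxi^T$ for the product using the $\circledast$-formula, expand the bilinear cross-term $(\bmlogicalzeta[1]\oplus\bmlogicalzeta[2])(\bmlogicalxi[1]\oplus\bmlogicalxi[2])^T$, cancel the diagonal pieces against $\ep_1,\ep_2$ via the assumed Hermiticity of each factor, and identify the remainder as $\omega(\bm{\varrho_1},\bm{\varrho_2})$. The paper's writeup keeps the $2\bmlogicalzeta[1]\bmlogicalxi[2]^T$ term explicitly through the computation before reducing mod $2$ at the end, whereas you drop it at the outset, but this is a cosmetic difference.
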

\begin{proof}
    Let $\bm{\varrho_i}=\varrho(\logicalp_i)=\logicalflowp[i]$. Then $\kappa_i\oplus_2\bmlogicalzeta[i]\cdot\bmlogicalxi[i]^T=0$ since $\logicalp_i$ is Hermitian. For the product $\bm\varrho=\varrho(\logicalp_1\logicalp_2)=\bm{\varrho_1}\circledast\bm{\varrho_2}=
\left(\ep_1\oplus_4\ep_2\oplus_4 2\bmlogicalzeta[1]\cdot\bmlogicalxi[2]^T|\bmlogicalxi[1]\oplus_2\bmlogicalxi[2]|\bmlogicalzeta[1]\oplus_2\bmlogicalzeta[2]\right)=\logicalflowp$ we get
\begin{equation}\begin{split}
    \kappa+\bmlogicalzeta\cdot\bmlogicalxi^T&= (\kappa_1+\kappa_2+2\bmlogicalzeta[1]\cdot\bmlogicalxi[2]^T)\\
    &\qquad+(\bmlogicalzeta[1]+\bmlogicalzeta[2])\cdot(\bmlogicalxi[1]+\bmlogicalxi[2])^T\\
    &= \kappa_1+\kappa_2+2\bmlogicalzeta[1]\cdot\bmlogicalxi[2]^T\\
    &\qquad+\bmlogicalzeta[1]\cdot\bmlogicalxi[1]^T+\bmlogicalzeta[1]\cdot\bmlogicalxi[2]^T+ \bmlogicalzeta[2]\cdot\bmlogicalxi[1]^T+\bmlogicalzeta[2]\cdot\bmlogicalxi[2]^T\\
    &=(\kappa_1+\bmlogicalzeta[1]\cdot\bmlogicalxi[1]^T)+(\kappa_2+\bmlogicalzeta[2]\cdot\bmlogicalxi[2]^T)\\
    &\qquad+2\bmlogicalzeta[1]\cdot\bmlogicalxi[2]^T+ \bmlogicalzeta[1]\cdot\bmlogicalxi[2]^T+\bmlogicalzeta[2]\cdot\bmlogicalxi[1]^T\\
    &\equiv\bmlogicalzeta[1]\cdot\bmlogicalxi[2]^T+\bmlogicalzeta[2]\cdot\bmlogicalxi[1]^T\qquad\mathrm{mod} (2)\\
    &=\omega(\bm{\varrho_1},\bm{\varrho_2})
\end{split}\end{equation}
Therefore, the Hermitian property $\kappa\oplus_2\bmlogicalzeta\cdot\bmlogicalxi^T=0$ is fulfilled iff $\omega(\bm{\varrho_1},\bm{\varrho_2})=0$, i.e., iff the two Pauli operators commute.
\end{proof}

Later we will need the following well known facts:
\begin{proposition}\label{prop:symplectic_matrices}
    Let $\tilde F=\left(\begin{array}{c|c}
         A& B \\\hline
         C& D
    \end{array}\right)\in\mathbb{F}_2^{2\npq\times2\npq}$ be a symplectic matrix, i.e., $\tilde F\Omega\tilde F^T=\Omega$, with $\npq\times\npq$-blocks $A, B, C$ and $D$.

    Then the inverse matrix can be easily calculated by
    \begin{equation}
        \tilde F^{-1}=\left(\begin{array}{c|c}
         D^T& B^T \\\hline &\\[-11pt]
         C^T& A^T
    \end{array}\right).
    \end{equation}
    The following are equivalent
    \begin{enumerate}
        \item $\tilde F$ is symplectic.
        \item $AD^T+BC^T=\idmatrix$ and the matrices $BA^T$ and $CD^T$ are symmetric.
        \item $\tilde F^{-1}$ is symplectic.
        \item $A^TD+C^TB=\idmatrix$ and the matrices $B^TD$ and $A^TD$ are symmetric.
    \end{enumerate}
\end{proposition}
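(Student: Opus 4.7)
The plan is to work directly from the definition of symplectic, $\tilde F \Omega \tilde F^T = \Omega$, expand block-wise, and derive everything else from this.

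First I would multiply out $\tilde F \Omega \tilde F^T$ using the $n\times n$ block form
\[
\tilde F \Omega \tilde F^T = \begin{pmatrix} A & B \\ C & D \end{pmatrix}\begin{pmatrix} 0 & I \\ I & 0 \end{pmatrix}\begin{pmatrix} A^T & C^T \\ B^T & D^T \end{pmatrix} = \begin{pmatrix} AB^T + BA^T & AD^T + BC^T \\ CB^T + DA^T & CD^T + DC^T \end{pmatrix}.
\]
Setting this equal to $\Omega$ and remembering that we are in $\mathbb{F}_2$ so $M + M = 0$, the diagonal blocks give $AB^T = BA^T$ and $CD^T = DC^T$ (i.e.\ $BA^T$ and $CD^T$ are symmetric), while the off-diagonal blocks give $AD^T + BC^T = I$ (the other one is its transpose, so it is the same condition). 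This establishes (i)$\Leftrightarrow$(ii).

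Next I would verify the inverse formula by direct block multiplication: assuming (ii), compute
\[
\tilde F \cdot \begin{pmatrix} D^T & B^T \\ C^T & A^T \end{pmatrix} = \begin{pmatrix} AD^T + BC^T & AB^T + BA^T \\ CD^T + DC^T & CB^T + DA^T \end{pmatrix} = \begin{pmatrix} I & 0 \\ 0 & I \end{pmatrix},
\]
where the diagonal blocks are $I$ by the first identity of (ii) and the off-diagonal blocks vanish by the symmetry statements in (ii). A symmetric multiplication on the other side works analogously, giving the formula for $\tilde F^{-1}$.

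For (i)$\Leftrightarrow$(iii), I would use the fact that symplectic matrices form a group, which can be verified on the fly: if $\tilde F \Omega \tilde F^T = \Omega$, then transposing and using $\Omega^T = \Omega$ gives $\tilde F \Omega \tilde F^T = \Omega$ as well, so $\tilde F^T$ is symplectic; equivalently, $(\tilde F^{-1})^T \Omega \tilde F^{-1} = \Omega$ holds after multiplying the defining identity by $(\tilde F^T)^{-1}$ on the left and $\tilde F^{-1}$ on the right, so $\tilde F^{-1}$ is symplectic. Finally, (iii)$\Leftrightarrow$(iv) is obtained by applying the already-proven equivalence (i)$\Leftrightarrow$(ii) to the explicit block form of $\tilde F^{-1}$ derived above: the roles of $A,B,C,D$ are replaced by $D^T, B^T, C^T, A^T$, and the conditions in (ii) translate into precisely those listed in (iv).

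The main obstacle is essentially bookkeeping: one has to keep track of which blocks are symmetric and when to use the $\mathbb{F}_2$ simplification $M + M = 0$, particularly when verifying the inverse formula. Everything else is mechanical once the block expansion of $\tilde F \Omega \tilde F^T$ is written down, so I would structure the proof around that single computation and harvest all four equivalences from it.
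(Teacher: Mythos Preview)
The paper does not prove this proposition at all; it is introduced as ``well known facts'' and stated without proof, so there is no paper argument to compare against. Your approach---expanding $\tilde F\Omega\tilde F^T$ block-wise, reading off (ii), verifying the inverse formula by direct multiplication, and then cycling through the remaining equivalences---is the standard one and is correct in substance.

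Two small points worth tightening. First, your (i)$\Leftrightarrow$(iii) paragraph is garbled: transposing $\tilde F\Omega\tilde F^T=\Omega$ just returns the same equation (both sides are symmetric), so it does not show $\tilde F^T$ is symplectic; and the manipulation you describe has the sides swapped. The clean version is to left-multiply $\tilde F\Omega\tilde F^T=\Omega$ by $\tilde F^{-1}$ and right-multiply by $(\tilde F^T)^{-1}=(\tilde F^{-1})^T$, yielding $\Omega=\tilde F^{-1}\Omega(\tilde F^{-1})^T$, which is exactly (iii). Second, for (iii)$\Leftrightarrow$(iv) it is cleaner (and avoids an apparent circularity from invoking the inverse formula, which you derived assuming (i)) to observe that (iv) is precisely the block expansion of $\tilde F^T\Omega\tilde F=\Omega$; this condition is equivalent to (i) by the same left/right multiplication trick, since over $\mathbb{F}_2$ one has $\Omega^{-1}=\Omega$.

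As an aside, the statement of (iv) in the paper appears to contain a typo: carrying out the block expansion of $\tilde F^T\Omega\tilde F=\Omega$ gives $A^TD+C^TB=\mathbb{I}$ together with $B^TD$ and $A^TC$ symmetric, not $A^TD$ symmetric. Your proof sketch glosses over this by saying the conditions ``translate into precisely those listed in (iv)'' without writing them out; when you do write them out, you will notice the discrepancy.
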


\begin{remark}\label{rmk:strup_block_form}
    Most often we will use the above block form for matrices like
    \begin{equation}
        \tilde{F}=\left(\begin{array}{c|c}
             \bmlogicalxi[1]&\bmlogicalzeta[1]  \\
             \vdots&\vdots\\
             \bmlogicalxi[\npq]&\bmlogicalzeta[\npq]  \\\hline
             \bmlogicalxi[\npq+1]&\bmlogicalzeta[\npq+1]  \\
             \vdots&\vdots\\
             \bmlogicalxi[2\npq]&\bmlogicalzeta[2\npq]
        \end{array}\right)    = \left(\begin{array}{c|c}
         \textcolor{xlog}{A}& \textcolor{zlog}{B} \\\hline
         \textcolor{xlog}{C}& \textcolor{zlog}{D}
    \end{array}\right).
    \end{equation}
    Then, the matrix in Eq.~\eqref{eq:commutation_multiple_paulis} can be calculated as
    \begin{equation}\begin{split}
        &\phantom{=}\mathrm{strup}\left(\begin{array}{ccc}
            \bmlogicalzeta[1]\bmlogicalxi[1]^T & \cdots & \bmlogicalzeta[1]\bmlogicalxi[2\npq]^T \\
            \vdots&&\vdots\\
            \bmlogicalzeta[2\npq]\bmlogicalxi[1]^T & \cdots & \bmlogicalzeta[2\npq]\bmlogicalxi[2\npq]^T,
        \end{array}\right)\\
        &=\mathrm{strup}\left(\tilde F\cdot \left(\begin{array}{c|c}
         0& 0 \\\hline
         \idmatrix& 0
    \end{array}\right)\cdot \tilde F^T\right)\\
    &=\mathrm{strup}\left(\begin{array}{c|c}
         \textcolor{zlog}{B}\textcolor{xlog}{A}^T& \textcolor{zlog}{B}\textcolor{xlog}{C}^T \\\hline
         &\\[-11pt]
         \textcolor{zlog}{D}\textcolor{xlog}{A}^T& \textcolor{zlog}{D}\textcolor{xlog}{C}^T
    \end{array}\right)\\
    &=\left(\begin{array}{c|c}
         \mathrm{strup}(\textcolor{zlog}{B}\textcolor{xlog}{A}^T)& \textcolor{zlog}{B}\textcolor{xlog}{C}^T \\\hline
         &\\[-11pt]
         0& \mathrm{strup}(\textcolor{zlog}{D}\textcolor{xlog}{C}^T)
    \end{array}\right).
    \end{split}\end{equation}
\end{remark}

\section{Representing Clifford Circuits}\label{sec:clifford_circuits}
Now that we have covered the needed prerequisites on Pauli operators we can turn our attention to the Clifford circuits we would like to track. For a Clifford circuit $C$ -- consisting of Hadamard gates $H$, $S$-gates and $\CNOT$ gates -- we can ask the following two different questions -- using the convention of the main text that \emph{logical} refers to Pauli operators, states etc.\ before the circuit, indicated by a bar in the notation, while \emph{physical} refers to Pauli operators, states, etc.\ after the circuit, written without a bar:
\begin{description}
    \item[Question~1 - logical to physical] Given a logical Pauli operator $\bar{P}$ acting as $\state[\bar\psi]\mapsto\logicalp\state[\bar\psi]$ on the logical state $\state[\bar\psi]$ \emph{before} applying the circuit $C$, what is the induced physical action of $\bar{P}$ \emph{after} the circuit $C$? 

    We want to determine the action $C_\ast \bar{P}$ on the physical state $\state$ (see question marks in Fig.~\ref{fig:pushforward}). The bold arrows show the way how the action of $\bar{P}$ is pushed forward to its induced action $\state\mapsto C\logicalp C^\dagger\state$.

    Algebraically, $C \bar{P} C^\dagger=:C_\ast\bar{P}$ is the \emph{pushforward} of $\bar{P}$ through $C$.
\end{description}
\begin{figure}[!htb]
    \centering
    \includegraphics[width=200pt]{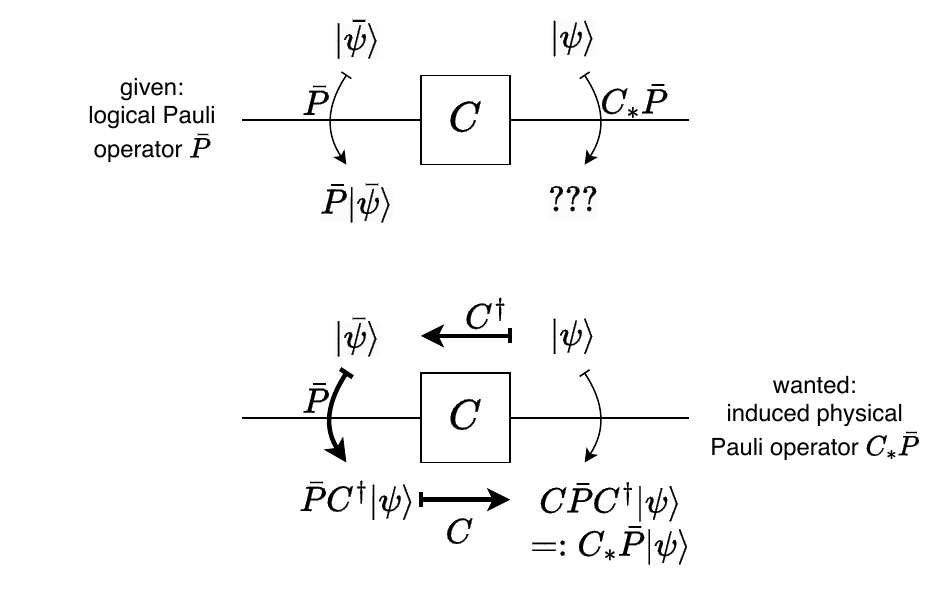}    
    \caption{The action of a logical Pauli operator $\logicalp$ on the logical state $\state[\bar\psi]$ gets pushed forward to the action $C_\ast\logicalp$ on the physical state $\state$ after the Clifford circuit $C$. $C$ acts on $\logicalp$ by left conjugation.}
    \label{fig:pushforward}
\end{figure}
    
\begin{description}
    \item[Question~2 - physical to logical] Given a physical Pauli operator $P$ acting on the physical state $\state$ \emph{after} the Clifford circuit $C$, i.e., $\state=C\state[\bar\psi]$, what is the induced logical action of $P$ \emph{before} the circuit $C$?

    In this variant we want to know the action $C^\ast P$ on the logical state $\state[\bar\psi]$, which we again get through the bold arrows now pulling back the action of $P$ to its induced action on $\state[\bar\psi]$.    

    Algebraically, $C^\dagger P C=:C^\ast P$ is the \emph{pullback} of $P$ through $C$.
\end{description}
\begin{figure}[!htb]
    \centering
    \includegraphics[width=200pt]{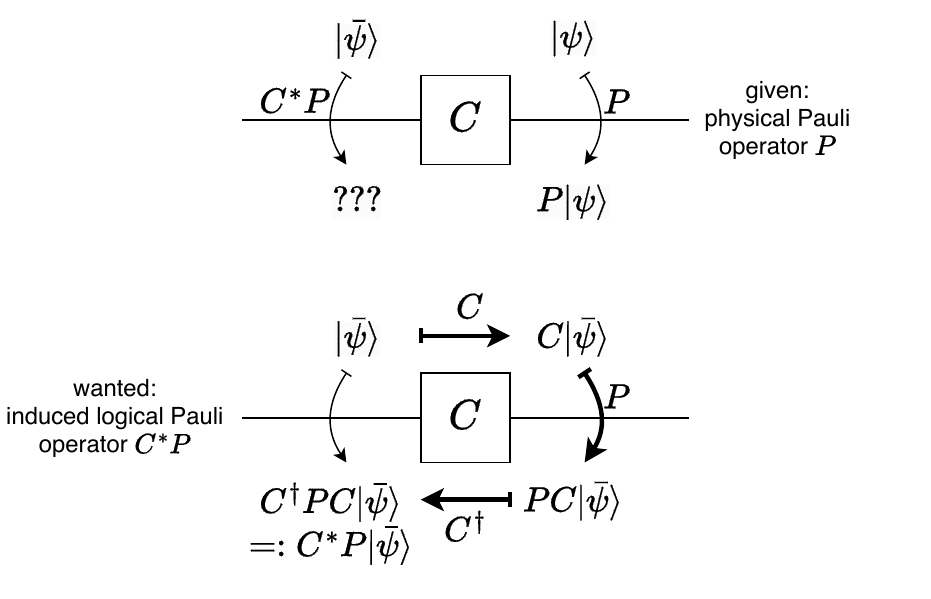}
    \caption{The action of a physical Pauli operator $P$ on the physical state $\vert\psi\rangle$ gets pulled back to the action $C^\ast P$ on the logical state $\vert\bar\psi\rangle$ before the Clifford circuit $C$. $C$ acts on $P$ by right conjugation.}
    \label{fig:pullback}
\end{figure}

For example, question~1 arises in the study of quantum error correction codes, if $\logicalp$ is a stabilizer of a logical state $\state[\bar\psi]$ and we want to know the induced stabilizer after the circuit $C$ in order to check whether some physical Pauli rotation $R_P$ respects the stabilizer. On the other hand, if we apply some physical Pauli rotation $R_P$ after the Clifford circuit $C$ and want to know its logical meaning on the state $\state[\bar\psi]$ we ask question~2.

In this section we will first introduce the action of the Clifford group on the Pauli group from the group theoretic side, identifying the covariant and contravariant conjugations of Pauli operators by Clifford unitaries and introducing the different choices of conjugation type, circuit/inverse circuit and prepend/append. Afterwards the flow and the Clifford tableau are introduced as an analogue to the well known matrix representation of linear maps. As the representation of linear maps highly depends on the chosen (ordered) basis, the tableaus highly depend on the choice of an ordered, symplectic generator set of the Pauli group. In order to have a self contained explanation of the two different tableaus we cover the technical details of this matrix-like representation which is mainly needed since covering the phases in the action destroys the algebraic structure as a vector space (see Rmk.~\ref{rmk:no-vectorspace}). In the remainder of the section the tableaus are categorized into co- and contravariant choices (see Fig.~\ref{fig:decision_tree}) and the practical side of calculating the tableaus in an iterative manner are covered (see Exp.~\ref{exp:heisenberg}).

\subsection{Clifford group and its action on the Pauli group}\label{subsec:clifford_on_pauli}

\subsubsection{Left vs.\ right conjugation}

\begin{definition}
    The \emph{Clifford group} $\Cl=\Cl_n$ on $n$ qubits is the normalizer of the Pauli group, i.e.,
    \begin{equation}
        \Cl=\{C| C \text{ unitary}, C^\dagger P C\in\bar\Pauli \text{ for all }P\in\Pauli\}.
    \end{equation}
\end{definition}
Note that we could have used any of the two types of conjugations
\begin{align}
&\text{left conjugation}& \logicalp &\mapsto C_{\ast}\logicalp= C \logicalp C^\dagger \label{eqn:pushforward}\\
&\text{right conjugation}& P &\mapsto C^{\ast}P = C^\dagger P C\label{eqn:pullback}.
\end{align}
in this definition. But since the normalizer is indeed a group, we have that $C^\dagger\in\Cl$ iff $C\in\Cl$, hence both conjugations lead to the same definition of $\Cl$. We use $\bar\Pauli$ resp.\ $\Pauli$ for the Pauli group to differentiate between the domain and the codomain of the conjugation types, nevertheless they are always the same group in these appendices.

Note that both conjugation types give group homomorphisms, i.e., they are compatible with the multiplication of Pauli operators:
\begin{equation}\label{eq:conjugation-grouphom}\begin{split}
    C_\ast(\logicalp\cdot \bar Q)=C(\logicalp\cdot \bar Q)C^\dagger=C\logicalp C^\dagger C\bar QC^\dagger=C_\ast \logicalp\cdot C_\ast \bar Q\\
    C^\ast(P\cdot Q)=C^\dagger(P\cdot Q)C=C^\dagger P C C^\dagger QC=C^\ast P\cdot C^\ast Q
\end{split}\end{equation}

The seemingly small difference of the left and the right conjugation lead to one significant difference: If we concatenate two Clifford circuits $C_1$ and $C_2$ to find a circuit $C=C_2C_1$, the order of the two individual circuits is \emph{preserved} by the left conjugation
\begin{equation}\label{eq:pushforward_covariant}\begin{split}
    C_\ast \logicalp&=(C_2C_1)_\ast \logicalp=(C_2C_1)\logicalp(C_2C_1)^\dagger\\
    &=C_2C_1\logicalp C_1^\dagger C_2^\dagger=C_2( {C_1}_\ast \logicalp) C_2^\dagger={C_2}_\ast( {C_1}_\ast \logicalp)
\end{split}\end{equation}
but \emph{reversed} by the right conjugation
\begin{equation}\label{eq:pullback_contravariant}\begin{split}
    C^\ast P&=(C_2C_1)^\ast P=(C_2C_1)^\dagger P(C_2C_1)\\
    &=C_1^\dagger C_2^\dagger P C_2C_1=C_1^\dagger( {C_2}^\ast P) C_1={C_1}^\ast({C_2}^\ast P).
\end{split}\end{equation}
In short, the pushforward is \emph{covariant}, i.e., $(C_2C_1)_\ast=(C_2)_\ast(C_1)_\ast$, while the pullback is \emph{contravariant}, i.e., $(C_2C_1)^\ast=(C_1)^\ast(C_2)^\ast$. This gives the following 
\begin{proposition}\label{prop:conjugation-functors}
    The two conjugation types, the pushforward
    \begin{equation}\begin{split}
        {}_\ast:\Cl\times\bar\Pauli&\rightarrow\Pauli\\
        (C,\logicalp)&\mapsto C_\ast \logicalp:=C \logicalp C^\dagger
    \end{split}\end{equation}
    as well as the pullback
    \begin{equation}\begin{split}
        {}^\ast:\Cl\times\Pauli&\rightarrow\bar\Pauli\\
        (C,P)&\mapsto C^\ast P:=C^\dagger PC
    \end{split}\end{equation}
    are covariant group homomorphisms in their second component, i.e., for a fixed Clifford unitary $C\in\Cl$ and two Pauli operators $\logicalp, \bar Q\in\bar\Pauli$ we have $C_\ast(\logicalp\cdot\bar Q)=C_\ast \logicalp\cdot C_\ast \bar Q$ and for $P,Q\in\Pauli$ we have $C^\ast(P\cdot Q)=C^\ast P\cdot C^\ast Q$.

    The pushforward is covariant in the first component, i.e., for two Clifford unitaries $C_1, C_2\in\Cl$ and $\logicalp\in\bar\Pauli$
    \begin{equation}
        (C_2\cdot C_1)_\ast \logicalp={C_2}_\ast ({C_1}_\ast \logicalp),
    \end{equation}
    while the pullback is contravariant in the first component, i.e., for two Clifford unitaries $C_1, C_2\in\Cl$ and $P\in\Pauli$ 
    \begin{equation}
        (C_2\cdot C_1)^\ast P= {C_1}^\ast ({C_2}^\ast P).
    \end{equation}
\end{proposition}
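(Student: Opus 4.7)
The proposition essentially packages together three algebraic facts that have already been spelled out inline in equations~\eqref{eq:conjugation-grouphom}, \eqref{eq:pushforward_covariant}, and \eqref{eq:pullback_contravariant}. My plan is therefore to present a proof that is little more than an organized unpacking of the definitions of $C_\ast$ and $C^\ast$, relying solely on associativity of operator multiplication, the identity $CC^\dagger=C^\dagger C=\idmatrix$, and the identity $(C_2C_1)^\dagger=C_1^\dagger C_2^\dagger$. Before doing so, I would briefly note that the codomains in the statement make sense: by the definition of $\Cl$ as the normalizer of the Pauli group, $C\logicalp C^\dagger\in\Pauli$ whenever $\logicalp\in\bar\Pauli$ and $C^\dagger PC\in\bar\Pauli$ whenever $P\in\Pauli$, so the maps really do take values as claimed.

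For the second-component statement I would compute, for fixed $C\in\Cl$ and $\logicalp,\bar Q\in\bar\Pauli$,
\begin{equation*}
C_\ast(\logicalp\bar Q)=C\logicalp\bar QC^\dagger = (C\logicalp C^\dagger)(C\bar QC^\dagger)=C_\ast\logicalp\cdot C_\ast\bar Q,
\end{equation*}
where the middle equality uses $C^\dagger C=\idmatrix$ inserted between $\logicalp$ and $\bar Q$. The analogous computation with $C^\dagger$ on the outside and $CC^\dagger=\idmatrix$ inserted gives $C^\ast(PQ)=C^\ast P\cdot C^\ast Q$. This establishes that both maps are group homomorphisms in their Pauli-argument, i.e.\ covariant in the second component.

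For the first-component statements the only input is how the dagger interacts with products of Clifford unitaries. For the pushforward,
\begin{equation*}
(C_2C_1)_\ast\logicalp=(C_2C_1)\logicalp(C_2C_1)^\dagger=C_2\bigl(C_1\logicalp C_1^\dagger\bigr)C_2^\dagger={C_2}_\ast({C_1}_\ast\logicalp),
\end{equation*}
so the order of $C_1$ and $C_2$ is preserved (covariance). For the pullback,
\begin{equation*}
(C_2C_1)^\ast P=(C_2C_1)^\dagger P(C_2C_1)=C_1^\dagger\bigl(C_2^\dagger PC_2\bigr)C_1={C_1}^\ast({C_2}^\ast P),
\end{equation*}
so the order is reversed (contravariance). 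The only place a minor subtlety could arise is the outer-circuit statement: one might worry that the inner result $C_1\logicalp C_1^\dagger$ or $C_2^\dagger PC_2$ is not back in the right Pauli group before the outer conjugation is applied, but this is precisely guaranteed by the normalizer property defining $\Cl$. With that noted, nothing in the argument is harder than parenthesization, so I do not anticipate any genuine obstacle — the content of the proposition is really just the observation that conjugation distributes over products and that the two conjugation conventions are exchanged by taking adjoints.
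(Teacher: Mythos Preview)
Your proposal is correct and follows exactly the paper's approach: the paper establishes the proposition via the inline computations in Eqs.~\eqref{eq:conjugation-grouphom}, \eqref{eq:pushforward_covariant}, and \eqref{eq:pullback_contravariant} immediately preceding the statement, and you have reproduced those same insertions of $C^\dagger C=\idmatrix$ and the expansion of $(C_2C_1)^\dagger$. Your additional remark about the normalizer property guaranteeing the codomains is a small clarification beyond what the paper makes explicit, but it does not change the argument.
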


\begin{corollary}\label{cor:proper_paulis_preserved}
    The action of the Clifford group on the Pauli group by right resp.\ left conjugation preserves the set of proper Pauli operators.
\end{corollary}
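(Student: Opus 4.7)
The plan is to use the characterization from Rmk.~\ref{rmk:properPaulis}: a Pauli operator $\logicalp\in\bar\Pauli$ is proper iff $\ord(\logicalp)\leq 2$ and $\logicalp\neq\pm\idmatrix$, i.e., it is Hermitian and not a scalar multiple of $\idmatrix$. Fix a Clifford unitary $C\in\Cl$ and a proper Pauli operator $\logicalp$. I have to verify that $C_\ast\logicalp=C\logicalp C^\dagger$ (and analogously $C^\ast\logicalp$) again satisfies both properties, membership in $\bar\Pauli$ being automatic by the very definition of the Clifford group as the normalizer of $\bar\Pauli$.

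For the order condition, the pushforward is a group homomorphism by Prop.~\ref{prop:conjugation-functors} (equivalently, Eq.~\eqref{eq:conjugation-grouphom}), so
\begin{equation}
(C_\ast\logicalp)^2 = C_\ast(\logicalp^2) = C_\ast\idmatrix = C\idmatrix C^\dagger = \idmatrix,
\end{equation}
whence $\ord(C_\ast\logicalp)\leq 2$. More generally, the order is a conjugation invariant, so $\ord(C_\ast\logicalp)=\ord(\logicalp)$.

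For the non-triviality, if $C_\ast\logicalp=\varepsilon\idmatrix$ with $\varepsilon\in\{+1,-1\}$, then conjugating by $C^\dagger$ yields $\logicalp=\varepsilon\idmatrix$, contradicting that $\logicalp$ was proper. Hence $C_\ast\logicalp\notin\{+\idmatrix,-\idmatrix\}$, concluding that $C_\ast\logicalp$ is proper.

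The identical argument, replacing $C_\ast=C(\cdot)C^\dagger$ by $C^\ast=C^\dagger(\cdot)C$, handles the pullback case; alternatively one can simply apply the already-proven pushforward statement to the Clifford $C^\dagger\in\Cl$. There is essentially no obstacle here: the argument is a direct consequence of conjugation being an order-preserving automorphism together with the normalizer property of $\Cl$.
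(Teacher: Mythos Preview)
Your proof is correct and follows essentially the same approach as the paper: use the homomorphism property of conjugation to get $\ord(C_\ast\logicalp)\leq 2$, and use that scalar multiples of $\idmatrix$ are fixed by conjugation (equivalently, your conjugate-back argument) to exclude $\pm\idmatrix$. The only cosmetic difference is that the paper phrases the argument for the pullback first and then says the pushforward is analogous, whereas you do the reverse.
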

\begin{proof}
    Let $C\in\Cl$ be a Clifford unitary and $P\in\Pauli$ a proper Pauli operator, i.e., $\ord(P)=2$ and $P\not=-\idmatrix$. Then $(C^\ast P)^2=C^\ast P^2=C^\ast\idmatrix=\idmatrix$ giving $\ord(C^\ast P)\leq 2$. Since $C^\ast(i^\kappa\idmatrix)=i^\kappa\idmatrix$ and $P\not=\pm\idmatrix$ its pullback $C^\ast P\not=\pm\idmatrix$ and therefore, the pullback of a proper Pauli operator $P$ is again proper. The proof for left conjugation, i.e., the pushforward is analogous.
\end{proof}

\begin{corollary}\label{cor:commutativity_paulis_preserved}
    The action of a Clifford unitary $C\in\Cl$ by right conjugation preserves the (anti-)commutativity properties of Pauli operators, i.e., for $P,Q\in\Pauli$ with $PQ=(-1)^\nu QP$ we have $C^\ast PC^\ast Q = (-1)^\nu C^\ast QC^\ast P$. The symplectic inner product $\omega$ is invariant under right conjugation, i.e., $\omega(\varrho(P),\varrho(Q))=\omega(\varrho(C^\ast P),\varrho(C^\ast Q))$. The analogous results are true for the left conjugation.
\end{corollary}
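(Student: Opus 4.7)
The plan is to derive both statements as essentially immediate consequences of the fact, already established in Prop.~\ref{prop:conjugation-functors}, that right conjugation $C^\ast$ is a group homomorphism in its Pauli-argument, combined with the observation that central scalar phases are fixed under conjugation (since $C^\dagger(i^\kappa\idmatrix)C = i^\kappa\idmatrix$).

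First I would prove the (anti-)commutativity claim by direct computation. Starting from $PQ = (-1)^\nu QP$, I apply $C^\ast$ to both sides. The homomorphism property from Eq.~\eqref{eq:conjugation-grouphom} turns the left-hand side into $C^\ast P\cdot C^\ast Q$. For the right-hand side, I split the factor $(-1)^\nu\idmatrix\in Z(\bar\Pauli)$ off and use that it is fixed by conjugation, giving $(-1)^\nu C^\ast Q\cdot C^\ast P$. This yields the desired identity $C^\ast P\cdot C^\ast Q = (-1)^\nu C^\ast Q\cdot C^\ast P$.

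Second, for the invariance of the symplectic inner product, I would recall the characterization stated right before Cor.~\ref{cor:proper_product_of_proper}: for any two Pauli operators $R,S$, the value $\omega(\tilde\varrho(R),\tilde\varrho(S))\in\mathbb{F}_2$ equals the unique $\nu\in\{0,1\}$ with $RS = (-1)^\nu SR$. Applying this characterization once to $(P,Q)$ and once to $(C^\ast P, C^\ast Q)$ and combining with the first step, the same $\nu$ appears in both cases, so $\omega(\varrho(P),\varrho(Q)) = \omega(\varrho(C^\ast P),\varrho(C^\ast Q))$ (the $\omega$-notation on phase-carrying arguments is defined above via $\tilde\varrho$).

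Finally, the analogous statements for the pushforward follow by exactly the same two-line argument, since Prop.~\ref{prop:conjugation-functors} gives the group-homomorphism property of $C_\ast$ in its Pauli-argument as well, and central phases are equally fixed by left conjugation. I do not anticipate a real obstacle here; the only thing to be slightly careful about is the symbolic slippage between $\omega$ on $\mathbb{F}_2^{2\npq}\times\mathbb{F}_2^{2\npq}$ and its extension to phase-carrying vectors via $\tilde\varrho$, but this is just bookkeeping.
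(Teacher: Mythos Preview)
Your proposal is correct and follows essentially the same approach as the paper: use the group-homomorphism property of $C^\ast$ (Prop.~\ref{prop:conjugation-functors}) to transport the relation $PQ=(-1)^\nu QP$ through conjugation, then invoke the characterization of $\omega$ via (anti-)commutativity to conclude invariance, and note the analogous argument for $C_\ast$. The paper's version is slightly terser, absorbing your explicit observation that $C^\ast((-1)^\nu\idmatrix)=(-1)^\nu\idmatrix$ into a single homomorphism chain.
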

\begin{proof}
    Since right conjugation is a group homomorphism we get
    \begin{equation}\begin{split}
        (C^\ast P)(C^\ast Q) &= C^\ast (PQ)=C^\ast((-1)^\nu QP)\\
        &=(-1)^\nu (C^\ast Q)(C^\ast P).
    \end{split}\end{equation}
    This implies $\omega(\varrho(P),\varrho(Q))=\nu=\omega(\varrho(C^\ast P),\varrho(C^\ast Q))$. The proof for the left conjugation is analogous.
\end{proof}

 In general, we use the notations $\bar{\Pauli}$, $\logicalp$, $\logicalpauliop{}{\ipq}{}$, $\logicalpauliop{}{}{\ipq}$, etc.\ for the logical objects before the circuit and $\Pauli$, $P$, $\physicalpauliop{}{\ipq}{}$, $\physicalpauliop{}{}{\ipq}$, etc.\ for the physical objects after the circuit in order to highlight the direction of the mapping investigated -- physical-to-logical vs.\ logical-to-physical. If we need to differentiate between time steps like in the concatenation of circuits, we use a superscript $\Pauli^{(\itime)}$, $P^{(\itime)}$, $X_{\itq}^{(\itime)}$, $Z_{\itq}^{(\itime)}$, etc.\ for the time $\itime$, in order to not speak about the physical output of the first circuit as the logical input of the second circuit, which might lead to confusion. Therefore, in the case of the concatenation $C_2C_1$ of two circuits, the logical Pauli operators $\bar{\Pauli}$ correspond to $\Pauli^{(0)}$ at time $\itime=0$, while the time steps $i=1,2$ are after the corresponding circuit $C_i$. Fig.~\ref{fig:concatenation} shows the co- and contravariance of the pushforward and the pullback in the circuit model.

\begin{figure}
    \centering
    \includegraphics[width=\linewidth]{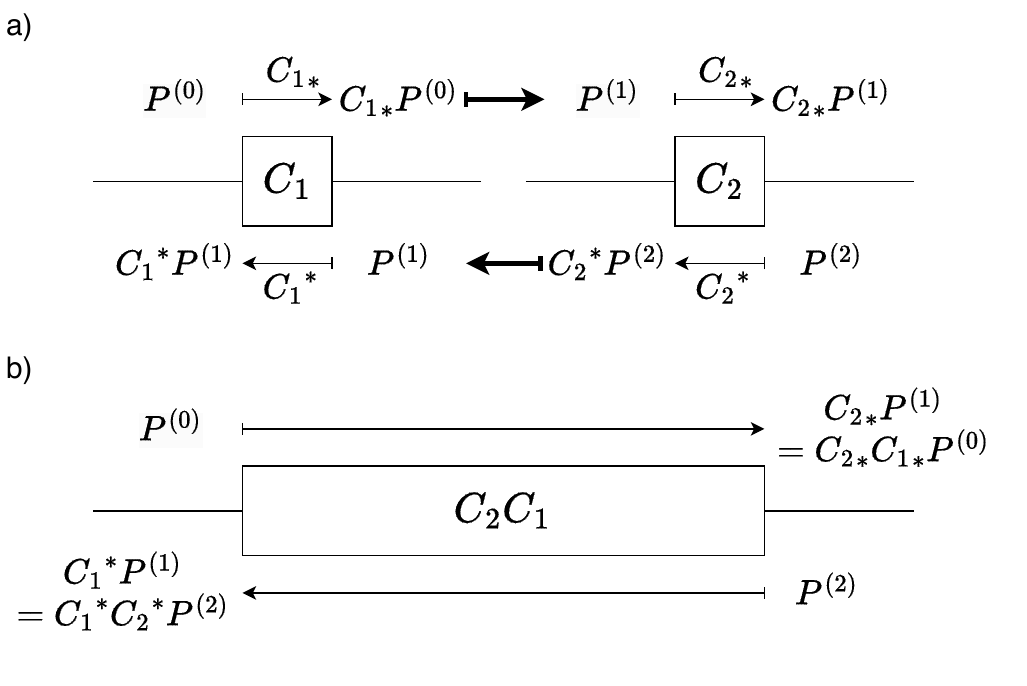}
    \caption{a) Given the two circuits $C_1$ and $C_2$ we can calculate the pushforward (above the circuits) and the pullback (below the circuits). b) The concatenation $C_2C_1$ of the circuits then corresponds to iteratively applying ${C_2}_\ast$ to $P^{(1)}={C_1}_\ast P^{(0)}$ for the pushforward resp.\ iteratively applying ${C_1}^\ast$ to $P^{(1)}={C_2}^\ast P^{(2)}$ for the pullback (see bold arrows in a)). This shows that the pushforward is naturally calculated in an iterative manner in the forward direction of time, while the pullback is naturally calculated iteratively in the backwards direction of time.}
    \label{fig:concatenation}
\end{figure}

\subsubsection{Appending vs.\ prepending gates}
The co- and contravariance of the pushforward resp.\ the pullback lead to different properties with respect to the appending or prepending of gates to a Clifford circuit, see also \cite{Gidney2021stimfaststabilizer, Winderl2023}.

So, if we consider a circuit $C=C_{\itime}\cdots C_2C_1$ having already calculated $C_\ast \logicalp$ and $C^\ast P$ and we now want to \emph{append} a Clifford operator $C_{\itime+1}$, i.e., consider $C_{\itime+1}C$, then in the case of the pushforward we only have to add $C_{\itime+1}$ and $C_{\itime+1}^\dagger$ to the \emph{outside} of the ``sandwich'' 
\begin{equation}
{C_{\itime+1}}_\ast( C_\ast \logicalp)=C_{\itime+1}(C_{\itime}\cdots C_1\logicalp C_1^\dagger \cdots C_{\itime}^\dagger) C_{\itime+1}^\dagger,
\end{equation}
while in the case of the pullback the adding of $C_{\itime+1}^\dagger$ and $C_{\itime+1}$ has to happen on the \emph{inside} of the ``sandwich'' 
\begin{equation}
C^\ast(C_{\itime+1}^\ast P)=C_1^\dagger \cdots C_{\itime}^\dagger (C_{\itime+1}^\dagger P C_{\itime+1}) C_{\itime} \cdots C_{1}.
\end{equation}
Therefore, the calculation of the pushforward is \emph{compatible with appending gates} in the sense that the appending can be naturally done in an iterative manner -- just ``sandwich'' from the outside with the gate to be appended. In contrast, the calculation of the pullback is a priori not compatible with appending gates in the sense that the appending cannot be naturally done in an iterative manner -- the gate to be appended has to be inserted into the ``sandwich''. The opposite is true for the prepending of a gate, i.e., we again consider the circuit $C=C_{\itime}\cdots C_2C_1$ but now we \emph{prepend} the gate $C_0$ to get $CC_0$. A similar calculation as above shows that prepending is naturally iterative for the pullback but not for the pushforward. 

In view of the two mappings logical to physical for the pushforward and physical to logical for the pullback the question of being naturally iterative corresponds to the question which Pauli operators under consideration remain or change by the appending or prepending of a gate. Considering the appending of a gate, the group of logical Pauli operators stays the same, while the group of physical Pauli operators gets moved one time step into the future. This is iterative for the pushforward, since the preimages, i.e., the logical operators, stay the same, while the resulting images, i.e., physical operators, have to be pushed into the future. For the pullback the physical operators are the preimages and these have to be transferred to the next time step, while the images stay in the unchanged logical Pauli group. Therefore, we have to prepend the action of the appended gate, which is not naturally iterative. Considering the prepending of a gate, the situation gets reversed, the logical Pauli group has to be transferred one time step backwards in time, prepending the action of the prepended gate, which is not naturally iterative, while the images of the pullback change, appending the action of the prepended gate, which is naturally iterative. 

\subsubsection{Circuit vs.\ inverse circuit}
Many publications using the Clifford tableau use the inverse Clifford tableau, which is the Clifford tableau of the inverse circuit $C^\dagger$, see \cite{Winderl2023, Gidney2021stimfaststabilizer}. Interchanging $C$ with its inverse $C^\dagger$ obviously interchanges the conjugation types:
\begin{align}
    (C^\dagger)_\ast P = (C^\dagger) P(C^\dagger)^\dagger =C^\dagger P C = C^\ast P\label{eq:pullback_pushforward_inverse1}\\
    \intertext{and vice versa}
    (C^\dagger)^\ast \logicalp = (C^\dagger)^\dagger \logicalp(C^\dagger) =C \logicalp C^\dagger = C_\ast \logicalp.\label{eq:pullback_pushforward_inverse2}
\end{align}
Which also shows that the pullback and the pushforward are inverse to each other:
\begin{equation}\label{eq:pullback_pushforward_inverse3}
    C^\ast(C_\ast \logicalp)=(C^\dagger)_\ast( C_\ast \logicalp)=(C^\dagger C)_\ast \logicalp = \idmatrix_\ast \logicalp = \logicalp
\end{equation}
as well as
\begin{equation}\label{eq:pullback_pushforward_inverse4}
    C_\ast(C^\ast P)=(C^\dagger)^\ast( C^\ast P) \overset{\bullet}{=}(C C^\dagger)^\ast P = \idmatrix^\ast P = P,
\end{equation}
note the reversed ordering of the two matrices at $\bullet$ because of the contravariance.

\begin{corollary}\label{cor:conjugation_isomorphism}
    In the situation of Prop.~\ref{prop:conjugation-functors} the two conjugation types $C^\ast:\Pauli\rightarrow\bar\Pauli$ and $C_\ast:\bar{\Pauli}\rightarrow\Pauli$ are group isomorphisms for each Clifford $C\in\Cl$.
\end{corollary}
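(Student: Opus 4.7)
The plan is to assemble the corollary from pieces already established earlier in the excerpt, so the proof is essentially a bookkeeping exercise rather than a substantive argument. The strategy has two parts: show that each of $C_\ast$ and $C^\ast$ is a group homomorphism, and then show that they are mutually inverse bijections of sets.

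First, I would appeal directly to Prop.~\ref{prop:conjugation-functors}, which already establishes the homomorphism property in the Pauli argument: for any fixed $C\in\Cl$ and any $\logicalp,\bar{Q}\in\bar\Pauli$ we have $C_\ast(\logicalp\bar Q)=C_\ast\logicalp\cdot C_\ast\bar Q$, and analogously $C^\ast(PQ)=C^\ast P\cdot C^\ast Q$ for $P,Q\in\Pauli$. The fact that the images lie in $\Pauli$ resp.\ $\bar\Pauli$ is guaranteed by the definition of the Clifford group as the normalizer of $\Pauli$, so the maps are well defined as homomorphisms between the indicated groups.

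Second, I would invoke Eqs.~\eqref{eq:pullback_pushforward_inverse3} and \eqref{eq:pullback_pushforward_inverse4}, which give $C^\ast\circ C_\ast = \mathrm{id}_{\bar\Pauli}$ and $C_\ast\circ C^\ast = \mathrm{id}_{\Pauli}$. These two identities say precisely that $C_\ast$ and $C^\ast$ are set-theoretic inverses of each other, so each of them is a bijection. Combining this with the homomorphism property from Prop.~\ref{prop:conjugation-functors} yields that both are group isomorphisms, and moreover that $(C_\ast)^{-1}=C^\ast$ (and vice versa), which is the content of the corollary.

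The only subtle point, which I would mention briefly, is that although $\bar\Pauli$ and $\Pauli$ denote the same underlying group, the distinction between domain and codomain matters for the statement that $C_\ast$ and $C^\ast$ are \emph{mutually} inverse maps, because swapping $C$ for $C^\dagger$ swaps the two conjugation types (Eqs.~\eqref{eq:pullback_pushforward_inverse1} and \eqref{eq:pullback_pushforward_inverse2}). There is no real obstacle; everything needed has been laid out above, and the corollary is obtained by stringing the pieces together.
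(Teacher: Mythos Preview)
Your proposal is correct and matches the paper's approach exactly: the corollary is stated without an explicit proof precisely because it follows immediately from Prop.~\ref{prop:conjugation-functors} (homomorphism property) together with Eqs.~\eqref{eq:pullback_pushforward_inverse3} and \eqref{eq:pullback_pushforward_inverse4} (mutual inverses), which is just what you have spelled out.
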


We will come back to the concatenation of these choices, appending/prepending, circuit/inverse circuit and left/right conjugation in the section on co- and contravariance of the tableaus, see Fig.~\ref{fig:decision_tree}.

\subsubsection{The action of the elementary Clifford gates}
As is well known, the Clifford group is generated by the \emph{elementary Clifford gates}, the Hadamard gate $H_{\itq}$ and the $S$-gate $S_{\itq}$ on the $\ipq^\text{th}$ qubit and the $\CNOT$ gate $\CNOT[\icq,\itq]$ with the $\icq^\text{th}$ qubit as control and the $\itq^\text{th}$ qubit as target qubit, while the Pauli group is generated by $i\idmatrix, \logicalpauliop{}{\ipq}{}$ and $\logicalpauliop{}{}{\ipq}$. We can then easily calculate the action of the generators $C\in\{\CNOT[\icq,\itq], H_{\itq}, S_{\itq}\}$ of the Clifford group on the generators $\logicalp\in\{i\idmatrix, \logicalpauliop{}{\ipq}{}, \logicalpauliop{}{}{\ipq}\}$ of the Pauli group for the pushforward and analogously for the pullback. For the pushforward we get table~\ref{tab:generatorpushforward} of non-trivial pushforwards.\footnote{We use colors as an additional visual aid (logical operators $\logicalpauliop{}{j}{}, \logicalpauliop{}{}{k}$ and physical operators $\physicalpauliop{}{j}{}$ and $\physicalpauliop{}{}{k}$). While aiming at accessibility of the chosen colors, there is no information lost without them.}
\begin{table}[h!]\begin{tabular}{c||c|c|c}
    $C_\ast \logicalp$ & $\logicalpauliop{}{\itq}{}$ & $\logicalpauliop{}{}{\itq}$ & $\logicalpauliop{}{\icq}{}$ \\\hline\hline
    $\mathrm{CNOT}_{\icq\rightarrow \itq}$ & & $\physicalpauliop{}{}{\itq}\physicalpauliop{}{}{\icq}$ & $\physicalpauliop{}{\icq}{}\physicalpauliop{}{\itq}{}$ \\\hline
    $H_\itq$& $\physicalpauliop{}{}{\itq}$ & $\physicalpauliop{}{\itq}{}$ &    \\\hline
    $S_\itq$& $i\physicalpauliop{}{\itq}{}\physicalpauliop{}{}{\itq}$ & &   \\\hline
\end{tabular}
\caption{Table of non-trivial pushforwards, i.e., the empty cells as well as the cells of all generators $\logicalp$ that do not appear in the table correspond to trivial pushforwards $C_\ast i\idmatrix=i\idmatrix$, $C_\ast {\color{xlog}{\bar X_{\ipq}}}={\color{xphys}{X_{\ipq}}}$, $C_\ast {\color{zlog}{\bar Z_{\ipq}}}={\color{zphys}{Z_{\ipq}}}$.}\label{tab:generatorpushforward}
\end{table}

Since the $\CNOT$- and the $H$-gate are self inverse, the different conjugation types do not matter in these cases, i.e., $H_\ast=H^\ast$ and ${\CNOT}_\ast=\CNOT^\ast$. Hence, the only difference we find in the two conjugation tables is the minus sign for the $S$-gate (and of course the logical-to-physical vs.\ physical-to-logical view), see table~\ref{tab:generatorpullback}. 

\begin{table}[h!]\begin{tabular}{c||c|c|c}
    $C^\ast P$ & $\physicalpauliop{}{\itq}{}$ & $\physicalpauliop{}{}{\itq}$ & $\physicalpauliop{}{\icq}{}$ \\\hline\hline
    &&&\\[-2.3ex]
    $\mathrm{CNOT}_{\icq\rightarrow \itq}$ & & $\logicalpauliop{}{}{\itq}\logicalpauliop{}{}{\icq}$ & $\logicalpauliop{}{\icq}{}\logicalpauliop{}{\itq}{}$ \\\hline
    &&&\\[-2.3ex]
    $H_\itq$& $\logicalpauliop{}{}{\itq}$ & $\logicalpauliop{}{\itq}{}$ &    \\\hline
    &&&\\[-2.3ex]
    $S_\itq$& $-i\logicalpauliop{}{\itq}{}\logicalpauliop{}{}{\itq}$ & &   \\\hline
\end{tabular}
\caption{Table of non-trivial pullbacks, i.e., the empty cells as well as the cells of all generators $P$ that do not appear in the table correspond to trivial pullbacks $C^\ast i\idmatrix=i\idmatrix$, $C^\ast {\color{xphys}{X_{\ipq}}}={\color{xlog}{\bar X_{\ipq}}}$, $C^\ast {\color{zphys}{Z_{\ipq}}}={\color{zlog}{\bar Z_{\ipq}}}$. }\label{tab:generatorpullback}
\end{table}

The action of the elementary Clifford gates on the generators of the Pauli group of course extends to the whole Clifford group, since conjugation is compatible with the multiplication of Pauli operators. For the pushforward, i.e., the left conjugation, we get the actions in Fig.~\ref{fig:elementarypushforward} for the pushforward $C_\ast \logicalp$ of an arbitrary Pauli operator $\logicalp$ represented as $\logicalflowp$ by the following
\begin{proposition}[Sec.~III of \cite{dehaenedemoor2003}]\label{prop:elementarypushforward}
    Let $\logicalp=i^\ep\bm{\logicalpauliop{}{ }{}}^{\bmlogicalxi}\bm{\logicalpauliop{}{}{ }}^{\bmlogicalzeta}\in\bar\Pauli$ be an arbitrary Pauli operator with ``vector form'' $\varrho(\logicalp)=\logicalflowp$ and let $C\in\{\CNOT[\icq,\itq], H_{\itq}, S_{\itq}\}$ be an elementary Clifford gate. Then $\varrho(C_{\ast}\logicalp)$ in the three cases is
    \begin{equation}\begin{split}
        \varrho((\CNOT[\icq,\itq])_\ast \logicalp)&=(\ep\vert \bmlogicalxi\oplus_2 \logicalxi[\icq]\bm{e}_{\itq}\vert \bmlogicalzeta\oplus_2 \logicalzeta[\itq]\bm{e}_{\icq}) \\
        \varrho((H_{\itq})_\ast \logicalp)&=(\ep\oplus_4 2\logicalxi[\itq]\logicalzeta[\itq] \vert \bmlogicalxi\oplus_2        
                                (\logicalxi[\itq]\oplus_2\logicalzeta[\itq])\bm{e}_{\itq} \\
        &\hspace{21mm} \vert \bmlogicalzeta\oplus_2 (\logicalxi[\itq]\oplus_2\logicalzeta[\itq])\bm{e}_{\itq})\\
        \varrho((S_{\itq})_\ast \logicalp)&=(\ep\oplus_4 \logicalxi[\itq]\vert \bmlogicalxi\vert \bmlogicalzeta\oplus_2 \logicalxi[\itq]\bm{e}_{\itq}).\\ 
    \end{split}\end{equation}
\end{proposition}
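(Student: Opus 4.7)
The plan is to use that the pushforward $C_\ast$ is a group homomorphism (Prop.~\ref{prop:conjugation-functors}) on $\bar\Pauli$, so its value on $\bar P = i^\kappa \bm{\bar X}^{\bmlogicalxi}\bm{\bar Z}^{\bmlogicalzeta}$ is forced by its values on the generators $i\idmatrix$, $\bar X_j$, $\bar Z_j$, which are read off from Table~\ref{tab:generatorpushforward}. Since the scalar factor $i^\kappa$ commutes with every Clifford unitary, it passes through unchanged in all three cases. What remains is to take the resulting word in the single-qubit $X$- and $Z$-operators and move it back into the standard form of Eq.~\eqref{eq:standard_form_pauli}; any sign correction that appears is simply the parity of the number of anticommuting $X_j$--$Z_j$ transpositions that were needed, exactly as in Cor.~\ref{cor:multi_pauli_product}.

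For $\CNOT[\icq,\itq]$ only the $\bar X_\icq$- and $\bar Z_\itq$-factors are affected, yielding $X_\icq X_\itq$ and $Z_\itq Z_\icq$ respectively. All newly produced factors are single-qubit $X$- or $Z$-operators of the same type as their neighbours, so they commute inside each block and no phase correction arises; the $X$-support gains the bit $\xi_\icq$ at position $\itq$ and the $Z$-support gains the bit $\zeta_\itq$ at position $\icq$, which is precisely $(\kappa\mid\bmlogicalxi\oplus_2 \logicalxi[\icq]\bm{e}_\itq\mid\bmlogicalzeta\oplus_2 \logicalzeta[\itq]\bm{e}_\icq)$. For $S_\itq$ only the factor $\bar X_\itq^{\xi_\itq}$ changes, mapping to $(iX_\itq Z_\itq)^{\xi_\itq}$. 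If $\xi_\itq=0$ nothing happens, and if $\xi_\itq=1$ the factor of $i$ is absorbed into the phase, $X_\itq$ remains in its slot in the $X$-block, and $Z_\itq$ commutes freely through the trailing $X_j$'s with $j\neq\itq$ to merge with $Z_\itq^{\zeta_\itq}$. No anticommuting swap ever occurs, so the formula $(\kappa\oplus_4\logicalxi[\itq]\mid\bmlogicalxi\mid\bmlogicalzeta\oplus_2 \logicalxi[\itq]\bm{e}_\itq)$ follows immediately.

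The Hadamard case is the one requiring the most care. Applying the homomorphism produces the intermediate expression
\begin{equation}
i^\kappa \prod_{j\neq\itq} X_j^{\xi_j}\cdot Z_\itq^{\xi_\itq}\cdot \prod_{j\neq\itq} Z_j^{\zeta_j}\cdot X_\itq^{\zeta_\itq},
\end{equation}
in which the trailing $X_\itq^{\zeta_\itq}$ has to be moved past everything on its left to rejoin the $X$-block. It commutes with every $Z_j^{\zeta_j}$ for $j\neq\itq$ (disjoint supports) and with every $X_j^{\xi_j}$, so the only swap that produces a sign is the one with $Z_\itq^{\xi_\itq}$, contributing $(-1)^{\xi_\itq \zeta_\itq}$, i.e.\ a phase shift of $2\xi_\itq\zeta_\itq$ modulo~$4$. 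After the move, qubit $\itq$ has $X$-exponent $\zeta_\itq$ and $Z$-exponent $\xi_\itq$ in place of the original $\xi_\itq$ and $\zeta_\itq$, and these replacements are precisely $\bmlogicalxi\oplus_2(\logicalxi[\itq]\oplus_2\logicalzeta[\itq])\bm{e}_\itq$ and $\bmlogicalzeta\oplus_2(\logicalxi[\itq]\oplus_2\logicalzeta[\itq])\bm{e}_\itq$. This matches the claimed formula and isolates the only non-mechanical step, namely the single anticommutation in the Hadamard case; everything else is bookkeeping on supports in $\mathbb{F}_2^\npq$.
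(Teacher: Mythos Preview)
Your proof is correct and follows essentially the same approach as the paper's: both use that $C_\ast$ is a group homomorphism, apply the generator table (Table~\ref{tab:generatorpushforward}), and then reorder the resulting word into standard form while tracking the single anticommuting swap in the Hadamard case. The only cosmetic difference is that the paper spells out the $\CNOT$ rearrangement in full while you condense it, and your intermediate Hadamard expression already silently commutes $Z_\itq^{\xi_\itq}$ past the remaining $X_j$'s (and $X_\itq^{\zeta_\itq}$ past the remaining $Z_j$'s), which is harmless since those factors live on different qubits.
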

\begin{proof} The pushforwards of the generators $\logicalpauliop{}{\ipq}{}$ and $\logicalpauliop{}{}{\ipq}$ by $C$ were already given in table~\ref{tab:generatorpushforward} (see \cite[Sec.~10.5.2]{NielsenChuang2011}).

Since the pushforward is a group homomorphism we get
\begin{equation}\begin{split}
        C_\ast \logicalp &= C_\ast \left(i^\ep\prod_{\ipq}\logicalpauliop{}{\ipq}{}^{\logicalxi[\ipq]}\prod_{\ipq}\logicalpauliop{}{}{\ipq}^{\logicalzeta[\ipq]}\right)\\
        &=C_\ast(i^\ep\idmatrix)\prod_{\ipq}(C_\ast \logicalpauliop{}{\ipq}{})^{\logicalxi[\ipq]}\prod_{\ipq}(C_\ast \logicalpauliop{}{}{\ipq})^{\logicalzeta[\ipq]}.
\end{split}\end{equation}
Since most of the actions of $C_\ast$ are trivial, we only have to look at the non-trivial ones of $C$ in the above table and rearrange the product to get the standard form of Eq.~\eqref{eq:standard_form_pauli}. In the $C=\CNOT[\icq,\itq]$ case that means:
\begin{equation}\begin{split}
    {\CNOT}_\ast \logicalp & = i^\ep \blackphysicalpauliop{}{1}{}^{\logicalxi[1]}\cdots (C_\ast \logicalpauliop{}{\icq}{})^{\logicalxi[\icq]}\cdots \blackphysicalpauliop{}{\itq}{}^{\logicalxi[\itq]}\cdots \blackphysicalpauliop{}{\npq}{}^{\logicalxi[\npq]}\\  
    &\phantom{=i^\ep}\cdot \blackphysicalpauliop{}{}{1}^{\logicalzeta[1]}\cdots \blackphysicalpauliop{}{}{\icq}^{\logicalzeta[\icq]}\cdots (C_\ast \logicalpauliop{}{}{\itq})^{\logicalzeta[\itq]}\cdots \blackphysicalpauliop{}{}{\npq}^{\logicalzeta[\npq]}\\
    &= i^\ep \blackphysicalpauliop{}{1}{}^{\logicalxi[1]}\cdots (\blackphysicalpauliop{}{\icq}{}\blackphysicalpauliop{}{\itq}{})^{\logicalxi[\icq]}\cdots \blackphysicalpauliop{}{\itq}{}^{\logicalxi[\itq]}\cdots \blackphysicalpauliop{}{\npq}{}^{\logicalxi[\npq]}\\  
    &\phantom{=i^\ep}\cdot \blackphysicalpauliop{}{}{1}^{\logicalzeta[1]}\cdots \blackphysicalpauliop{}{}{\icq}^{\logicalzeta[\icq]}\cdots (\blackphysicalpauliop{}{}{\itq}\blackphysicalpauliop{}{}{\icq})^{\logicalzeta[\itq]}\cdots \blackphysicalpauliop{}{}{\npq}^{\logicalzeta[\npq]}\\
    &= i^\ep \blackphysicalpauliop{}{1}{}^{\logicalxi[1]}\cdots \blackphysicalpauliop{}{\icq}{}^{\logicalxi[\icq]}\cdots \blackphysicalpauliop{}{\itq}{}^{\logicalxi[\icq]}\blackphysicalpauliop{}{\itq}{}^{\logicalxi[\itq]}\cdots \blackphysicalpauliop{}{\npq}{}^{\logicalxi[\npq]}\\  
    &\phantom{=i^\ep}\cdot \blackphysicalpauliop{}{}{1}^{\logicalzeta[1]}\cdots \blackphysicalpauliop{}{}{\icq}^{\logicalzeta[\icq]}\blackphysicalpauliop{}{}{\icq}^{\logicalzeta[\itq]}\cdots \blackphysicalpauliop{}{}{\itq}^{\logicalzeta[\itq]}\cdots \blackphysicalpauliop{}{}{\npq}^{\logicalzeta[\npq]}\\
    &= i^\ep \blackphysicalpauliop{}{1}{}^{\logicalxi[1]}\cdots \blackphysicalpauliop{}{\icq}{}^{\logicalxi[\icq]}\cdots \blackphysicalpauliop{}{\itq}{}^{\logicalxi[\icq]\oplus_2\logicalxi[\itq]}\cdots \blackphysicalpauliop{}{\npq}{}^{\logicalxi[\npq]}\\      
    &\phantom{=i^\ep}\cdot \blackphysicalpauliop{}{}{1}^{\logicalzeta[1]}\cdots \blackphysicalpauliop{}{}{\icq}^{\logicalzeta[\icq]\oplus_2\logicalzeta[\itq]}\cdots \blackphysicalpauliop{}{}{\itq}^{\logicalzeta[\itq]}\cdots \blackphysicalpauliop{}{}{\npq}^{\logicalzeta[\npq]}\\
    &= i^\ep \bm{\blackphysicalpauliop{}{ }{}}^{\bmlogicalxi\oplus_2\logicalxi[\icq]\bm{e_{\itq}}}\bm{\blackphysicalpauliop{}{}{ }}^{\bmlogicalzeta\oplus_2\logicalzeta[\itq]\bm{e_{\icq}}}.
\end{split}\end{equation}
Since we only rearranged Pauli operators of the same type, the phase stays unchanged, we get an additional $X_{\itq}$ if we had an $X_{\icq}$, which corresponds to addition modulo $2$ of the exponents since $X$ has order $2$, as well as an additional $Z_{\icq}$ if we had an $Z_{\itq}$.

A similar calculation for $H_{\itq}$ leaves most of $\logicalp$ unchanged, but $X_{\itq}^{\xi_{\itq}}$ replaced by $Z_{\itq}^{\xi_{\itq}}$ and $Z_{\itq}^{\zeta_{\itq}}$ replaced by $X_{\itq}^{\zeta_{\itq}}$. Since we have to swap them back in order to get back to the standard form we get an additional $-1$ in the phase iff both operators are present, i.e., iff $\logicalxi[\itq]=\logicalzeta[\itq]=1$. This phase correction is exactly the $\oplus_4 2\logicalxi[\itq]\logicalzeta[\itq]$ of the proposition, while the addition of $(\logicalxi[\itq]\oplus_2\logicalzeta[\itq])\bm{e_{\itq}}$ to $\bmlogicalxi$ and $\bmlogicalzeta$ leads to canceling the existing $\logicalxi[\itq]$ (resp.\ $\logicalzeta[\itq]$) and replacing it by $\logicalzeta[\itq]$ (resp.\ $\logicalxi[\itq]$).

For the $S$-gate $S_{\itq}$ the only non-trivial action is $X_{\itq}\mapsto iX_{\itq}Z_{\itq}$. Hence, $\logicalp$ and ${S_\itq}_\ast\logicalp$ differ iff $\logicalzeta[\itq]=1$. Rearranging ${S_{\itq}}_\ast \logicalp$ to standard form in this case just adds a $1$ to the phase exponent $\ep$ as well as adding an additional $Z_{\itq}$, giving raise to $\oplus_4\logicalxi[\itq]$ in the phase component and $\oplus_2\logicalxi[\itq]\bm{e}_{\itq}$ in the last component of $\varrho({S_\itq}_\ast\logicalp)$.

Therefore, we proved the proposition, see Fig.~\ref{fig:elementarypushforward} for the visualization of these actions as column operations.
\end{proof}

\begin{figure}[h!]
    \centering
    \includegraphics[width=0.75\linewidth]{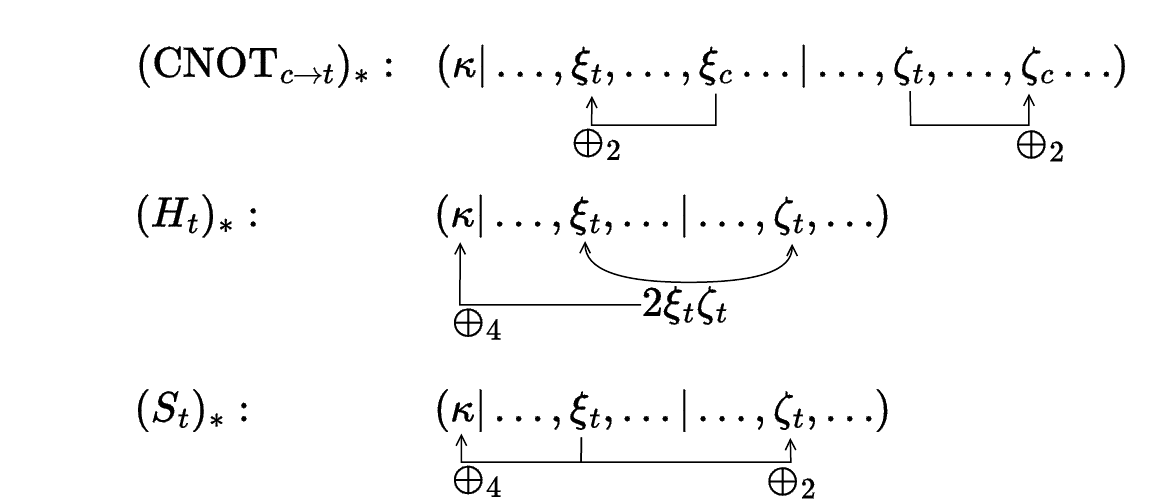}
    \caption{The action of the elementary Clifford gates on an arbitrary element of the Pauli group represented as $\flowp$. For the $\CNOT$-gate, there is no action on the phase exponent~$\ep$. For the $H$-gate we need an additional $(-1)$ in the phase, iff both $X_{\itq}$ \emph{and} $Z_{\itq}$ occur in the Pauli operator. For the $S$-gate we only need to add a $1$ to the phase exponent to account for the factor $i$ iff $X_{\itq}$ occurs in the operator. See Prop.~\ref{prop:elementarypushforward} for the whole computation.}
    \label{fig:elementarypushforward}
\end{figure}
 
Because of the iterative property of appending gates in the case of the pushforward, we can directly interpret the actions of Fig.~\ref{fig:elementarypushforward} in the following way. Let $C$ be a Clifford circuit, $C_{\itime+1}$ an elementary Clifford gate, which we want to append to $C$, and let $\flowp$ be the ``vector form'' of $C_\ast \logicalp$, the action of the Clifford circuit on an arbitrary Pauli operator $\logicalp$. The actions of Fig.~\ref{fig:elementarypushforward} now directly give the ``vector form'' of $(C_{\itime+1}C)_\ast \logicalp$.

Of course, we can write down the action of the pullback of the elementary Clifford gates like in Fig.~\ref{fig:elementarypushforward} for an arbitrary Pauli operator $P$, the only difference being the minus sign for $(S_{\itq})^\ast$. But since the pullback is not iterative for appending gates, we cannot interpret these actions as giving the action of $(C_{\itime+1}C)^\ast P=C^\ast(C_{\itime+1}^\ast P)$, since we would need to calculate $Q=C_{\itime+1}^\ast P$ and plug this into the expression $C^\ast Q$ (see Fig.~\ref{fig:concatenation} with $C_2=C_{\itime+1}$ and $C_1=C$). In order to give a nice way of calculating the pullback for appending a gate we consider the tableau form which corresponds to choosing a nice generator set of the Pauli group (see Cor.~\ref{cor:Flow_append_iterative}).

\subsection{Flow tableau and Clifford tableau}
Our next step is to define the Clifford resp.\ the flow tableau as representations of the action by right resp.\ left conjugation in a ``matrix form''. The reasoning follows the same lines as representing a linear map $\psi:V_1\rightarrow V_2$ by a matrix after choosing (ordered) bases $\mathcal{B}_i$ for the vector spaces $V_i$, i.e., if $\varphi_{\mathcal{B}_1}(\bm{v_1})$ is the row vector representing the input $\bm{v_1}\in V_1$ in the basis $\mathcal{B}_1$ then $ \varphi_{\mathcal{B}_1}(\bm{v_1})\cdot\rowcolrestr[{\mathcal{B}_1}]{M(\psi)}[{\mathcal{B}_2}]$ is the row vector representing the output $\psi(\bm{v_1})\in V_2$ in the basis $\mathcal{B}_2$. Note that the notation $\rowcolrestr[{\mathcal{B}_1}]{M(\psi)}[{\mathcal{B}_2}]$ is chosen such that the subscripts of the vector and the matrix match if the multiplication is in the correct order, i.e., row vector $\cdot$ matrix. These facts can be nicely summarized in the following commutative diagram
\begin{equation}\label{eq:cd_linear_map}
\begin{tikzcd}[column sep=1cm, every cell/.style={inner xsep=1ex, inner ysep=0.85ex}]
    V_1 \arrow[d, "\varphi_{\mathcal{B}_1}"'] \arrow[rd, phantom, "\circlearrowleft", shift left=0ex] \arrow[r, "\psi"] & V_2 \arrow[d, "\varphi_{\mathcal{B}_2}"] \\
    K^{n}\arrow[r, "\cdot {}_{\mathcal{B}_1}M(\psi)_{\mathcal{B}_2}"']& K^{m}
\end{tikzcd}
\end{equation}
where the $\circlearrowleft$ in the middle signifies the commutativity of the diagram, i.e., going first down and then to the right, i.e., $\varphi_{\mathcal{B}_1}(\bm{v_1})\cdot\rowcolrestr[{\mathcal{B}_1}]{M(\psi)}[{\mathcal{B}_2}]$, is the same as going first to the right and then down, i.e. $\varphi_{\mathcal{B}_2}(\psi(\bm{v_1}))$. Since we chose to use row vectors and the multiplication of a row vector from the left to a matrix, the matrix is an $n\times m$-matrix. Recall that the matrix has to have the rows $\varphi_{\mathcal{B}_2}(\psi(\bm{b_i}))$ for $1\leq i\leq n$, i.e., the $i^{\text{th}}$ row is the image of the $i^{\text{th}}$ basis vector $\bm{b_i}$ of $\mathcal{B}_1$ represented in the basis $\mathcal{B}_2$. Therefore, the rows can be labeled by the basis vectors of the basis $\mathcal{B}_1$, whereas the columns can be labeled by the basis vectors of the basis $\mathcal{B}_2$.

Now -- mimicking the choice of a basis -- fix the generators $g_{\ipq}$ of the Pauli group $\Pauli$ as $g_0:= i\idmatrix$, $g_{\ipq}:= X_{\ipq}$ for $1\leq\ipq\leq\npq$ and $g_{\ipq}:= Z_{\ipq-\npq}=Z_{\ipqcheck}$ for $1\leq{\ipqcheck}:=\ipq-\npq\leq\npq$ and analogously for $\bar\Pauli$. Note that these generators are independent, i.e., we can not remove any $g_{\ipq}$ without losing the generating property, and symplectic, i.e., for each $\ipq$ the generators $g_{\ipq}=X_{\ipq}$ and $g_{\ipqhat}=Z_{\ipq}$ with $\ipqhat:=\ipq+\npq$ anticommute and all other pairs of generators commute. Note, that mapping these generators to their presentations $\varrho(g_j)$ in the group $G=\reprpauli$ yields vector like presentations corresponding to the canonical "vectors": $\varrho(g_0)=\varrho(i\idmatrix)=(1\vert\bm{0}\vert\bm{0})$, $\varrho(g_j)=\varrho(X_j)=(0\vert\bm{e}_j\vert\bm{0})$ and $\varrho(g_j)=\varrho(Z_{\ipqcheck})=(0\vert\bm{0}\vert\bm{e}_j)$.

\begin{definition}
    Define the \emph{Clifford tableau} $T(C)$ and the \emph{flow tableau} $F(C)$ of a Clifford unitary $C\in\Cl$ with respect to the presentation $\varrho$ by 
    \begin{equation}\begin{split}
    T(C)&:=\left(\varrho(C_\ast(\logicalp_{\ipq}))\right)_{0\leq\ipq\leq2\npq}\in G^{2\npq+1}\\
        &\phantom{:}=\begin{pNiceArray}{c}[first-col]
    \Block{1-1}{i\idmatrix}&\varrho(C_\ast(i\idmatrix))\\\hline
        &\\[-7pt]
        \Block[color=xlog, draw=white, line-width=0pt, rounded-corners]{1-1}{\bar{X}_{\ipq}}&\varrho(C_\ast \logicalpauliop{}{\ipq}{})\\
        &\\[-7pt]\hline
        &\\[-7pt]
        \Block[color=zlog, draw=white, line-width=0pt, rounded-corners]{1-1}{\bar{Z}_{\ipq}}& \varrho(C_\ast \logicalpauliop{}{}{\ipq})\\
        &\\[-7pt]
    \end{pNiceArray}=
    \begin{pNiceArray}{c|c|c}[first-row,first-col]
        & \eta
        & \Block[color=xphys, draw=white, line-width=0pt, rounded-corners]{1-1}{\bm{X}} 
        & \Block[color=zphys, draw=white, line-width=0pt, rounded-corners]{1-1}{\bm{Z}} \\
        \Block{1-1}{i\idmatrix}&1&\bm0&\bm0\\\hline
        &\\[-7pt]
        \Block[color=xlog, draw=white, line-width=0pt, rounded-corners]{1-1}{\bar{X}_{\ipq}}&\eta_{\ipq}&{\color{xphys}{\bm{x}_{\ipq}}}&{\color{zphys}{\bm{z}_{\ipq}}}\\
        &\\[-7pt]\hline
        &&\\[-7pt]
        \Block[color=zlog, draw=white, line-width=0pt, rounded-corners]{1-1}{\bar{Z}_{\ipq}}& \eta_{\ipqhat}&{\color{xphys}{\bm{x}_{\ipqhat}}}&{\color{zphys}{\bm{z}_{\ipqhat}}}\\[5pt]
    \end{pNiceArray}
    \end{split}\end{equation}
    resp.\ 
    \begin{equation}\begin{split}
    F(C)&:=\left(\varrho(C^\ast(P_{\ipq}))\right)_{0\leq\ipq\leq2\npq}\in G^{2\npq+1}\\
    &\phantom{:}=\begin{pNiceArray}{c}[first-col]
    \Block{1-1}{i\idmatrix}&\varrho(C^\ast(i\idmatrix))\\\hline
        &\\[-7pt]
        \Block[color=xphys, draw=white, line-width=0pt, rounded-corners]{1-1}{{X}_{\ipq}}&\varrho(C^\ast {\color{xphys}{X}_{\ipq}})\\
        &\\[-7pt]\hline
        &\\[-7pt]
        \Block[color=zphys, draw=white, line-width=0pt, rounded-corners]{1-1}{{Z}_{\ipq}}& \varrho(C^\ast {\color{zphys}{Z}_{\ipq}})\\[5pt]
    \end{pNiceArray}=
    \begin{pNiceArray}{c|c|c}[first-row,first-col]
        & \ep
        & \Block[color=xlog, draw=white, line-width=0pt, rounded-corners]{1-1}{\bm{\bar{X}}} 
        & \Block[color=zlog, draw=white, line-width=0pt, rounded-corners]{1-1}{\bm{\bar{Z}}} \\
        \Block{1-1}{i\idmatrix}&1&\bm0&\bm0\\\hline
        &&\\[-7pt]
        \Block[color=xphys, draw=white, line-width=0pt, rounded-corners]{1-1}{{X}_{\ipq}}&\ep_{\ipq}&\bmlogicalxi[\ipq]&\bmlogicalzeta[\ipq]\\
        &&\\[-7pt]\hline
        &\\[-7pt]
        \Block[color=zphys, draw=white, line-width=0pt, rounded-corners]{1-1}{{Z}_{\ipq}}& \ep_{\ipqhat}&\bmlogicalxi[\ipqhat]&\bmlogicalzeta[\ipqhat]\\
        &&\\[-7pt]
    \end{pNiceArray},
    \end{split}\end{equation}
    where $\ipqhat:=\ipq+\npq$.
    
    These definitions give the maps $T:\Cl\rightarrow G^{2\npq+1}$ and $F:\Cl\rightarrow  G^{2\npq+1}$.

    We denote the phaseless versions of the tableaus $F(C)$ resp.\ $T(C)$, omitting the zeroth row and the zeroth column, and the corresponding maps $F$ resp.\ $T$ by a tilde $\tilde{F}$ resp.\ $\tilde{T}$.
\end{definition}
\begin{remark}
    Since the action of $C$ by conjugation on $g_0=i\idmatrix$ is always trivial, the first row contains no information. Nevertheless, we will include this row for ease of later calculations (compare the bar-notation of Thm.~1 in \cite{dehaenedemoor2003}).

    In the Clifford tableau we use the notation ${\color{xphys}{\bm{x}}}$ and ${\color{zphys}{\bm{z}}}$ for the exponents of $\bm{\physicalpauliop{}{ }{}}$ and $\bm{\physicalpauliop{}{}{ }}$ in order to be more comparable to the original Clifford tableau in \cite{AaronsonGottesman2004} as well as to distinguish the entries of the Clifford tableau from those of the flow tableau. 
    
    The remaining difference to the original Clifford tableau is now only the different choice of presentation $\varrho$ giving the phase exponents $\eta_{\ipq}$ instead of the sign of the phase in the $XYZ$-notation $r$, see Eq.~\eqref{eqn:XYZ-standard-form} or Prop.~\ref{prop:randrho}. 

    The flow tableau gives rise to the flow labels $\labelname[P]^C$ by switching from the ``vector'' form to the label form, writing the two labels $\labelname[\physicalpauliop{}{j}{}]^C$ and $\labelname[\physicalpauliop{}{}{j}]^C$ above and below the $j^\text{th}$ qubit line.
\end{remark}

\begin{remark}
    Storing a Clifford or a flow tableau in a binary way, i.e., storing the $\mathbb{Z}_4$-components as 2-bit binary numbers results in $(2+2n)\cdot(2n+1)=\mathcal{O}(n^2)$ bits.
\end{remark}

The fact that replacing $C\in\Cl$ by its inverse $C^\dagger$ interchanges the two conjugation types, see Eqs.~\eqref{eq:pullback_pushforward_inverse1}-\eqref{eq:pullback_pushforward_inverse4}, directly implies

\begin{corollary}\label{cor:tableau_inverse_clifford}
    For an arbitrary Clifford unitary $C\in\Cl$ the flow tableau of the inverse $C^\dagger$ is  $F(C^\dagger)=T(C)$ and its Clifford tableau is $T(C^\dagger)=F(C)$.
\end{corollary}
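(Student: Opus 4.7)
The plan is to prove both identities by unpacking the definitions of $F$ and $T$ and then applying the operator identities $(C^\dagger)_\ast = C^\ast$ and $(C^\dagger)^\ast = C_\ast$ from Eqs.~\eqref{eq:pullback_pushforward_inverse1}--\eqref{eq:pullback_pushforward_inverse2}. The two tableaus are both built from the same fixed generating set $\{g_0, g_1,\dotsc, g_{2\npq}\}$ of the Pauli group, where the bar on the generators used to define $T(C)$ and the absence of bar on the generators used to define $F(C)$ is purely a cosmetic device to indicate whether a generator is viewed as living before or after the circuit: algebraically $\bar{\Pauli}$ and $\Pauli$ are the same group. So, once we replace $C$ by $C^\dagger$ in the defining formula for $F$, the relevant action automatically becomes a pushforward through $C$.

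Concretely, first I would write out
\begin{equation*}
F(C^\dagger) = \bigl(\varrho((C^\dagger)^\ast g_\ipq)\bigr)_{0\leq \ipq\leq 2\npq}
\end{equation*}
by direct substitution. Then apply Eq.~\eqref{eq:pullback_pushforward_inverse2} entry by entry to rewrite $(C^\dagger)^\ast g_\ipq = C_\ast g_\ipq$, yielding
\begin{equation*}
F(C^\dagger) = \bigl(\varrho(C_\ast g_\ipq)\bigr)_{0\leq \ipq\leq 2\npq} = T(C).
\end{equation*}
The second identity follows symmetrically: substitute $C^\dagger$ into the definition of $T$, obtaining $T(C^\dagger)=(\varrho((C^\dagger)_\ast g_\ipq))_\ipq$, and use Eq.~\eqref{eq:pullback_pushforward_inverse1} to replace $(C^\dagger)_\ast g_\ipq$ by $C^\ast g_\ipq$, giving $T(C^\dagger)=F(C)$.

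There is no real computational obstacle here; the only point that requires some care is the bookkeeping between the ``logical'' and ``physical'' copies of the Pauli group and the resulting color/bar conventions in the displayed tableaus. Once it is observed that both $T(-)$ and $F(-)$ are defined by listing the images of the \emph{same} ordered generating set under the respective conjugation action, the corollary is an immediate consequence of the two displayed identities expressing that the pushforward by $C^\dagger$ equals the pullback by $C$ and vice versa.
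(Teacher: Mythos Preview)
Your proposal is correct and matches the paper's approach exactly: the paper simply states that the identities in Eqs.~\eqref{eq:pullback_pushforward_inverse1}--\eqref{eq:pullback_pushforward_inverse4} (that replacing $C$ by $C^\dagger$ interchanges the two conjugation types) ``directly imply'' the corollary, which is precisely the argument you spell out entry by entry.
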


Therefore, the flow tableau is an inverse Clifford tableau in the literature, see e.g., \cite{Gidney2021stimfaststabilizer}.

For further reference we state the flow and the Clifford tableaus of the elementary Clifford gates:
\begin{corollary}\label{cor:tableaus_elementary_cliffords}
\begin{equation}\begin{array}{lll}
    F(H_{\itq})\!\!&=T(H_{\itq})\!\!&= \!\!\left(\begin{array}{c|c}
         \scriptstyle 1&\scriptstyle  \bm{0} \\\hline
         &\\[-13pt]
         \scriptstyle \bm{0}^T&\scriptstyle E_{\itq,\itqhat}
    \end{array}\right)\\[1pt]
    F(S_{\itq})\!\!&&=\!\!\left(\begin{array}{c|c}
         \scriptstyle 1&\scriptstyle  \bm{0} \\\hline
         &\\[-13pt]
         \scriptstyle 3\bm{e_\itq}^T&\scriptstyle D_{\itq,\itqhat} 
    \end{array}\right)\\[1pt]
    &\phantom{=}\;T(S_{\itq})\!\!&=\!\!\left(\begin{array}{c|c}
         \scriptstyle 1&\scriptstyle  \bm{0} \\\hline
         &\\[-13pt]
         \scriptstyle \bm{e_\itq}^T&\scriptstyle D_{\itq,\itqhat} 
    \end{array}\right)\\[1pt]
        F(\CNOT[\icq,\itq])\!\!\!\!&=T(\CNOT[\icq,\itq])\!\!\!\!&=\!\!
         \left(\begin{array}{c|c|c}
             \scriptstyle 1&\scriptstyle \bm{0}& \scriptstyle \bm{0} \\\hline
             &\\[-13pt]
             \scriptstyle \!\bm{0}^T\!\!&\scriptstyle  \!D_{\icq,\itq}\!&\scriptstyle  0 \\\hline
             &\\[-13pt]
             \scriptstyle \!\bm{0}^T\!\!&\scriptstyle  0 &\!\scriptstyle  D_{\itq,\icq}\!
        \end{array}\right)
\end{array}\end{equation}
    Here the square matrix $D_{i,j}=\idmatrix+\vert\bm{e_i}\rangle\langle\bm{e_j}\vert$ of the appropriate sizes $\npq$ resp.\ $2\npq$ is the identity matrix with an additional $1$ at the $(i,j)^{\text{th}}$ entry, while the square matrix $E_{i,j}=\idmatrix+\vert\bm{e_i}\rangle\langle\bm{e_i}\vert+\vert\bm{e_i}\rangle\langle\bm{e_j}\vert+\vert\bm{e_j}\rangle\langle\bm{e_i}\vert+\vert\bm{e_j}\rangle\langle\bm{e_j}\vert$ of size $2\npq$ is the identity matrix with the $(i,i)^{\text{th}}$ and $(j,j)^{\text{th}}$ elements on the diagonal set to $0$ and additional $1$'s at the $(i,j)^{\text{th}}$ and $(j,i)^{\text{th}}$ entry. 
\end{corollary}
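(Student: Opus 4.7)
The plan is to read off each tableau row by row from the action of the gate on the generators $g_\ipq$, using Tables~\ref{tab:generatorpushforward} and \ref{tab:generatorpullback}, and then verify that the resulting ``vector forms'' match the explicit matrix form in the statement. First I would invoke Cor.~\ref{cor:tableau_inverse_clifford} together with the self-inverse property $H_\itq^\dagger = H_\itq$ and $\CNOT[\icq,\itq]^\dagger = \CNOT[\icq,\itq]$ to conclude $F(H_\itq) = T(H_\itq)$ and $F(\CNOT[\icq,\itq]) = T(\CNOT[\icq,\itq])$, so that for these two gates it suffices to compute one tableau each; only the $S$-gate requires computing $F(S_\itq)$ and $T(S_\itq)$ separately, since $S_\itq^\dagger = S_\itq^3 \neq S_\itq$.

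In every case the zeroth row and zeroth column are immediate from $C_\ast(i\idmatrix)=i\idmatrix=C^\ast(i\idmatrix)$. In the phaseless block, the rows corresponding to generators outside the support of the gate are fixed and contribute identity rows, so only the rows indexed by the qubit(s) on which the gate acts need inspection. For $H_\itq$, Table~\ref{tab:generatorpushforward} gives $\logicalpauliop{}{\itq}{} \leftrightarrow \logicalpauliop{}{}{\itq}$ with trivial phase, which swaps rows $\itq$ and $\itqhat$ of the identity in the $2\npq$-dimensional $(X\vert Z)$-basis, precisely the action of $E_{\itq,\itqhat}$. For $\CNOT[\icq,\itq]$ the only non-trivial images are $\logicalpauliop{}{\icq}{} \mapsto \physicalpauliop{}{\icq}{}\physicalpauliop{}{\itq}{}$ and $\logicalpauliop{}{}{\itq} \mapsto \physicalpauliop{}{}{\itq}\physicalpauliop{}{}{\icq}$, i.e.\ $\bm{e_\icq} \mapsto \bm{e_\icq}+\bm{e_\itq}$ in the $X$-block and $\bm{e_\itq} \mapsto \bm{e_\itq}+\bm{e_\icq}$ in the $Z$-block, giving the claimed block-diagonal form with blocks $D_{\icq,\itq}$ and $D_{\itq,\icq}$ and no phase contribution (consistent with Prop.~\ref{prop:elementarypushforward}). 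For $T(S_\itq)$, the only non-trivial image is $\logicalpauliop{}{\itq}{} \mapsto i\physicalpauliop{}{\itq}{}\physicalpauliop{}{}{\itq}$, which adds $\bm{e_{\itqhat}}$ to row $\itq$ of the phaseless block (encoded by $D_{\itq,\itqhat}$) and contributes phase exponent $1$ in row $\itq$, i.e.\ phase column $\bm{e_\itq}^T$. For $F(S_\itq)$, Table~\ref{tab:generatorpullback} replaces the factor $i$ by $-i=i^3$, so the phase column becomes $3\bm{e_\itq}^T$ while the phaseless block is unchanged.

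The main ``obstacle'' is purely bookkeeping: one has to verify that the matrix $D_{i,j}=\idmatrix+|\bm{e_i}\rangle\langle\bm{e_j}|$ indeed has row $i$ equal to $\bm{e_i}+\bm{e_j}$ and identity rows elsewhere, and that in $\mathbb{F}_2$ the matrix $E_{\itq,\itqhat}$ reduces to the transposition of coordinates $\itq$ and $\itqhat$ (the diagonal corrections $|\bm{e_i}\rangle\langle\bm{e_i}|+|\bm{e_j}\rangle\langle\bm{e_j}|$ zero out the fixed diagonal entries, and the off-diagonal corrections install the swap). Once these identifications are in place, each row of the stated tableau matches the corresponding image of a generator, and the corollary follows directly.
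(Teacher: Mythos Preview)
Your proposal is correct and follows exactly the approach the paper takes: the paper's proof is the single sentence ``This just amounts to plugging in the results of the tables~\ref{tab:generatorpullback} and~\ref{tab:generatorpushforward} into the Clifford and the flow tableau definition,'' and your argument is simply a careful unpacking of that plug-in, together with the observation (via Cor.~\ref{cor:tableau_inverse_clifford}) that self-inverse gates have coinciding $F$ and $T$ tableaus.
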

\begin{proof}
    This just amounts to plugging in the results of the tables~\ref{tab:generatorpullback} and \ref{tab:generatorpushforward} into the Clifford and the flow tableau definition.
\end{proof}

Recall the symplectic inner product $\omega:\mathbb{F}_2^{2\npq}\times\mathbb{F}_2^{2\npq}\rightarrow \mathbb{F}_2$ from the subsection on (anti-)commutativity of Pauli operators which is defined by $\Omega=\left(\begin{array}{cc}
    0 & \idmatrix \\
    \idmatrix & 0
\end{array}\right)$ and extended to $G$ by disregarding the phase.

\begin{definition}
    A ``matrix'' $M$ in $G^{2\npq+1}$ is \emph{symplectic} if the phaseless matrix $\tilde{M}$ obtained from $M$ by removing the $0^\text{th}$ column and the $0^\text{th}$ row is symplectic, i.e, $\tilde{M}\Omega \tilde{M}^T=\Omega$.
    
    A ``matrix'' $M$ in $G^{2\npq+1}$ is a \emph{Clifford tableau} resp.\ a \emph{flow tableau} if there exists a Clifford unitary $C\in\Cl$ with $T(C)=M$ resp.\ $F(C)=M$.
    
    A ``matrix'' $M$ in $G^{2\npq+1}$ is a \emph{proper} tableau if $M$ is a Clifford tableau or, equivalently, if $M$ is a flow tableau.
\end{definition}

\begin{remark}
    We sometimes write $M\Omega M^T=\Omega$ for the symplectic property of a ``matrix'' $M\in G^{2\npq+1}$, but note that this matrix equation is not defined in the strict sense and we actually have to use the phaseless matrix $\tilde{M}$ of $M$.
\end{remark}

\begin{remark}
Since $\Cl$ is a group and hence contains $C^\dagger$ for every $C\in\Cl$ Cor.~\ref{cor:tableau_inverse_clifford} implies that a ``matrix'' $M$ is a Clifford tableau iff it is a flow tableau.    
\end{remark}

\begin{proposition}\label{prop:proper_tableau} (see Fact 35 in \cite{GossetGrierKerznerSchaeffer2024}) 
    A ``matrix'' $M\in G^{2\npq+1}$ is proper iff $M$ is symplectic, the $0^\text{th}$ row $\bm{m_0}$ is the standard ``vector'' $\bm{e_0}=(1\vert\bm{0}\vert\bm{0})\in Z(G)$ and the other rows $\bm{m_{\ipq}}$ of $M$ are proper elements in $G$.
\end{proposition}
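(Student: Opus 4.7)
For the forward direction, assume $M=T(C)$ for some $C\in\Cl$. The $0^\text{th}$ row equals $\varrho(C_\ast(i\idmatrix))=\varrho(i\idmatrix)=\bm{e_0}$ because conjugation fixes the center. The rows $\bm m_\ipq$ for $1\leq \ipq\leq 2\npq$ are images of the proper generators $\bar X_\ipq,\bar Z_\ipq$ under $C_\ast$ and hence proper by Cor.~\ref{cor:proper_paulis_preserved}. Finally, the pairwise phaseless symplectic inner products of $\bar X_1,\dots,\bar X_\npq,\bar Z_1,\dots,\bar Z_\npq$ assemble into $\Omega$, and by Cor.~\ref{cor:commutativity_paulis_preserved} these are preserved by $C_\ast$; thus $\tilde M\Omega\tilde M^T=\Omega$.

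For the converse, given $M$ with these three properties, I would construct $C\in\Cl$ with $T(C)=M$ in two stages. \emph{Stage 1 (phaseless part).} Exhibit $C_0\in\Cl$ with $\tilde T(C_0)=\tilde M$. Since $\tilde M\in\mathrm{Sp}_{2\npq}(\mathbb F_2)$, this amounts to the classical surjectivity of the phaseless Clifford map onto the binary symplectic group. A constructive proof proceeds by symplectic Gauss elimination using the phaseless tableaus from Cor.~\ref{cor:tableaus_elementary_cliffords}: the three gate families supply exactly the elementary symplectic generators of $\mathrm{Sp}_{2\npq}(\mathbb F_2)$ ($H_\itq$ swaps the $\bar X_\itq,\bar Z_\itq$-pair, $\CNOT[\icq,\itq]$ realizes the symplectic row-additions, and $S_\itq$ implements the diagonal transvection). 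An induction on $\npq$ clears $\tilde M$ to the identity one generator pair at a time, and the reversed product of the applied elementary Cliffords is $C_0$. \emph{Stage 2 (phase matching).} For each $1\leq\ipq\leq 2\npq$, the $\ipq^\text{th}$ rows of $T(C_0)$ and of $M$ share the same phaseless data $(\bmlogicalxi[\ipq],\bmlogicalzeta[\ipq])$ and both define proper Paulis, so Rmk.~\ref{rmk:orderPauli} forces their phase exponents to satisfy the same parity $\eta\oplus_2\bmlogicalzeta[\ipq]\cdot\bmlogicalxi[\ipq]^T=0$. Hence they differ by $2a_\ipq$ with $a_\ipq\in\mathbb F_2$. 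Setting $\bm a=(a_1,\dots,a_\npq)$, $\bm a'=(a_{\npq+1},\dots,a_{2\npq})$, and $Q:=\bm{\bar X}^{\bm a'}\bm{\bar Z}^{\bm a}\in\bar\Pauli\subset\Cl$, I would take $C:=C_0 Q$. Prop.~\ref{prop:conjugation-functors} then gives $(C_0 Q)_\ast\bar X_\ipq=(C_0)_\ast(Q_\ast\bar X_\ipq)=(-1)^{\omega(\varrho(Q),\varrho(\bar X_\ipq))}(C_0)_\ast\bar X_\ipq=(-1)^{a_\ipq}(C_0)_\ast\bar X_\ipq$, and analogously for $\bar Z_\ipq$, which flips each phase exponent by exactly $2a_\ipq$ while leaving the phaseless data unchanged. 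Thus $T(C)=M$.

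The main obstacle is Stage 1: supplying a clean, verified inductive decomposition of an arbitrary symplectic $\tilde M$ into the three elementary symplectic blocks of Cor.~\ref{cor:tableaus_elementary_cliffords}. The cleanest version clears the $\bar X_1$-row to $(\bm e_1\vert\bm 0)$ using $H$- and $\CNOT$-gates, then invokes $S_1$ (if necessary) together with further $\CNOT$s and the symplectic constraint to force the paired $\bar Z_1$-row to $(\bm 0\vert\bm e_1)$, after which the induction applies to the symplectic complement spanned by the remaining $2(\npq-1)$ generators. Everything else is routine bookkeeping using Prop.~\ref{prop:elementarypushforward}, and Stage 2 then reduces to the short symplectic-duality computation above.
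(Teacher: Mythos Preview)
Your forward direction is essentially identical to the paper's: both use Cor.~\ref{cor:proper_paulis_preserved} for properness of the rows, Cor.~\ref{cor:commutativity_paulis_preserved} for the symplectic condition, and the triviality of conjugation on $i\idmatrix$ for the $0^{\text{th}}$ row (the paper phrases it via the flow tableau $F(C)$ rather than $T(C)$, but this is immaterial).

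For the converse, the paper does not give an argument at all: it simply invokes Thm.~4 of Dehaene--De~Moor~\cite{dehaenedemoor2003}, which constructs a Clifford circuit realizing any symplectic tableau with proper rows. Your two-stage plan is precisely a self-contained sketch of what that theorem does. Stage~1 (symplectic Gauss elimination via $H$, $S$, $\CNOT$) is the standard constructive surjectivity $\Cl\twoheadrightarrow\mathrm{Sp}_{2\npq}(\mathbb F_2)$, and your outlined column-clearing induction is the usual route. Stage~2 is correct as written: the parity constraint from Rmk.~\ref{rmk:orderPauli} forces the phase discrepancy to be $2a_\ipq$, and conjugation by the Pauli $Q=\bm{\bar X}^{\bm a'}\bm{\bar Z}^{\bm a}$ flips exactly the required signs via $\omega(\varrho(Q),\varrho(\bar X_\ipq))=a_\ipq$ and $\omega(\varrho(Q),\varrho(\bar Z_\ipq))=a_{\npq+\ipq}$. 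So your proof is more explicit and self-contained than the paper's, at the cost of having to carry out (or at least carefully sketch) the symplectic elimination, which the paper outsources to a citation.
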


\begin{proof}
    ``$\Leftarrow$'' By Cor.~\ref{cor:proper_paulis_preserved} the conjugation of a proper Pauli operator by a Clifford unitary gives a proper Pauli operator again. Since $\logicalpauliop{}{\ipq}{}$ and $\logicalpauliop{}{}{\ipq}$ are proper (i.e., Hermitian and $\not=\pm\idmatrix$), their pullbacks via a Clifford $C\in\Cl$ are proper and therefore, all rows but the $0^\text{th}$ of $F(C)$ are proper. The generators $g_{\ipq}=\logicalpauliop{}{\ipq}{}$ and $g_{\ipqhat}=\logicalpauliop{}{}{\ipq}$, $\ipq=1,\dotsc,\npq$, $\ipqhat=\npq+\ipq$, are symplectic, since almost all pairs of generators $g_{\ipq}, g_{\ipq'}$ commute but $g_{\ipq}, g_{\ipqhat}$ which anticommute, i.e., in ``matrix'' form we get
    \begin{equation}
        \left(\begin{array}{c|c|c}
            1&\bm{0} & \bm{0} \\\hline
             \bm{0}& \idmatrix &0\\\hline
             \bm{0} &0&\idmatrix
        \end{array}\right)\cdot \left(\begin{array}{c|c}
    0 & \idmatrix \\\hline
    \idmatrix & 0
\end{array}\right)\cdot\left(\begin{array}{c|c|c}
            1&\bm{0} & \bm{0} \\\hline
             \bm{0}& \idmatrix &0\\\hline
             \bm{0} &0&\idmatrix
        \end{array}\right)^T=\left(\begin{array}{c|c}
    0 & \idmatrix \\\hline
    \idmatrix & 0
\end{array}\right).
    \end{equation}
    By Cor.~\ref{cor:commutativity_paulis_preserved} these (anti-)commutativity properties are preserved under pullback, therefore, 
    \begin{equation}
            \begin{pNiceArray}{c|c|c}[first-row,first-col]
        & \ep
        & \Block[color=xlog, draw=white, line-width=0pt, rounded-corners]{1-1}{\bm{\bar{X}}} 
        & \Block[color=zlog, draw=white, line-width=0pt, rounded-corners]{1-1}{\bm{\bar{Z}}} \\
        \Block{1-1}{i\idmatrix}&1&\bm0&\bm0\\\hline
        \Block[color=xphys, draw=white, line-width=0pt, rounded-corners]{1-1}{{X}_{\ipq}}&\ep_{\ipq}&\bmlogicalxi[\ipq]&\bmlogicalzeta[\ipq]\\\hline
        \Block[color=zphys, draw=white, line-width=0pt, rounded-corners]{1-1}{
        {Z}_{\ipq}}& \ep_{\ipqhat}&\bmlogicalxi[\ipqhat]&\bmlogicalzeta[\ipqhat]\\
    \end{pNiceArray}  \cdot\left(\begin{array}{c|c}
         0&\idmatrix  \\\hline
         \idmatrix& 0
    \end{array}\right) \cdot \begin{pNiceArray}{c|c|c}
        1&\bm0&\bm0\\\hline
        \ep_{\ipq}&\bmlogicalxi[\ipq]&\bmlogicalzeta[\ipq]\\\hline
         \ep_{\ipqhat}&\bmlogicalxi[\ipqhat]&\bmlogicalzeta[\ipqhat]\\
    \end{pNiceArray}^T=\left(\begin{array}{c|c}
         0&\idmatrix  \\\hline
         \idmatrix& 0
    \end{array}\right).
    \end{equation}
    Note that the $0^\text{th}$ row $\bm{m_0}$ of the flow tableau corresponds to the ignored generator $g_0=i\idmatrix$ with $\varrho(g_0)=\bm{e_0}$. $g_0$ commutes with all other generators $g_j$ and having order $4$ is obviously not proper. Therefore, every flow tableau $M=F(C)$ is symplectic, has $0^\text{th}$ row $\bm{e_0}$ and all other rows are proper.
     
    ``$\Rightarrow$'' Let $M\in G^{2\npq+1}$ be symplectic with $0^\text{th}$ row $\bm{e_0}$ and all other rows $\bm{m_j}$ proper. This is essentially Thm.~4 from \cite{dehaenedemoor2003}, which constructs a concrete Clifford circuit such that $T(C)=M$, see also \cite{Winderl2023} for a version using only the connectivity of a given hardware.
\end{proof}

Since the flow tableau $F(C)$ contains the images of a set of independent generators $g_{\ipq}$ of the Pauli group $\Pauli$ under the pullback $C^\ast$ we can calculate the pullback $C^\ast P$ of an arbitrary Pauli operator $P\in\Pauli$ by reading off appropriate pullbacks of the generators $g_{\ipq}$ in the flow tableau $F(C)$ and multiplying them. This can be represented as a ``vector-matrix''-multiplication induced by $\circledast$:
\begin{definition}
    Let $\bm\varrho=\paulivec{\varrho_{0}}{\varrho_{1}\dotsc\varrho_{\npq}}{\varrho_{\npq+1}\dotsc\varrho_{2\npq}}\in G$ be a ``vector'' and $M\in G^{2\npq+1}$ a ``matrix'' with rows $\bm{m_{i}}\in G$. The operation $\bm\varrho\circledast M$ is defined as the induced operation of $\circledast$, i.e.,
    \[
        \bm\varrho\circledast M:= \varrho_{0}\cdot \bm{m_{0}}\circledast\varrho_{1}\cdot \bm{m_{1}}\circledast\dotsc \circledast\varrho_{2\npq}\cdot \bm{m_{2\npq}},
    \]
    with $\cdot$ defined in Eq.~\eqref{eq:scalarmultiplication}.
\end{definition}
\begin{remark}\label{rmk:vector_matrix_non_scalarproduct}
    Of course, since $\circledast$ is not commutative, the ordering of the $\circledast$-summands in the definition is crucial. For example permuting the entries of $\bm\varrho$ and the rows of $M$ in the same manner will in general give a different product. Additionally, recall the problem of the ``nearly'' scalar product of Rmk.~\ref{rmk:no-vectorspace}, which implies that this definition has to be taken with care: 
    
    Since the elements $\varrho_{\ipq}$ for $\ipq=1,\dotsc,2\npq$ are in $\mathbb{F}_2$, $\varrho_{\ipq}\cdot \bm{m_{\ipq}}$ in these cases just corresponds to including those rows $\bm{m_{\ipq}}$ into the $\circledast$-sum that have $\varrho_{\ipq}=1$. As explained in the idea 2 of the remark on the ``nearly'' scalar product, this might lead to problems if we had two identical rows $\bm{m_{\ipq}}=\bm{m_{\ipq'}}$ with $\varrho_{\ipq}=\varrho_{\ipq'}=1$ since then the intuitive calculation $\varrho_{\ipq}\cdot \bm{m_{\ipq}}\circledast\varrho_{\ipq'}\cdot \bm{m_{\ipq'}}=(1\oplus_2 1)\bm{m_{\ipq}}=\paulivec{0}{\bm0}{\bm0}$ would be false if $\bm{m_{\ipq}}$ had order $4$. 
    
    The element $\varrho_0$ is in $\mathbb{Z}_4$, which sets this summand apart and makes it less prone to intuitively use scalar product rules for $\varrho_{0}\cdot \bm{m_{0}}\circledast\varrho_{\ipq}\cdot \bm{m_{\ipq}}$, which are false in general. Especially, the case of idea 1 in the remark corresponding to $(PQ)^a\not=P^aQ^a$ for non-commutative $P$ and $Q$ cannot arise, since we only have one summand with a $\mathbb{Z}_4$-factor. Bear in mind that $\varrho_{0}\cdot\bm{m_{0}}$ corresponds to the exponentiation $P_{0}^{\varrho_{0}}$ of the Pauli operator $P_{0}=\varrho^{-1}(\bm{m_{0}})$, it is \emph{not} the $\circledast$-product of $\varrho_{0}$ and $\bm{m_{0}}$ corresponding to the product $i^{\varrho_{0}}\idmatrix\cdot P_{0}$ (see idea~4 of Rmk.~\ref{rmk:no-vectorspace}), especially $\varrho_{0}\cdot\bm{e_0}=(\varrho_{0}\cdot 1\vert\bm{0}\vert\bm{0})$ which is not equal to $\varrho_{0}\circledast\bm{e_0}=(\varrho_{0}\oplus_41\vert\bm{0}\vert\bm{0})$.

    Hence, as the above definition is stated, it is well-defined, but simplifying the $\circledast$-sum with scalar product rules and the known rules from vector-matrix-multiplication is in general false. From Prop.~\ref{prop:proper_tableau} we know that a proper ``matrix'' $M=F(C)$ is symplectic, which implies that the rows of $M$ correspond to a set of independent generators of the Pauli group. This excludes the problem of two identical ``vectors'' in the computation of the $\circledast$-sum. In addition, the $0^\text{th}$ row is just $\bm{e_0}$ and therefore in the centralizer of $G$, while all other rows $\bm{m_{\ipq}}$ are proper, i.e., correspond to Hermitian Pauli operators. Therefore, the calculation of the product $\bm\varrho\circledast F(C)$ is not prone to errors out of false scalar multiplication rules.
\end{remark}

\begin{corollary}\label{cor:vector_matrix_multiplication}
    Let $\bm\varrho=\paulivec{\varrho_{0}}{\varrho_{1}\dotsc\varrho_{\npq}}{\varrho_{\npq+1}\dotsc\varrho_{2\npq}}\in G$ be a ``vector'' and $M\in Z(G)\times G^{2\npq}$ a ``matrix'' with rows $\bm{m_{0}}=\logicalpaulivec{\ep_{0}}{\bm{0}}{\bm{0}}\in Z(G)$ and $\bm{m_{i}}=\logicalflowp[i]\in G$. Then
    \begin{equation}
        \bm\varrho\circledast M=2s\circledast \bm\varrho\cdot M,
    \end{equation}
    where $\cdot$ is the standard vector-matrix-multiplication and $s=(\bmlogicalzeta[1]\dotsc         \bmlogicalzeta[2\npq])Q(\bm\varrho)\left(\begin{array}{c}
            \bmlogicalxi[1]^T\\
            \vdots\\
            \bmlogicalxi[2\npq]^T
    \end{array}\right)$.
\end{corollary}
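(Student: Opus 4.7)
The plan is to simply unfold the left-hand side using the definition of the induced operation $\circledast$ together with the multi-Pauli product formula of Cor.~\ref{cor:multi_pauli_product}, and then to collect the linear part into $\bm\varrho\cdot M$ and the quadratic part into the claimed $2s$.

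\medskip\noindent\textbf{Step 1: reduce the scalar actions.}
By definition $\bm\varrho\circledast M=(\varrho_0\cdot\bm{m_0})\circledast(\varrho_1\cdot\bm{m_1})\circledast\cdots\circledast(\varrho_{2\npq}\cdot\bm{m_{2\npq}})$. For $\ipq\geq 1$ we have $\varrho_\ipq\in\mathbb{F}_2$ and $\bm{m_\ipq}=\logicalflowp[\ipq]$, so Eq.~\eqref{eq:scalarmultiplication} gives $\varrho_\ipq\cdot \bm{m_\ipq}=(\varrho_\ipq\ep_\ipq\mid \varrho_\ipq\bmlogicalxi[\ipq]\mid \varrho_\ipq\bmlogicalzeta[\ipq])$, since $\lfloor\varrho_\ipq/2\rfloor=0$. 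For $\ipq=0$ we have $\bm{m_0}=(\ep_0\mid\bm 0\mid\bm 0)\in Z(G)$, and Eq.~\eqref{eq:scalarmultiplication} again simplifies to $\varrho_0\cdot\bm{m_0}=(\varrho_0\ep_0\mid\bm 0\mid\bm 0)$. This step avoids the scalar-multiplication pitfalls spelled out in Rmk.~\ref{rmk:no-vectorspace} because none of the ideas-1--4 obstructions arise: the zeroth summand is central, and the remaining rows are proper and independent (when $M$ is a proper tableau) so no cancelation trickery occurs.

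\medskip\noindent\textbf{Step 2: apply the multi-Pauli product formula.}
Now I plug these $2\npq+1$ vectors into Cor.~\ref{cor:multi_pauli_product}. The linear part of the resulting $\circledast$-sum is
\[
\Bigl(\varrho_0\ep_0\oplus_4\bigoplus_{\ipq=1}^{2\npq}\varrho_\ipq\ep_\ipq\,\Bigm|\,\sum_{\ipq=1}^{2\npq}\varrho_\ipq\bmlogicalxi[\ipq]\,\Bigm|\,\sum_{\ipq=1}^{2\npq}\varrho_\ipq\bmlogicalzeta[\ipq]\Bigr),
\]
which is precisely the ordinary vector-matrix product $\bm\varrho\cdot M$ (the $0^\text{th}$ row contributes only to the phase component, as its $\bmlogicalxi$- and $\bmlogicalzeta$-parts vanish).

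\medskip\noindent\textbf{Step 3: identify the sign correction with $2s$.}
By Cor.~\ref{cor:multi_pauli_product} the phase correction is $2\sum_{0\leq\ipq<\ipq'\leq 2\npq}(\varrho_\ipq\bmlogicalzeta[\ipq])(\varrho_{\ipq'}\bmlogicalxi[\ipq'])^T$. The pairs with $\ipq=0$ vanish because $\bmlogicalzeta[0]=\bm 0$, so the sum collapses to
\[
2\sum_{1\leq\ipq<\ipq'\leq 2\npq}\varrho_\ipq\varrho_{\ipq'}\,\bmlogicalzeta[\ipq]\bmlogicalxi[\ipq']^T.
\]
Since by the definition of $Q$ we have $(Q(\bm\varrho))_{\ipq\ipq'}=\varrho_\ipq\varrho_{\ipq'}$ for $\ipq<\ipq'$ and $0$ otherwise, this double sum equals exactly $(\bmlogicalzeta[1],\dotsc,\bmlogicalzeta[2\npq])\,Q(\bm\varrho)\,(\bmlogicalxi[1]^T,\dotsc,\bmlogicalxi[2\npq]^T)^T=s$. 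Combining Steps~2 and 3 via the centralizer action of Rmk.~\ref{rmk:scalar} yields $\bm\varrho\circledast M=2s\circledast\bm\varrho\cdot M$, as claimed.

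\medskip\noindent The only delicate point is Step~1, where one must be careful not to treat $\cdot$ as a genuine scalar multiplication; once the concrete formula from Eq.~\eqref{eq:scalarmultiplication} is substituted the remainder is a mechanical application of Cor.~\ref{cor:multi_pauli_product} and a bookkeeping identification of the strictly upper triangular matrix $Q(\bm\varrho)$ with the index set $\{(\ipq,\ipq'):1\leq\ipq<\ipq'\leq 2\npq\}$.
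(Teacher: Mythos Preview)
Your proof is correct and follows essentially the same route as the paper: both unfold the $\circledast$-sum via Cor.~\ref{cor:multi_pauli_product}, identify the linear part with the ordinary product $\bm\varrho\cdot M$, and match the quadratic phase correction to $Q(\bm\varrho)$. The only cosmetic difference is that the paper first restricts to the indices $i_t$ with $\varrho_{i_t}=1$ before invoking Cor.~\ref{cor:multi_pauli_product}, whereas you keep all $2\npq+1$ summands and let the explicit scalar formula~\eqref{eq:scalarmultiplication} zero out the unwanted ones; your parenthetical remark about properness and independence of rows is not needed (and not assumed in the statement), but it does no harm since your actual argument relies only on the concrete substitution.
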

\begin{proof}
    The standard vector-matrix-multiplication is well-defined in this situation, since in
    \begin{equation}\label{eq:standard_matrix_vector_product}
        \bm\varrho\cdot M=\left(\sum_{i}\varrho_{i}\ep_{i}\left\vert \left(\sum_{i}\varrho_{i}\logicalxi[i,j]\right)_j\right.\left\vert \left(\sum_{i}\varrho_{i}\logicalzeta[i,j]\right)_j\right.\right)
    \end{equation}
    the first component is well-defined in $\mathbb{Z}_4$ and the sums in the other two components are well-defined in $\mathbb{F}_2$ since $\bmlogicalxi[0]$ and $\bmlogicalzeta[0]$ vanish.
    
    From Cor.~\ref{cor:multi_pauli_product} we know the result of a $\circledast$-sum, hence if we set $i_0=0$ and denote by $i_t$ those $i>0$ with $\varrho_{i}=1$ we get
    \begin{equation}\begin{split}
        \bm\varrho\circledast M&=\varrho_{0}\bm{m_{0}}\circledast \varrho_{1}\bm{m_{1}}\circledast\dots \circledast  \varrho_{2\npq}\bm{m_{2\npq}}\\
        &=\varrho_{i_0}\bm{m_{i_0}}\circledast \varrho_{i_1}\bm{m_{i_1}}\circledast\dots \circledast  \varrho_{i_k}\bm{m_{i_k}}\\
        &= \left(
            \sum_{t} \ep_{i_t} \oplus_4 2s 
            \left\vert
                \sum_{t}\bmlogicalxi[i_t]
            \right.\left\vert
                \sum_{t}\bmlogicalzeta[i_t]
            \right.
        \right).\\
        &= 2s\circledast \left(
            \sum_{t} \ep_{i_t} 
            \left\vert
                \sum_{t}\bmlogicalxi[i_t]
            \right.\left\vert
                \sum_{t}\bmlogicalzeta[i_t]
            \right.
        \right),
    \end{split}\end{equation}
    where we factored out the element $2s=\logicalpaulivec{2s}{\bm0}{\bm0}$, which is in the centralizer of $G$.
    This expression without the $2s$ is the same as $\bm\varrho\cdot M$ in Eq.~\eqref{eq:standard_matrix_vector_product}, i.e., we are left with showing that  
    \begin{equation}\label{eq:phasecorrectionterm}\begin{split}
        s&=(\bmlogicalzeta[i_1]\dotsc\bmlogicalzeta[i_k])\cdot Q(\bm{1})\cdot\left(\begin{array}{c}
           \bmlogicalxi[i_1]^T\\
         \vdots\\ 
         \bmlogicalxi[i_k]^T
    \end{array}\right)\\
    &=(\bmlogicalzeta[1]\dotsc\bmlogicalzeta[2\npq])\cdot Q(\bm\varrho)\cdot\left(\begin{array}{c}
           \bmlogicalxi[1]^T\\
         \vdots\\ 
         \bmlogicalxi[2\npq]^T
    \end{array}\right).
    \end{split}\end{equation}
    By definition
    \begin{equation}
        Q=Q(\bm\varrho)=\left(\begin{array}{cccccc}
             0&\varrho_{1}\varrho_{2}&\varrho_{1}\varrho_{3}&\dots&\dots&\varrho_{1}\varrho_{2\npq}  \\
             0&0&\varrho_{2}\varrho_{3}&\dots&\dots&\varrho_{2}\varrho_{2\npq} \\
             0&0&0&\ddots&&\varrho_{3}\varrho_{2\npq} \\
             \vdots& \vdots&\vdots&\ddots&\ddots&\vdots \\
             0& 0& 0&\dots&0&\varrho_{2\npq-1}\varrho_{2\npq} \\
             0&0&0&\dots&0&0 \\
        \end{array}\right),
    \end{equation}
    therefore, the formal quadratic form with $Q$ calculates the sum of all phase contributions $(\varrho_{i}\bmlogicalzeta[i])(\varrho_{j}\bmlogicalxi[j])^T$ with $0<i<j$, showing that Eq.~\eqref{eq:phasecorrectionterm} is true.
\end{proof}

\begin{remark}
    Alternatively the sign correction $s$ can be written as 
    \begin{equation}\label{eq:sign_correction_strup}
        s=\bm\varrho\cdot
                \mathrm{strup}\left(\begin{array}{ccc}
            \bmlogicalzeta[1]\bmlogicalxi[1]^T & \cdots & \bmlogicalzeta[1]\bmlogicalxi[2\npq]^T \\
            \vdots&&\vdots\\
            \bmlogicalzeta[2\npq]\bmlogicalxi[1]^T & \cdots & \bmlogicalzeta[2\npq]\bmlogicalxi[2\npq]^T,
        \end{array}\right)\cdot
        \bm\varrho^T,
    \end{equation}
    where the stated product in the strict sense is undefined, instead of $\bm\varrho\in G$ we have to take its phaseless part $\bm{\tilde\varrho}=(\varrho_{1}\dotsc\varrho_{\npq}\vert\varrho_{\npq+1}\dotsc\varrho_{2\npq})\in\mathbb{F}_{2}^\npq\times \mathbb{F}_{2}^\npq$.
\end{remark}

\begin{corollary}
    Let $P\in\Pauli$ be a proper Pauli operator, $\bm\varrho=\varrho(P)$ and $M\in G^{2\npq+1}$ a proper ``matrix''. Then $\bm\varrho\circledast M$ is again proper.
\end{corollary}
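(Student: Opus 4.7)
The plan is to reinterpret the operation $\bm\varrho\circledast M$ in terms of the pullback $C^\ast P$ for an appropriate Clifford $C$, and then invoke Cor.~\ref{cor:proper_paulis_preserved} which already tells us that pullbacks of proper Pauli operators are proper.

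First I would use Prop.~\ref{prop:proper_tableau} to obtain a Clifford unitary $C\in\Cl$ with $F(C)=M$, so that the rows of $M$ read $\bm{m_0}=\varrho(g_0)=\bm{e_0}$ and $\bm{m_i}=\varrho(C^\ast g_i)$ for $i\geq1$, where $g_0=i\idmatrix$, $g_i=\logicalpauliop{}{i}{}$, $g_{\ipqhat}=\logicalpauliop{}{}{i}$. Writing $P$ in its standard form $P=i^{\varrho_0}\prod_i X_i^{\varrho_i}\prod_i Z_i^{\varrho_{\ipqhat}}$ and applying the group homomorphism property of $C^\ast$ (Eq.~\eqref{eq:conjugation-grouphom}) gives
\begin{equation}
C^\ast P = (i\idmatrix)^{\varrho_0}\cdot\prod_{i=1}^{\npq}(C^\ast g_i)^{\varrho_i}\cdot\prod_{i=1}^{\npq}(C^\ast g_{\ipqhat})^{\varrho_{\ipqhat}}.
\end{equation}

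Next I would check that the right-hand side, translated through the group isomorphism $\varrho$ (Prop.~\ref{prop:group_isomorphism}), gives exactly $\bm\varrho\circledast M$ in the sense of the definition preceding Rmk.~\ref{rmk:vector_matrix_non_scalarproduct}. For $i\geq1$ the row $\bm{m_i}$ is proper, hence has order $\leq2$, so $\varrho_i\cdot\bm{m_i}=\varrho((C^\ast g_i)^{\varrho_i})$ holds without any issue from the ``missing scalar multiplication'' laws; for the central row $\varrho_0\cdot\bm{m_0}=(\varrho_0\vert\bm0\vert\bm0)=\varrho(i^{\varrho_0}\idmatrix)$ by definition of the multiple (Eq.~\eqref{eq:scalarmultiplication}). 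Because both the $\circledast$-sum and the product on the operator side are taken in the same order (phase, then $\logicalpauliop{}{ }{}$-factors, then $\logicalpauliop{}{}{ }$-factors), the group isomorphism $\varrho$ sends the operator product to the $\circledast$-sum, giving
\begin{equation}
\bm\varrho\circledast M = \varrho(C^\ast P).
\end{equation}

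Finally, since $P$ is proper, Cor.~\ref{cor:proper_paulis_preserved} yields that $C^\ast P$ is proper; consequently $\bm\varrho\circledast M=\varrho(C^\ast P)$ is the $\varrho$-image of a proper Pauli operator, i.e.\ a proper ``vector'' in $G$, which is exactly the claim.

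The only subtle point — and what I consider the main obstacle — is the well-definedness step: Rmk.~\ref{rmk:vector_matrix_non_scalarproduct} warns that $\bm\varrho\circledast M$ need not respect the usual scalar-multiplication identities. One must therefore verify carefully that, precisely because $M$ is proper (so its nonzero rows are Hermitian and independent) and because only one summand carries a $\mathbb{Z}_4$-factor, the naive identification of $\varrho_i\cdot\bm{m_i}$ with $\varrho((C^\ast g_i)^{\varrho_i})$ does not fall into the pitfalls of ideas~1--4 in Rmk.~\ref{rmk:no-vectorspace}. Once that is in place, the corollary reduces to Cor.~\ref{cor:proper_paulis_preserved} applied through the isomorphism $\varrho$.
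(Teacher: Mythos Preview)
Your proof is correct but follows a different route than the paper's. The paper argues directly on the $\circledast$-sum: it pairs up the summands $\varrho_i\bm{m_i}\circledast\varrho_{\hat\imath}\bm{m_{\hat\imath}}$, uses the symplectic structure of $M$ to determine which pairs anticommute (namely exactly those with $\varrho_i\varrho_{\hat\imath}=1$, i.e.\ where $P$ has a $Y_i$), borrows a factor of $i$ from the phase for each such pair to make the partial product Hermitian via Cor.~\ref{cor:proper_product_of_proper}, and then observes that properness of $P$ guarantees the remaining phase exponent is even. Independence of the rows finally rules out the $\pm\idmatrix$ case.

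Your approach instead identifies $M$ with a flow tableau $F(C)$ via Prop.~\ref{prop:proper_tableau}, shows $\bm\varrho\circledast M=\varrho(C^\ast P)$, and concludes by Cor.~\ref{cor:proper_paulis_preserved}. This is cleaner and more conceptual; it also essentially anticipates Prop.~\ref{prop:representation_pullback_flow}, which appears immediately after this corollary in the paper and whose proof is the same computation you carry out here. What the paper's direct argument buys is self-containment at this point in the exposition (no forward reference) and an explicit display of how the phase bookkeeping works; what your argument buys is economy, since once Prop.~\ref{prop:representation_pullback_flow} is in hand the corollary is a one-line consequence of Cor.~\ref{cor:proper_paulis_preserved}.
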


\begin{proof}
    Since $M$ is proper, by Prop.~\ref{prop:proper_tableau} the $0^\text{th}$ row $\bm{m_0}=\bm{e_0}$ and all other rows $\bm{m_i}$ are proper. W.l.o.g.\ we may reorder the $\circledast$-summands of $\bm\varrho\circledast M$ in pairs $\varrho_i\bm{m_i}\circledast\varrho_{\hat\imath}\bm{m_{\hat\imath}}$, since this might at most lead to a sign change. Since $M$ is symplectic these pairs of two rows anticommute iff $\varrho_i\varrho_{\hat\imath}=1$, i.e., iff $P$ contains a factor $Y_i$. By Cor.~\ref{cor:proper_product_of_proper} this leads to a non-proper $\circledast$-sum. Reducing the phase exponent of $\varrho_0\bm{m_0}$ by one and increasing the phase exponent of $\varrho_i\bm{m_i}\circledast\varrho_{\hat\imath}\bm{m_{\hat\imath}}$ by $1$ gives a proper result for this pair and does not change the end result. Since $P$ is proper, Prop.~\ref{prop:randrho} implies that the reduced phase exponent is even and the following $\circledast$-sums of pairs are all proper and commute with each other, therefore, $\bm\varrho\circledast M$ is Hermitian. 

    Since by the symplecticity of $M$ the rows are independent, the phaseless part of $\bm\varrho\circledast M$ may not be ther zero vector, therefore it is proper.
\end{proof}

\begin{proposition}\label{prop:representation_pullback_flow}
    Let $\varrho(P)$ be the ``vector'' of an arbitrary Pauli operator $P\in\Pauli$, $\bm{f_{i}}=(\ep_{i}\vert \bmlogicalxi[i]\vert\bmlogicalzeta[i])$ the $i^{\text{th}}$ row of the flow tableau $F(C)$ for a Clifford unitary $C\in\Cl$. 
    
    Then we have
    \begin{equation}
        \varrho(C^\ast P)=\varrho(P)\circledast F(C) = 2s \circledast \varrho(P)\cdot F(C),
    \end{equation}
    where $\cdot$ is the regular matrix multiplication and the sign correction $s$ can be formally written as 
    \begin{equation}
        s=(\bmlogicalzeta[1],\dotsc,\bmlogicalzeta[2\npq])\cdot Q(P) \cdot \left(\begin{array}{c}{\bmlogicalxi[1]}^T\\\vdots\\{\bmlogicalxi[2\npq]}^T\end{array}\right).
    \end{equation}
\end{proposition}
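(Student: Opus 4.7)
The plan is to prove the first equality $\varrho(C^\ast P) = \varrho(P) \circledast F(C)$ by expressing $P$ in the standard form of Eq.~\eqref{eq:standard_form_pauli} as a product of the chosen generators $g_0 = i\idmatrix$, $g_{\ipq} = \logicalpauliop{}{\ipq}{}$, $g_{\ipqhat} = \logicalpauliop{}{}{\ipq}$ raised to the appropriate components of $\varrho(P)$, then pushing $C^\ast$ through the product by its group homomorphism property (Prop.~\ref{prop:conjugation-functors}), and finally recognizing the resulting $\varrho$-image as exactly the $\circledast$-sum defining $\varrho(P) \circledast F(C)$. The second equality then follows as a direct application of Cor.~\ref{cor:vector_matrix_multiplication}.

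In more detail, I would first write $P = i^{\kappa}\prod_{\ipq=1}^{\npq} \logicalpauliop{}{\ipq}{}^{\xi_\ipq} \prod_{\ipq=1}^{\npq}\logicalpauliop{}{}{\ipq}^{\zeta_\ipq}$, where $\varrho(P) = (\kappa|\bm\xi|\bm\zeta)$. Applying $C^\ast$ and using Eq.~\eqref{eq:conjugation-grouphom} gives
\begin{equation}
C^\ast P = (C^\ast(i\idmatrix))^{\kappa} \prod_{\ipq} (C^\ast \logicalpauliop{}{\ipq}{})^{\xi_\ipq}\prod_{\ipq}(C^\ast \logicalpauliop{}{}{\ipq})^{\zeta_\ipq}.
\end{equation}
Applying $\varrho$ and using the fact that $\varrho$ is a group isomorphism (Prop.~\ref{prop:group_isomorphism}), this product of Pauli operators is sent to the corresponding $\circledast$-product. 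By the definition of the flow tableau, the $\varrho$-image of each factor $(C^\ast g_\ipq)^{\varrho_\ipq}$ is $\varrho_\ipq \cdot \bm{f_\ipq}$ in the sense of Eq.~\eqref{eq:scalarmultiplication}, with $\bm{f_0} = \bm{e_0}$ contributing $\kappa \cdot \bm{e_0} = (\kappa|\bm0|\bm0)$. Therefore $\varrho(C^\ast P)$ equals the $\circledast$-sum
\begin{equation}
\kappa\cdot\bm{f_0} \circledast \xi_1\cdot\bm{f_1}\circledast \dotsb \circledast \zeta_\npq\cdot\bm{f_{2\npq}},
\end{equation}
which by definition is $\varrho(P) \circledast F(C)$.

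For the second equality, I would invoke Cor.~\ref{cor:vector_matrix_multiplication} with $\bm\varrho = \varrho(P)$ and $M = F(C)$. The corollary immediately gives $\varrho(P)\circledast F(C) = 2s \circledast \varrho(P)\cdot F(C)$ with the sign correction
\[
s = (\bmlogicalzeta[1],\dotsc,\bmlogicalzeta[2\npq]) \cdot Q(\varrho(P)) \cdot (\bmlogicalxi[1]^T, \dotsc, \bmlogicalxi[2\npq]^T)^T,
\]
and the claim follows from the identification $Q(P) = Q(\varrho(P))$ introduced just before Cor.~\ref{cor:multi_pauli_product}. The hypotheses of Cor.~\ref{cor:vector_matrix_multiplication} (that $\bm{f_0} \in Z(G)$) are immediate from the definition of $F(C)$, since $C^\ast(i\idmatrix) = i\idmatrix$.

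The main subtlety, rather than a true obstacle, is to be careful with the ``scalar'' multiplication $\varrho_\ipq\cdot\bm{f_\ipq}$ in view of Rmk.~\ref{rmk:no-vectorspace}: the identification of $(C^\ast g_\ipq)^{\varrho_\ipq}$ with $\varrho_\ipq\cdot\bm{f_\ipq}$ relies on exponentiation of a single Pauli operator (case \textbf{Idea~1}), which poses no issue, and the subsequent application of $\circledast$ respects the group-theoretic structure by Prop.~\ref{prop:group_isomorphism}. The symplectic independence of the generators, ensured by Prop.~\ref{prop:proper_tableau}, prevents the pathologies discussed in Rmk.~\ref{rmk:vector_matrix_non_scalarproduct} from arising.
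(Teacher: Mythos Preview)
Your proof is correct and follows essentially the same approach as the paper: write $P$ in standard form, push $C^\ast$ through the product using the homomorphism property, apply $\varrho$ to obtain the $\circledast$-sum, identify it with $\varrho(P)\circledast F(C)$ via the definition of the flow tableau, and then invoke Cor.~\ref{cor:vector_matrix_multiplication} for the second equality. The additional remarks you make about the subtleties of the ``scalar'' multiplication are accurate but go slightly beyond what the paper spells out.
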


\begin{proof} 
    Let $\bm{\varrho}:=\varrho(P)=\physicalflowp$ be the ``vector'' of $P$, such that $P=i^\eta \bm{\physicalpauliop{}{ }{}}^{\bmphysicalxi} \bm{\physicalpauliop{}{}{ }}^{\bmphysicalzeta}$. Since $C^\ast$ is a homomorphism we get
    \begin{equation}
            C^\ast P=(C^\ast(i\idmatrix))^\eta \cdot \prod_{\ipq=1}^\npq (C^\ast\physicalpauliop{}{\ipq}{})^{\physicalxi[\ipq]}\cdot\prod_{\ipq=1}^\npq (C^\ast\physicalpauliop{}{}{\ipq})^{\physicalzeta[\ipq]}.
    \end{equation}
    Applying the group isomorphism $\varrho$ therefore yields
    \begin{multline}
            \varrho(C^\ast P)=\eta\varrho(C^\ast(i\idmatrix)) \circledast \physicalxi[1]\varrho(C^\ast\physicalpauliop{}{1}{})\circledast \dots\circledast \physicalxi[\npq]\varrho(C^\ast\physicalpauliop{}{\npq}{})\\ 
            \circledast
            \physicalzeta[1]\varrho(C^\ast\physicalpauliop{}{}{1})\circledast\cdots \circledast
            \physicalzeta[\npq]\varrho(C^\ast\physicalpauliop{}{}{\npq})
    \end{multline}
    Comparing this with the definition of $F(C)$, which gives $\bm{f_{0}}=\varrho(C^\ast(i\idmatrix))$, $\bm{f_{i}}=\varrho(C^\ast\physicalpauliop{}{i}{})$ and $\bm{f_{\hat\imath}}=\varrho(C^\ast\physicalpauliop{}{}{i})$, as well as with the definition of the ``vector-matrix''-multiplication gives the first claim $\varrho(C^\ast P)=\varrho(P)\circledast F(C)$ of the proposition. The second claim than follows from Cor.~\ref{cor:vector_matrix_multiplication} and the definition of $Q(P)$.
\end{proof}

\begin{remark}
    The notation of the sign correction $s$ as a formal quadratic form gives us a good grip on the run time of the calculation of $\varrho(P)\circledast F$: In the worst case all $\varrho_i=1$ giving $2n^2-n$ non-zero entries in $Q$, each leading to the calculation of a scalar product of two vectors of length $\npq$. Therefore, the calculation is $\mathcal{O}(\npq^3)$ in the worst case. On the other hand, if some $\varrho_i=0$ the structure of $Q$ shows us that the whole $i^{\text{th}}$ row as well as the whole $i^{\text{th}}$ column is zero. Therefore, the calculation is $\mathcal{O}(h^2\npq)$, where $h$ is the Hamming norm of $\bm{\varrho}$, i.e., the number of non-zero entries. For example if you want to use the flow tableau to deduce the logical action of a physical $R_X$- or $R_Z$-rotation, you just read off one row of the tableau, which corresponds to a Hamming norm of $1$ and no sign correction term at all. If you consider a physical $R_Y$-rotation you only need the $X$- and $Z$-row of the flow tableau, corresponding to a Hamming norm of $2$. Hence, for the physical single-qubit rotations $R_P$ the calculation of $\varrho(P)\circledast F$ is just $\mathcal{O}(\npq^2)$ for the vector-matrix-multiplication.

    For large Hamming weights $h$ the sign correction is more expensive than the vector-matrix-multiplication. Using fast matrix multiplication and the formula in Eq.~\eqref{eq:sign_correction_strup}
    gives a run time of $\mathcal{O}(\npq^\omega)$ (instead of the above naive $\mathcal{O}(\npq^3)$, see Rmk.~\ref{rmk:runtime_pauliproduct}).
\end{remark}

The analogous proposition holds for the Clifford tableau:

\begin{proposition}\label{prop:representation_pushforward_Clifford}[see Thm.~1 in \cite{dehaenedemoor2003}, Sec.~2.3.1 in \cite{Gidney2021stimfaststabilizer}, Lemma~37 in \cite{GossetGrierKerznerSchaeffer2024}]
    Let $\varrho(\logicalp)$ be the ``vector'' of an arbitrary Pauli operator $\logicalp\in\bar\Pauli$, $\bm{t_{i}}=\physicalpaulivec{\eta_{i}}{\bm{x_{i}}}{\bm{z_{i}}}$ the $i^{\text{th}}$ row of the Clifford tableau $T(C)$ for a Clifford unitary $C\in\Cl$. 
    
    Then we have
    \begin{equation}
        \varrho(C_\ast \logicalp)=\varrho(\logicalp)\circledast T(C) = 2s \circledast \varrho(\logicalp)\cdot T(C),
    \end{equation}
    where $\cdot$ is the regular matrix multiplication and the sign correction $s$ can be formally written as 
    \begin{equation}
        s=(\physicalzeta[1],\dotsc,\physicalzeta[2\npq])\cdot Q(\logicalp) \cdot \left(\begin{array}{c}\physicalxi[1]^T\\\vdots\\\physicalxi[2\npq]^T\end{array}\right).
    \end{equation}
\end{proposition}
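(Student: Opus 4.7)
The plan is to mirror the proof of Prop.~\ref{prop:representation_pullback_flow} essentially verbatim, since the pushforward $C_\ast$ is also a group homomorphism on $\bar\Pauli$ (by Eq.~\eqref{eq:conjugation-grouphom}) and the Clifford tableau $T(C)$ stores the ``vectors'' $\varrho(C_\ast g_i)$ of the images under $C_\ast$ of the generators $g_0 = i\idmatrix,\, g_{\ipq}=\logicalpauliop{}{\ipq}{},\, g_{\ipqhat}=\logicalpauliop{}{}{\ipq}$ of $\bar\Pauli$, just as $F(C)$ stores $\varrho(C^\ast g_i)$. The key input is that both $C_\ast$ and $C^\ast$ are covariant in their second argument (Prop.~\ref{prop:conjugation-functors}), so the argument carries over without any change in direction.

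Concretely, I would first write $\logicalp$ in its standard form $\logicalp = i^\ep \bm{\logicalpauliop{}{ }{}}^{\bmlogicalxi}\bm{\logicalpauliop{}{}{ }}^{\bmlogicalzeta}$ associated to $\varrho(\logicalp) = \logicalflowp$, then apply $C_\ast$ factorwise and use that $C_\ast$ is a group homomorphism to obtain
\begin{equation}
C_\ast \logicalp = (C_\ast(i\idmatrix))^{\ep} \cdot \prod_{\ipq=1}^{\npq} (C_\ast \logicalpauliop{}{\ipq}{})^{\logicalxi[\ipq]} \cdot \prod_{\ipq=1}^{\npq} (C_\ast \logicalpauliop{}{}{\ipq})^{\logicalzeta[\ipq]}.
\end{equation}
Applying the group isomorphism $\varrho$ from Prop.~\ref{prop:group_isomorphism} turns the product into a $\circledast$-sum, where by definition of $T(C)$ the summands are exactly $\ep \cdot \bm{t_0}$, $\logicalxi[\ipq] \cdot \bm{t_{\ipq}}$ and $\logicalzeta[\ipq] \cdot \bm{t_{\ipqhat}}$. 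Matching this against the definition of the ``vector-matrix'' operation $\varrho(\logicalp)\circledast T(C)$ established just before Cor.~\ref{cor:vector_matrix_multiplication} yields the first equality $\varrho(C_\ast\logicalp) = \varrho(\logicalp)\circledast T(C)$.

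For the second equality I would invoke Cor.~\ref{cor:vector_matrix_multiplication}, which expresses any such $\circledast$-sum as a regular vector-matrix product $\varrho(\logicalp)\cdot T(C)$ plus an explicit phase correction $2s$ in the centralizer of $G$, with $s$ given by the formal quadratic form $(\physicalzeta[1],\dotsc,\physicalzeta[2\npq])\cdot Q(\varrho(\logicalp))\cdot(\physicalxi[1]^T,\dotsc,\physicalxi[2\npq]^T)^T$. Since the entries of the rows $\bm{t_{\ipq}}$ of $T(C)$ are denoted $\physicalpaulivec{\eta_{\ipq}}{\bm{x_{\ipq}}}{\bm{z_{\ipq}}}$, the vectors $\bmphysicalxi[i]$ and $\bmphysicalzeta[i]$ appearing in $s$ are exactly the corresponding columns of $T(C)$, and $Q(\logicalp):=Q(\varrho(\logicalp))$ is the strictly upper triangular matrix fixed in the earlier definition. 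This gives the second equality as stated.

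There is no substantive obstacle: the proof is a formal transport of Prop.~\ref{prop:representation_pullback_flow} along the symmetry between left and right conjugation, using only that $C_\ast$ is a group homomorphism (Prop.~\ref{prop:conjugation-functors}), that $\varrho$ is a group isomorphism (Prop.~\ref{prop:group_isomorphism}), and that the Clifford tableau is defined row-by-row via $\varrho(C_\ast g_{\ipq})$ exactly as the flow tableau is defined via $\varrho(C^\ast g_{\ipq})$. The only point requiring minimal care is keeping the role of the rows $\bm{t_{\ipq}}$ vs.\ $\bm{f_{\ipq}}$ straight and ensuring, via Rmk.~\ref{rmk:vector_matrix_non_scalarproduct} and Prop.~\ref{prop:proper_tableau}, that the $\circledast$-sum is unambiguously defined because $T(C)$ is proper and hence its nonzero rows are independent proper elements of $G$ with $\bm{t_0}=\bm{e_0}$.
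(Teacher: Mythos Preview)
Your proposal is correct and follows exactly the approach the paper takes: the paper introduces this proposition with the sentence ``The analogous proposition holds for the Clifford tableau'' and gives no separate proof, so mirroring the argument of Prop.~\ref{prop:representation_pullback_flow} with $C_\ast$ in place of $C^\ast$ and $T(C)$ in place of $F(C)$ is precisely what is intended. One small slip: the $\bm{x_i}$ and $\bm{z_i}$ are the $X$- and $Z$-parts of the \emph{rows} $\bm{t_i}$ of $T(C)$, not columns, but this does not affect the argument.
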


These two representation propositions give the desired analogues to the commutative diagram~\eqref{eq:cd_linear_map} for linear maps:
\begin{equation}\label{eq:cd_pull_back_push_forward}
\begin{tikzcd}[column sep=1cm, every cell/.style={inner xsep=1ex, inner ysep=0.85ex}]
    \bar\Pauli \arrow[d, "\varrho"'] \arrow[rd, phantom, "\circlearrowleft", shift left=0ex]  & \Pauli \arrow[l, "C^\ast"'] \arrow[d, "\varrho"] \\
    G& G \arrow[l, "\circledast F(C)"]
\end{tikzcd}\qquad\begin{tikzcd}[column sep=1cm, every cell/.style={inner xsep=1ex, inner ysep=0.85ex}]
    \bar\Pauli \arrow[d, "\varrho"'] \arrow[rd, phantom, "\circlearrowleft", shift left=0ex] \arrow[r, "C_\ast"] & \Pauli \arrow[d, "\varrho"] \\
    G\arrow[r, "\circledast T(C)"']& G
\end{tikzcd}
\end{equation}
Note the different directions of the pullback and the pushforward, mapping physical to logical Paulis resp.\ logical to physical Paulis.

\subsection{Covariance and contravariance of tableaus}
Until this point we could have perfectly done without the flow tableau, just working with the Clifford tableau $T(C)$ and switching to the inverse Clifford tableau $T(C^\dagger)=F(C)$ if needed. But in this subsection we want to address the covariance and contravariance of different usages of the (inverse) Clifford tableaus, where the differentiation between the Clifford and the flow tableau helps to categorize the different tableau variants. 

We start with the composition of two tableaus for which we need the $\circledast$-product of two ``matrices'':
\begin{definition}
    For ``matrices'' $M, M'\in G^{2\npq+1}$ with rows $\bm{m_{i}}$ resp.\ $\bm{m'_{i}}$ we define
    \begin{equation}\begin{split}
        M'\circledast M&:=\left(\begin{array}{c}
        \bm{m'_{0}}\circledast M \\
        \bm{m'_{1}}\circledast M \\
             \vdots\\
        \bm{m'_{2\npq}}\circledast M.
        \end{array}\right)
    \end{split}\end{equation}
\end{definition}

In the cases of interest we get a simplified way of calculating the ``matrix'' product by using Cor.~\ref{cor:vector_matrix_multiplication}:

\begin{corollary}\label{cor:matrixproduct_nice_zeroth_row}
    Let $M,M'\in Z(G)\times G^{2\npq}$ be ``matrices'' with rows $\bm{m_{i}}$ resp.\ $\bm{m'_{i}}$ such that $\bm{m_{0}}=\ep_{0}\bm{e_0}\in Z(G)$ resp.\ $\bm{m'_{0}}=\ep'_{0}\bm{e_0}\in Z(G)$. Then
    \begin{equation}\begin{split}
        M'\circledast M&=\left(\begin{array}{c}
             \ep_{0}\ep'_{0}\bm{e_0} \\
             2s'_{1} \circledast \bm{m'_{1}}\cdot M \\
             \vdots\\
             2s'_{2\npq} \circledast \bm{m'_{2\npq}}\cdot M
             \end{array}\right)\\
        &=:\left(\begin{array}{c}
             0  \\
             2s'_{1}\\
             \vdots\\
             2s'_{2\npq}              
        \end{array}\right)\circledast M'\cdot M,
    \end{split}\end{equation}
    where $\cdot$ is standard matrix multiplication, $s'_{i}$ is the sign correction of the product $\bm{m'_{i}}\circledast M$ and the operation $\circledast$ of a column vector and a matrix is the row-wise $\circledast$-product.
\end{corollary}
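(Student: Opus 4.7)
The plan is to establish the identity row by row, exploiting the special form of the zeroth rows of $M$ and $M'$ and then invoking Cor.~\ref{cor:vector_matrix_multiplication} for the remaining rows.

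For the zeroth row, I would observe that since $\bm{m'_0} = \ep'_0 \bm{e_0} = (\ep'_0\vert\bm{0}\vert\bm{0})$, every coefficient in its $\bmlogicalxi$- and $\bmlogicalzeta$-parts is zero, so all but the leading summand in $\bm{m'_0}\circledast M$ collapse to $\mathrm{id}_G$ via $0\cdot\bm{m_i}=\mathrm{id}_G$ (from Eq.~\eqref{eq:scalarmultiplication}). What remains is $\ep'_0\cdot\bm{m_0}$, and since $\bm{m_0}=\ep_0\bm{e_0}$ has no $X$- or $Z$-content, the scalar-action formula yields simply $(\ep'_0\ep_0\vert\bm{0}\vert\bm{0})=\ep'_0\ep_0\,\bm{e_0}$, with no sign correction. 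This matches both the first row of the first displayed expression and the first row of the ``second form'': standard matrix multiplication of $\bm{m'_0}=(\ep'_0\vert\bm{0}\vert\bm{0})$ with $M$ also produces $(\ep'_0\ep_0\vert\bm{0}\vert\bm{0})$, and the column-vector entry in position $0$ is $0$, so the $\circledast$ of $0$ with the zeroth row of $M'\cdot M$ is just $\ep'_0\ep_0\,\bm{e_0}$.

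For the remaining rows $1\le i\le 2\npq$, the hypothesis $\bm{m_0}\in Z(G)$ puts us precisely in the setting of Cor.~\ref{cor:vector_matrix_multiplication} with the row ``vector'' $\bm{m'_i}$ and the ``matrix'' $M$, yielding $\bm{m'_i}\circledast M=2s'_i\circledast \bm{m'_i}\cdot M$ where $s'_i$ is the sign correction defined there. Stacking these expressions as the rows $i\ge 1$ of $M'\circledast M$, and combining with the zeroth-row computation, produces exactly the claimed first equality. The second equality is then a restatement: the row-wise $\circledast$ of the column vector $(0,2s'_1,\ldots,2s'_{2\npq})^T$ with $M'\cdot M$ adds the phase correction $2s'_i$ to row $i$ and leaves row $0$ unchanged, since each $2s'_i\circledast$ acts only on the $\mathbb{Z}_4$-component.

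I do not expect any genuine obstacle here: the argument is essentially a bookkeeping step that lifts Cor.~\ref{cor:vector_matrix_multiplication} from a single row to a full matrix. The only mild subtlety is verifying that the ``second form'' correctly handles the zeroth row, which requires checking that standard matrix multiplication of $M'$ and $M$ with centralizer-valued zeroth rows preserves the centralizer-valued zeroth row structure; this is immediate since $\ep'_0\bm{e_0}\cdot M=\ep'_0\ep_0\,\bm{e_0}$ as computed above.
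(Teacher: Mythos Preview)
Your proposal is correct and mirrors the paper's own proof essentially line for line: both treat the zeroth row by direct computation (using that $\bm{m'_0}=\ep'_0\bm{e_0}$ kills all but the leading summand, yielding $\ep'_0\ep_0\bm{e_0}$) and then invoke Cor.~\ref{cor:vector_matrix_multiplication} for rows $1\le i\le 2\npq$. Your additional paragraph verifying the ``second form'' on the zeroth row is slightly more explicit than the paper's one-line remark, but the logic is identical.
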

\begin{proof}
    We only have to prove the claim on the $0^\text{th}$ row, the other rows follow directly from Cor.~\ref{cor:vector_matrix_multiplication} since $M\in Z(G)\times G^{2\npq}$. For the $0^\text{th}$ row we get $\bm{m'_{0}}\circledast M = \ep'_{0}\bm{m_{0}}\circledast0\bm{m_1}\circledast\dots\circledast0\bm{m_{2\npq}}=\ep'_{0}\ep_{0}\bm{e_{0}}=0\circledast \ep'_{0}\ep_{0}\bm{e_{0}}$, where $\ep'_{0}\ep_{0}\bm{e_{0}}$ is exactly the $0^{\text{th}}$ row of the standard matrix product $M'M$.
\end{proof}

\begin{proposition}\label{prop:concatenation_flow}
    Let $C, C'\in\Cl$ be two Clifford unitaries and $F(C)$ resp.\ $F(C')$ their flow tableaus with rows $\bm{f_{i}}$ and $\bm{f'_{i}}$ respectively. Then the flow tableau of the concatenation $C'C$ is $F(C'C)= F(C')\circledast F(C)$, i.e.,
        \begin{equation}\begin{split}
        F(C'C)&=\left(\begin{array}{c}
             \bm{e_0} \\
             2s'_{1} \circledast \bm{f'_{1}}\cdot F(C) \\
             \vdots\\
             2s'_{2\npq} \circledast \bm{f'_{2\npq}}\cdot F(C)
             \end{array}\right)\\
        &=\left(\begin{array}{c}
             0  \\
             2s'_{1}\\
             \vdots\\
             2s'_{2\npq}              
        \end{array}\right)\circledast F(C')\cdot F(C),
    \end{split}\end{equation}
    with sign corrections $s'_{i}$ coming from $\bm{f'_{i}}\circledast F(C)$.
\end{proposition}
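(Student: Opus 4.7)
The plan is to reduce the claim to the contravariance of the pullback combined with the representation formula from Prop.~\ref{prop:representation_pullback_flow}. By Prop.~\ref{prop:conjugation-functors}, the pullback is contravariant, so for each generator $P_i \in \{i\idmatrix, X_1,\dots,X_\npq, Z_1,\dots,Z_\npq\}$ one has $(C'C)^\ast P_i = C^\ast(C'^\ast P_i)$. Applying $\varrho$ and Prop.~\ref{prop:representation_pullback_flow} to the inner Pauli operator $Q := C'^\ast P_i$ yields
\begin{equation}
    \varrho((C'C)^\ast P_i) = \varrho(C^\ast Q) = \varrho(Q)\circledast F(C) = \bm{f'_i}\circledast F(C),
\end{equation}
where $\bm{f'_i} = \varrho(C'^\ast P_i)$ is precisely the $i$-th row of $F(C')$. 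By definition of the flow tableau and of the row-wise ``matrix'' operation $\circledast$, the left-hand side is the $i$-th row of $F(C'C)$ while the right-hand side is the $i$-th row of $F(C')\circledast F(C)$, which establishes the first equality $F(C'C)=F(C')\circledast F(C)$.

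To obtain the second, more explicit form I would invoke Cor.~\ref{cor:matrixproduct_nice_zeroth_row} with $M = F(C)$ and $M' = F(C')$. Both are proper tableaus by construction, so by Prop.~\ref{prop:proper_tableau} their $0^\text{th}$ rows are $\bm{e_0} \in Z(G)$ and thus the hypothesis of the corollary is satisfied. Its conclusion immediately rewrites each row $\bm{f'_i}\circledast F(C)$ for $i\geq 1$ as $2s'_i \circledast \bm{f'_i}\cdot F(C)$ with the stated sign corrections coming from Prop.~\ref{prop:representation_pullback_flow} (equivalently Cor.~\ref{cor:vector_matrix_multiplication}), while the $0^\text{th}$ row collapses to $1\cdot 1\cdot\bm{e_0} = \bm{e_0}$ since $\eta'_0 = \eta_0 = 1$ on both sides. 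Collecting the sign corrections into a single column ``vector'' gives the factored form $F(C'C) = (0,2s'_1,\dots,2s'_{2\npq})^T \circledast F(C')\cdot F(C)$ claimed in the proposition.

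The main thing to be careful about is not that the argument is difficult, but that the ``matrix'' product $\circledast$ is not a genuine bilinear operation (see Rmk.~\ref{rmk:vector_matrix_non_scalarproduct}); one therefore must justify the split of $\bm{f'_i}\circledast F(C)$ into a standard matrix product plus a centralizer-valued sign correction by appealing to Cor.~\ref{cor:vector_matrix_multiplication}, rather than manipulating the $\circledast$-sum with vector-space intuition. Once that corollary is in hand, however, the only remaining ingredient is the contravariance identity $(C'C)^\ast = C^\ast\circ C'^\ast$, so the proof is essentially a one-line unwinding of the definition of $F$ together with the representation formula.
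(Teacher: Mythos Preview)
Your proof is correct and follows essentially the same route as the paper's own argument: both unwind the definition of $F$ via the contravariance $(C'C)^\ast = C^\ast\circ {C'}^\ast$, apply Prop.~\ref{prop:representation_pullback_flow} row by row to obtain $F(C'C)=F(C')\circledast F(C)$, and then invoke Cor.~\ref{cor:matrixproduct_nice_zeroth_row} (using that proper tableaus have $0^{\text{th}}$ row $\bm{e_0}$) to extract the explicit form with sign corrections. Your added caveat about not treating $\circledast$ as bilinear is a welcome clarification but not a departure from the paper's strategy.
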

\begin{remark}
    The sign corrections $s'_i$ can be calculated by the formal quadratic form 
    \begin{equation}
    s'_{i}=(\bmlogicalzeta[1]\dotsc\bmlogicalzeta[2\npq])\cdot Q(\bm{f'_{i}})\cdot\left(\begin{array}{c}
         \bmlogicalxi[1]^T  \\
         \vdots\\
         \bmlogicalxi[2\npq]^T
    \end{array}\right)
    \end{equation}
    with the $\bmlogicalxi$ and $\bmlogicalzeta$ of $F(C)$, i.e., $\bm{f_{i}}=\logicalflowp[i]$, or via 
        \begin{equation}\label{eq:signcorrection_concatenation}
        s'_{i}=\bm{f'_{i}}\cdot
                \mathrm{strup}\left(\begin{array}{ccc}
            \bmlogicalzeta[1]\bmlogicalxi[1]^T & \cdots & \bmlogicalzeta[1]\bmlogicalxi[2\npq]^T \\
            \vdots&&\vdots\\
            \bmlogicalzeta[2\npq]\bmlogicalxi[1]^T & \cdots & \bmlogicalzeta[2\npq]\bmlogicalxi[2\npq]^T,
        \end{array}\right)\cdot
        {\bm{f'_{i}}}^T.
    \end{equation}
    In the worst case we have some row $\bm{f'_i}=\logicalpaulivec{\ep_{i}}{1\dotsc 1}{1\dotsc 1}$ in $F(C')$, then we need all products $\bmlogicalzeta[i]\bmlogicalxi[j]^T$ with $0<i<j$ giving a runtime of $\mathcal{O}(\npq^\omega)$ for the sign corrections. In the best case all rows $\bm{f'_{i}}$ have Hamming weight 1 and no sign corrections are needed.
\end{remark}

\begin{proof}
    Using the definition of the flow tableau and the contravariance of $C^\ast$ yields
    \begin{equation}\begin{split}
        F(C'C)&=\left(\begin{array}{c}
             \varrho((C'C)^\ast g_{0})  \\
             \vdots\\
             \varrho((C'C)^\ast g_{2\npq})
        \end{array}\right)=\left(\begin{array}{c}
             \varrho(C^\ast({C'}^\ast g_{0}))  \\
             \vdots\\
             \varrho(C^\ast({C'}^\ast g_{2\npq}))
        \end{array}\right).
    \end{split}\end{equation}
    By Prop.~\ref{prop:representation_pullback_flow} with $P={C'}^\ast g_{i}$ for the flow tableau $F(C)$ and the definition of $F(C')$ we get
    \begin{equation}\begin{split}
        F(C'C)&=\left(\begin{array}{c}
             \varrho({C'}^\ast g_{0})\circledast F(C)  \\
             \vdots\\
             \varrho({C'}^\ast g_{2\npq})\circledast F(C)
        \end{array}\right)
        =\left(\begin{array}{c}
             \bm{f'_{0}}\circledast F(C)  \\
             \vdots\\
             \bm{f'_{2\npq}}\circledast F(C)
        \end{array}\right)
    \end{split}\end{equation}
    which by definition is $F(C')\circledast F(C)$. Since the flow tableaus are proper with $0^{\text{th}}$-row $\bm{e_0}$ the claim follows from Cor.~\ref{cor:matrixproduct_nice_zeroth_row}.
\end{proof}

Analogously we get

\begin{proposition}[see Thm.~2 in \cite{dehaenedemoor2003}, Sec.~2.3.2 in \cite{Gidney2021stimfaststabilizer}, Thm.~38 in \cite{GossetGrierKerznerSchaeffer2024}]\label{prop:concatenation_clifford}
        Let $C, C'\in\Cl$ be two Clifford unitaries and $T(C)$ and $T(C')$ their Clifford tableaus with rows $\bm{t_{i}}$ and $\bm{t'_{i}}$ respectively. Then the Clifford tableau of the concatenation $C'C$ is $T(C'C)=T(C)\circledast T(C')$, i.e.,
        \begin{equation}\begin{split}
        T(C'C)&= \left(\begin{array}{c}
             \bm{e_0}\\
             2s_{1} \circledast \bm{t_{1}}\cdot T(C') \\
             \vdots\\
             2s_{2\npq} \circledast \bm{t_{2\npq}}\cdot T(C').
        \end{array}\right)\\
        &=\left(\begin{array}{c}
             0  \\
             2s_{1}\\
             \vdots\\
             2s_{2\npq}              
        \end{array}\right)\circledast T(C)\cdot T(C')
    \end{split}\end{equation}
    where the sign corrections $s_{i}$ are those of the product $\bm{t_{i}}\circledast T(C')$.
\end{proposition}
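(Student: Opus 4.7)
The plan is to mirror the proof of Prop.~\ref{prop:concatenation_flow}, but using the covariance of the pushforward $(C'C)_\ast=C'_\ast\circ C_\ast$ (Eq.~\eqref{eq:pushforward_covariant}) in place of the contravariance of the pullback. The key subtlety is that while the pullback yields $F(C'C)=F(C')\circledast F(C)$ in the same order as the operator composition, the pushforward produces the \emph{reversed} order $T(C'C)=T(C)\circledast T(C')$: a ``vector'' multiplied on the right by a tableau is read left-to-right, so the factor acting first on the operator must appear first (leftmost) in the tableau product.

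Row-by-row I would unravel $T(C'C)$: its $i^\text{th}$ row is $\varrho((C'C)_\ast g_i)$ with $g_0=i\idmatrix$, $g_{\ipq}=\logicalpauliop{}{\ipq}{}$, $g_{\ipqhat}=\logicalpauliop{}{}{\ipq}$. By covariance this equals $\varrho(C'_\ast(C_\ast g_i))$. Now I apply Prop.~\ref{prop:representation_pushforward_Clifford} to the Pauli operator $C_\ast g_i\in\bar\Pauli$ with the tableau $T(C')$, obtaining $\varrho(C_\ast g_i)\circledast T(C')=\bm{t_{i}}\circledast T(C')$, since $\varrho(C_\ast g_i)=\bm{t_{i}}$ by definition of $T(C)$. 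Stacking these identities for $i=0,\dots,2\npq$ gives $T(C'C)=T(C)\circledast T(C')$ from the definition of the ``matrix-$\circledast$-matrix'' product.

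The explicit row-wise form stated in the proposition then follows immediately from Cor.~\ref{cor:matrixproduct_nice_zeroth_row} applied with $M'=T(C)$ and $M=T(C')$: both are proper (Prop.~\ref{prop:proper_tableau}), so their $0^\text{th}$ rows equal $\bm{e_0}\in Z(G)$ and the corollary delivers the displayed expression, with each sign correction $s_i$ arising from the individual row product $\bm{t_{i}}\circledast T(C')$. I do not anticipate any genuine obstacle; the group-isomorphism property of $\varrho$ (Prop.~\ref{prop:group_isomorphism}), associativity of $\circledast$ (Prop.~\ref{prop:groupG}), and Cor.~\ref{cor:matrixproduct_nice_zeroth_row} do all the work. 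The only things to watch are the reversed ordering of the tableau product relative to operator composition, and the bookkeeping of the $s_i$'s, which is precisely what that corollary was set up to handle.
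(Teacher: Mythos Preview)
Your proposal is correct and follows exactly the approach the paper takes: its proof simply reads ``Analogous to the proof of Prop.~\ref{prop:concatenation_flow}, using the covariance of the pushforward, i.e., $(C'C)_\ast=C'_\ast C_\ast$,'' and your write-up spells out precisely that analogy, including the invocation of Cor.~\ref{cor:matrixproduct_nice_zeroth_row} for the explicit row-wise form.
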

\begin{proof}
    Analogous to the proof of Prop.~\ref{prop:concatenation_flow}, using the covariance of the pushforward, i.e., $(C'C)_\ast=C'_\ast C_\ast$.
\end{proof}

\begin{corollary}[see Lemma~5 in \cite{AaronsonGottesman2004}, Sec.~II in \cite{dehaenedemoor2003}]
    The group homomorphisms $T:\Cl\rightarrow G^{2n+1}$ and $F:\Cl\rightarrow G^{2n+1}$ are group isomorphisms onto their image, which is in both cases just the subgroup of proper tableaus.
\end{corollary}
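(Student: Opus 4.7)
My plan is to combine the two concatenation propositions (Props.~\ref{prop:concatenation_flow} and \ref{prop:concatenation_clifford}) with the characterization of proper tableaus in Prop.~\ref{prop:proper_tableau}. The homomorphism property is immediate from the concatenation formulas: Prop.~\ref{prop:concatenation_flow} gives $F(C'C)=F(C')\circledast F(C)$, so $F$ is a homomorphism when $G^{2\npq+1}$ is equipped with $\circledast$, while Prop.~\ref{prop:concatenation_clifford} gives $T(C'C)=T(C)\circledast T(C')$, so $T$ becomes a homomorphism once we endow $G^{2\npq+1}$ with the opposite multiplication (equivalently, $T$ is an anti-homomorphism, consistent with the co/contravariance discussion of Prop.~\ref{prop:conjugation-functors}).

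Surjectivity onto the subgroup of proper tableaus is essentially already contained in Prop.~\ref{prop:proper_tableau}, which identifies the image of both $T$ and $F$ with exactly this set: every proper matrix is realized as a Clifford (equivalently flow) tableau, and conversely every tableau produced by $T$ or $F$ is proper. That the proper tableaus form a subgroup of $G^{2\npq+1}$ under $\circledast$ then follows either directly from the homomorphism property and the fact that $\idmatrix\in\Cl$ produces the identity tableau $(\bm{e_0},\varrho(g_1),\dots,\varrho(g_{2\npq}))$, or from Cor.~\ref{cor:tableau_inverse_clifford} for closure under inverses.

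For injectivity I would argue as follows. If $F(C)=F(C')$ then $C^{\ast}g_\ipq=(C')^{\ast}g_\ipq$ for all generators $g_\ipq\in\{i\idmatrix,\logicalpauliop{}{\ipq}{},\logicalpauliop{}{}{\ipq}\}$ of $\bar\Pauli$; since the pullback is a group homomorphism in its Pauli argument (Prop.~\ref{prop:conjugation-functors}), this forces $C^{\ast}P=(C')^{\ast}P$ for every $P\in\Pauli$. Rewriting this as $[C(C')^{\dagger},P]=0$ for all $P$ and using that the Pauli group acts irreducibly on $(\mathbb{C}^{2})^{\otimes \npq}$ (its matrices span the full algebra $M_{2^\npq}(\mathbb{C})$), Schur's lemma yields $C=e^{i\theta}C'$. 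The same argument applies to $T$. The main delicate point is therefore the treatment of this global phase: without quotienting $\Cl$ by the central scalars $U(1)\cdot\idmatrix$ the maps carry a $U(1)$-kernel, so the isomorphism statement has to be read modulo global phase, which is the standard convention in the stabilizer formalism since such phases act trivially by conjugation and on quantum states.
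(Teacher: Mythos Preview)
Your proposal is correct and follows precisely the route the paper intends: the corollary is stated without proof in the paper, as an immediate consequence of Props.~\ref{prop:concatenation_flow}, \ref{prop:concatenation_clifford} and \ref{prop:proper_tableau}, and you have filled in exactly those details. Your observation that $T$ is strictly an anti-homomorphism (so the statement should be read with the opposite product on the target, or equivalently the co-/contravariance of Prop.~\ref{prop:conjugation-functors}) and your remark that injectivity only holds modulo the central $U(1)$ phases are both valid refinements that the paper leaves implicit.
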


The concatenation $C'C$ of two Clifford circuits can be seen in the following commutative diagrams for the flow tableau
\begin{equation}\label{eq:cd_composition_pull_back}
\begin{tikzcd}[column sep = 1cm, row sep = 1cm, every cell/.style={inner xsep=1ex, inner ysep=0.85ex}] 
\Pauli^{(0)} \arrow[d, "\varrho"] \arrow[dr, phantom, "\circlearrowleft", shift left=1ex] 
& |[alias=TM]| \Pauli^{(1)} \arrow[d, "\varrho"] \arrow[l, "C^\ast"] \arrow[dr, phantom, "\circlearrowleft", shift left=1ex] 
& \Pauli^{(2)} \arrow[d, "\varrho"] \arrow[l, "{C'}^\ast"] \arrow[ll, "(C'C)^\ast"', bend right, shift right,""{name=TT}]
\\
G 
& |[alias=BM]|G \arrow[l, "\circledast F(C)"'] 
& G \arrow[l, "\circledast F(C')"'] \arrow[ll, "\circledast F(C'C)=\circledast (F(C')\circledast F(C))", bend left, shift left,""{name=BB}]
\arrow[phantom, from=TM, to=TT, "\circlearrowleft"]
\arrow[phantom,from=BB,to=BM, "\circlearrowleft"']
\end{tikzcd}
\end{equation}
and for the Clifford tableau
\begin{equation}\label{eq:cd_composition_push_forward}
\begin{tikzcd}[column sep = 1cm, row sep = 1cm, every cell/.style={inner xsep=1ex, inner ysep=0.85ex}] 
\Pauli^{(0)} \arrow[d, "\varrho"] \arrow[r, "C_\ast"'] \arrow[rr, "(C'C)_\ast", bend left, shift left,""{name=TT}]   \arrow[dr, phantom, "\circlearrowleft", shift left=1ex]
& |[alias=TM]| \Pauli^{(1)} \arrow[d, "\varrho"] \arrow[r, "{C'}_\ast"']\arrow[dr, phantom, "\circlearrowleft", shift left=1ex] 
& \Pauli^{(2)} \arrow[d, "\varrho"] 
\\
G \arrow[r, "\circledast T(C)"] \arrow[rr, "\circledast T(C'C)=\circledast (T(C)\circledast T(C'))"', bend right, shift right,""{name=BB}] 
& |[alias=BM]| G \arrow[r, "\circledast T(C')"]
& G      
\arrow[phantom, from=TM, to=TT, "\circlearrowleft"]
\arrow[phantom,from=BB,to=BM, "\circlearrowleft"']
\end{tikzcd}
\end{equation}

\begin{remark}
    Note again the two different types of concatenation. For the contravariant pullback $(C'C)^\ast=C^\ast {C'}^\ast$ a ``vector'' $\bm\varrho\in G$ corresponding to a physical Pauli operator $P$ in $\Pauli=\Pauli^{(2)}$ gets pulled back through $C'C$ either by $\circledast$-multiplying $F(C'C)$ from the right, i.e., $\bm\varrho\circledast F(C'C)=\bm\varrho\circledast (F(C')\circledast F(C))$ or, alternatively, by firstly $\circledast$-multiplying with $F(C')$ from the right and then by $F(C)$ from the right, i.e., $(\bm\varrho\circledast F(C'))\circledast F(C)$
    
    In contrast, for the covariant pushforward $(C'C)_\ast=C'_\ast C_\ast$ a ``vector'' $\bm\varrho\in G$ corresponding to a logical Pauli operator $\logicalp$ in $\bar\Pauli=\Pauli^{(0)}$ gets pushed forward through $C'C$ either by $\circledast$-multiplying $T(C'C)$ from the right, i.e., $\bm\varrho\circledast T(C'C)=\bm\varrho\circledast(T(C)\circledast T(C'))$ or, alternatively, by firstly $\circledast$-multiplying with $T(C)$ from the right and then by $T(C')$ from the right, i.e., $(\bm\varrho\circledast T(C))\circledast T(C')$. 

    Therefore, we see again that the pushforward is naturally iterative, firstly the Clifford tableau $T(C)$ of the first circuit $C$ gets applied to $\bm\varrho$ and secondly the Clifford tableau $T(C')$ of the second circuit $C'$ gets applied to the former result. On the other hand, the pullback is not naturally iterative, since we firstly have to apply the second flow tableau $F(C')$ and only afterwards may apply the first flow tableau $F(C)$.

    The above view also shows that the apparent change of ordering in $T(C'C)=T(C)\circledast T(C')$ is just an artificial one since we have to look from the point of view of $\bm\varrho$, where $\bm\varrho\circledast T(C)\circledast T(C')$ shows that the first Clifford tableau $T(C)$ is the first ``matrix'' to be applied. Vice versa, the apparent non-change of ordering in $F(C'C)=F(C')\circledast F(C)$ is also just artificial, since in $\bm\varrho\circledast F(C')\circledast F(C)$ the second flow tableau $F(C')$ is the first ``matrix'' to be applied. 
    
    We could have avoided this artificial ordering problem by choosing to present Pauli operators $P$ by the column ``vector'' $\varrho(P)^T$, working with the transposed tableaus $T(C)^T$ and $F(C)^T$, where we put the presentations of the conjugated generators as columns of the tableau and using multiplication of column ``vectors'' by ``matrices'' from the left. We nevertheless decided to write row ``vectors'' and multiplication by ``matrices'' from the right since this makes the translation between the flow labels written directly onto the wires of a circuit and the flow tableau more intuitive. After all, the flow labels on a horizontal wire are row-like. 
\end{remark}

\begin{corollary}
    In the situations of Prop.~\ref{prop:concatenation_flow} and~\ref{prop:concatenation_clifford} for every $\bm\varrho\in G$ the following holds:
    \begin{equation}\begin{split}
        \bm\varrho\circledast(F(C')\circledast F(C))&=(\bm\varrho\circledast F(C'))\circledast F(C)\\
        \bm\varrho\circledast(T(C)\circledast T(C'))&=(\bm\varrho\circledast T(C))\circledast T(C').
    \end{split}\end{equation}
\end{corollary}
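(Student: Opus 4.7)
My plan is to lift both identities back to the Pauli group via the bijection $\varrho:\bar\Pauli\to G$ of Prop.~\ref{prop:group_isomorphism} and then invoke the (anti)functoriality of conjugation. Given any $\bm\varrho\in G$, let $P\in\Pauli$ denote the unique Pauli operator with $\varrho(P)=\bm\varrho$.

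For the first identity, I would first rewrite the left-hand side using Prop.~\ref{prop:concatenation_flow}, which identifies $F(C')\circledast F(C)$ with the flow tableau $F(C'C)$ of the composite Clifford, and then apply the representation Prop.~\ref{prop:representation_pullback_flow} with the Clifford unitary $C'C$ to obtain
\[
\bm\varrho\circledast\bigl(F(C')\circledast F(C)\bigr)=\bm\varrho\circledast F(C'C)=\varrho\bigl((C'C)^\ast P\bigr).
\]
For the right-hand side, I would apply Prop.~\ref{prop:representation_pullback_flow} twice in succession: first to $C'$, giving $\bm\varrho\circledast F(C')=\varrho({C'}^\ast P)$, and then to $C$ with this new vector as the ``base'' (which is legitimate since ${C'}^\ast P\in\bar\Pauli$ and $\varrho$ is a bijection), yielding
\[
\bigl(\bm\varrho\circledast F(C')\bigr)\circledast F(C)=\varrho({C'}^\ast P)\circledast F(C)=\varrho\bigl(C^\ast({C'}^\ast P)\bigr).
\]
The contravariance of the pullback (Eq.~\eqref{eq:pullback_contravariant}) then reads $(C'C)^\ast P=C^\ast({C'}^\ast P)$, so both expressions are the $\varrho$-image of the same Pauli operator and therefore coincide.

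The second identity follows by the exact same template, with Prop.~\ref{prop:concatenation_flow} and Prop.~\ref{prop:representation_pullback_flow} replaced by their Clifford-tableau counterparts Prop.~\ref{prop:concatenation_clifford} and Prop.~\ref{prop:representation_pushforward_Clifford}, and with the covariance $(C'C)_\ast=C'_\ast C_\ast$ of Eq.~\eqref{eq:pushforward_covariant} in place of the contravariance of the pullback. No genuine obstacle arises: the corollary essentially asserts that the commutative diagrams~\eqref{eq:cd_composition_pull_back} and~\eqref{eq:cd_composition_push_forward} close up, and all the actual work — in particular the delicate bookkeeping of sign corrections hidden in the definition of $\circledast$ (cf.\ Rmk.~\ref{rmk:vector_matrix_non_scalarproduct}) — has already been absorbed into the representation and concatenation propositions. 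The only point needing brief mention is that the intermediate vector $\bm\varrho\circledast F(C')$ is again of the form $\varrho(Q)$ for a Pauli operator $Q$, so that a second application of Prop.~\ref{prop:representation_pullback_flow} is justified; this is immediate from the bijectivity of $\varrho$.
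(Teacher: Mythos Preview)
Your proof is correct and is exactly the argument the paper has in mind: the corollary is stated without proof precisely because it follows immediately from Props.~\ref{prop:concatenation_flow}, \ref{prop:concatenation_clifford}, \ref{prop:representation_pullback_flow}, and~\ref{prop:representation_pushforward_Clifford} together with the (co/contra)variance of conjugation, which is just how you derive it.
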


\begin{remark}
    Note that in general $\circledast$ is not associative in this way, i.e., if $\bm{\varrho}\in G$ and $M, M'\in G^{2\npq +1}$, in general
    \begin{equation}
        \bm{\varrho}\circledast (M'\circledast M)\not=(\bm{\varrho}\circledast M')\circledast M.
    \end{equation}
    This follows from the fact that the multiplication $a\bm{\varrho}$ from Eq.~\eqref{eq:scalarmultiplication} is not a scalar multiplication. For example:
    \begin{equation}\begin{split}
    &(\bm{\varrho}\circledast M') \circledast M \\
       =&\left((0\vert1\vert1)\circledast\left(\begin{array}{c|c|c}
             1&0&0\\\hline
             0&1&1\\\hline
             0&1&1 
        \end{array}\right)\right)\circledast\left(\begin{array}{c|c|c}
             1&0&0\\\hline
             1&1&0\\\hline
             0&0&1 
        \end{array}\right)\\
    =&(2\vert0\vert0)\circledast\left(\begin{array}{c|c|c}
             1&0&0\\\hline
             1&1&0\\\hline
             0&0&1 
        \end{array}\right)\\
    =&(2\vert0\vert0),
    \end{split}\end{equation}
    while
    \begin{equation}\begin{split}
    &\bm{\varrho}\circledast (M' \circledast M)\\
       =&(0\vert1\vert1)\circledast\left(\left(\begin{array}{c|c|c}
             1&0&0\\\hline
             0&1&1\\\hline
             0&1&1 
        \end{array}\right)\circledast\left(\begin{array}{c|c|c}
             1&0&0\\\hline
             1&1&0\\\hline
             0&0&1 
        \end{array}\right)\right)\\
    =&(0\vert1\vert1)\circledast\left(\begin{array}{c|c|c}
             1&0&0\\\hline
             1&1&1\\\hline
             1&1&1 
        \end{array}\right)\\
    =&(1\oplus_41\oplus_42\vert0\vert0)=(0\vert0\vert0).
    \end{split}\end{equation}
    Here, $M'$ is not proper because its second and third row coincide, while $M$ is not proper because its second row corresponds to the order $4$ Pauli operator $iX$.
\end{remark}

\begin{remark}
    The calculation of $F(C')\circledast F(C)$ has a run time of $\mathcal{O}(\npq^\omega)$, since this is the run time of the standard matrix product $F(C')F(C)$ as well as the run time of the sign correction in Eq.~\eqref{eq:signcorrection_concatenation}. See \cite{GossetGrierKerznerSchaeffer2024} for more details.
\end{remark}

We have already seen that the flow and the Clifford tableau are inverse to each other in the sense that $F(C)=T(C^\dagger)$ and $T(C)=F(C^\dagger)$. Using the symplectic property of the flow tableau $F(C)$, which gives an easy formula for the inverse of the phaseless block $\tilde F(C)$ and the formula for the phases of Prop.~\ref{prop:concatenation_flow} we get
\begin{proposition}\label{prop:inverse_flow_tableau}
    Given a Clifford unitary $C\in\Cl$ and its flow tableau 
    \[
    F=F(C)=\left(\begin{array}{c|c}
             1&\bm{0}\\\hline
             &\\[-11pt]
             \bm{\ep}&\tilde F
        \end{array}\right)=\left(\begin{array}{c|c|c}
             1&\multicolumn{2}{c}{\bm{0}}\\\hline
             \multirow{2}{7pt}{$\bm{\ep}$}&\textcolor{xlog}{A}&\textcolor{zlog}{B}  \\\cline{2-3}
             &\textcolor{xlog}{C}&\textcolor{zlog}{D} 
        \end{array}\right)
    \]
    in block form. The inverse flow tableau $F(C^\dagger)$, which is the Clifford tableau $T(C)$, is given by
    \begin{equation}
        F'=F(C^\dagger)= \left(\begin{array}{c|c|c}
             1&\multicolumn{2}{c}{\bm{0}}\\\hline
             \multirow{2}{7pt}{$\bm{\ep}'$}&&\\[-11pt]
             &\textcolor{zlog}{D}^T&\textcolor{zlog}{B}^T  \\\cline{2-3}
             &&\\[-11pt]
             &\textcolor{xlog}{C}^T&\textcolor{xlog}{A}^T 
        \end{array}\right)=\left(\begin{array}{c|c}
             1&\bm{0}\\\hline
             &\\[-11pt]
             \bm{\ep'}&\tilde F'
        \end{array}\right),
    \end{equation}
    where the phases $\bm{\ep}'$ of the inverse tableau are
    \begin{equation}
        \bm{\ep'}=-\tilde F'\bm{\ep}\oplus_42\bm{s}
    \end{equation}
    with sign correction
    \begin{equation}
        \bm{s}=\mathrm{diag^T}\left(\tilde F'\cdot\mathrm{strup}\left(\begin{array}{c|c}
             \textcolor{zlog}{B}\textcolor{xlog}{A}^T&\textcolor{zlog}{B}\textcolor{xlog}{C}^T  \\\hline
             &\\[-11pt]
             \textcolor{zlog}{D}\textcolor{xlog}{A}^T&\textcolor{zlog}{D}\textcolor{xlog}{C}^T
        \end{array}\right)\cdot\tilde {F'}^T\right),
    \end{equation}
    where $\mathrm{diag^T}$ is the column vector of the diagonal entries of the matrix product.
\end{proposition}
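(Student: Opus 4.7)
The plan is to exploit the composition formula of Proposition~\ref{prop:concatenation_flow} applied to $C' = C^\dagger$, together with the fact that $F(C^\dagger C) = F(\idmatrix)$ is the identity tableau with $0^{\text{th}}$ row $\bm{e_0}$ and all remaining rows equal to the standard generator images with zero phases. Writing $F = F(C)$ and $F' = F(C^\dagger)$, Proposition~\ref{prop:concatenation_flow} gives
\begin{equation*}
F(C^\dagger C) \;=\; \left(\begin{array}{c} 0 \\ 2s'_{1} \\ \vdots \\ 2s'_{2\npq} \end{array}\right)\circledast F'\cdot F,
\end{equation*}
and setting both sides equal to $F(\idmatrix)$ yields simultaneously one equation for the phaseless blocks and one for the phase column.

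First I would handle the phaseless part. Comparing phaseless blocks gives $\tilde F' \cdot \tilde F = \idmatrix$, so $\tilde F' = \tilde F^{-1}$. Since $\tilde F$ is symplectic by Proposition~\ref{prop:proper_tableau}, Proposition~\ref{prop:symplectic_matrices} produces the block formula with $A$ and $D$ swapped to the diagonal and every block transposed, exactly as claimed.

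Second I would handle the phase column. For the standard matrix product $F'\cdot F$, Equation~\eqref{eq:standard_matrix_vector_product} tells us that the phase in row $i \geq 1$ is $\ep'_i + (\tilde F' \bm{\ep})_i$, where $\bm{\ep}$ is the phase column of $F$. Requiring that the total phase of $F(C^\dagger C)$ vanish in each row $i \geq 1$ modulo $4$ gives
\begin{equation*}
0 \;\equiv\; 2 s'_i \;+\; \ep'_i \;+\; (\tilde F' \bm{\ep})_i \pmod{4},
\end{equation*}
so $\ep'_i = -(\tilde F' \bm{\ep})_i \oplus_4 2 s'_i$ (using $-2\equiv 2 \pmod 4$), which assembled over $i$ is exactly $\bm{\ep}' = -\tilde F' \bm{\ep} \oplus_4 2\bm{s}$.

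Finally I would verify the block expression for $\bm{s}$. Each $s'_i$ is the sign correction of $\bm{f'_i} \circledast F$, which by Equation~\eqref{eq:sign_correction_strup} equals the quadratic form $\tilde{\bm{f'_i}}\cdot \mathrm{strup}(\cdots)\cdot \tilde{\bm{f'_i}}^{T}$ with the middle matrix assembled from the $\bmlogicalzeta_j\bmlogicalxi_k^T$ of the rows of $F$. By Remark~\ref{rmk:strup_block_form}, writing $\tilde F$ in the claimed $A,B,C,D$-block form expresses this middle matrix as $\mathrm{strup}\bigl(\begin{smallmatrix} BA^T & BC^T \\ DA^T & DC^T \end{smallmatrix}\bigr)$. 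Stacking the $s'_i$ over $i$ then gives the column vector of diagonal entries of $\tilde F' \cdot \mathrm{strup}(\cdots) \cdot \tilde{F'}^{T}$, yielding the $\mathrm{diag^T}$ formula in the proposition. The main bookkeeping obstacle is keeping the mixed $\mathbb{Z}_4/\mathbb{F}_2$ arithmetic straight across the phase and phaseless blocks; once one uses that $-2\equiv 2 \pmod 4$ and the ``nearly scalar'' multiplication of $\mathbb{F}_2$ entries against $\mathbb{Z}_4$ phases (Remark~\ref{rmk:vector_matrix_non_scalarproduct}), the algebra reduces to a direct comparison.
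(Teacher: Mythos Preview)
Your proposal is correct and follows essentially the same route as the paper: apply Proposition~\ref{prop:concatenation_flow} to $C' = C^\dagger$ so that $F(C^\dagger)\circledast F(C) = F(\idmatrix)$, read off $\tilde F' = \tilde F^{-1}$ and invoke the symplectic inverse formula for the phaseless block, then solve the $\mathbb{Z}_4$ phase equations using $-2\equiv 2\pmod 4$, with the sign corrections identified via Remark~\ref{rmk:strup_block_form}. The only cosmetic difference is that the paper routes the phase computation through Corollary~\ref{cor:matrixproduct_nice_zeroth_row} rather than citing Equation~\eqref{eq:standard_matrix_vector_product} directly, and also records at the outset that $F(C^\dagger)=T(C)$ via Corollary~\ref{cor:tableau_inverse_clifford}.
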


\begin{proof}
    That the inverse flow tableau $F(C^\dagger)$ is the Clifford tableau $T(C)$ follows from Cor.~\ref{cor:tableau_inverse_clifford}.
    
    From Prop.~\ref{prop:concatenation_flow} we know that $F(C^\dagger)\circledast F(C)=F(C^\dagger C)=F(\idmatrix)=\idmatrix\in G^{2\npq+1}$. Cor.~\ref{cor:matrixproduct_nice_zeroth_row} then gives $\left(\begin{array}{c}
         0  \\\hline
          2\bm{s}
    \end{array}\right)\circledast F'\cdot F=\idmatrix$. For the phaseless parts it follows that $\tilde F'\cdot \tilde F=\idmatrix\in\mathbb{F}_2^{2\npq\times2\npq}$ and since $\tilde F$ is symplectic by Prop.~\ref{prop:proper_tableau} we get $\tilde F'=\tilde F^{-1}=\left(\begin{array}{c|c}
         \textcolor{zlog}{D}^T&\textcolor{zlog}{B}^T  \\\hline
         &\\[-11pt]
         \textcolor{xlog}{C}^T& \textcolor{xlog}{A}^T
    \end{array}\right)$ from Prop.~\ref{prop:symplectic_matrices}.

    The sign corrections $s_i$, $i=1,\dotsc,2\npq$, come from $\bm{f'_i}\circledast F$, where $\bm{f'_i}$ is the $i^\text{th}$ row of $F'$, i.e., by Rmk.~\ref{rmk:strup_block_form} we get
    \begin{equation}\begin{split}
        s_i&=\bm{f'_i}\cdot\mathrm{strup}\left(\begin{array}{c|c}
             \textcolor{zlog}{B}\textcolor{xlog}{A}^T&\textcolor{zlog}{B}\textcolor{xlog}{C}^T  \\\hline
             &\\[-11pt]
             \textcolor{zlog}{D}\textcolor{xlog}{A}^T&\textcolor{zlog}{D}\textcolor{xlog}{C}^T
        \end{array}\right){\bm{f'_i}}^T,
    \end{split}\end{equation}
    which are exactly the diagonal entries of 
    \begin{equation}
        \tilde F'\cdot\mathrm{strup}\left(\begin{array}{c|c}
             \textcolor{zlog}{B}\textcolor{xlog}{A}^T&\textcolor{zlog}{B}\textcolor{xlog}{C}^T  \\\hline
             &\\[-11pt]
             \textcolor{zlog}{D}\textcolor{xlog}{A}^T&\textcolor{zlog}{D}\textcolor{xlog}{C}^T
        \end{array}\right)\cdot\tilde {F'}^T.
    \end{equation}
    From the $0^\text{th}$ column of $\left(\begin{array}{c}
         0  \\\hline
          2\bm{s}
    \end{array}\right)\circledast F'\cdot F=\idmatrix$ we finally get the following linear equation system over $\mathbb{Z}_4$
    \begin{equation}\begin{split}
        \left(\begin{array}{c}
         0  \\\hline
          2\bm{s}
    \end{array}\right)\circledast F'\cdot \left(\begin{array}{c}
         1  \\\hline
          \bm{\ep}
    \end{array}\right)&=\left(\begin{array}{c}
             1  \\\hline
             \bm{0} 
        \end{array}\right)\\
    \left(\begin{array}{c}
         0 \oplus_4 1 \\\hline
         \\[-11pt]
          2\bm{s} \oplus_4 \bm{\ep'}\oplus_4 \tilde F'\cdot \bm{\ep}\\
    \end{array}\right)&=\left(\begin{array}{c}
             1  \\\hline
             \bm{0} 
        \end{array}\right)\\
     \bm{\ep'}&=-(2\bm{s} \oplus_4 \tilde F'\cdot \bm{\ep})\\
     \bm{\ep'}&=-\tilde F'\cdot \bm{\ep}\oplus_4 2\bm{s},
    \end{split}\end{equation}
    since $\mathbb{Z}_4$ is a commutative ring with $-2s_i=2s_i\in\mathbb{Z}_4$ for $s_i\in\mathbb{F}_2$.
\end{proof}

\begin{remark}
    This shows that the inverting of a tableau has run time $\mathcal{O}(\max(n^\omega, n^2))=\mathcal{O}(n^\omega)$, coming from the calculation of the phases. The inverse of the phaseless part is just transposing and swapping blocks, which is $\mathcal{O}(n^2)$, if there is even need to realize the transposed and swapped matrix at all.
\end{remark}

Analogously we get
\begin{corollary}[see Thm.~3 in \cite{dehaenedemoor2003}, Sec.~2.3.3 in \cite{Gidney2021stimfaststabilizer}]\label{cor:inverse_clifford_tableau}
    Given a Clifford unitary $C\in\Cl$ and its Clifford tableau 
    \[
    T=T(C)=\left(\begin{array}{c|c}
             1&\bm{0}\\\hline
             &\\[-11pt]
             \bm{\ep}&\tilde T
        \end{array}\right)=\left(\begin{array}{c|c|c}
             1&\multicolumn{2}{c}{\bm{0}}\\\hline
             \multirow{2}{7pt}{$\bm{\ep}$}&\textcolor{xphys}{A}&\textcolor{zphys}{B}  \\\cline{2-3}
             &\textcolor{xphys}{C}&\textcolor{zphys}{D}
        \end{array}\right)
    \]
    in block form. The inverse Clifford tableau $T(C^\dagger)$, which is the flow tableau $F(C)$, is given by
    \begin{equation}
        T'=T(C^\dagger)= \left(\begin{array}{c|c|c}
             1&\multicolumn{2}{c}{\bm{0}}\\\hline
             \multirow{2}{7pt}{$\bm{\ep}'$}&&\\[-11pt]
             &\textcolor{zphys}{D}^T&\textcolor{zphys}{B}^T  \\\cline{2-3}
             &&\\[-11pt]
             &\textcolor{xphys}{C}^T&\textcolor{xphys}{A}^T 
        \end{array}\right)=\left(\begin{array}{c|c}
             1&\bm{0}\\\hline
             &\\[-11pt]
             \bm{\ep'}&\tilde T'
        \end{array}\right),
    \end{equation}
    where the phases $\bm{\ep}'$ of the inverse tableau are
    \begin{equation}
        \bm{\ep'}=-\tilde T'\bm{\ep}\oplus_42\bm{s}
    \end{equation}
    with sign correction
    \begin{equation}
        \bm{s}=\mathrm{diag^T}\left(\tilde T'\cdot\mathrm{strup}\left(\begin{array}{c|c}
             \textcolor{zphys}{B}\textcolor{xphys}{A}^T&\textcolor{zphys}{B}\textcolor{xphys}{C}^T  \\\hline
             &\\[-11pt]
             \textcolor{zphys}{D}\textcolor{xphys}{A}^T&\textcolor{zphys}{D}\textcolor{xphys}{C}^T 
        \end{array}\right)\cdot\tilde{T'}^T\right),
    \end{equation}
    where $\mathrm{diag^T}$ is the column vector of the diagonal entries of the matrix product.
\end{corollary}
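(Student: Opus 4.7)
The plan is to mirror the proof of Proposition~\ref{prop:inverse_flow_tableau} with the roles of flow and Clifford tableaus interchanged, using Corollary~\ref{cor:tableau_inverse_clifford} to identify $T(C^\dagger)=F(C)$ as the starting point, and then invoking the composition law for Clifford tableaus in place of the one for flow tableaus.

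First, I would apply Proposition~\ref{prop:concatenation_clifford} to the product $CC^\dagger = \idmatrix$. Setting $C'=C$ and replacing the second factor by $C^\dagger$, the formula $T(C'C)=T(C)\circledast T(C')$ becomes $T(CC^\dagger)=T(C^\dagger)\circledast T(C)=T'\circledast T$, which equals the identity tableau $\idmatrix_G \in G^{2\npq+1}$. Then Corollary~\ref{cor:matrixproduct_nice_zeroth_row} rewrites this as
\begin{equation}
\left(\begin{array}{c} 0 \\\hline 2\bm{s}\end{array}\right)\circledast T'\cdot T \;=\; \idmatrix_G,
\end{equation}
where the sign corrections $s_i$ for $i=1,\dots,2\npq$ arise from each row product $\bm{t'_i}\circledast T$ (hence involve the blocks of $T$, not of $T'$, which matches the statement).

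Next, extracting the phaseless part gives $\tilde T'\cdot\tilde T=\idmatrix$, so $\tilde T'=\tilde T^{-1}$. Since $\tilde T$ is symplectic by Proposition~\ref{prop:proper_tableau}, Proposition~\ref{prop:symplectic_matrices} furnishes the explicit block-transpose-and-swap form $\tilde T'=\bigl(\begin{smallmatrix}\textcolor{zphys}{D}^T & \textcolor{zphys}{B}^T\\ \textcolor{xphys}{C}^T & \textcolor{xphys}{A}^T\end{smallmatrix}\bigr)$ claimed for $\tilde T'$. For the sign corrections, I would invoke Remark~\ref{rmk:strup_block_form} with $\tilde F$ replaced by $\tilde T$: each $s_i = \bm{t'_i}\cdot\mathrm{strup}(\cdots)\cdot{\bm{t'_i}}^T$, where the inner matrix is the strictly upper triangular part of $\tilde T\bigl(\begin{smallmatrix}0&0\\ \idmatrix&0\end{smallmatrix}\bigr)\tilde T^T$, which expands to the claimed block form with entries $\textcolor{zphys}{B}\textcolor{xphys}{A}^T,\;\textcolor{zphys}{B}\textcolor{xphys}{C}^T,\;\textcolor{zphys}{D}\textcolor{xphys}{A}^T,\;\textcolor{zphys}{D}\textcolor{xphys}{C}^T$. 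Collecting the $s_i$ as the diagonal entries of a single matrix product yields the stated formula $\bm{s}=\mathrm{diag}^T(\tilde T'\cdot\mathrm{strup}(\cdots)\cdot\tilde{T'}^T)$.

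Finally, reading off the $0^\text{th}$ column of $T'\circledast T=\idmatrix_G$ gives the $\mathbb{Z}_4$-linear equation $2\bm{s}\oplus_4\bm{\ep'}\oplus_4\tilde T'\cdot\bm{\ep}=\bm{0}$, from which $\bm{\ep'}=-\tilde T'\bm{\ep}\oplus_4 2\bm{s}$ follows (using $-2s\equiv 2s\pmod 4$ for $s\in\mathbb{F}_2$). There is no real obstacle in this proof: the entire argument is a transcription of Proposition~\ref{prop:inverse_flow_tableau} with $F\leftrightarrow T$ swapped. The one point requiring a little care is the choice of composition order --- using $T'\circledast T$ rather than $T\circledast T'$ --- so that the sign correction ``matrix'' is expressed through the blocks of the given tableau $T$, matching the statement.
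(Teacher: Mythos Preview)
Your proposal is correct and follows exactly the approach the paper takes: the paper's proof is simply ``Analogously we get,'' meaning precisely the transcription of the proof of Proposition~\ref{prop:inverse_flow_tableau} with $F\leftrightarrow T$ that you describe. Your observation about choosing the composition order $T'\circledast T$ (from $T(CC^\dagger)=T(C^\dagger)\circledast T(C)$ via Proposition~\ref{prop:concatenation_clifford}) so that the sign-correction $\mathrm{strup}$ matrix is built from the blocks of the given tableau $T$ is exactly the point one must check when carrying out the analogy.
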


\begin{remark}
    For another approach to calculate the inverse tableau by directly using commutativity properties of the generators $g_i$ of the Pauli group see Sec.~2.3.3 of \cite{Gidney2021stimfaststabilizer}. In the above proof, these commutativity properties are covered by the symplectic property of the phaseless tableaus.
\end{remark}

The results on the inverse flow tableau and the inverse Clifford tableau can be summarized in the following commutative diagram. Each square of maps is commutative, we omit the commutativity symbol $\circlearrowleft$ for better readability. Each downwards arrow is the presentation $\varrho$, the letter $\varrho$ is also omitted.
\begin{equation}\label{eq:cd_inverse_tableaus}
\begin{tikzcd}[column sep=4ex, row sep=3ex, every cell/.style={inner xsep=1ex, inner ysep=0.85ex}, background color=white, crossing over clearance=1.5ex]
    &  & \bar\Pauli \arrow[rrrr, "C_\ast" {xshift=20pt}, "{C^\dagger}^\ast"' {xshift=20pt}] \arrow[dd] &  &  &  & \Pauli \arrow[dd] \\
    \bar\Pauli \arrow[rru, equal] \arrow[dd] &  &  &  & \Pauli \arrow[rru, equal] \arrow[llll, "C^\ast"' {xshift=-20pt}, "C^\dagger_\ast" {xshift=-20pt}, crossing over]  &  & \\
    &  & G \arrow[rrrr, "\circledast T(C)" {xshift=20pt}, "\circledast F(C^\dagger)"' {xshift=20pt}]  &  &  &  & G \\
    G \arrow[rru, equal] &  &  &  & G \arrow[rru, equal] \arrow[llll, "\circledast F(C)"' {xshift=-10pt}, "\circledast T(C^\dagger)" {xshift=-10pt}] \arrow[from=uu, crossing over]&  &                  
\end{tikzcd}
\end{equation}
Note the inverse directions of the arrows in the front and in the back of the diagram. 

Now, we have everything in our hands to categorize tableaus into two classes, the covariant and the contravariant class. Let $C_1, C_2\in\Cl$ be two Clifford unitaries. Remember from the subsection Clifford group and its action on the Pauli group(p.~\pageref{subsec:clifford_on_pauli}) that there are three relevant choices:
\begin{description}
    \item[Choice 1] By default, the concatenation of $C_1$ and $C_2$ is appending $C_2$ to $C_1$, i.e., $C=C_2C_1$. But we could also prepend $C_2$ to get $C=C_1C_2$.
    \item[Choice 2] By default, the circuit $C$ is considered. But we could also opt for the inverse circuit $C^\dagger$.
    \item[Choice 3] The original Clifford tableau used the pushforward of Pauli operators, i.e., the left conjugation $C_\ast\logicalp=C\logicalp C^\dagger$. But we could also use the pullback, i.e., the right conjugation $C^\ast P=C^\dagger PC$.
\end{description}

In all three choices, the first version (append, circuit, pushforward) is covariant, and the second version (prepend, inverse circuit, pullback) is contravariant. For choice~3 that is just Prop.~\ref{prop:conjugation-functors}. For choice~2 it is well known that $C^\dagger=(C_2C_1)^\dagger=C_1^\dagger C_2^\dagger$, i.e., taking the inverse is contravariant, while just taking $C=C_2C_1$ is trivially covariant, i.e., does not change the ordering of the subcircuits. For choice~1 there is an apparent change of ordering of the subcircuits, but that is just the artificial one coming from the convention of concatenation of maps. From the point of view of the state $\ket{\psi}$, we first apply $C_1$ to $\ket{\psi}$ and afterwards $C_2$ in the append case, which is covariant, while for the prepend case $C_2$ is applied first, making it contravariant.

The above considerations give the binary tree of combinations of choices in Fig.~\ref{fig:decision_tree} and the proof of the following proposition.

\begin{figure*}
    \centering
    \includegraphics[width=\textwidth]{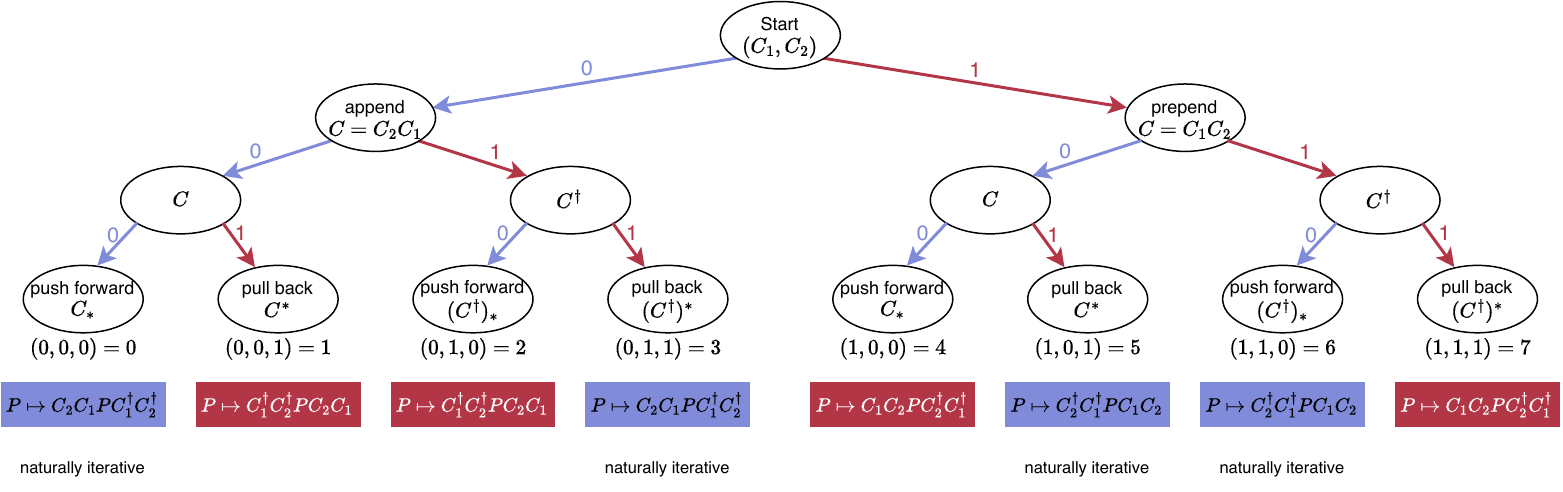}
    \caption{In the decision tree blue arrows (lighter color, left child node, weight $0$) indicate a covariant and red arrows (darker color, right child node, weight $1$) a contravariant choice. Each leaf is named as a binary triple $(i_1, i_2, i_3)$ encoding the path from the root to the leaf. An even number of contravariant steps is covariant, while an odd number of contravariant steps is contravariant. The rectangles at the leaves of the tree show the resulting mapping of Pauli operators and are marked blue resp.\ red if this mapping is co- resp.\ contravariant, i.e., if $C_1$ is the first conjugation on $P$ resp.\ $C_2$ is the first conjugation on $P$.}
    \label{fig:decision_tree}
\end{figure*}

\begin{proposition}\label{prop:functor}
    Let $\mathcal{F}_{1,i_1}:\Cl\times\Cl\rightarrow\Cl$, $i_1\in\mathbb{F}_2$, be the appending $\mathcal{F}_{1,0}(C_1, C_2):= C_2\cdot C_1$ resp.\ prepending $\mathcal{F}_{1,1}(C_1, C_2):= C_1\cdot C_2$ of the Clifford unitary $C_2$ to $C_1$, $\mathcal{F}_{2,i_2}:\Cl\rightarrow\Cl$, $i_2\in\mathbb{F}_2$, be the circuit itself $\mathcal{F}_{2,0}(C)=C$ resp.\ the inverse circuit $\mathcal{F}_{2,1}(C)=C^\dagger$ of a Clifford unitary and $\mathcal{F}_{3,i_3}:\Cl\rightarrow\Aut(\Pauli)$, $i_3\in\mathbb{F}_2$, the left resp.\ right conjugation, i.e., $\mathcal{F}_{3,0}(C):\Pauli\overset{\cong}{\rightarrow}\Pauli$, $P\mapsto C_\ast P=CPC^\dagger$ resp.\ $\mathcal{F}_{3,1}(C):\Pauli\overset{\cong}{\rightarrow}\Pauli$, $P\mapsto C^\ast P=C^\dagger PC$. 
    The concatenation 
    \begin{equation}
        \Cl\times\Cl\xrightarrow{\mathcal{F}_{1,i_1}}\Cl \xrightarrow{\mathcal{F}_{2,i_2}}\Cl\xrightarrow{\mathcal{F}_{3,i_3}}\Aut(\Pauli)
    \end{equation}
    given by a triple $(i_1, i_2, i_3)\in\mathbb{F}_2^3$ is covariant iff the sum $i_1\oplus_2i_2\oplus_2i_3$ is $0$ and contravariant iff the sum is $1$.
\end{proposition}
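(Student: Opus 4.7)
The plan is to reduce the three independent binary choices to a single parity check using two algebraic facts already established in the paper. First I would use the identities $(C^\dagger)_\ast = C^\ast$ and $(C^\dagger)^\ast = C_\ast$ from Eqs.~\eqref{eq:pullback_pushforward_inverse1}--\eqref{eq:pullback_pushforward_inverse2} to collapse the composition $\mathcal{F}_{3, i_3} \circ \mathcal{F}_{2, i_2}$ into a single conjugation choice: for every $C \in \Cl$,
\begin{equation}
\mathcal{F}_{3, i_3}(\mathcal{F}_{2, i_2}(C)) = \mathcal{F}_{3, i_2 \oplus_2 i_3}(C),
\end{equation}
so the dagger step contributes exactly an additive mod-$2$ shift of the conjugation type and one of the three choices is eliminated.

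Next I would apply Prop.~\ref{prop:conjugation-functors}, specifically Eqs.~\eqref{eq:pushforward_covariant} and~\eqref{eq:pullback_contravariant}, to expand the outer conjugation over the product $C := \mathcal{F}_{1, i_1}(C_1, C_2)$, which equals $C_2 C_1$ when $i_1 = 0$ and $C_1 C_2$ when $i_1 = 1$. This produces a $2 \times 2$ table indexed by $(i_1, j)$ with $j := i_2 \oplus_2 i_3$: the pushforward variants give $(C_2 C_1)_\ast = C_{2,\ast} \circ C_{1,\ast}$ and $(C_1 C_2)_\ast = C_{1,\ast} \circ C_{2,\ast}$, while the pullback variants reverse the order to $(C_2 C_1)^\ast = C_1^\ast \circ C_2^\ast$ and $(C_1 C_2)^\ast = C_2^\ast \circ C_1^\ast$. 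Reading off which of $C_1$ or $C_2$ appears as the inner (first-applied) conjugation on a Pauli operator $P$, the pairs $(i_1, j) \in \{(0,0), (1,1)\}$ yield $C_1$ first (covariant) while $(i_1, j) \in \{(1,0), (0,1)\}$ yield $C_2$ first (contravariant). In each case covariance is equivalent to $i_1 \oplus_2 j = 0$, that is, $i_1 \oplus_2 i_2 \oplus_2 i_3 = 0$, which is the claim.

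The argument is essentially bookkeeping: there is no nontrivial estimate or construction, only careful tracking of which Clifford acts first on $P$ under each combination of choices. The most error-prone step is the initial collapse using Eqs.~\eqref{eq:pullback_pushforward_inverse1}--\eqref{eq:pullback_pushforward_inverse2}, because it requires one to notice that the dagger and the conjugation type combine additively modulo $2$ with no extra sign; once this is done correctly, the remaining $2 \times 2$ verification follows directly from the definitions of appending and prepending together with the (anti-)functoriality of the two conjugations.
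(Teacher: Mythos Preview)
Your proof is correct and follows essentially the same route as the paper, which argues that each of the three choices is individually co- or contravariant and then invokes the parity rule for composing such maps (summarized in the decision tree of Fig.~\ref{fig:decision_tree}). Your reduction of $\mathcal{F}_{3,i_3}\circ\mathcal{F}_{2,i_2}$ to a single $\mathcal{F}_{3,\,i_2\oplus_2 i_3}$ via the dagger identities is a minor shortcut the paper does not take explicitly, but the underlying content---tracking which Clifford acts first on $P$ using Prop.~\ref{prop:conjugation-functors} and Eqs.~\eqref{eq:pullback_pushforward_inverse1}--\eqref{eq:pullback_pushforward_inverse2}---is the same.
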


\begin{remark}\label{rmk:existing_tableaus}
    Most versions of Clifford tableaus in the literature can be categorized with the help of Fig.~\ref{fig:decision_tree} in combination with the additional choice of presentation, i.e., our presentation $\varrho$ based on the $XZ$ standard form of Eq.~\eqref{eq:standard_form_pauli} or a variant thereof or the mapping $r$ based on the $XYZ$ standard form of Eq.~\eqref{eqn:XYZ-standard-form} or a variant thereof. 
    
    For example, Dehaene and De Moor in~\cite{dehaenedemoor2003} use a variant of the $\varrho$-presentation (decomposing the phase $i^\ep=i^\delta(-1)^\epsilon$, i.e., using the reversed binary number of our phase exponent $\ep\in\mathbb{Z}_4$), the appending of Clifford circuits, the circuit itself and the left conjugation, corresponding to the decision-triple $(0,0,0)$ and, therefore, the covariant tracking of the Clifford tableau. 
    
    In the original Clifford tableau paper~\cite{AaronsonGottesman2004} by Aaronson and Gottesman, they use the same choice $(0,0,0)$ but the $XYZ$-mapping $r$ instead of $\varrho$, hence, also the covariant tracking by the Clifford tableau. 
    
    On the contrary, Gidney in his paper~\cite{Gidney2021stimfaststabilizer} on \texttt{stim} does not consider the left conjugation but the right conjugation, especially compare the Sec.~4.1 on The Asymptotic Benefits of Backwards Thinking. While he considers the different choices append/prepend and circuit/inverse circuit as well as mentions the possibility of combining the Clifford tableau and its inverse, his main choices in Sec.~4.2 on Tracking the Tableau seem to be prepending, inverse circuit and right conjugation, i.e., the triple $(1,1,1)$ giving a contravariant tableau, named inverse Clifford tableau by him (in the $XYZ$-mapping $r$). 
    
    Gosset et~al.\ in \cite{GossetGrierKerznerSchaeffer2024} mostly follow Dehaene and De Moor also giving a covariant tracking (in the reversed binary number $XZ$-presentation). 
    
    Windler et~al.\ in \cite{Winderl2023} mostly follow Aaronson and Gottesman, i.e., using the choices $(0,0,0)$ of the covariant tracking in the $XYZ$-mapping $r$ but applied to the Clifford tableau of the inverse circuit as initialization, which gets reduced to the identity tableau leading to a hardware-aware decomposition into elementary Clifford gates. They do point out that they could also use the prepending of the inverse elementary Clifford gates when initializing with the tableau of the circuit itself. In our decision tree in Fig.~\ref{fig:decision_tree} their two versions correspond to the choices $(0,0,0)$ and $(1,1,0)$, which are both covariant and, hence, give equivalent tableaus. 
    
    Schmitz et~al.\ in \cite{schmitz2023graphoptimizationperspectivelowdepth} choose a slightly different approach by working on the Pauli space $\bar{\Pauli}/\langle i\idmatrix\rangle$, hence, using the forgetful presentation $\tilde{\varrho}=\tilde{r}$ of Cor.~\ref{cor:forgetful_presentations}. They track only symplectic matrices which then correspond to the equivalence class of a Clifford operator up to multiplication with Pauli operators, recovering the missing phases in a later step.  Following a path in their Pauli Frame Graph corresponds to iteratively appending Clifford gates to a circuit $C$ while using the right conjugation, corresponding to the triple $(0,0,1)$ and therefore to the phaseless part of the contravariant flow tableau.

    In our opinion, the categorization into the covariant Clifford tableau and the contravariant flow tableau gives a better grip on the question which variants of the known tableaus are actually equivalent and which are not, also giving rise to a systematic way to vary the interpretation of such a tableau as needed. For instance, the classical Clifford tableau as left conjugation on the circuit itself with the append operation (left-most leaf $(0,0,0)$ of the decision tree) corresponds to the logical to physical question of the circuit $C$ and is equivalent to the physical to logical question for the inverse circuit $C^\dagger$, which corresponds to the tableau with left conjugation and append for the inverse circuit (leaf $(0,1,1)$). But it is not equivalent to the flow tableau corresponding to the choice $(0,0,1)$ with iteratively appending Clifford gates to the circuit itself and the right conjugation, which holds the information for the physical to logical question for the circuit $C$ itself and is equivalent to the inverse Clifford tableau of Gidney corresponding to the choice $(1,1,1)$ with the same conjugation type but prepending for the inverse circuit.
\end{remark}

\subsection{Iterative calculation of the tableaus}

After this categorization result we come back to the iterative calculation of the flow tableau. Remember from the section on the actions of the Clifford group on the Pauli group that the action of the left conjugation on the Pauli group is naturally iterative because of its covariant nature (see Fig.~\ref{fig:elementarypushforward}), which directly transfers to the Clifford tableau:

\begin{proposition}\label{prop:Clifford_iteratively}
Let $C'\in\{\CNOT[\icq,\itq], H_{\itq}, S_{\itq}\}$ be an elementary Clifford gate and $C\in\Cl$ be an arbitrary Clifford unitary with Clifford tableau 
\begin{equation}
    T(C)=\left(\begin{array}{c|c|c}
         1&\bm{0} &\bm{0} \\\hline
         \eta_{\ipq}& \bmphysicalxi[\ipq] & \bmphysicalzeta[\ipq]\\\hline
         \eta_{\ipqhat}& \bmphysicalxi[\ipqhat] & \bmphysicalzeta[\ipqhat]
    \end{array}\right).
\end{equation}
Then the Clifford tableau $T(C'C)$ after appending $C'$ to $C$ is
\begin{equation}
    T(C'C)=
        \left(\begin{array}{c}
             0\\\hline  
             \\[-11pt]
             \bm{0}^T\\\hline
            \\[-11pt]
             \bm{0}^T 
        \end{array}\right)\circledast T(C)\cdot\left(\begin{array}{c|c|c}
             1&\bm{0}& \bm{0} \\\hline
             \\[-11pt]
             \bm{0}^T& D_{\icq,\itq}& 0 \\\hline
             \\[-11pt]
             \bm{0}^T& 0 & D_{\itq,\icq}
        \end{array}\right)
\end{equation}
if $C'=\CNOT[\icq,\itq]$,
\begin{equation}
    T(C'C)=
        \left(\begin{array}{c}
             0\\\hline  
             (\physicalxi[\ipq,\itq]\physicalzeta[\ipq,\itq])_{\ipq}\\\hline
             (\physicalxi[\ipqhat,\itq]\physicalzeta[\ipqhat,\itq])_{\ipqhat}
        \end{array}\right)\circledast T(C)\cdot
        \left(\begin{array}{c|c}
             1&\bm{0}  \\\hline
             \\[-11pt]
             \bm{0}^T& E_{\itq,\itqhat}
        \end{array}\right)
\end{equation}
if $C'=H_{\itq}$, with column vector $(\physicalxi[\ipq,\itq]\physicalzeta[\ipq,\itq])_{\ipq}$ the componentwise product of the $\itq^{\text{th}}$ and $\itqhat^{\text{th}}$ columns of $T(C)$, and 
\begin{equation}
    T(C'C)=
        \left(\begin{array}{c}
             0\\\hline
             \\[-11pt]
             \bm{0}^T\\\hline
             \\[-11pt]
             \bm{0}^T 
        \end{array}\right)\circledast T(C)\cdot\left(\begin{array}{c|c}
             1&\bm{0}  \\\hline
             \\[-11pt]
             \bm{e_{\itq}}^T& D_{\itq,\itqhat}
        \end{array}\right)
\end{equation}
if $C'=S_{\itq}$. 
\end{proposition}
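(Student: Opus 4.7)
The plan is to invoke Proposition \ref{prop:concatenation_clifford} with the outer unitary being the elementary Clifford gate, which gives
\[
T(C'C) \;=\; \begin{pmatrix} 0 \\ 2s_1 \\ \vdots \\ 2s_{2\npq} \end{pmatrix} \circledast T(C) \cdot T(C'),
\]
where each $s_\ipq$ is the sign correction arising from $\bm{t_\ipq} \circledast T(C')$, and then to substitute the explicit expressions for $T(\CNOT[\icq,\itq])$, $T(H_\itq)$ and $T(S_\itq)$ from Corollary \ref{cor:tableaus_elementary_cliffords}. These three right-hand factors match exactly the matrices appearing in the proposition (in particular, the $\bm{e_\itq}^T$ phase column of $T(S_\itq)$ is already absorbed into the standard matrix product $T(C)\cdot T(S_\itq)$), so it only remains to evaluate the sign-correction column $\bm{s} = (s_\ipq)_\ipq$ in each of the three cases.

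By Corollary \ref{cor:vector_matrix_multiplication}, the sign correction can be written as the bilinear form
\[
s_\ipq \;=\; \sum_{1 \le \ipq' < \ipq'' \le 2\npq} t_{\ipq,\ipq'}\, t_{\ipq,\ipq''}\, \bmlogicalzeta[\ipq']\, \bmlogicalxi[\ipq'']^T,
\]
where $(\bmlogicalxi[\ipq'] \mid \bmlogicalzeta[\ipq'])$ denotes the phaseless part of the $\ipq'$-th row of $T(C')$. For $\CNOT[\icq,\itq]$ every row satisfies either $\bmlogicalzeta[\ipq'] = \bm{0}$ (if $\ipq' \le \npq$) or $\bmlogicalxi[\ipq''] = \bm{0}$ (if $\ipq'' > \npq$), and the strict inequality $\ipq' < \ipq''$ rules out any crossed pair, so $\bm{s} = \bm{0}$. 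For $S_\itq$ the only row with mixed support is row $\itq$ with $\bmlogicalzeta[\itq] = \bm{e_\itq}$; but for any ordered successor $\ipq'' > \itq$ one has $\bmlogicalxi[\ipq''] \cdot \bm{e_\itq}^T = 0$, either because $\ipq''$ is an $X$-row with $\bmlogicalxi[\ipq''] = \bm{e_{\ipq''}} \ne \bm{e_\itq}$, or because $\ipq''$ is a $Z$-row with $\bmlogicalxi[\ipq''] = \bm{0}$, so again $\bm{s} = \bm{0}$.

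The only nontrivial case is $C' = H_\itq$: the swap block $E_{\itq,\itqhat}$ places $\bmlogicalzeta[\itq] = \bm{e_\itq}$ in row $\itq$ and $\bmlogicalxi[\itqhat] = \bm{e_\itq}$ in row $\itqhat = \itq + \npq$, and the only ordered pair surviving the bilinear form is $(\ipq',\ipq'') = (\itq,\itqhat)$, contributing
\[
s_\ipq \;=\; t_{\ipq,\itq}\, t_{\ipq,\itqhat} \;=\; \physicalxi[\ipq,\itq]\, \physicalzeta[\ipq,\itq],
\]
which is precisely the $\ipq$-th entry of the column vector $(\physicalxi[\ipq,\itq]\physicalzeta[\ipq,\itq])_\ipq$ appearing in the statement. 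The main obstacle is the combinatorial enumeration for $H_\itq$: one has to rule out contributions from intermediate standard $Z$-rows $\ipq' \ne \itqhat$ paired with $\itqhat$, which vanish because $\bmlogicalxi[\itqhat]\bmlogicalzeta[\ipq']^T = \bm{e_\itq}\bm{e_{\ipq'-\npq}}^T = \delta_{\itq,\ipq'-\npq} = 0$, together with the analogous check for $\ipq' = \itq$ paired with any $X$-row $\ipq'' \ne \itq$. Assembling the three cases yields the claimed iterative update.
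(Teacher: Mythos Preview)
Your proof is correct and follows essentially the same strategy as the paper: invoke Proposition~\ref{prop:concatenation_clifford} together with the explicit elementary tableaus from Corollary~\ref{cor:tableaus_elementary_cliffords}, then compute the sign-correction column case by case. The only cosmetic difference is that the paper evaluates the sign corrections via the block-matrix formula $\mathrm{strup}\bigl(\begin{smallmatrix} BA^T & BC^T \\ DA^T & DC^T \end{smallmatrix}\bigr)$ for the blocks of $T(C')$, whereas you enumerate the surviving ordered pairs $(\ipq',\ipq'')$ in the bilinear sum directly; both computations yield $\bm{s}=\bm{0}$ for $\CNOT$ and $S_\itq$ and $s_\ipq = \physicalxi[\ipq,\itq]\physicalzeta[\ipq,\itq]$ for $H_\itq$.
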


\begin{proof} 
    We already know from Prop.~\ref{prop:concatenation_clifford} and Cor.~\ref{cor:tableaus_elementary_cliffords} that 
    \begin{equation}
        T(C'C)=2\bm{s}\circledast T(C)\cdot T(C'),
    \end{equation}
    where $s_0=0$ and the entries $s_{\ipq}$ are the sign corrections coming from the product of the $\ipq^\text{th}$ row of $T(C)$ with $T(C')$ where 

\begin{equation}
    T(H_{\itq})= \left(\begin{array}{c|c|c}
        \scriptstyle 1& \scriptstyle\bm{0}&\scriptstyle \bm{0} \\\hline&\\[-13pt]
         \scriptstyle\bm{0}^T&\scriptstyle\idmatrix + \ket{\bm{e_t}}\bra{\bm{e_t}}& \scriptstyle\ket{\bm{e_t}}\bra{\bm{e_t}}\\\hline &\\[-13pt]
         \scriptstyle\bm{0}^T&\scriptstyle\ket{\bm{e_t}}\bra{\bm{e_t}}& \scriptstyle\idmatrix + \ket{\bm{e_t}}\bra{\bm{e_t}}
    \end{array}\right),
\end{equation}
\begin{equation}
    T(S_{\itq})=\left(\begin{array}{c|c|c}\scriptstyle
         1& \scriptstyle\bm{0} &\scriptstyle\bm{0} \\\hline&\\[-13pt]
         \scriptstyle\bm{e_\itq}^T&\scriptstyle\idmatrix & \scriptstyle\ket{\bm{e_t}}\bra{\bm{e_t}}\\\hline&\\[-13pt]
         \scriptstyle\bm{0}^T&\scriptstyle0&\scriptstyle\idmatrix
    \end{array}\right)
\end{equation}
and
\begin{equation}
    T(\CNOT[\icq,\itq])=
         \left(\begin{array}{c|c|c}\scriptstyle
             1&\scriptstyle\bm{0}&\scriptstyle \bm{0} \\\hline&\\[-13pt]
             \scriptstyle\bm{0}^T& \scriptstyle\idmatrix + \ket{\bm{e_{\icq}}}\bra{\bm{e_{\itq}}}& \scriptstyle0 \\\hline&\\[-13pt]
            \scriptstyle \bm{0}^T& \scriptstyle0 & \scriptstyle\idmatrix + \ket{\bm{e_{\itq}}}\bra{\bm{e_{\icq}}}
        \end{array}\right).
\end{equation}
A careful calculation of the matrices 
\begin{equation}
    \mathrm{strup}\left(\begin{array}{c|c}
             \textcolor{zphys}{B}\textcolor{xphys}{A}^T&\textcolor{zphys}{B}\textcolor{xphys}{C}^T  \\\hline
             &\\[-11pt]
             \textcolor{zphys}{D}\textcolor{xphys}{A}^T&\textcolor{zphys}{D}\textcolor{xphys}{C}^T 
        \end{array}\right)
\end{equation}
for the block matrix form $T(C')=\left(\begin{array}{c|c|c}
     1& \bm{0} &\bm{0}\\\hline
     \eta_{\ipq}& \textcolor{xphys}{A}&\textcolor{zphys}{B}\\\hline
     \eta_{\ipqhat}&\textcolor{xphys}{C}&\textcolor{zphys}{D}
\end{array}\right)$ shows that the sign corrections are zero for the cases $C'=S_\itq$ and $C'=\CNOT[\icq,\itq]$, while for $C'=H_{\itq}$ we get
\begin{equation}\begin{split}
    &\left(\begin{array}{c|c}
             \textcolor{zphys}{B}\textcolor{xphys}{A}^T&\textcolor{zphys}{B}\textcolor{xphys}{C}^T  \\\hline
             &\\[-11pt]
             \textcolor{zphys}{D}\textcolor{xphys}{A}^T&\textcolor{zphys}{D}\textcolor{xphys}{C}^T 
        \end{array}\right)\\
    =&\left(\begin{array}{c|c}
             (\ket{\bm{e_t}}\bra{\bm{e_t}})(\idmatrix+\ket{\bm{e_t}}\bra{\bm{e_t}})^T& (\ket{\bm{e_t}}\bra{\bm{e_t}})(\ket{\bm{e_t}}\bra{\bm{e_t}})^T  \\\hline
             &\\[-11pt]
             (\idmatrix+\ket{\bm{e_t}}\bra{\bm{e_t}})(\idmatrix+\ket{\bm{e_t}}\bra{\bm{e_t}})^T&(\idmatrix+\ket{\bm{e_t}}\bra{\bm{e_t}})(\ket{\bm{e_t}}\bra{\bm{e_t}})^T 
        \end{array}\right)\\
        =&\left(\begin{array}{c|c}
             0& \ket{\bm{e_t}}\bra{\bm{e_t}}  \\\hline
             &\\[-11pt]
             \idmatrix+\ket{\bm{e_t}}\bra{\bm{e_t}}&0
        \end{array}\right)
\end{split}\end{equation}
and, therefore, for the $\ipq^{\text{th}}$ row $\physicalflowp[\ipq]$ of $T(C)$ the sign correction is
\begin{equation}\begin{split}
    &(\bmphysicalxi[\ipq]\vert\bmphysicalzeta[\ipq])\cdot\mathrm{strup}\left(\begin{array}{c|c}
             0& \ket{\bm{e_t}}\bra{\bm{e_t}}  \\\hline
             &\\[-11pt]
             \idmatrix+\ket{\bm{e_t}}\bra{\bm{e_t}}&0
        \end{array}\right)\cdot \left(\begin{array}{c}
             \bmphysicalxi[\ipq]^T\\\hline\\[-11pt] \bmphysicalzeta[\ipq]^T
        \end{array}\right)\\
        &(\bmphysicalxi[\ipq]\vert\bmphysicalzeta[\ipq])\cdot\left(\begin{array}{c|c}
             0& \ket{\bm{e_t}}\bra{\bm{e_t}}  \\\hline
             &\\[-11pt]
             0&0
        \end{array}\right)\cdot \left(\begin{array}{c}
             \bmphysicalxi[\ipq]^T\\\hline \\[-11pt]\bmphysicalzeta[\ipq]^T
        \end{array}\right)\\
        =&\physicalxi[\ipq,\itq]\physicalzeta[\ipq,\itq]
\end{split}\end{equation}
as claimed, i.e., we get a sign change in a row iff the multiplication $T(C)\cdot 
        \left(\begin{array}{c|c}
             1&\bm{0}  \\\hline
             &\\[-11pt]
             \bm{0}^T& E_{\itq,\itqhat}
        \end{array}\right)$ swaps two $1$'s in that row.
\end{proof}

\begin{remark}
    As the matrices $D_{i,j}$ and $E_{i,j}$ are the well known matrices used in Gauß elimination for elementary column resp.\ row operations these actions can be nicely written in the usual form as column operations (compare Fig.~\ref{fig:elementarypushforward}), since the Clifford tableaus of the elementary Clifford gates multiplied from the right correspond to column operations:
    \begin{description}
        \item[$H_{\itq}$] Swap the $\itq^\text{th}$ and $\itqhat^\text{th}$ column of $T(C)$, in case you swap two $1$'s in some row, $\oplus_4$-add a $2$ to the phase.
    \begin{equation}\label{eq:col_ops_H}
        \begin{pNiceArray}{c|cc|cc}[first-row,first-col, last-row]
        & \eta
        & \Block[color=xphys, draw=white, line-width=0pt, rounded-corners]{1-1}{X_\ipq} 
        & \Block[color=xphys, draw=white, line-width=0pt, rounded-corners]{1-1}{X_\itq} 
        & \Block[color=zphys, draw=white, line-width=0pt, rounded-corners]{1-1}{Z_\ipq}
        & \Block[color=zphys, draw=white, line-width=0pt, rounded-corners]{1-1}{Z_\itq}\\
        \Block{1-1}{i\idmatrix}&1&0&0&0&0\\\hline
        \Block[color=xlog, draw=white, line-width=0pt, rounded-corners]{1-1}{\begin{array}{c}
              \vdots\\[0mm]\bar{X}_{i}\\[-2mm]\vdots
        \end{array}}&\eta_{i}&{\color{xphys}{{x}_{i,\ipq}}}&{\color{xphys}{{x}_{i,\itq}}}&{\color{zphys}{{z}_{i,\ipq}}}&{\color{zphys}{{z}_{i,\itq}}}\\\hline
        \Block[color=zlog, draw=white, line-width=0pt, rounded-corners]{1-1}{
        \begin{array}{c}
              \vdots\\[0mm]\bar{Z}_{i}\\[-2mm]\vdots
        \end{array}
        }& \eta_{\hat{\imath}}&{\color{xphys}{{x}_{\hat\imath,\ipq}}}&{\color{xphys}{{x}_{\hat\imath,\itq}}}&{\color{zphys}{{z}_{\hat\imath,\ipq}}}&{\color{zphys}{{z}_{\hat\imath,\itq}}}\\
        &\Block{1-5}{
            \begin{array}{c}
                \hspace*{13mm}\xleftrightarrow[\hspace*{12mm}]{\text{swap}}\\
                \xleftarrow{\hspace*{15mm}}{\scriptstyle 2\textcolor{xphys}{x_{\bullet,\itq}}\textcolor{zphys}{z_{\bullet,\itq}}}
         \end{array}}
    \end{pNiceArray}
    \end{equation}
            \item[$S_{\itq}$] $\oplus_2$-add the $\itq^\text{th}$ column of $T(C)$ onto the $\itqhat^\text{th}$ and $\oplus_4$-add the $\itq^\text{th}$ column of $T(C)$ onto the phase exponent column.
        \begin{equation}\label{eq:col_ops_S}
        \begin{pNiceArray}{c|cc|cc}[first-row,first-col, last-row]
        & \eta
        & \Block[color=xphys, draw=white, line-width=0pt, rounded-corners]{1-1}{X_\ipq} 
        & \Block[color=xphys, draw=white, line-width=0pt, rounded-corners]{1-1}{X_\itq} 
        & \Block[color=zphys, draw=white, line-width=0pt, rounded-corners]{1-1}{Z_\ipq}
        & \Block[color=zphys, draw=white, line-width=0pt, rounded-corners]{1-1}{Z_\itq}\\
        \Block{1-1}{i\idmatrix}&1&0&0&0&0\\\hline
        \Block[color=xlog, draw=white, line-width=0pt, rounded-corners]{1-1}{\begin{array}{c}
              \vdots\\[0mm]\bar{X}_{i}\\[-2mm]\vdots
        \end{array}}&\eta_{i}&{\color{xphys}{{x}_{i,\ipq}}}&{\color{xphys}{{x}_{i,\itq}}}&{\color{zphys}{{z}_{i,\ipq}}}&{\color{zphys}{{z}_{i,\itq}}}\\\hline
        \Block[color=zlog, draw=white, line-width=0pt, rounded-corners]{1-1}{
        \begin{array}{c}
              \vdots\\[0mm]\bar{Z}_{i}\\[-2mm]\vdots
        \end{array}
        }& \eta_{\hat{\imath}}&{\color{xphys}{{x}_{\hat\imath,\ipq}}}&{\color{xphys}{{x}_{\hat\imath,\itq}}}&{\color{zphys}{{z}_{\hat\imath,\ipq}}}&{\color{zphys}{{z}_{\hat\imath,\itq}}}\\
        &\Block{1-5}{
            \begin{array}{c}
                \hspace*{13mm}\xrightarrow[\hspace*{12mm}]{}\\
                \xleftarrow{\hspace*{12mm}}{}\hspace*{12mm}
         \end{array}}
    \end{pNiceArray}
    \end{equation}
    \item[${\mathrm{CNOT}}_{\icq\rightarrow\itq}$] $\oplus_2$-add the $\icq^\text{th}$ column of $T(C)$ onto the $\itq^\text{th}$ and $\oplus_2$-add the $\itqhat^\text{th}$ column of $T(C)$ onto the $\icqhat^\text{th}$.
        \begin{equation}\label{eq:col_ops_CNOT}
        \begin{pNiceArray}{c|ccc|ccc}[first-row,first-col, last-row]
        & \eta
        & \Block[color=xphys, draw=white, line-width=0pt, rounded-corners]{1-1}{X_\ipq} 
        & \Block[color=xphys, draw=white, line-width=0pt, rounded-corners]{1-1}{X_\icq} 
        & \Block[color=xphys, draw=white, line-width=0pt, rounded-corners]{1-1}{X_\itq} 
        & \Block[color=zphys, draw=white, line-width=0pt, rounded-corners]{1-1}{Z_\ipq}
        & \Block[color=zphys, draw=white, line-width=0pt, rounded-corners]{1-1}{Z_\icq}
        & \Block[color=zphys, draw=white, line-width=0pt, rounded-corners]{1-1}{Z_\itq}\\
        \Block{1-1}{i\idmatrix}&1&0&0&0&0&0&0\\\hline
        \Block[color=xlog, draw=white, line-width=0pt, rounded-corners]{1-1}{\begin{array}{c}
              \vdots\\[0mm]\bar{X}_{i}\\[-2mm]\vdots
        \end{array}}&\eta_{i}&{\color{xphys}{{x}_{i,\ipq}}}&{\color{xphys}{{x}_{i,\icq}}}&{\color{xphys}{{x}_{i,\itq}}}&{\color{zphys}{{z}_{i,\ipq}}}&{\color{zphys}{{z}_{i,\icq}}}&{\color{zphys}{{z}_{i,\itq}}}\\\hline
        \Block[color=zlog, draw=white, line-width=0pt, rounded-corners]{1-1}{
        \begin{array}{c}
              \vdots\\[0mm]\bar{Z}_{i}\\[-2mm]\vdots
        \end{array}
        }& \eta_{\hat{\imath}}&{\color{xphys}{{x}_{\hat\imath,\ipq}}}&{\color{xphys}{{x}_{\hat\imath,\icq}}}&{\color{xphys}{{x}_{\hat\imath,\itq}}}&{\color{zphys}{{z}_{\hat\imath,\ipq}}}&{\color{zphys}{{z}_{\hat\imath,\icq}}}&{\color{zphys}{{z}_{\hat\imath,\itq}}}\\
        &\Block{1-7}{
            \begin{array}{c}
                \hspace*{14mm}\xrightarrow[\hspace*{6mm}]{}\hspace*{14mm}\xleftarrow{\hspace*{6mm}}{}
         \end{array}}
    \end{pNiceArray}
    \end{equation}
    \end{description}
\end{remark}

This gives the well known iterative calculation of the Clifford tableau:

\begin{corollary}\label{cor:Clifford_append_iterative}
    Let $C=C_{\itimeend}\cdots C_1\in\Cl$ be a decomposition into elementary Clifford gates $C_{\itime}\in\{H_{\itq}, S_{\itq}, \CNOT[\icq,\itq]\}$. Initialize the Clifford tableau $T_0=T(\idmatrix)=\idmatrix\in G^{2\npq+1}$ and for each $\itime=1,\dotsc,\itimeend$ update the tableau by appending $C_\itime$ to $C_{\itime-1}\cdots C_1$ via the column operations in Eqs.~\eqref{eq:col_ops_H}--$\,$\eqref{eq:col_ops_CNOT} setting $T_{\itime}=T_{\itime-1}\circledast T(C_{\itime})$. Then $T_{\itime}$ is the Clifford tableau $T(C_{\itime}\cdots C_1)$ of the subcircuit $C_{\itime}\cdots C_1$ until time step $\itime$.
\end{corollary}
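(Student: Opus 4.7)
The plan is to proceed by induction on the time step $\itime$, with the heavy lifting already done by Prop.~\ref{prop:concatenation_clifford} (giving $T(C'C)=T(C)\circledast T(C')$) and Prop.~\ref{prop:Clifford_iteratively} (which identifies the $\circledast$-product of an arbitrary Clifford tableau with the tableau of an elementary Clifford gate as precisely the column operations displayed in Eqs.~\eqref{eq:col_ops_H}--\eqref{eq:col_ops_CNOT}). So the corollary is essentially the iterated unfolding of the concatenation formula, and there is no real obstacle once these two results are in place.

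For the base case $\itime=0$ the empty product of gates is the identity $\idmatrix\in\Cl$, and $T(\idmatrix)=\idmatrix\in G^{2\npq+1}$ holds by definition of $T$, since conjugation by the identity is trivial on every generator $g_\ipq$. For the inductive step, assume $T_{\itime-1}=T(C_{\itime-1}\cdots C_1)$. Applying Prop.~\ref{prop:concatenation_clifford} to $C:=C_{\itime-1}\cdots C_1$ and $C':=C_\itime$ yields
\begin{equation}
T(C_\itime\cdot C_{\itime-1}\cdots C_1)=T(C_{\itime-1}\cdots C_1)\circledast T(C_\itime)=T_{\itime-1}\circledast T(C_\itime).
\end{equation}
By the inductive hypothesis together with Prop.~\ref{prop:Clifford_iteratively}, this $\circledast$-product is exactly the result of applying the column operations in Eqs.~\eqref{eq:col_ops_H}--\eqref{eq:col_ops_CNOT} to $T_{\itime-1}$ according to the type of $C_\itime$. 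Hence $T_\itime=T(C_\itime\cdots C_1)$, closing the induction.

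The only subtlety worth emphasising is that the update at step $\itime$ really only requires local column manipulations on $T_{\itime-1}$ rather than a full $\circledast$-multiplication from scratch; this is guaranteed precisely because Prop.~\ref{prop:Clifford_iteratively} shows that the sign corrections $s_\ipq$ coming from $\bm{t_{\ipq}}\circledast T(C_\itime)$ either vanish (for $C_\itime\in\{S_\itq,\CNOT[\icq,\itq]\}$) or reduce to the componentwise product $\physicalxi[\ipq,\itq]\physicalzeta[\ipq,\itq]$ (for $C_\itime=H_\itq$), matching the phase corrections in Eq.~\eqref{eq:col_ops_H}. No further work on the phase bookkeeping is needed, and the iterative scheme is well-defined.
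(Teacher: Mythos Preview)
Your proof is correct and is precisely the intended argument: the paper states this result as an immediate corollary of Prop.~\ref{prop:concatenation_clifford} and Prop.~\ref{prop:Clifford_iteratively} without giving an explicit proof, and your induction on $\itime$ is exactly the natural way to unfold the concatenation formula $T(C'C)=T(C)\circledast T(C')$ gate by gate.
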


\begin{proposition}\label{prop:Flow_iteratively}
Let $C'\in\{\CNOT[\icq,\itq], H_{\itq}, S_{\itq}\}$ be an elementary Clifford gate and $C\in\Cl$ be an arbitrary Clifford unitary with flow tableau 
\begin{equation}
    F(C)=\left(\begin{array}{c|c|c}
         1&\bm{0} &\bm{0} \\\hline
         \ep_{\ipq}& \bmlogicalxi[\ipq] & \bmlogicalzeta[\ipq]\\\hline
         \ep_{\ipqhat}& \bmlogicalxi[\ipqhat] & \bmlogicalzeta[\ipqhat]
    \end{array}\right).
\end{equation}
Then the flow tableau $F(C'C)$ after appending $C'$ to $C$ is
\begin{equation}
    F(C'C)=
        \left(\begin{array}{c}
             0\\\hline  \\[-11pt]
             2\bmlogicalzeta[\icq]\bmlogicalxi[\itq]^T\bm{e_{\icq}}^T\\\hline
             \\[-11pt]
             2\bmlogicalzeta[\icqhat]\bmlogicalxi[\itqhat]^T\bm{e_{\itq}}^T 
        \end{array}\right)\circledast \left(\begin{array}{c|c|c}
             1&\bm{0}& \bm{0} \\\hline\\[-11pt]
             \bm{0}^T& D_{\icq,\itq}& 0 \\\hline\\[-11pt]
             \bm{0}^T& 0 & D_{\itq,\icq}
        \end{array}\right)\cdot F(C)
\end{equation}
if $C'=\CNOT[\icq,\itq]$,
\begin{equation}
    F(C'C)=
        \left(\begin{array}{c}
             0\\\hline  \\[-11pt]
             \bm{0}^T\\\hline\\[-11pt]
             \bm{0}^T
        \end{array}\right)\circledast 
        \left(\begin{array}{c|c}
             1&\bm{0}  \\\hline
             &\\[-11pt]
             \bm{0}^T& E_{\itq,\itqhat}
        \end{array}\right)\cdot F(C)
\end{equation}
if $C'=H_{\itq}$ and 
\begin{equation}
    F(C'C)=
        \left(\begin{array}{c}
             0\\\hline  \\[-11pt]
             2\bmlogicalzeta[\itq]\bmlogicalxi[\itqhat]^T\bm{e_{\itq}}^T\\\hline\\[-11pt]
             \bm{0}^T 
        \end{array}\right)\circledast\left(\begin{array}{c|c}
             1&\bm{0}  \\\hline&\\[-11pt]
             3\bm{e_{\itq}}^T& D_{\itq,\itqhat}
        \end{array}\right) \cdot F(C)
\end{equation}
if $C'=S_{\itq}$. 
\end{proposition}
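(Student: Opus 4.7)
The starting point is Prop.~\ref{prop:concatenation_flow}, which together with Cor.~\ref{cor:matrixproduct_nice_zeroth_row} gives
\[
F(C'C) \;=\; F(C')\circledast F(C)\;=\; \bigl(0,\,2s'_1,\,\dots,\,2s'_{2\npq}\bigr)^T \circledast F(C')\cdot F(C),
\]
where $s'_i$ is the sign correction produced by the $\circledast$-product of the $i$-th row $\bm{f'_i}$ of $F(C')$ with $F(C)$, computable by Cor.~\ref{cor:vector_matrix_multiplication}. The plan is to verify separately, for each of the three elementary gates, that the standard matrix product $F(C')\cdot F(C)$ matches the matrix displayed in the claim and that the column of sign corrections matches the claimed one.

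The matrix identities are immediate: substituting the explicit block forms of $F(\CNOT[\icq,\itq])$, $F(H_\itq)$ and $F(S_\itq)$ from Cor.~\ref{cor:tableaus_elementary_cliffords} yields exactly the matrices appearing to the left of $F(C)$ in each of the three cases. The substantive work is therefore the computation of the $s'_i$, and the key observation is that the phaseless tableau $\tilde F(C')$ of each elementary gate has at most two rows of Hamming weight greater than $1$. For a row of Hamming weight at most $1$ the $\circledast$-product with $F(C)$ picks out a single row of $F(C)$ and no sign correction arises; hence $s'_i=0$ unless $\bm{f'_i}$ has Hamming weight $\geq 2$.

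For $H_\itq$ the phaseless tableau is the permutation matrix $E_{\itq,\itqhat}$, so every row has Hamming weight exactly $1$ and all $s'_i$ vanish, matching the zero column in the claim. For $S_\itq$ the only non-trivial row is $\bm{f'_\itq}=(3\vert\bm{e_\itq}\vert\bm{e_\itq})$ (representing $-i\bar X_\itq\bar Z_\itq$); its $\circledast$-product unfolds as $3\bm{e_0}\circledast\bm{f_\itq}\circledast\bm{f_\itqhat}$ and, by the definition of $\circledast$, produces the single correction $s'_\itq=\bmlogicalzeta[\itq]\bmlogicalxi[\itqhat]^T$ sitting at position $\itq$ inside the $X$-block (i.e., scaled by $\bm{e_\itq}^T$). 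For $\CNOT[\icq,\itq]$ the two non-trivial rows are $\bm{f'_\icq}=(0\vert\bm{e_\icq}+\bm{e_\itq}\vert\bm{0})$ (representing $\bar X_\icq\bar X_\itq$) and $\bm{f'_\itqhat}=(0\vert\bm{0}\vert\bm{e_\itq}+\bm{e_\icq})$ (representing $\bar Z_\itq\bar Z_\icq$), whose $\circledast$-products with $F(C)$ yield $s'_\icq=\bmlogicalzeta[\icq]\bmlogicalxi[\itq]^T$ and $s'_\itqhat=\bmlogicalzeta[\itqhat]\bmlogicalxi[\icqhat]^T$ respectively; packaging these into their $X$- and $Z$-blocks via $\bm{e_\icq}^T$ and $\bm{e_\itq}^T$ reproduces the claimed column vector.

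The main obstacle — and really the only non-mechanical step — is reconciling the order ambiguity in the $\CNOT$ sign corrections with the specific form written in the statement: the $Z$-block correction emerging from the $\circledast$-sum is $\bmlogicalzeta[\itqhat]\bmlogicalxi[\icqhat]^T$, whereas the claim displays $\bmlogicalzeta[\icqhat]\bmlogicalxi[\itqhat]^T$. This is resolved by Cor.~\ref{cor:commutativity_paulis_preserved}: since $\bar Z_\itq$ and $\bar Z_\icq$ commute, their pullbacks commute, so $\omega(\tilde\varrho(C^\ast\bar Z_\itq),\tilde\varrho(C^\ast\bar Z_\icq))=\bmlogicalzeta[\itqhat]\bmlogicalxi[\icqhat]^T\oplus\bmlogicalxi[\itqhat]\bmlogicalzeta[\icqhat]^T=0$ in $\mathbb{F}_2$, and more directly, comparing the phases of $\bm{f_\itqhat}\circledast\bm{f_\icqhat}$ and $\bm{f_\icqhat}\circledast\bm{f_\itqhat}$ (which must agree because the underlying Pauli product is unchanged) forces $\bmlogicalzeta[\itqhat]\bmlogicalxi[\icqhat]^T\equiv\bmlogicalzeta[\icqhat]\bmlogicalxi[\itqhat]^T\pmod 2$. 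The analogous argument handles $s'_\icq$. Apart from this index-bookkeeping point, the proof is a routine combination of the composition formula and the elementary tableaus.
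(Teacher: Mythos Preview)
Your proof is correct and follows essentially the same approach as the paper: both invoke Prop.~\ref{prop:concatenation_flow} with the elementary tableaus of Cor.~\ref{cor:tableaus_elementary_cliffords}, observe that Hamming-weight-$1$ rows give no sign correction, and resolve the $\CNOT$ order ambiguity via the symplecticity of $F(C)$ (the paper appeals directly to $\omega(\bm{f_\icq},\bm{f_\itq})=0$ from Prop.~\ref{prop:proper_tableau}, whereas you phrase it through Cor.~\ref{cor:commutativity_paulis_preserved}, which is the same fact). The only cosmetic difference is that the paper makes the $c<t$ versus $t<c$ case distinction explicit for $s'_\icq$ before invoking symplecticity, while you fold this into the ``analogous argument'' remark.
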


\begin{proof}
    By Cor.~\ref{cor:tableaus_elementary_cliffords} we know the flow tableau $F(C')$ of the elementary Clifford gates and by Prop.~\ref{prop:concatenation_flow} we know that the flow tableau of the concatenation is $F(C'C)=2\bm{s}\circledast F(C')\cdot F(C)$, hence, we only have to prove the sign correction vectors whose entries come from the multiplications $\bm{f'_{i}}\circledast F(C)$, where $\bm{f'_{i}}$ is the $i^\text{th}$ row of $F(C')$. In almost all cases $\bm{f'_i}$ has weight $1$ and therefore the matrix $Q(\bm{f'_i})$ is zero, giving a sign correction of $0$. This shows the vanishing sign correction in the case $C'=H_{\itq}$. 

    In the case $C'=S_{\itq}$ we have $\bm{f'_{\itq}}=(3\vert\bm{e_{\itq}}\vert\bm{e_{\itq}})$ as the only row with Hamming weight greater than $1$ and $\bm{f'_{i}}\circledast F(C)=3\cdot\bm{e_0}\circledast 1\cdot\logicalflowp[\itq]\circledast 1\cdot\logicalflowp[\itqhat]$, which has the claimed sign correction term of $2\bmlogicalzeta[\itq]\cdot\bmlogicalxi[\itqhat]^T$.

    In the $\mathrm{CNOT}$ case we are left with the two rows $\bm{f'_{\icq}}=(0\vert\bm{e_{\icq}}\oplus \bm{e_{\itq}}\vert\bm{0})$ and $\bm{f'_{\itqhat}}=(0\vert\bm{0}\vert\bm{e_{\itq}}\oplus \bm{e_{\icq}})$. For $\bm{f'_{\icq}}$ we get a sign correction of $s_{\icq}=\bmlogicalzeta[\icq]\bmlogicalxi[\itq]^T$ if $\icq<\itq$ and $s_{\icq}=\bmlogicalzeta[\itq]\bmlogicalxi[\icq]^T$ if $\itq<\icq$. Since $F(C)$ is symplectic we have $0=\omega(\bm{f_{\icq}},\bm{f_{\itq}})=\omega((\bmlogicalxi[\icq]\vert\bmlogicalzeta[\icq]),(\bmlogicalxi[\itq]\vert\bmlogicalzeta[\itq]))=\bmlogicalzeta[\icq]\bmlogicalxi[\itq]^T\oplus_2\bmlogicalzeta[\itq]\bmlogicalxi[\icq]^T$ and, hence, in both cases $s_{\icq}=\bmlogicalzeta[\icq]\bmlogicalxi[\itq]^T$ giving the sign correction in the middle part, the sign correction in the lower part can be calculated analogously.
\end{proof}

\begin{remark}
    As in the case of the Clifford tableau, the actions of the elementary flow tableaus can be written as ``matrix'' operations, but since they act from the left they correspond to row operations:
    \begin{description}
        \item[$\CNOT$] In the $\color{xphys}{X}$ block $\circledast$-add the $\itq^{\text{th}}$ row to the $\icq^{\text{th}}$ row and in the $\color{zphys}{Z}$ block $\circledast$-add the $\itqhat^{\text{th}}$ row to the $\icqhat^{\text{th}}$ row. All other rows stay unchanged.
        \begin{equation}\label{eq:row_ops_CNOT}
        \begin{pNiceArray}{c|c|c}[first-row,first-col,last-col]
        & \ep
        & \Block[color=xlog, draw=white, line-width=0pt, rounded-corners]{1-1}{{\bar{X}_{\ipq}}} 
        & \Block[color=zlog, draw=white, line-width=0pt, rounded-corners]{1-1}{{\bar{Z}_{\ipq}}} & \\
        &1&0&0&\\\hline
        \Block[color=xphys, draw=white, line-width=0pt, rounded-corners]{1-1}{{X}_{\icq}}&\ep_{\icq}&\bmlogicalxi[\icq]&\bmlogicalzeta[\icq]&\Block{2-1}{\begin{array}{c}
             \multirow{2}{10pt}{$\Big\rceil \circledast$}\\
             \phantom{f}
        \end{array}}\\
        \Block[color=xphys, draw=white, line-width=0pt, rounded-corners]{1-1}{{X}_{\itq}}&\ep_{\itq}&\bmlogicalxi[\itq]&\bmlogicalzeta[\itq]&\\\hline
        \Block[color=zphys, draw=white, line-width=0pt, rounded-corners]{1-1}{{Z}_{\icq}}& \ep_{\icqhat}&\bmlogicalxi[\icqhat]&\bmlogicalzeta[\icqhat]&\Block{2-1}{\begin{array}{c}
             \multirow{2}{10pt}{$\Big\rfloor \circledast$}\\
             \phantom{f}
        \end{array}}\\
        \Block[color=zphys, draw=white, line-width=0pt, rounded-corners]{1-1}{{Z}_{\itq}}& \ep_{\itqhat}&\bmlogicalxi[\itqhat]&\bmlogicalzeta[\itqhat]&\\
    \end{pNiceArray}
    \end{equation}
    \item[$H_{\itq}$] Swap the $\itq^{\text{th}}$ row with the $\itqhat^{\text{th}}$ row. All other rows stay unchanged.
        \begin{equation}\label{eq:row_ops_H}
        \begin{pNiceArray}{c|c|c}[first-row,first-col,last-col]
        & \ep
        & \Block[color=xlog, draw=white, line-width=0pt, rounded-corners]{1-1}{{\bar{X}_{\ipq}}} 
        & \Block[color=zlog, draw=white, line-width=0pt, rounded-corners]{1-1}{{\bar{Z}_{\ipq}}} & \\
        &1&0&0&\\\hline
        \Block[color=xphys, draw=white, line-width=0pt, rounded-corners]{1-1}{{X}_{\itq}}&\ep_{\itq}&\bmlogicalxi[\itq]&\bmlogicalzeta[\itq]&\Block{2-1}{\begin{array}{c}
             \multirow{2}{5pt}{$\big]$ }\\
        \end{array}}\\\hline
        \Block[color=zphys, draw=white, line-width=0pt, rounded-corners]{1-1}{{Z}_{\itq}}& \ep_{\itqhat}&\bmlogicalxi[\itqhat]&\bmlogicalzeta[\itqhat]&\\
    \end{pNiceArray}
    \end{equation}
    \item[$S_{\itq}$] $\circledast$-add the $\itqhat^{\text{th}}$ row to the $\itq^{\text{th}}$ row. All other rows stay unchanged.
    \begin{equation}\label{eq:row_ops_S}
        \begin{pNiceArray}{c|c|c}[first-row,first-col,last-col]
        & \ep
        & \Block[color=xlog, draw=white, line-width=0pt, rounded-corners]{1-1}{{\bar{X}_{\ipq}}} 
        & \Block[color=zlog, draw=white, line-width=0pt, rounded-corners]{1-1}{{\bar{Z}_{\ipq}}} & \\
         &1&0&0&
        \Block{3-1}{
        \begin{array}{cc}
             &\\
             \multirow{2}{10pt}{$\big\rceil\circledast$}&\vert\circledast3\\
             \phantom{f} &
        \end{array}
        }\\\hline
        \Block[color=xphys, draw=white, line-width=0pt, rounded-corners]{1-1}{{X}_{\itq}}&\ep_{\itq}&\bmlogicalxi[\itq]&\bmlogicalzeta[\itq]&\\\hline
        \Block[color=zphys, draw=white, line-width=0pt, rounded-corners]{1-1}{{Z}_{\itq}}& \ep_{\itqhat}&\bmlogicalxi[\itqhat]&\bmlogicalzeta[\itqhat]&\\
    \end{pNiceArray}
    \end{equation}
    \end{description}
    The $\circledast$-adding of one row onto another is adding from the right as indicated by the notation of the operation on the right of the ``matrix''. Remember that the $\circledast$-operation is just the standard sum of the rows (i.e., $\oplus_4$-sum in the flow phase column and $\oplus_2$-sum else), where the flow phase gets added an additional $2$ iff the $\color{zlog}{\bar{Z}}$ entries of the first summand anticommute with the $\color{xlog}{\bar{X}}$ entries of the second summand, i.e., iff $\bmlogicalzeta[1]\bmlogicalxi[2]^T=1$.

    As stated in the section on the action of the elementary Clifford gates on the Pauli group this is not naturally an iterative action on the Pauli operators themselves but relies on the use of a tuple of symplectic generators, which result in the row operations that do not act on each Pauli operator in each row individually but act on the generating tuple combining resp.\ swapping elements in the tuple of symplectic generators.
\end{remark}

\begin{corollary}\label{cor:Flow_append_iterative}
    Let $C=C_{\itimeend}\cdots C_1\in\Cl$ be a decomposition into elementary Clifford gates $C_{\itime}\in\{H_{\itq}, S_{\itq}, \CNOT[\icq,\itq]\}$. Initialize the flow tableau $F_0=F(\idmatrix)=\idmatrix\in G^{2\npq+1}$ and for each $\itime=1,\dotsc,\itimeend$ update the tableau by appending $C_\itime$ to $C_{\itime-1}\cdots C_1$ via the row operations in Eqs.~\eqref{eq:row_ops_CNOT}--$\,$\eqref{eq:row_ops_S} setting $F_{\itime}= F(C_{\itime})\circledast F_{\itime-1}$. Then $F_{\itime}$ is the flow tableau $F(C_{\itime}\cdots C_1)$ of the subcircuit $C_{\itime}\cdots C_1$ until time step $\itime$.
\end{corollary}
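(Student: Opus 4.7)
The plan is to prove the corollary by induction on the time step $\itime$, using Prop.~\ref{prop:concatenation_flow} for the inductive step and Prop.~\ref{prop:Flow_iteratively} to justify that the explicit row operations in Eqs.~\eqref{eq:row_ops_CNOT}--\eqref{eq:row_ops_S} implement the $\circledast$-multiplication by $F(C_\itime)$ from the left.

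For the base case $\itime=0$, the subcircuit is empty, $C_0\cdots C_1=\idmatrix$, and by definition $F_0=F(\idmatrix)$ is the identity tableau in $G^{2\npq+1}$, so $F_0=F(C_0\cdots C_1)$ trivially holds. For the inductive step, assume $F_{\itime-1}=F(C_{\itime-1}\cdots C_1)$. The update rule sets $F_\itime:=F(C_\itime)\circledast F_{\itime-1}$, and Prop.~\ref{prop:concatenation_flow}, applied with $C'=C_\itime$ and $C=C_{\itime-1}\cdots C_1$, gives $F(C_\itime)\circledast F(C_{\itime-1}\cdots C_1)=F(C_\itime\cdot C_{\itime-1}\cdots C_1)=F(C_\itime\cdots C_1)$, closing the induction.

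What remains is to check that performing the row operations displayed in Eqs.~\eqref{eq:row_ops_CNOT}--\eqref{eq:row_ops_S} on $F_{\itime-1}$ indeed produces the ``matrix'' $F(C_\itime)\circledast F_{\itime-1}$. This is exactly the content of Prop.~\ref{prop:Flow_iteratively}: for each of the three elementary gates $C_\itime\in\{\CNOT[\icq,\itq],H_\itq,S_\itq\}$, the proposition expresses $F(C_\itime)\circledast F(C)$ as a product of the matrix form of $F(C_\itime)$ with $F(C)$, together with a sign correction column that is absorbed into the $\mathbb{Z}_4$ component of the relevant rows. The row-operation shorthand in Eqs.~\eqref{eq:row_ops_CNOT}--\eqref{eq:row_ops_S} is just a visual rewriting of these formulas: $H_\itq$ swaps the $\itq^\text{th}$ and $\itqhat^\text{th}$ rows with no sign correction; $S_\itq$ adds the $\itqhat^\text{th}$ row onto the $\itq^\text{th}$ row via $\circledast$ (with the $3\bm{e_\itq}^T$ contribution to the phase and the built-in $\bmlogicalzeta[\itq]\bmlogicalxi[\itqhat]^T$ sign correction absorbed into the $\circledast$-sum); and $\CNOT[\icq,\itq]$ adds the $\itq^\text{th}$ row to the $\icq^\text{th}$ in the $\bm{X}$-block and the $\itqhat^\text{th}$ to the $\icqhat^\text{th}$ in the $\bm{Z}$-block, again with the $\circledast$-rule automatically supplying the correct sign corrections.

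There is no real obstacle here — the substantive work was done in Props.~\ref{prop:concatenation_flow} and \ref{prop:Flow_iteratively}, and the corollary is their iterated combination. The only mild subtlety worth flagging is that one must apply the concatenation formula in the correct order: because the pullback is contravariant, $F(C_\itime\cdots C_1)=F(C_\itime)\circledast F(C_{\itime-1}\cdots C_1)$ and not the reverse, so the newly appended gate contributes its tableau on the \emph{left} of the $\circledast$-product, which is what makes the row-operation picture (rather than the column-operation picture used for the Clifford tableau in Cor.~\ref{cor:Clifford_append_iterative}) the natural one for the flow tableau.
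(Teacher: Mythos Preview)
Your proof is correct and matches the paper's intended argument: the paper states this corollary without proof, treating it as the immediate inductive consequence of Prop.~\ref{prop:concatenation_flow} (for the concatenation identity $F(C_\itime\cdots C_1)=F(C_\itime)\circledast F(C_{\itime-1}\cdots C_1)$) and Prop.~\ref{prop:Flow_iteratively} (for the explicit row-operation realization of left-$\circledast$-multiplication by an elementary $F(C_\itime)$), which is exactly the structure you spell out. One small slip in your recap of the $\CNOT$ row operation: in the $Z$-block the matrix $D_{\itq,\icq}$ acting from the left adds the $\icqhat^{\text{th}}$ row onto the $\itqhat^{\text{th}}$, not the reverse---but since your argument correctly defers to Prop.~\ref{prop:Flow_iteratively} for the actual formulas, this does not affect the proof.
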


\begin{corollary}\label{cor:cnot_only}
    Let $C=C_{\itimeend}\cdots C_1\in\Cl$ be a circuit of $\CNOT$-gates only, i.e., $C_{\itime}=\CNOT[\icq_{\itime},\itq_{\itime}]$. Then the flow tableau $F(C)$ has the following block form:
    \begin{equation}
        F(C)=\left(\begin{array}{c|c|c}
             1&\bm{0}& \bm{0} \\\hline &&\\[-11pt]
             \bm{0}^T& \textcolor{xlog}{A}&\textcolor{zlog}{0}\\\hline&&\\[-11pt]
             \bm{0}^T&\textcolor{xlog}{0}&\textcolor{zlog}{D}
        \end{array}\right),
    \end{equation}
    where $\textcolor{zlog}{D}$ is an invertible $\npq\times\npq$ matrix and $\textcolor{xlog}{A}=\textcolor{zlog}{D}^{-T}$.
\end{corollary}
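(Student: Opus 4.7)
The plan is to prove this by induction on the number of elementary CNOT gates, exploiting the iterative update rule from Prop.~\ref{prop:Flow_iteratively} and then invoking the symplectic property of Prop.~\ref{prop:proper_tableau} for the relation $A = D^{-T}$. The base case $F_0 = F(\idmatrix)$ is the identity tableau, which trivially has the claimed block structure with $\textcolor{xlog}{A} = \textcolor{zlog}{D} = \idmatrix$, zero phase column, and vanishing off-diagonal blocks $\textcolor{zlog}{0}$ and $\textcolor{xlog}{0}$.

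For the inductive step, I would maintain the stronger invariant that (i) the phase column $\bm{\ep}$ is zero, (ii) the upper-right $\textcolor{zlog}{0}$ block (i.e.\ $\bmlogicalzeta[\ipq] = \bm{0}$ for $1 \le \ipq \le \npq$) vanishes, and (iii) the lower-left $\textcolor{xlog}{0}$ block (i.e.\ $\bmlogicalxi[\ipqhat] = \bm{0}$ for $\ipqhat = \npq+1,\dotsc,2\npq$) vanishes. Under this invariant, appending $C' = \CNOT[\icq,\itq]$ via Prop.~\ref{prop:Flow_iteratively} involves the sign corrections $2\bmlogicalzeta[\icq]\bmlogicalxi[\itq]^T$ and $2\bmlogicalzeta[\icqhat]\bmlogicalxi[\itqhat]^T$, both of which vanish: the first because $\bmlogicalzeta[\icq] = \bm{0}$ by (ii), the second because $\bmlogicalxi[\itqhat] = \bm{0}$ by (iii). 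Hence the $\circledast$-product collapses to the ordinary matrix product
\begin{equation}
F(C'C) = \left(\begin{array}{c|c|c}
1 & \bm{0} & \bm{0} \\\hline
\bm{0}^T & D_{\icq,\itq} & 0 \\\hline
\bm{0}^T & 0 & D_{\itq,\icq}
\end{array}\right) \cdot F(C),
\end{equation}
which updates $\textcolor{xlog}{A} \mapsto D_{\icq,\itq}\textcolor{xlog}{A}$ and $\textcolor{zlog}{D} \mapsto D_{\itq,\icq}\textcolor{zlog}{D}$ while preserving the zero off-diagonal blocks and zero phase column. This establishes the block form by induction.

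For the relation $\textcolor{xlog}{A} = \textcolor{zlog}{D}^{-T}$ and invertibility of $\textcolor{zlog}{D}$, I would simply invoke that $F(C)$ is a proper tableau (Prop.~\ref{prop:proper_tableau}), hence the phaseless part $\tilde F(C) = \left(\begin{smallmatrix}\textcolor{xlog}{A} & 0 \\ 0 & \textcolor{zlog}{D}\end{smallmatrix}\right)$ is symplectic. Computing $\tilde F(C)\,\Omega\,\tilde F(C)^T = \left(\begin{smallmatrix} 0 & \textcolor{xlog}{A}\textcolor{zlog}{D}^T \\ \textcolor{zlog}{D}\textcolor{xlog}{A}^T & 0 \end{smallmatrix}\right)$ and setting this equal to $\Omega$ yields $\textcolor{xlog}{A}\textcolor{zlog}{D}^T = \idmatrix$, so $\textcolor{zlog}{D}$ is invertible and $\textcolor{xlog}{A} = \textcolor{zlog}{D}^{-T}$.

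The only mildly subtle point is verifying that the sign-correction terms really vanish at each step; once the invariants (ii) and (iii) are formulated the verification is immediate, and the rest is bookkeeping. An alternative purely computational check—noting $D_{\itq,\icq}^T = D_{\icq,\itq}$ and $D_{\icq,\itq}^2 = \idmatrix$ over $\mathbb{F}_2$ for $\icq\neq\itq$, so that $(D_{\itq,\icq}\textcolor{zlog}{D})^T(D_{\icq,\itq}\textcolor{xlog}{A}) = \textcolor{zlog}{D}^T\textcolor{xlog}{A} = \idmatrix$ by induction—would give a self-contained derivation of $\textcolor{xlog}{A} = \textcolor{zlog}{D}^{-T}$ without invoking symplecticity, but the symplectic argument is cleaner.
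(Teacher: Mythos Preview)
Your proof is correct and follows essentially the same approach as the paper: induction on the number of $\CNOT$ gates using Prop.~\ref{prop:Flow_iteratively}, observing that the sign-correction terms vanish because the off-diagonal blocks are zero, and then invoking the symplectic property from Prop.~\ref{prop:proper_tableau} to obtain $\textcolor{xlog}{A}\textcolor{zlog}{D}^T=\idmatrix$. Your explicit naming of the invariants (i)--(iii) and the alternative direct verification via $D_{\itq,\icq}^T=D_{\icq,\itq}$ are minor elaborations, but the logical structure is identical.
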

\begin{proof}
    The initialization
    \begin{equation}
        F_0=\idmatrix=\left(\begin{array}{c|c|c}
             1 & \bm{0}&\bm{0}  \\\hline&&\\[-11pt]
             \bm{0}^T & \idmatrix & 0 \\\hline&&\\[-11pt]
             \bm{0}^T & 0 & \idmatrix
        \end{array}\right)\in G^{2\npq+1}
    \end{equation}
    has the claimed block structure and the action of 
    \begin{equation}
        F(\CNOT[\icq_{\itime},\itq_{\itime}])=\left(\begin{array}{c|c|c}
             1&\bm{0}& \bm{0} \\\hline&&\\[-11pt]
             \bm{0}^T& D_{\icq_{\itime},\itq_{\itime}}& 0 \\\hline&&\\[-11pt]
             \bm{0}^T& 0 & D_{\itq_{\itime},\icq_{\itime}}
        \end{array}\right)
    \end{equation}
    in each step does preserve it, since the standard matrix multiplication respects the block form and the sign corrections $2\bmlogicalzeta[\icq_{\itime}]\bmlogicalxi[\itq_{\itime}]^T$ and $2\bmlogicalzeta[\icqhat_{\itime}]\bmlogicalxi[\itqhat_{\itime}]^T$ vanish because of the blocks $\textcolor{xlog}{B}$ and $\textcolor{zlog}{C}$ being $0$. Therefore, each flow tableau $F_{\itime}$ has the claimed block structure. 

    By Prop.~\ref{prop:proper_tableau} we know that each flow tableau $F_{\itime}$ is symplectic, i.e., $F_{\itime}\Omega F_{\itime}^T=\Omega$ giving
    \begin{equation}\begin{split}
        \left(\begin{array}{c|c}
              \textcolor{xlog}{A}&0\\\hline
             0&\textcolor{zlog}{D}
        \end{array}\right)\left(\begin{array}{c|c}
              0&\idmatrix\\\hline
             \idmatrix&0
        \end{array}\right)\left(\begin{array}{c|c}
              \textcolor{xlog}{A}^T&0\\\hline&\\[-11pt]
             0&\textcolor{zlog}{D}^T
        \end{array}\right)&=\left(\begin{array}{c|c}
              0&\idmatrix\\\hline
             \idmatrix&0
        \end{array}\right)\\
        \textcolor{zlog}{D}\textcolor{xlog}{A}^T=\textcolor{xlog}{A}\textcolor{zlog}{D}^T&=\idmatrix,
    \end{split}\end{equation}
    i.e., $\textcolor{xlog}{A}=\textcolor{zlog}{D}^{-T}$.
    \end{proof}

\begin{remark}
    In the flow labels this means that in a $\CNOT$-circuit the labels of $\textcolor{xlog}{X}$-type $\flowlabel{}{\bm{\xi_{\ipq}}}{}$  corresponding to the block $\textcolor{xlog}{A}$ and the $\textcolor{zlog}{Z}$-type $\flowlabel{}{}{\bm{\zeta_{\ipq}}}$ corresponding to block $\textcolor{zlog}{D}$ never mix and the phase always stays trivial $+1$. In addition, the $\textcolor{xlog}{X}$-type labels can always be calculated from the $\textcolor{zlog}{Z}$-type labels as their transposed inverse in tableau form:
    \begin{equation}
        \left(\begin{array}{c}
             \bmlogicalxi[1]  \\
              \vdots\\
              \bmlogicalxi[\npq]
        \end{array}\right)=        \left(\begin{array}{c}
             \bmlogicalzeta[1]  \\
              \vdots\\
              \bmlogicalzeta[\npq]
        \end{array}\right)^{-T}.
    \end{equation}
\end{remark}

Let us now come back to the case of general Clifford circuits and look how the iterative calculation of the flow tableau works in an example:

\begin{example}\label{exp:heisenberg}
    Remember the circuit in Fig.~\ref{fig:full_circuit}a) of the main text which represents the dynamics of the 1D-Heisenberg model (see Peng et~al.~\cite{pengetal2022heisenberg}) shown in Fig.~\ref{fig:exp_heisenberg_combined_flow}. The \emph{Clifford phases} above the circuit will be introduced in the next section, so we will ignore them in this example. 

    \begin{figure*}
    \centering\includegraphics[width=\textwidth]{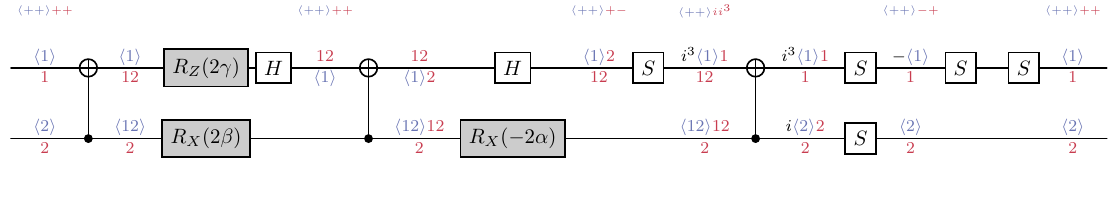}
   \caption{The 1D-Heisenberg example with Clifford phases above the circuit (which will be introduced in Def.~\ref{def:combined_tableau]}). The Clifford gates are tracked via the flow labels, while the gray colored physical rotations are not. Using the ``physical to logical'' direction, their respective logical meaning can be read off of the corresponding flow labels at that time step. $R_{\textcolor{zphys}{Z}}(2\gamma)$ corresponds to the $\textcolor{zphys}{Z_1}$-label $\textcolor{zlog}{12}$ and, therefore, to the logical rotation $\bar{R}_{\textcolor{zlog}{\bar{Z}_1\bar{Z}_2}}(2\gamma)$. $R_{\textcolor{xphys}{X}}(2\beta)$ corresponds to the $\textcolor{xphys}{X_2}$-label $\textcolor{xlog}{\yourfavouriteleftparanthesis12\yourfavouriterightparanthesis}$, therefore, to the logical rotation $\bar{R}_{\textcolor{xlog}{\bar{X}_1\bar{X}_2}}(2\beta)$. And $R_{\textcolor{xphys}{X}}(-2\alpha)$ corresponds to the $\textcolor{xphys}{X_2}$-label $\textcolor{xlog}{\yourfavouriteleftparanthesis12\yourfavouriterightparanthesis}\textcolor{zlog}{12}=-(i\textcolor{xlog}{\yourfavouriteleftparanthesis1\yourfavouriterightparanthesis}\textcolor{zlog}{1})(i\textcolor{xlog}{\yourfavouriteleftparanthesis2\yourfavouriterightparanthesis}\textcolor{zlog}{2})$, therefore, to the logical rotation $\bar{R}_{-\bar{Y}_1\bar{Y}_2}(-2\alpha)=\bar{R}_{\bar{Y}_1\bar{Y}_2}(2\alpha)$ as needed. In the reverse direction ``logical to physical'', the label corresponding to the pushforward of the logical $\textcolor{xlog}{\bar{X}_1}$ to the time step after the first $S$-gate can be read off by the $\textcolor{zlog}{i}$ at the third position of the Clifford labels and the qubits containing a $\textcolor{zlog}{1}$ in their labels (swapping $X$ and $Z$ for the Clifford meaning of the combined flow labels). The read off order is (Clifford phase, $\textcolor{xphys}{X}$ below wire, $\textcolor{zphys}{Z}$ above wire), i.e., $\textcolor{xlog}{\bar{X}_1}\mapsto i\textcolor{xphys}{X_1}\textcolor{zphys}{Z_1Z_2}$.} \label{fig:exp_heisenberg_combined_flow}
\end{figure*}
    
    In the beginning of the circuit we have the trivial flow tableau
    \begin{equation}
        F_0=F(\idmatrix)=\begin{pNiceArray}{c|cc|cc}[first-row,first-col,last-col]
        & \ep
        & \Block[color=xlog, draw=white, line-width=0pt, rounded-corners]{1-1}{{\bar{X}_{1}}} 
        & \Block[color=xlog, draw=white, line-width=0pt, rounded-corners]{1-1}{{\bar{X}_{2}}} 
        & \Block[color=zlog, draw=white, line-width=0pt, rounded-corners]{1-1}{{\bar{Z}_{1}}} 
        & \Block[color=zlog, draw=white, line-width=0pt, rounded-corners]{1-1}{{\bar{Z}_{2}}} & \\
        &1&0&0&0&0&\\\hline
        \Block[color=xphys, draw=white, line-width=0pt, rounded-corners]{1-1}{{X}_{1}}&0&\textcolor{xlog}{1}&\textcolor{xlog}{0}&\textcolor{zlog}{0}&\textcolor{zlog}{0}&\Block{2-1}{\begin{array}{c}
             \multirow{2}{10pt}{$\Big\rfloor \circledast$}\\
             \phantom{f}
        \end{array}}\\
        \Block[color=xphys, draw=white, line-width=0pt, rounded-corners]{1-1}{{X}_{2}}&0&\textcolor{xlog}{0}&\textcolor{xlog}{1}&\textcolor{zlog}{0}&\textcolor{zlog}{0}&\\\hline
        \Block[color=zphys, draw=white, line-width=0pt, rounded-corners]{1-1}{{Z}_{1}}&0&\textcolor{xlog}{0}&\textcolor{xlog}{0}&\textcolor{zlog}{1}&\textcolor{zlog}{0}&\Block{2-1}{\begin{array}{c}
             \multirow{2}{10pt}{$\Big\rceil \circledast$}\\
             \phantom{f}
        \end{array}}\\
        \Block[color=zphys, draw=white, line-width=0pt, rounded-corners]{1-1}{{Z}_{2}}&0&\textcolor{xlog}{0}&\textcolor{xlog}{0}&\textcolor{zlog}{0}&\textcolor{zlog}{1}&\\
    \end{pNiceArray}
    \end{equation}
    corresponding to the trivial labels $\flowlabel{}{1}{}$ and $\flowlabel{}{}{1}$ on the first qubit resp.\ $\flowlabel{}{2}{}$ and $\flowlabel{}{}{2}$ on the second qubit. The first elementary Clifford gate $\CNOT[2,1]$ acts on the lower half of the flow tableau by $\circledast$-adding the second row to the first row and on the middle of the flow tableau its action is reversed. As expected we do not get sign corrections from the $\circledast$-sums and the block structure is preserved:
    \begin{equation}
        F_1=F(\CNOT[2,1])=\begin{pNiceArray}{c|cc|cc}[first-row,first-col,last-col]
        & \ep
        & \Block[color=xlog, draw=white, line-width=0pt, rounded-corners]{1-1}{{\bar{X}_{1}}} 
        & \Block[color=xlog, draw=white, line-width=0pt, rounded-corners]{1-1}{{\bar{X}_{2}}} 
        & \Block[color=zlog, draw=white, line-width=0pt, rounded-corners]{1-1}{{\bar{Z}_{1}}} 
        & \Block[color=zlog, draw=white, line-width=0pt, rounded-corners]{1-1}{{\bar{Z}_{2}}} & \\
        &1&0&0&0&0&\\\hline
        \Block[color=xphys, draw=white, line-width=0pt, rounded-corners]{1-1}{{X}_{1}}&0&\textcolor{xlog}{1}&\textcolor{xlog}{0}&\textcolor{zlog}{0}&\textcolor{zlog}{0}&\Block{3-1}{\begin{array}{c}
             \multirow{3}{10pt}{$\Bigg]$}\\
             \phantom{f}\\
             \phantom{f}
        \end{array}}\\
        \Block[color=xphys, draw=white, line-width=0pt, rounded-corners]{1-1}{{X}_{2}}&0&\textcolor{xlog}{1}&\textcolor{xlog}{1}&\textcolor{zlog}{0}&\textcolor{zlog}{0}&\\\hline
        \Block[color=zphys, draw=white, line-width=0pt, rounded-corners]{1-1}{{Z}_{1}}&0&\textcolor{xlog}{0}&\textcolor{xlog}{0}&\textcolor{zlog}{1}&\textcolor{zlog}{1}&\\
        \Block[color=zphys, draw=white, line-width=0pt, rounded-corners]{1-1}{{Z}_{2}}&0&\textcolor{xlog}{0}&\textcolor{xlog}{0}&\textcolor{zlog}{0}&\textcolor{zlog}{1}&\\
    \end{pNiceArray}
    \end{equation}
    In the flow labels this corresponds to adding the $\textcolor{zphys}{Z}$-labels below the wires in the direction of the $\CNOT$, i.e., control onto target, $2\rightarrow 1$, and adding the $\textcolor{xphys}{X}$-labels above the wires in the reverse direction, i.e., target onto control. 
     \begin{equation}
     \includegraphics[height=15mm]{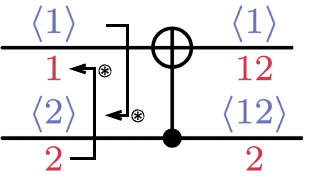}
     \end{equation}
     The following physical single-qubit rotations, $R_{\textcolor{zphys}{Z}}$ on the first qubit and $R_{\textcolor{xphys}{X}}$ on the second qubit, cannot and will not be tracked by the flow formalism as indicated by their gray coloring, but can now be easily translated into their logical meaning. Since the $\textcolor{zphys}{Z}$-label on the first qubit below the wire is $\flowlabel{}{}{12}$ the logical meaning $\CNOT[2,1]^\ast R_{\textcolor{zphys}{Z}}$ of the first rotation is $\bar{R}_{\textcolor{zlog}{\bar{Z}_1\bar{Z}_2}}(2\gamma)$. Analogously, the logical meaning $\CNOT[2,1]^\ast R_{\textcolor{xphys}{X}}$ of the second rotation is $\bar{R}_{\textcolor{xlog}{\bar{X}_1\bar{X}_2}}(2\beta)$.

     The next tracked gate $H_1$ corresponds to swapping the first rows of the middle and lower half of the flow tableau and swapping of the labels of the first qubit:
     \begin{equation}
        F_2=F(H_1\CNOT[2,1])=\begin{pNiceArray}{c|cc|cc}[first-row,first-col,last-col]
        & \ep
        & \Block[color=xlog, draw=white, line-width=0pt, rounded-corners]{1-1}{{\bar{X}_{1}}} 
        & \Block[color=xlog, draw=white, line-width=0pt, rounded-corners]{1-1}{{\bar{X}_{2}}} 
        & \Block[color=zlog, draw=white, line-width=0pt, rounded-corners]{1-1}{{\bar{Z}_{1}}} 
        & \Block[color=zlog, draw=white, line-width=0pt, rounded-corners]{1-1}{{\bar{Z}_{2}}} & \\
        &1&0&0&0&0&\\\hline
         \Block[color=xphys, draw=white, line-width=0pt, rounded-corners]{1-1}{{X}_{1}}&0&\textcolor{xlog}{0}&\textcolor{xlog}{0}&\textcolor{zlog}{1}&\textcolor{zlog}{1}&\Block{2-1}{\begin{array}{c}
              \multirow{2}{10pt}{$\Big\rfloor\circledast$}\\
              \phantom{f}
         \end{array}}\\
         \Block[color=xphys, draw=white, line-width=0pt, rounded-corners]{1-1}{{X}_{2}}&0&\textcolor{xlog}{1}&\textcolor{xlog}{1}&\textcolor{zlog}{0}&\textcolor{zlog}{0}&\\\hline
         \Block[color=zphys, draw=white, line-width=0pt, rounded-corners]{1-1}{{Z}_{1}}&0&\textcolor{xlog}{1}&\textcolor{xlog}{0}&\textcolor{zlog}{0}&\textcolor{zlog}{0}& \Block{2-1}{\begin{array}{c}
              \multirow{2}{10pt}{$\Big\rceil\circledast$}\\
              \phantom{f}
         \end{array}}\\
         \Block[color=zphys, draw=white, line-width=0pt, rounded-corners]{1-1}{{Z}_{2}}&0&\textcolor{xlog}{0}&\textcolor{xlog}{0}&\textcolor{zlog}{0}&\textcolor{zlog}{1}&\\
    \end{pNiceArray}
    \end{equation}
    resp.
     \begin{equation}
         \includegraphics[height=15mm]{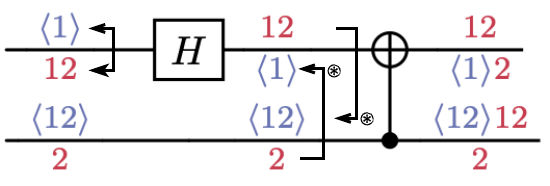}
     \end{equation}
    The action of the second $\CNOT$ is the same as that of the first one. Since we $\circledast$-add $\textcolor{zlog}{Z}$-type labels from the right onto $\textcolor{xlog}{X}$-type labels which is compatible with the standard form $\logicalflowp$, we do not get sign corrections. But the swapping of labels from the Hadamard gate gave us mixed labels, which we need in order to realize the logical $\bar{R}_{\bar{Y}}$-rotation as a physical $R_{\textcolor{xphys}{X}}$-rotation. The physical $R_{\textcolor{xphys}{X}}(-2\alpha)$ on the second qubit now corresponds to the $\textcolor{xphys}{X}$-label $\flowlabel{}{12}{12}=-(i\flowlabel{}{1}{1})(i\flowlabel{}{2}{2})$ and therefore, is a logical $\bar{R}_{-\bar{Y}_1\bar{Y}_2}(-2\alpha)=\bar{R}_{\bar{Y}_1\bar{Y}_2}(2\alpha)$ rotation, realizing the final needed rotation. The remaining Clifford gates are only needed in order to get back to the trivial encoding. We get $F_3=F(\CNOT[2,1]H_1\CNOT[2,1])$
    \begin{equation}
        F_3=\begin{pNiceArray}{c|cc|cc}[first-row,first-col,last-col]
        & \ep
        & \Block[color=xlog, draw=white, line-width=0pt, rounded-corners]{1-1}{{\bar{X}_{1}}} 
        & \Block[color=xlog, draw=white, line-width=0pt, rounded-corners]{1-1}{{\bar{X}_{2}}} 
        & \Block[color=zlog, draw=white, line-width=0pt, rounded-corners]{1-1}{{\bar{Z}_{1}}} 
        & \Block[color=zlog, draw=white, line-width=0pt, rounded-corners]{1-1}{{\bar{Z}_{2}}} & \\
        &1&0&0&0&0&\\\hline
         \Block[color=xphys, draw=white, line-width=0pt, rounded-corners]{1-1}{{X}_{1}}&0&\textcolor{xlog}{0}&\textcolor{xlog}{0}&\textcolor{zlog}{1}&\textcolor{zlog}{1}&\Block{3-1}{\begin{array}{c}
              \multirow{3}{10pt}{$\Bigg]$}\\
              \phantom{f}\\
              \phantom{f}
         \end{array}}\\
         \Block[color=xphys, draw=white, line-width=0pt, rounded-corners]{1-1}{{X}_{2}}&0&\textcolor{xlog}{1}&\textcolor{xlog}{1}&\textcolor{zlog}{1}&\textcolor{zlog}{1}&\\\hline
         \Block[color=zphys, draw=white, line-width=0pt, rounded-corners]{1-1}{{Z}_{1}}&0&\textcolor{xlog}{1}&\textcolor{xlog}{0}&\textcolor{zlog}{0}&\textcolor{zlog}{1}& \\
         \Block[color=zphys, draw=white, line-width=0pt, rounded-corners]{1-1}{{Z}_{2}}&0&\textcolor{xlog}{0}&\textcolor{xlog}{0}&\textcolor{zlog}{0}&\textcolor{zlog}{1}&\\
    \end{pNiceArray}
    \end{equation}
    $F_4=F(H_1\CNOT[2,1]H_1\CNOT[2,1])$
    \begin{equation}
        F_4=\begin{pNiceArray}{c|cc|cc}[first-row,first-col,last-col]
        & \ep
        & \Block[color=xlog, draw=white, line-width=0pt, rounded-corners]{1-1}{{\bar{X}_{1}}} 
        & \Block[color=xlog, draw=white, line-width=0pt, rounded-corners]{1-1}{{\bar{X}_{2}}} 
        & \Block[color=zlog, draw=white, line-width=0pt, rounded-corners]{1-1}{{\bar{Z}_{1}}} 
        & \Block[color=zlog, draw=white, line-width=0pt, rounded-corners]{1-1}{{\bar{Z}_{2}}} & \\
        &1&0&0&0&0&\\\hline
         \Block[color=xphys, draw=white, line-width=0pt, rounded-corners]{1-1}{{X}_{1}}&0&\textcolor{xlog}{1}&\textcolor{xlog}{0}&\textcolor{zlog}{0}&\textcolor{zlog}{1}&\Block{3-1}{\begin{array}{cc}
              \multirow{3}{10pt}{$\Bigg\rceil\circledast$}&\vert\circledast 3\\
              \phantom{f}&\\
              \phantom{f}&
         \end{array}}\\
         \Block[color=xphys, draw=white, line-width=0pt, rounded-corners]{1-1}{{X}_{2}}&0&\textcolor{xlog}{1}&\textcolor{xlog}{1}&\textcolor{zlog}{1}&\textcolor{zlog}{1}&\\\hline
         \Block[color=zphys, draw=white, line-width=0pt, rounded-corners]{1-1}{{Z}_{1}}&0&\textcolor{xlog}{0}&\textcolor{xlog}{0}&\textcolor{zlog}{1}&\textcolor{zlog}{1}& \\
         \Block[color=zphys, draw=white, line-width=0pt, rounded-corners]{1-1}{{Z}_{2}}&0&\textcolor{xlog}{0}&\textcolor{xlog}{0}&\textcolor{zlog}{0}&\textcolor{zlog}{1}&\\
    \end{pNiceArray}
    \end{equation}
$F_5=F(S_1H_1\CNOT[2,1]H_1\CNOT[2,1])$ gets a phase exponent of $3$ in the $\textcolor{xphys}{X_1}$-row coming from the action of the $S$-gate, but no sign correction from the $\circledast$-sum since the $\textcolor{zphys}{Z_1}$-row has $\bmlogicalxi[\hat{1}]=\textcolor{xlog}{(0,0)}$.
    \begin{equation}
        F_5=\begin{pNiceArray}{c|cc|cc}[first-row,first-col,last-col]
        & \ep
        & \Block[color=xlog, draw=white, line-width=0pt, rounded-corners]{1-1}{{\bar{X}_{1}}} 
        & \Block[color=xlog, draw=white, line-width=0pt, rounded-corners]{1-1}{{\bar{X}_{2}}} 
        & \Block[color=zlog, draw=white, line-width=0pt, rounded-corners]{1-1}{{\bar{Z}_{1}}} 
        & \Block[color=zlog, draw=white, line-width=0pt, rounded-corners]{1-1}{{\bar{Z}_{2}}} & \\
        &1&0&0&0&0&\\\hline
         \Block[color=xphys, draw=white, line-width=0pt, rounded-corners]{1-1}{{X}_{1}}&3&\textcolor{xlog}{1}&\textcolor{xlog}{0}&\textcolor{zlog}{1}&\textcolor{zlog}{0}&\Block{2-1}{\begin{array}{c}
              \multirow{2}{10pt}{$\Big\rfloor\circledast$}\\
              \phantom{f}
         \end{array}}\\
         \Block[color=xphys, draw=white, line-width=0pt, rounded-corners]{1-1}{{X}_{2}}&0&\textcolor{xlog}{1}&\textcolor{xlog}{1}&\textcolor{zlog}{1}&\textcolor{zlog}{1}&\\\hline
         \Block[color=zphys, draw=white, line-width=0pt, rounded-corners]{1-1}{{Z}_{1}}&0&\textcolor{xlog}{0}&\textcolor{xlog}{0}&\textcolor{zlog}{1}&\textcolor{zlog}{1}&\Block{2-1}{\begin{array}{c}
              \multirow{2}{10pt}{$\Big\rceil\circledast$}\\
              \phantom{f}
         \end{array}} \\
         \Block[color=zphys, draw=white, line-width=0pt, rounded-corners]{1-1}{{Z}_{2}}&0&\textcolor{xlog}{0}&\textcolor{xlog}{0}&\textcolor{zlog}{0}&\textcolor{zlog}{1}&\\
    \end{pNiceArray}
    \end{equation}
    In the flow labels the action of the $S$-gate looks like:
    \begin{equation}
        \includegraphics[height=8mm]{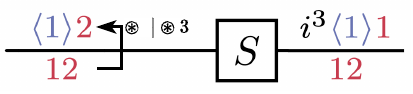}
    \end{equation}
    The third $\CNOT$ acts in the same way as the ones before, but now we do get a sign correction term, since we have to $\circledast$-add the flow labels $\flowlabel{}{12}{12}\circledast\flowlabel{i^3}{1}{1}$ where the $\flowlabel{}{}{1}$ from the first summand and the $\flowlabel{}{1}{}$ from the second summand do not commute, introducing an additional minus sign and therefore, resulting in $\flowlabel{1\cdot i^3\cdot(-1)}{12\triangle1}{12\triangle1}=\flowlabel{i}{2}{2}$ or equivalently in the flow tableau $F_6=F(\CNOT[2,1]S_1H_1\CNOT[2,1]H_1\CNOT[2,1])$:
    \begin{equation}
        F_6=\begin{pNiceArray}{c|cc|cc}[first-row,first-col,last-col]
        & \ep
        & \Block[color=xlog, draw=white, line-width=0pt, rounded-corners]{1-1}{{\bar{X}_{1}}} 
        & \Block[color=xlog, draw=white, line-width=0pt, rounded-corners]{1-1}{{\bar{X}_{2}}} 
        & \Block[color=zlog, draw=white, line-width=0pt, rounded-corners]{1-1}{{\bar{Z}_{1}}} 
        & \Block[color=zlog, draw=white, line-width=0pt, rounded-corners]{1-1}{{\bar{Z}_{2}}} & \\
        &1&0&0&0&0&\\\hline
         \Block[color=xphys, draw=white, line-width=0pt, rounded-corners]{1-1}{{X}_{1}}&3&\textcolor{xlog}{1}&\textcolor{xlog}{0}&\textcolor{zlog}{1}&\textcolor{zlog}{0}&\Block{4-1}{\begin{array}{cccc}
              \multirow{3}{10pt}{$\Bigg\rceil\circledast$}&\vert\circledast 3&&\\
              &\phantom{f}&\multirow{3}{10pt}{$\Bigg\rceil\circledast$}&\vert\circledast 3\\
              &\phantom{f}&&\\
              &\phantom{f}&&
         \end{array}}\\
         \Block[color=xphys, draw=white, line-width=0pt, rounded-corners]{1-1}{{X}_{2}}&3\oplus_42&\textcolor{xlog}{0}&\textcolor{xlog}{1}&\textcolor{zlog}{0}&\textcolor{zlog}{1}&\\\hline
         \Block[color=zphys, draw=white, line-width=0pt, rounded-corners]{1-1}{{Z}_{1}}&0&\textcolor{xlog}{0}&\textcolor{xlog}{0}&\textcolor{zlog}{1}&\textcolor{zlog}{0}& \\
         \Block[color=zphys, draw=white, line-width=0pt, rounded-corners]{1-1}{{Z}_{2}}&0&\textcolor{xlog}{0}&\textcolor{xlog}{0}&\textcolor{zlog}{0}&\textcolor{zlog}{1}&\\
    \end{pNiceArray}
    \end{equation}
    $F_8=F(S_2S_1\CNOT[2,1]S_1H_1\CNOT[2,1]H_1\CNOT[2,1])$:
    \begin{equation}
        F_8=\begin{pNiceArray}{c|cc|cc}[first-row,first-col,last-col]
        & \ep
        & \Block[color=xlog, draw=white, line-width=0pt, rounded-corners]{1-1}{{\bar{X}_{1}}} 
        & \Block[color=xlog, draw=white, line-width=0pt, rounded-corners]{1-1}{{\bar{X}_{2}}} 
        & \Block[color=zlog, draw=white, line-width=0pt, rounded-corners]{1-1}{{\bar{Z}_{1}}} 
        & \Block[color=zlog, draw=white, line-width=0pt, rounded-corners]{1-1}{{\bar{Z}_{2}}} & \\
        &1&0&0&0&0&\\\hline
         \Block[color=xphys, draw=white, line-width=0pt, rounded-corners]{1-1}{{X}_{1}}&2&\textcolor{xlog}{1}&\textcolor{xlog}{0}&\textcolor{zlog}{0}&\textcolor{zlog}{0}&\Block{4-1}{\begin{array}{cccc}
              \multirow{3}{10pt}{$\Bigg\rceil\circledast$}&\vert\circledast 3&\multirow{3}{10pt}{$\Bigg\rceil\circledast$}&\vert\circledast 3\\
              &\phantom{f}&&\\
              &\phantom{f}&&\\
              &\phantom{f}&&
         \end{array}}\\
         \Block[color=xphys, draw=white, line-width=0pt, rounded-corners]{1-1}{{X}_{2}}&0&\textcolor{xlog}{0}&\textcolor{xlog}{1}&\textcolor{zlog}{0}&\textcolor{zlog}{0}&\\\hline
         \Block[color=zphys, draw=white, line-width=0pt, rounded-corners]{1-1}{{Z}_{1}}&0&\textcolor{xlog}{0}&\textcolor{xlog}{0}&\textcolor{zlog}{1}&\textcolor{zlog}{0}& \\
         \Block[color=zphys, draw=white, line-width=0pt, rounded-corners]{1-1}{{Z}_{2}}&0&\textcolor{xlog}{0}&\textcolor{xlog}{0}&\textcolor{zlog}{0}&\textcolor{zlog}{1}&\\
    \end{pNiceArray}
    \end{equation}
    $F_{10}=F(S_1S_1S_2S_1\CNOT[2,1]S_1H_1\CNOT[2,1]H_1\CNOT[2,1])$
        \begin{equation}
        F_{10}=\begin{pNiceArray}{c|cc|cc}[first-row,first-col,last-col]
        & \ep
        & \Block[color=xlog, draw=white, line-width=0pt, rounded-corners]{1-1}{{\bar{X}_{1}}} 
        & \Block[color=xlog, draw=white, line-width=0pt, rounded-corners]{1-1}{{\bar{X}_{2}}} 
        & \Block[color=zlog, draw=white, line-width=0pt, rounded-corners]{1-1}{{\bar{Z}_{1}}} 
        & \Block[color=zlog, draw=white, line-width=0pt, rounded-corners]{1-1}{{\bar{Z}_{2}}} & \\
        &1&0&0&0&0&\\\hline
         \Block[color=xphys, draw=white, line-width=0pt, rounded-corners]{1-1}{{X}_{1}}&0&\textcolor{xlog}{1}&\textcolor{xlog}{0}&\textcolor{zlog}{0}&\textcolor{zlog}{0}&\\
         \Block[color=xphys, draw=white, line-width=0pt, rounded-corners]{1-1}{{X}_{2}}&0&\textcolor{xlog}{0}&\textcolor{xlog}{1}&\textcolor{zlog}{0}&\textcolor{zlog}{0}&\\\hline
         \Block[color=zphys, draw=white, line-width=0pt, rounded-corners]{1-1}{{Z}_{1}}&0&\textcolor{xlog}{0}&\textcolor{xlog}{0}&\textcolor{zlog}{1}&\textcolor{zlog}{0}& \\
         \Block[color=zphys, draw=white, line-width=0pt, rounded-corners]{1-1}{{Z}_{2}}&0&\textcolor{xlog}{0}&\textcolor{xlog}{0}&\textcolor{zlog}{0}&\textcolor{zlog}{1}&\\
    \end{pNiceArray}, 
    \end{equation}
    where the last two steps $S_1S_1=Z_1$ show that the action of the Pauli operator $\textcolor{zphys}{Z_1}$ is trivial on the phaseless part of the flow tableau, but introduces a minus sign to the phase of the $\textcolor{xphys}{X_1}$-row. Since we ended up in the trivial encoding, i.e., the identity matrix as flow tableau and the trivial labels as the flow labels, the whole circuit is equivalent to the logical meanings of its physical rotations, i.e., $R_{\bar{Y}_1\bar{Y_2}}(2\alpha)R_{\logicalpauliop{}{1,2}{}}(2\beta)R_{\logicalpauliop{}{}{1,2}}(2\gamma)$ as claimed.
\end{example}

\section{Flow tracking with stabilizers}\label{sec:stabilizers}
The flow labels developed so far are a good tool to determine the logical meaning of Pauli rotations in a circuit without auxiliary qubits, i.e., without stabilizers. They can for instance be used to analyze the QFT- and the QAOA-circuit on a linear nearest neighbour device (LNN) in the paper on the SWAPless implementation of quantum algorithms~\cite{klaver2024}. But if there are auxiliary qubits in the circuit, like in the QAOA-circuit on a ladder architecture in~\cite{klaver2024}, where one rail is initially composed of auxiliary qubits such that after encoding the stabilizers allow for parallel operations on both rails, we also need to track the stabilizers over time in order to be able to detect stabilizer violating rotations on the one hand and on the other hand to remove trivial auxiliary information from the logical meaning of a stabilizer respecting rotation. 

In this section we introduce the combined flow tableau (already hinted at in~\cite{Gidney2021stimfaststabilizer}), containing the flow tableau and the phases of the Clifford tableau, which enables us to answer both the logical to physical question, for example with respect to stabilizers, as well as the physical to logical question, e.g., which physical rotation has which logical meaning, without the need of the expensive calculation of the inverse of a tableau. We also consider auxiliary qubits initialized in some stabilizer state and how we can easily see from the flow labels, which rotations are stabilizer violating, which act trivially and therefore, answer the physical to logical question for rotations in the presence of stabilizers coming from auxiliary qubits and in the presence of general stabilizers.

\subsection{Combined flow tableau}
Since we know that the Clifford tableau of a circuit is the inverse of its flow tableau, we could track one and calculate the inverse tableau whenever we need it, but that would be costly in run time. Of course, we could track the information for the logical to physical and the physical to logical mapping by tracking the Clifford gates of the circuit by both the flow and the Clifford tableau leading to a doubled cost in memory. But there is a more efficient way to tracking by the \emph{combined flow tableau}, which essentially takes the flow tableau plus the phases of the Clifford tableau, also giving rise to its sibling the \emph{flow labels with Clifford phases}. Alternatively, one could consider the \emph{combined Clifford tableau}, taking the Clifford tableau plus the phases of the flow tableau, but we will stick to the combined flow tableau, the corresponding results on the combined Clifford tableau being analogous.

\begin{definition}\label{def:combined_tableau]}
    The \emph{Clifford phase vector} ($\text{cp}$) of a Clifford unitary $C\in\Cl$ is the block wise permutation $\bm{\eta}=(\bm{\eta_X}\vert\bm{\eta_Z})=({\bm{\ep'_Z}}^T\vert{\bm{\ep'_X}}^T)\in \mathbb{Z}_4^{2\npq}$ of the transposed $0^\text{th}$ column ${\bm{\ep}'}^T=({\bm{\ep'_X}}^T\vert{\bm{\ep'_Z}}^T)$ of the Clifford tableau $T(C)$.

    The \emph{combined flow tableau} $F_c(C)$ with respect to the presentation $\varrho$ of $C$ is defined as
    \begin{equation}\begin{split}
        F_c(C)&=\begin{pNiceArray}{c}[first-col]
    \Block{1-1}{\text{cp}}&(1\vert\bm{\eta})\\\hline
        \Block[color=xphys, draw=white, line-width=0pt, rounded-corners]{1-1}{\begin{array}{c}
              \vdots\\[0mm]{X}_{\ipq}\\[-2mm]\vdots
        \end{array}}&\varrho(C^\ast {\color{xphys}{X}_{\ipq}})\\\hline
        \Block[color=zphys, draw=white, line-width=0pt, rounded-corners]{1-1}{
        \begin{array}{c}
              \vdots\\[0mm]{Z}_{\ipq}\\[-2mm]\vdots
        \end{array}
        }& \varrho(C^\ast {\color{zphys}{Z}_{\ipq}})\\
    \end{pNiceArray}=
    \begin{pNiceArray}{c|c|c}[first-row,first-col]
        & \ep
        & \Block[color=xlog, draw=white, line-width=0pt, rounded-corners]{1-1}{\bm{\bar{X}}} 
        & \Block[color=zlog, draw=white, line-width=0pt, rounded-corners]{1-1}{\bm{\bar{Z}}} \\
        \Block{1-1}{\text{cp}}&1&\eta_{\ipq}&\eta_{\ipqhat}\\\hline
        \Block[color=xphys, draw=white, line-width=0pt, rounded-corners]{1-1}{\begin{array}{c}
              \vdots\\[0mm]{X}_{\ipq}\\[-2mm]\vdots
        \end{array}}&\ep_{\ipq}&\bmlogicalxi[\ipq]&\bmlogicalzeta[\ipq]\\\hline
        \Block[color=zphys, draw=white, line-width=0pt, rounded-corners]{1-1}{
        \begin{array}{c}
              \vdots\\[0mm]{Z}_{\ipq}\\[-2mm]\vdots
        \end{array}
        }& \ep_{\ipqhat}&\bmlogicalxi[\ipqhat]&\bmlogicalzeta[\ipqhat]\\
    \end{pNiceArray},
    \end{split}\end{equation}
    where $\ipqhat=\ipq+\npq$ and $\bm{\eta}=(\eta_{1}\dotsc\eta_{\npq}\vert\eta_{\hat{1}}\dotsc\eta_{\hat{\npq}})$ is the Clifford phase vector of $C$.
\end{definition}

\begin{remark}
    Note that the combined flow tableau $F_c(C)$ can be used for the calculation of the pullback of a physical Pauli operator through the Clifford circuit $C$, but the Clifford phases $\eta_{\ipq}$ have to be omitted in order to get the correct result $2s\circledast\varrho(P)\cdot F(C)$ of Prop.~\ref{prop:representation_pullback_flow}.
\end{remark}

\begin{corollary}\label{cor:combined_tableau_permutation}
    Given a Clifford unitary $C\in\Cl$ and its combined flow tableau $F_c(C)$ the rows $\varrho(C_\ast \bar{g}_{\ipq})=(\ep_j'\vert\bmphysicalxi_j\vert\bmphysicalzeta_j)$ of the full Clifford tableau can be read off from the combined flow tableau column-wise after swapping the columns $\logicalpauliop{}{\ipq}{}$ with $\logicalpauliop{}{}{\ipq}$ and the rows $\physicalpauliop{}{\ipq}{}$ with $\physicalpauliop{}{}{\ipq}$. 
    
    In detail, this swapping means looking up $\bar{g}_{\ipq}$ in the gray labels at the end of the columns and then reading off this column in the order given by the gray Clifford labels at the end of the rows, i.e., first the Clifford phase in the top row, then the gray labeled $\textcolor{clgrey}{X_{\ipq}}$ rows in the bottom block followed by the gray labeled $\textcolor{clgrey}{Z_{\ipq}}$ rows in the middle block:
    \begin{equation}
        F_c(C)=    \begin{pNiceArray}{c|c|c}[first-row,first-col,last-row,last-col]
        & \ep
        & \Block[color=xlog, draw=white, line-width=0pt, rounded-corners]{1-1}{\bm{\bar{X}}} 
        & \Block[color=zlog, draw=white, line-width=0pt, rounded-corners]{1-1}{\bm{\bar{Z}}} & \\
        \Block{1-1}{\text{cp}}&1&\eta_{\ipq}&\eta_{\ipqhat}&\Block[color=clgrey]{1-1}{\eta}\\\hline
        \Block[color=xphys, draw=white, line-width=0pt, rounded-corners]{1-1}{\begin{array}{c}
              \vdots\\[0mm]{X}_{\ipq}\\[-2mm]\vdots
    \end{array}}&\ep_{\ipq}&\bmlogicalxi[\ipq]&\bmlogicalzeta[\ipq]&\Block[color=clgrey]{1-1}{\begin{array}{c}
              \vdots\\[0mm]{Z}_{\ipq}\\[-2mm]\vdots
        \end{array}}\\\hline
        \Block[color=zphys, draw=white, line-width=0pt, rounded-corners]{1-1}{
        \begin{array}{c}
              \vdots\\[0mm]{Z}_{\ipq}\\[-2mm]\vdots
        \end{array}
        }& \ep_{\ipqhat}&\bmlogicalxi[\ipqhat]&\bmlogicalzeta[\ipqhat]&\Block[color=clgrey]{1-1}{\begin{array}{c}
              \vdots\\[0mm]{X}_{\ipq}\\[-2mm]\vdots
        \end{array}}\\
        &\Block[color=clgrey]{1-1}{\text{fp}}&\Block[color=clgrey]{1-1}{\bar{Z}_{\ipq}}&\Block[color=clgrey]{1-1}{\bar{X}_{\ipq}}&
    \end{pNiceArray}
    \end{equation}
\end{corollary}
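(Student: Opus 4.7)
The plan is to derive the statement from two already-proved facts: Cor.~\ref{cor:tableau_inverse_clifford}, which identifies the Clifford tableau with the flow tableau of the inverse circuit, $T(C)=F(C^\dagger)$, together with the explicit symplectic-inverse formula of Prop.~\ref{prop:symplectic_matrices}. No new algebraic machinery is required; the whole claim is essentially a relabeling bookkeeping.

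First I would note that $F$ is a group homomorphism (Prop.~\ref{prop:concatenation_flow}), so the phaseless parts satisfy $\tilde T(C)=\tilde F(C^\dagger)=\tilde F(C)^{-1}$. Writing $\tilde F(C)=\bigl(\begin{smallmatrix}A&B\\C&D\end{smallmatrix}\bigr)$ in the canonical $X$-$Z$-row by $\bar X$-$\bar Z$-column block decomposition of Rmk.~\ref{rmk:strup_block_form}, Prop.~\ref{prop:symplectic_matrices} yields $\tilde T(C)=\bigl(\begin{smallmatrix}D^T&B^T\\C^T&A^T\end{smallmatrix}\bigr)$. Then I would unpack this block identity entry by entry: the $\bar X_i$-row of $\tilde T(C)$ carries $D_{k,i}$ in its $\bar X_k$-positions and $B_{k,i}$ in its $\bar Z_k$-positions, whereas in $\tilde F(C)$ the entry $D_{k,i}$ sits at row $Z_k$, column $\bar Z_i$ and $B_{k,i}$ sits at row $X_k$, column $\bar Z_i$. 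Thus the $\bar X_i$-row of $\tilde T(C)$ is obtained by reading the $\bar Z_i$-column of $\tilde F(C)$ top-to-bottom and then interchanging its upper $X$-row half with its lower $Z$-row half; analogously the $\bar Z_i$-row of $\tilde T(C)$ reads off the $\bar X_i$-column of $\tilde F(C)$ after the same block swap.

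These two reversals are precisely the column swap $\bar X_j\leftrightarrow \bar Z_j$ and the row swap $X_j\leftrightarrow Z_j$ encoded by the gray labels of the diagram. What remains is the phase column of $T(C)$, which by Def.~\ref{def:combined_tableau]} is stored verbatim in the cp row of $F_c(C)$, so the same gray relabeling automatically places each Clifford phase at the leading entry of its reconstructed row. The main obstacle I foresee is purely clerical: keeping the four blocks $A,B,C,D$ and their transposes straight while matching the row- and column-index swaps implicit in the gray-label scheme, with particular care that the cp row is interpreted consistently with the column swap of the data block.
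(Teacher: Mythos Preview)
Your proposal is correct and follows essentially the same route as the paper: both arguments rest on $\tilde T(C)=\tilde F(C)^{-1}$ together with the symplectic-inverse block formula of Prop.~\ref{prop:symplectic_matrices}, and then bookkeeping to recognize the resulting block permutation as the simultaneous $X\leftrightarrow Z$ row and $\bar X\leftrightarrow\bar Z$ column swap, with the Clifford phases supplied directly by the cp row. The only cosmetic difference is that the paper packages the inverse computation through Prop.~\ref{prop:inverse_flow_tableau} and then writes out $T(C)^T$ explicitly to make the column reading transparent, whereas you invoke Cor.~\ref{cor:tableau_inverse_clifford} and Prop.~\ref{prop:symplectic_matrices} directly and do the entry-by-entry matching by hand.
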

\begin{proof}
    This is just a reformulation of the fact from Prop.~\ref{prop:inverse_flow_tableau} that for the block form 
    \[
    F(C)=\left(\begin{array}{c|c}
             1&\bm{0}\\\hline
             &\\[-11pt]
             \bm{\ep}&\tilde F
        \end{array}\right)=\left(\begin{array}{c|c|c}
             1&\multicolumn{2}{c}{\bm{0}}\\\hline
             \multirow{2}{7pt}{$\bm{\ep}$}&\textcolor{xlog}{A}&\textcolor{zlog}{B}  \\\cline{2-3}
             &\textcolor{xlog}{C}&\textcolor{zlog}{D}
        \end{array}\right)
        \]
    of the flow tableau we get the block form
    \[
        T(C)=F(C^\dagger)= \left(\begin{array}{c|c|c}
             1&\multicolumn{2}{c}{\bm{0}}\\\hline
             \multirow{2}{7pt}{$\bm{\ep}'$}&&\\[-11pt]
             &\textcolor{zlog}{D}^T&\textcolor{zlog}{B}^T  \\\cline{2-3}
             &&\\[-11pt]
             &\textcolor{xlog}{C}^T&\textcolor{xlog}{A}^T 
        \end{array}\right)
    \]
    of the corresponding Clifford tableau. The transposed Clifford tableau is then
    \begin{align*}
        T(C)^T&=\left(\begin{array}{c|c|c}
             1&\dots\ep'_{\ipq}\dots&\dots\ep'_{\ipqhat}\dots\\\hline
             \multirow{2}{7pt}{$\bm{0}$}&&\\[-11pt]
             &\dots\bmphysicalxi[\ipq]^T\dots&\dots\bmphysicalxi[\ipqhat]^T\dots \\\cline{2-3}
             &&\\[-11pt]
             &\dots\bmphysicalzeta[\ipq]^T\dots&\dots\bmphysicalzeta[\ipqhat]^T\dots
        \end{array}\right)\\
        &=\left(\begin{array}{c|c|c}
             1&{\bm{\ep'_X}}^T&{\bm{\ep'_Z}}^T\\\hline
             \multirow{2}{7pt}{$\bm{0}$}&&\\[-11pt]
             &\textcolor{zlog}{D}&\textcolor{xlog}{C}  \\\cline{2-3}
             &&\\[-11pt]
             &\textcolor{zlog}{B}&\textcolor{xlog}{A}
        \end{array}\right).
    \end{align*}
    Swapping the blocks of $T(C)^T$ and replacing $\bm{\ep'}^T$ by the Clifford phase vector $\bm{\eta}$ yields
    \begin{align*}
    F_c(C)&=\left(\begin{array}{c|c|c}
             1&\bm{\eta_X}&\bm{\eta_Z}\\\hline
             \multirow{2}{7pt}{$\bm{\ep}$}&\textcolor{xlog}{A}&\textcolor{zlog}{B}  \\\cline{2-3}
             &\textcolor{xlog}{C}&\textcolor{zlog}{D} 
        \end{array}\right)\\
        &=\left(\begin{array}{c|c|c}
             1&\dots\eta_{\ipq}\dots&\dots\eta_{\ipqhat}\dots\\\hline
             \multirow{2}{7pt}{$\bm{\ep}$}&&\\[-11pt]
             &\dots\bmphysicalzeta[\ipqhat]^T\dots&\dots\bmphysicalzeta[\ipq]^T\dots \\\cline{2-3}
             &&\\[-11pt]
             &\dots\bmphysicalxi[\ipqhat]^T\dots&\dots\bmphysicalxi[\ipq]^T\dots
        \end{array}\right)
    \end{align*}
    as claimed.
\end{proof}

\begin{remark}\label{rmk:pushforward_logicalp_via_combinedtableau}
    This means that we can calculate the pushforward of a logical $\logicalp$, i.e., the product $\varrho(\logicalp)\circledast T(C)$ from Prop.~\ref{prop:representation_pushforward_Clifford}, with the help of $F_c(C)$ as a $\circledast$-``linear combination'' of the columns of $F_c(C)$ with omitted $\bm{\ep}$. But because of the fact that the operation $\circledast$ is non-commutative we have to be very careful about the correct ordering of the terms. Intuitively, it is the product $F_c(C)\circledast\varrho(\logicalp)^T$, where the flow phases are omitted and the order of the $\circledast$-sum is (front, back, middle).
\end{remark}

We already know from Prop.~\ref{prop:Flow_iteratively} and \ref{prop:Clifford_iteratively} how to update the flow tableau and the Clifford tableau when an elementary Clifford gate gets appended. The combination of the operations needed for the flow tableau part of the combined flow tableau $F_c(C)$ with the operations needed to update the Clifford phases, adapted to the permutation of block matrices as in Cor.~\ref{cor:combined_tableau_permutation} gives the following

\begin{corollary}
    Let $C'\in\{\CNOT[\icq,\itq], H_{\itq}, S_{\itq}\}$ be an elementary Clifford gate and $C\in\Cl$ be an arbitrary Clifford unitary with combined flow tableau 
    \begin{equation}
        F_c(C)=\left(\begin{array}{c|c}
              1&\bm{\eta} \\\hline
              &\\[-10pt]
              \multirow{2}{7pt}{$\bm{\ep}$}&\tilde{\bm{f_{\ipq}}}\\\cline{2-2}
              &\\[-10pt]
              & \bm{\tilde{f_{\ipqhat}}}
        \end{array}\right)
    \end{equation}
    Then the combined flow tableau $F(C'C)$ after appending $C'$ to $C$ is 
    \begin{equation}
        F_c(C'C)=F(C'C)\oplus_4 \mathrm{cp}(C'C),
    \end{equation}
    where $F(C'C)$ is the flow tableau of $C'C$ and the correction term of the Clifford phase $\mathrm{cp}(C'C)$ is $0$ if $C'=\CNOT[\icq,\itq]$,
    \begin{equation}
        \mathrm{cp}(C'C)=
            \left(\begin{array}{c|c}
                 0&2\bm{\tilde{f_{\itq}}}\odot\bm{\tilde{f_{\itqhat}}}\\\hline&\\[-11pt]
                 \bm{0^T}& 0
            \end{array}\right),
    \end{equation}
    if $C'=H_{\itq}$, where $\odot$ is the element-wise product  and 
    \begin{equation}
        \mathrm{cp}(C'C)=
                    \left(\begin{array}{c|c}
                 0&\bm{\tilde{f_{\itqhat}}}\\\hline&\\[-11pt]
                 \bm{0^T}& 0
            \end{array}\right),
    \end{equation}
    if $C'=S_{\itq}$. 
\end{corollary}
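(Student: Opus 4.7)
The plan is to prove the corollary by splitting $F_c(C'C)$ into its flow-tableau block and its Clifford-phase (cp) row and handling them independently. The flow-tableau block is already characterized by Prop.~\ref{prop:Flow_iteratively} and Cor.~\ref{cor:Flow_append_iterative}: the row operations for $\CNOT[\icq,\itq]$, $H_\itq$, and $S_\itq$ act on the lower $2\npq$ rows of $F_c$ exactly as on the standalone flow tableau $F$. This accounts for the $F(C'C)$ summand in the stated equation, so the remaining work is to pin down the correction $\mathrm{cp}(C'C)$ on the cp row.

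For the cp row, I would use the Clifford-tableau concatenation formula: by Prop.~\ref{prop:concatenation_clifford} and Cor.~\ref{cor:matrixproduct_nice_zeroth_row}, the new $\ipq$-th Clifford phase is
$\eta_\ipq^{\mathrm{new}} = \sum_{k} T(C)_{\ipq,k}\,\eta'_k \oplus_4 2 s_\ipq$,
where $\eta'_k$ is the phase of row $k$ of $T(C')$ and $s_\ipq$ is the sign correction produced by $\bm{t_\ipq}\circledast T(C')$. I would then plug in $T(C')$ from Cor.~\ref{cor:tableaus_elementary_cliffords} for each elementary gate and reuse the sign-correction analysis inside the proof of Prop.~\ref{prop:Clifford_iteratively} to read off $s_\ipq$.

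The case split then runs cleanly. For $C' = \CNOT[\icq,\itq]$, both $\eta'_k$ (for $k>0$) and $s_\ipq$ vanish, so $\eta_\ipq^{\mathrm{new}} = \eta_\ipq$ and $\mathrm{cp}=0$. For $C' = S_\itq$ the sign correction vanishes, but the phase column $(1,\bm{e_\itq}^T,\bm{0}^T)^T$ of $T(S_\itq)$ contributes $T(C)_{\ipq,\itq}$. For $C' = H_\itq$ the phase column of $T(H_\itq)$ is trivial, but the surviving sign correction $s_\ipq = x_{\ipq,\itq}\,z_{\ipq,\itq}$ contributes $2 x_{\ipq,\itq}\,z_{\ipq,\itq}$. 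Thus the correction lives purely in the cp row and depends only on a handful of entries of $T(C)$, with the $2\npq\times(2\npq+1)$ zero block below the cp row inherited from the fact that $F$-rows themselves receive no extra cp-style correction.

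Finally, to express these $T(C)$-entries through the flow-tableau entries actually stored in $F_c(C)$ -- so that the update stays internal to the combined tableau -- I would invoke Prop.~\ref{prop:inverse_flow_tableau}, which says $T(C) = F(C)^{-1}$ via the symplectic block-transposition of Prop.~\ref{prop:symplectic_matrices}. Rewriting $x_{\ipq,\itq}$ and $z_{\ipq,\itq}$ through this block inverse produces exactly entries of rows $\itq$ and $\itqhat$ of $F(C)$, matching the $\bm{\tilde{f_{\itqhat}}}$ (respectively $\bm{\tilde{f_{\itq}}}\odot\bm{\tilde{f_{\itqhat}}}$) correction terms in the statement. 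The main obstacle is precisely this translation: because the block-transposition swaps the $X$ and $Z$ indices, I must carefully track which slot of the cp row's $(\bar X \mid \bar Z)$ layout each entry of $F(C)$'s row lands in, and this bookkeeping is what simultaneously verifies the $\oplus_4$-arithmetic and pins down the exact form of $\mathrm{cp}(C'C)$.
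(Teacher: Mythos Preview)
Your proposal is correct and follows essentially the same route as the paper. The paper does not spell out a proof but states that the corollary follows by combining Prop.~\ref{prop:Flow_iteratively} (flow-tableau block), Prop.~\ref{prop:Clifford_iteratively} (Clifford-phase update), and Cor.~\ref{cor:combined_tableau_permutation} (translation of $T(C)$-entries into $F(C)$-entries via the symplectic block transposition); your plan reproduces exactly these three ingredients, with Cor.~\ref{cor:combined_tableau_permutation} unpacked into the underlying Prop.~\ref{prop:inverse_flow_tableau}/Prop.~\ref{prop:symplectic_matrices}, and you correctly flag the $X$/$Z$ index swap in that translation as the only nontrivial bookkeeping step.
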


\begin{remark}\label{rmk:combined_tableau_matrix_operations}
    These three elementary Clifford gate actions can be pictured on the combined flow tableau as the following row operations. 
    \begin{description}
        \item[$\CNOT$] In the $\color{xphys}{X}$ block $\circledast$-add the $\itq^{\text{th}}$ row to the $\icq^{\text{th}}$ row and in the $\color{zphys}{Z}$ block $\circledast$-add the $\icqhat^{\text{th}}$ row to the $\itqhat^{\text{th}}$ row. All other rows stay unchanged.
        \begin{equation}
        \begin{pNiceArray}{c|c|c}[first-row,first-col,last-row,last-col]
        & \ep
        & \Block[color=xlog, draw=white, line-width=0pt, rounded-corners]{1-1}{{\bar{X}_{\ipq}}} 
        & \Block[color=zlog, draw=white, line-width=0pt, rounded-corners]{1-1}{{\bar{Z}_{\ipq}}} & \\
        \Block{1-1}{\text{cp}}&1&\eta_{\ipq}&\eta_{\ipqhat}&\Block[color=clgrey]{1-1}{\begin{array}{cc}
             \eta & \phantom{\Big\rceil}
        \end{array}}\\\hline
        \Block[color=xphys, draw=white, line-width=0pt, rounded-corners]{1-1}{{X}_{\icq}}&\ep_{\icq}&\bmlogicalxi[\icq]&\bmlogicalzeta[\icq]&\Block{2-1}{\begin{array}{cc}
             \color{clgrey}{{Z}_{\icq}}&  \,\multirow{2}{5pt}{$\Big\rceil \circledast$}\\
             \color{clgrey}{{Z}_{\itq}}& 
        \end{array}}\\
        \Block[color=xphys, draw=white, line-width=0pt, rounded-corners]{1-1}{{X}_{\itq}}&\ep_{\itq}&\bmlogicalxi[\itq]&\bmlogicalzeta[\itq]&\\\hline
        \Block[color=zphys, draw=white, line-width=0pt, rounded-corners]{1-1}{{Z}_{\icq}}& \ep_{\icqhat}&\bmlogicalxi[\icqhat]&\bmlogicalzeta[\icqhat]&\Block{2-1}{\begin{array}{cc}
             \color{clgrey}{{X}_{\icq}}&\multirow{2}{5pt}{$\Big\rfloor \circledast$}\\
             \color{clgrey}{{X}_{\itq}}& 
        \end{array}}\\
        \Block[color=zphys, draw=white, line-width=0pt, rounded-corners]{1-1}{{Z}_{\itq}}& \ep_{\itqhat}&\bmlogicalxi[\itqhat]&\bmlogicalzeta[\itqhat]&\\
        &\Block[color=clgrey]{1-1}{\text{fp}}&\Block[color=clgrey]{1-1}{\bar{Z}_{\ipq}}&\Block[color=clgrey]{1-1}{\bar{X}_{\ipq}}&
    \end{pNiceArray}
    \end{equation}
    \item[$H_{\itq}$] Swap the $\itq^{\text{th}}$ row with the $\itqhat^{\text{th}}$ row. Whenever you swap two $1$'s in an $\color{xlog}{\bar{X}_\ipq}$ or a $\color{zlog}{\bar{Z}_\ipq}$ column, $\oplus_4$-add a $2$ to the corresponding Clifford phase entry $\eta_{\ipq}$ or $\eta_{\ipqhat}$. The $1$ in the upper left corner and all other rows stay unchanged.
            \begin{equation}
        \begin{pNiceArray}{c|c|c}[first-row,first-col,last-row,last-col]
        & \ep
        & \Block[color=xlog, draw=white, line-width=0pt, rounded-corners]{1-1}{{\bar{X}_{\ipq}}} 
        & \Block[color=zlog, draw=white, line-width=0pt, rounded-corners]{1-1}{{\bar{Z}_{\ipq}}} & \\
        \Block{1-1}{\text{cp}}&1&\eta_{\ipq}&\eta_{\ipqhat}&
        \Block{3-1}{
        \begin{array}{cl}
            \textcolor{clgrey}{\!\!\eta} & 
                 \scriptstyle \vert \oplus_4 2(0\vert \bmlogicalzeta[\itq]\odot\bmlogicalzeta[\itqhat]\vert\bmlogicalxi[\itq]\odot\bmlogicalxi[\itqhat])\\
             \textcolor{clgrey}{Z_{\itq}} &  \multirow{2}{10pt}{$\big]$}\\
             \textcolor{clgrey}{X_{\itq}}& 
        \end{array}
        }\\\hline
        \Block[color=xphys, draw=white, line-width=0pt, rounded-corners]{1-1}{{X}_{\itq}}&\ep_{\itq}&\bmlogicalxi[\itq]&\bmlogicalzeta[\itq]&\\\hline
        \Block[color=zphys, draw=white, line-width=0pt, rounded-corners]{1-1}{{Z}_{\itq}}& \ep_{\itqhat}&\bmlogicalxi[\itqhat]&\bmlogicalzeta[\itqhat]&\\
        &\Block[color=clgrey]{1-1}{\text{fp}}&\Block[color=clgrey]{1-1}{\bar{Z}_{\ipq}}&\Block[color=clgrey]{1-1}{\bar{X}_{\ipq}}&
    \end{pNiceArray}
    \end{equation}
    \item[$S_{\itq}$] $\circledast$-add the $\itqhat^{\text{th}}$ row onto the $\itq^{\text{th}}$ row, $\circledast$-multiply $3$ to the $\itq^{\text{th}}$ row and $\oplus_4$-add the $\itqhat^{\text{th}}$ row to the Clifford phase row leaving the $1$ in the upper left corner unchanged.
    \begin{equation}
        \begin{pNiceArray}{c|c|c}[first-row,first-col,last-row,last-col]
        & \ep
        & \Block[color=xlog, draw=white, line-width=0pt, rounded-corners]{1-1}{{\bar{X}_{\ipq}}} 
        & \Block[color=zlog, draw=white, line-width=0pt, rounded-corners]{1-1}{{\bar{Z}_{\ipq}}} & \\
         \Block{1-1}{\text{cp}}&1&\eta_{\ipq}&\eta_{\ipqhat}&
        \Block{3-1}{
        \begin{array}{cp{10mm}c}
             \textcolor{clgrey}{\eta}&\makebox{$\scriptstyle\vert\oplus_4(0\vert\bmlogicalzeta[\itqhat]\vert\bmlogicalxi[\itqhat])$}\\
             \textcolor{clgrey}{{Z}_{\itq}}& \multirow{2}{10pt}{$\big\rceil\circledast$}&\vert\circledast3\\
             \textcolor{clgrey}{{X}_{\itq}}&&
        \end{array}
        }\\\hline
        \Block[color=xphys, draw=white, line-width=0pt, rounded-corners]{1-1}{{X}_{\itq}}&\ep_{\itq}&\bmlogicalxi[\itq]&\bmlogicalzeta[\itq]&\\\hline
        \Block[color=zphys, draw=white, line-width=0pt, rounded-corners]{1-1}{{Z}_{\itq}}& \ep_{\itqhat}&\bmlogicalxi[\itqhat]&\bmlogicalzeta[\itqhat]&\\
        &\Block[color=clgrey]{1-1}{\text{fp}}&\Block[color=clgrey]{1-1}{\bar{Z}_{\ipq}}&\Block[color=clgrey]{1-1}{\bar{X}_{\ipq}}&
    \end{pNiceArray}
    \end{equation}
    \end{description}
    The $\circledast$-adding of one row onto another is adding from the right as indicated by the notation of the operation on the right of the ``matrix''. Remember that the $\circledast$-operation is just the standard sum of the rows (i.e., $\oplus_4$-sum in the flow phase column and $\oplus_2$-sum otherwise), where the flow phase gets added an additional $2$ iff the $\color{zlog}{\bar{Z}}$ entries of the first summand anticommute with the $\color{xlog}{\bar{X}}$ entries of the second summand, i.e., iff $\bmlogicalzeta[1]\bmlogicalxi[2]^T=1$.
\end{remark}

The Clifford phase row can easily be included in the flow label formalism. We just add a row above the circuit, in which the information of the Clifford phase row at this time step is stored (omitting the leading $1$). Since the flow phase is written as $i^\ep$, we also write the Clifford phases in this form $i^\eta$, i.e.:
\begin{equation}
    \flowlabel{}{i^{\eta_1}\dots i^{\eta_{\npq}}}{i^{\eta_{\hat{1}}}\dots i^{\eta_{\hat{\npq}}}}
\end{equation}
Note that we \emph{cannot} leave out a trivial phase $i^0$ to make the notation dense, because we would then lose the information to which column the Clifford phases afterwards correspond. We use $+$ and $-$ as shorthands for the even cases $i^0$ and $i^2$, but $i$ and $i^3$ for the odd cases, since writing $-i$ would make the notation ambiguous (see the example below).

\begin{example}\label{exp:heisenberg_combined}
    Coming back to our Exp.~\ref{exp:heisenberg} in Fig.~\ref{fig:exp_heisenberg_combined_flow}, the Clifford phases in the beginning of the circuit are trivial, i.e., $\flowlabel{}{++}{++}$. The $\CNOT$-gates do not change the Clifford phases. For the first Hadamard gate the $\color{xphys}{X}$ and the $\color{zphys}{Z}$ labels are swapped, but since there are no two identical labels swapped, the Clifford phase stays unchanged. On the contrary, the second Hadamard gate swaps two red $\color{zlog}{2}$'s, therefore the corresponding Clifford phase, the second red one without brackets, gets a sign flip. The operation of the $S$-gate on the Clifford phases is multiplication with $i$ for the occurring indices in the $\color{zphys}{Z}$ label, i.e., both red Clifford phases get multiplied with $i$: ${\color{zlog}{+}}\mapsto{\color{zlog}{i}}$ and ${\color{zlog}{-}}\mapsto{\color{zlog}{-i}}$. Since ${\color{zlog}{i-i}}$ would be ambiguous, we write ${\color{zlog}{ii^3}}$. The operation on the Clifford phases of the remaining $S$-gates is analogous.

    With the combined flow labels we can answer both types of questions -- physical to logical and logical to physical. As before the logical meaning of the physical $R_X(-2\alpha)$-gate can be read off from the flow label $\flowlabel{}{12}{12}$, i.e., the physical rotation acts as a logical $-\bar{Y}_1\bar{Y}_2$-rotation. On the other hand, if we would want to know the physical meaning of the logical Pauli $\logicalpauliop{}{1}{}$ say after the first $S$-gate, we just read off the corresponding labels taking into account that $X$ and $Z$ are swapped for the Clifford meaning, i.e., we need the Clifford phase $\textcolor{zlog}{i}$ and the $\textcolor{zlog}{1}$'s on the qubits. We first need the $\textcolor{xphys}{X}$-labels, which in the Clifford meaning are below the wires, giving $\physicalpauliop{i}{1}{}$, and afterwards the $\textcolor{zphys}{Z}$-labels above the wires, giving the answer $\physicalpauliop{i}{1}{1,2}$, i.e., the pushforward of the logical Pauli operator $\logicalpauliop{}{1}{}$ to time step $5$, immediately following the first $S$-gate, is the physical Pauli operator $\physicalpauliop{i}{1}{1,2}=Y_1\physicalpauliop{}{}{2}$.
\end{example}

\subsection{Circuits with auxiliary qubits}

Including the Clifford phases is especially valuable if there are auxiliary qubits initialized in stabilizer states in the circuit, since they give rise to stabilizers, e.g., $\textcolor{zlog}{\bar{Z}}$ in case of a $\ket{0}$-initialized auxiliary. As in the above example, we can answer both types of questions physical to logical and vice versa, the latter giving the evolution of the stabilizers. But this gives us also the possibility of checking whether a logical operation is stabilizer preserving. In other words, after pulling back a physical operator to the beginning of the circuit but just after the intialization of the auxiliary qubits, can we further pull back through the initialization to an operator without auxiliary information, as will be described in the next section.

Let us start with fixing some notation. For the \emph{auxiliary labels}, i.e., those corresponding to auxiliary qubits, we choose letters instead of numbers, in order to better differentiate between the two, i.e., $\alpha\in\{a, b, \dotsc\}$. We use a $\bullet$ above the label, if this label corresponds to the logical Pauli operator $\logicalp_{\alpha}$ stabilizing the initialized qubit, and a $\times$ above the label, if this label corresponds to the logical Pauli operator destabilizing the qubit. 

The stabilizer of the state $\ket{0}$ is $\textcolor{zlog}{\bar{Z}_a}$, i.e., $\textcolor{zlog}{\bar{Z}_a}$ acts as the identity $\idmatrix$ on this qubit $\textcolor{zlog}{\bar{Z}_a}\ket{0}=\idmatrix\ket{0}=\ket{0}$, and (one of) its destabilizer(s) is $\textcolor{xlog}{\bar{X}_a}$. Therefore, the label $\flowlabel{}{}{\trivial{a}}$ (``trivial $\textcolor{zlog}{Z}$-$a$'')  corresponds to the identity operator $\idmatrix$, while the label $\flowlabel{}{\violate{a}}{}$ (``violating $\textcolor{xlog}{X}$-$a$'') corresponds to the stabilizer violating action $\textcolor{xlog}{\bar{X}_a}$ and the Clifford phase is trivial $\flowlabel{}{+}{+}$. For $\ket{1}$ we use the fact that $\ket{1}=X\ket{0}=HSSH\ket{0}$ giving the labels shown in the end of this circuit: 

\begin{equation}
\includegraphics{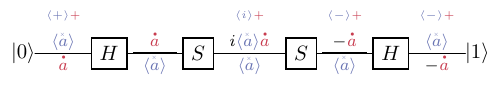}
\end{equation}

This is consistent, since for $\ket{1}$ the stabilizer is $-\textcolor{zlog}{\bar{Z}_a}$ giving the minus in the $Z$-flow phase and the minus in the $X$-Clifford phase, where we have to take into account the swapping of the meaning of $X$ and $Z$ when switching to the Clifford meaning.

The $\ket{\pm}$-cases are exactly the opposite, as shown on the left of the two circuits below, while for $\bar{Y}$-eigenstates $\ket{\pm i}$ we get the labels on the right hand sides:

\begin{equation}\label{eq:init_auxy_lables}
\includegraphics{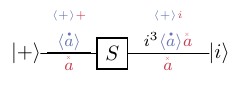}
\quad
\includegraphics{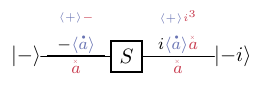}
\end{equation}
Note that the flow labels are less suited for the $Y$-basis because of the choice of presentation $\varrho$ favoring the $X$- and $Z$-basis. Nevertheless, it is of course still possible to see that a physical $Y$-gate acts trivially on the $\ket{i}$-state:

\begin{example}
    Consider the first circuit in Eq.~\eqref{eq:init_auxy_lables} and insert a physical Pauli rotation $R_P$ after the $S$-gate. Since the qubit at that time step is in the state $\ket{i}$, we know that $P=Y$ gives a stabilizer preserving rotation, while $P=X$ and $P=Z$ are stabilizer violating. In the flow labels this is seen as follows: In the cases of $R_X$ resp.\ $R_Z$ rotations we immediately see that their flow labels above resp.\ below the wire contain an $\flowlabel{}{}{\violate{a}}$, indicating that these are stabilizer violating operations. In the case of an $R_Y$ rotation, we know from the flow formalism that its logical meaning is given by the $\circledast$-sum corresponding to $\bar{Y}=i\textcolor{xlog}{\bar{X}}\textcolor{zlog}{\bar{Z}}$, i.e., $i\circledast\flowlabel{i^3}{\trivial{a}}{\violate{a}}\circledast\flowlabel{}{}{\violate{a}}=\flowlabel{}{\trivial{a}}{}$. Hence, in this case the stabilizer violating indices $\flowlabel{}{}{\violate{a}}$ cancel out, indicating that the rotation $R_Y$ is allowed, as expected. Since the logical meaning of the $R_Y$ rotation is given by $\flowlabel{}{\trivial{a}}{}$, which is just the stabilizer $\logicalp_a=\logicalpauliop{}{a}{}$ acting trivially on the $\ket{+}$ initialized qubit, we can just remove this auxiliary label $\flowlabel{}{\trivial{a}}{}$ to get the identity operator, as expected.
\end{example}

This can be generalized to a general Clifford circuit with auxiliaries
\begin{equation}\label{eq:general_Clifford_circuit_auxies}
\includegraphics{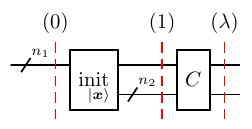}
\end{equation}
initialized in the stabilizer state $\ket{\bm{x}}$, with each entry $x_\alpha\in\{0, 1, \pm, \pm i\}$. Combining the above initialization of auxiliary labels and the workings of the combined flow labels gives the labels at each time step. What can we read off of the combined flow labels in the physical to logical direction and vice versa?

If we insert a physical single-qubit Pauli rotation $R_P$ at time step $(\itime)$ into the circuit of Eq.~\eqref{eq:general_Clifford_circuit_auxies} we can read off or calculate the \emph{flow $P$-label} by: 
\begin{description}
    \item[$P=\textcolor{xphys}{X}$] read off the $\textcolor{xphys}{X}$ label above the wire
    \item[$P=\textcolor{zphys}{Z}$] read off the $\textcolor{zphys}{Z}$ label below the wire
    \item[$P=Y$] calculate the $\circledast$-sum of the phase $i$, the $\textcolor{xphys}{X}$ label and the $\textcolor{zphys}{Z}$ label
    \item[general] write $P=\physicalpauliop{i^\eta}{ }{}^{\textcolor{xphys}{x}}\physicalpauliop{}{}{ }^{\textcolor{zphys}{z}}$ with $\textcolor{xphys}{x}, \textcolor{zphys}{z}\in\mathbb{F}_2$ and calculate the corresponding $\circledast$-sum of the phase $i^\eta$, if $\textcolor{xphys}{x}=1$ the $\textcolor{xphys}{X}$ label and if $\textcolor{zphys}{z}=1$ the $\textcolor{zphys}{Z}$ label (in this ordering)
\end{description}

The flow $P$-label $\logicalflowp$ contains auxiliary and non-auxiliary indices ($\ipq$ resp.\ $\alpha$) in general and gives the induced logical rotation $R_{\logicalp}$ with 
$\logicalp=i^\ep\textcolor{xlog}{\bm{\bar{X}}^{\bmlogicalxi}}\textcolor{zlog}{\bm{\bar{Z}}^{\bmlogicalzeta}}$ at time step (1). It is now easy to see, whether $\logicalp$ and hence the rotation $R_{\logicalp}$ pulls back through the initialization gate: For each auxiliary qubit $\alpha$, whose labels $\flowlabel{}{\alpha}{}$ or $\flowlabel{}{}{\alpha}$ occur in $\logicalflowp$, check whether the corresponding Pauli operator $\textcolor{xlog}{\bar{X}_\alpha^{\logicalxi[\alpha]}}\textcolor{zlog}{\bar{Z}_\alpha^{\logicalzeta[\alpha]}}$ commutes with the stabilizing Pauli operator $\logicalp_\alpha$. But this is exactly what is indicated by the $\violate{\alpha}$ indices, therefore, the physical rotation $R_P$ violates the stabilizer iff there are $\violate{\alpha}$ indices in its flow label $\logicalflowp$. If the flow label corresponding to $R_P$ does not contain violating indices $\violate{\alpha}$, then it can be pulled back through the initialization gate: Since all remaining auxiliary labels are trivial indices $\trivial{\alpha}$, they act as $\pm\idmatrix$ on their respective auxiliary qubit and can be removed from the label, while adjusting the phase exponent with $\oplus_42$ in the $-$ case.

We can summarize the above reasoning in the following

\begin{proposition}\label{prop:with_auxies_single_qubit}
    Let $\mathrm{init}_{\ket{\bm{x}}}$ be the gate corresponding to initializing $\npq_2$ auxiliary qubits in the stabilizer state $\ket{\bm{x}}\in\{0,1, +,-\}^{n_2}$, $C$ a Clifford circuit on $\npq=\npq_1+\npq_2$ physical qubits, $R_P$ a single-qubit Pauli rotation on the $\ipq^{\text{th}}$ qubit ($1\leq\ipq\leq\npq$) and $\logicalflowp$ its flow $P$-label.

    Then, the pullback of $R_P$ through the concatenated circuit $\bar{C}=C\cdot\mathrm{init}_{\ket{\bm{x}}}$ exists iff $\logicalxi[\alpha]=0$ for all auxiliaries with $x_{\alpha}\in\{0,1\}$ and $\logicalzeta[\alpha]=0$ for all auxiliaries with $x_{\alpha}=\pm$.

    If it exists, the pullback of $R_P$ through $\bar{C}$ is a rotation w.r.t.\ the Pauli operator corresponding to the flow label $\logicalpaulivec{\ep'}{\bmlogicalxi'}{\bmlogicalzeta'}$, where $\bmlogicalxi'$ and $\bmlogicalzeta'$ are the projections of $\bmlogicalxi$ and $\bmlogicalzeta$ to the first $\npq_{1}$ coordinates and 
    \begin{equation}\begin{split}
        \ep'=\ep&\oplus_4 2\cdot\big\vert\{\alpha\vert (\logicalzeta[\alpha]=1 \text{ and } x_{\alpha}=1)\}\big\vert\\
        &\oplus_4 2\cdot\big\vert\{\alpha\vert (\logicalxi[\alpha]=1 \text{ and } x_{\alpha}=-)\}\big\vert,
    \end{split}\end{equation}
    which is the sign correction for $\ket{1}$ and $\ket{-}$ initialized qubits.
\end{proposition}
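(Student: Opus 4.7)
The plan is to reduce the statement to a statement about the auxiliary-qubit factor of the pulled-back Pauli operator and then read off the conditions from the stabilizer structure of $\ket{\bm{x}}$. First, I would observe that by Prop.~\ref{prop:representation_pullback_flow}, pulling $P$ back through the Clifford part $C$ yields precisely the Pauli operator $\bar{P}_{(1)}$ whose presentation $\logicalflowp$ is the flow $P$-label at time step $(1)$. Thus the only question left is when the rotation $R_{\bar{P}_{(1)}}$ further pulls back through the non-unitary initialization gate $\mathrm{init}_{\ket{\bm{x}}}:\ket{\bar\psi_{(0)}}\mapsto \ket{\bar\psi_{(0)}}\otimes\ket{\bm{x}}$, and what the resulting operator at time $(0)$ is.

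Next, I would split $\bar{P}_{(1)}$ along the bipartition into the $\npq_1$ non-auxiliary and the $\npq_2$ auxiliary qubits, writing it as $\bar{P}_{(1)}=i^{\kappa}\bar{P}_1\otimes\bar{S}_2$ with $\bar{P}_1=\textcolor{xlog}{\bm{\bar{X}}^{\bmlogicalxi'}}\textcolor{zlog}{\bm{\bar{Z}}^{\bmlogicalzeta'}}$ the projections onto the first $\npq_1$ coordinates and $\bar{S}_2=\prod_{\alpha}\textcolor{xlog}{\bar{X}_\alpha^{\xi_\alpha}}\textcolor{zlog}{\bar{Z}_\alpha^{\zeta_\alpha}}$. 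The key algebraic fact is that $(\bar{P}_1\otimes\bar{S}_2)(\ket{\bar\psi_{(0)}}\otimes\ket{\bm{x}}) = \bar{P}_1\ket{\bar\psi_{(0)}}\otimes\bar{S}_2\ket{\bm{x}}$, so the pullback of $R_{\bar{P}_{(1)}}$ through $\mathrm{init}_{\ket{\bm{x}}}$ is well-defined as a rotation on $\npq_1$ qubits if and only if $\bar{S}_2\ket{\bm{x}}=\pm\ket{\bm{x}}$, i.e.\ iff $\bar{S}_2$ is an element of the stabilizer group of $\ket{\bm{x}}$ up to sign.

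Since $\ket{\bm{x}}$ is a product stabilizer state with qubit-wise stabilizers $(-1)^{x_\alpha}\textcolor{zlog}{\bar{Z}_\alpha}$ for $x_\alpha\in\{0,1\}$ and $(-1)^{[x_\alpha=-]}\textcolor{xlog}{\bar{X}_\alpha}$ for $x_\alpha\in\{+,-\}$, the condition $\bar{S}_2\ket{\bm{x}}\in\mathbb{C}\ket{\bm{x}}$ decomposes qubit-wise into the stated conditions $\logicalxi[\alpha]=0$ whenever $x_\alpha\in\{0,1\}$ and $\logicalzeta[\alpha]=0$ whenever $x_\alpha\in\{+,-\}$ -- exactly the $\violate{\alpha}$-free condition on the flow label. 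The equivalence with absence of violating indices $\violate{\alpha}$ follows directly from the definition of the $\trivial{\alpha}$/$\violate{\alpha}$ markers introduced for the auxiliary labels.

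For the sign, when the existence condition holds each remaining auxiliary factor is of the form $\textcolor{zlog}{\bar{Z}_\alpha^{\logicalzeta[\alpha]}}$ (for $x_\alpha\in\{0,1\}$) or $\textcolor{xlog}{\bar{X}_\alpha^{\logicalxi[\alpha]}}$ (for $x_\alpha\in\{+,-\}$), each of which acts on the respective eigenstate as $(-1)^{x_\alpha\logicalzeta[\alpha]}$ resp.\ $(-1)^{[x_\alpha=-]\logicalxi[\alpha]}$. Absorbing all these $\pm 1$ factors into the global phase yields exactly the stated correction $\ep' = \ep \oplus_4 2\cdot|\{\alpha: \logicalzeta[\alpha]=1, x_\alpha=1\}| \oplus_4 2\cdot|\{\alpha: \logicalxi[\alpha]=1, x_\alpha=-\}|$, no further swaps being needed because the remaining auxiliary operators commute with each other and with $\bar{P}_1$ across the tensor factor. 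The main obstacle I anticipate is only bookkeeping: making precise that initialization acts as a partial isometry and that ``pullback exists'' should be interpreted as the operator $\bar{P}_{(1)}$ restricting consistently to the image of $\mathrm{init}_{\ket{\bm{x}}}$, after which the factorization argument above closes the proof.
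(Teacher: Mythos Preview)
Your proposal is correct and follows essentially the same route as the paper: the paper's argument (given in the paragraph immediately preceding the proposition) also pulls $P$ back through $C$ to obtain the flow label at time step $(1)$, then checks qubit-wise whether each auxiliary factor $\textcolor{xlog}{\bar{X}_\alpha^{\logicalxi[\alpha]}}\textcolor{zlog}{\bar{Z}_\alpha^{\logicalzeta[\alpha]}}$ commutes with the stabilizer $\logicalp_\alpha$ (equivalently, your condition $\bar{S}_2\ket{\bm{x}}=\pm\ket{\bm{x}}$), and finally removes the surviving $\trivial{\alpha}$ indices as $\pm\idmatrix$ with the stated $\oplus_4 2$ phase corrections. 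Your explicit tensor-product factorization and the remark on interpreting the pullback through the non-unitary $\mathrm{init}$ as restriction to its image make the bookkeeping slightly more precise than the paper's informal discussion, but the substance is identical.
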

\begin{remark}
    In case of a $\ket{\pm i}$ initialization, replace it by a $\ket{\pm}$ initialization and prepend a $S$-gate to the Clifford circuit. In the following we will skip the $\ket{\pm i}$ initialization.
\end{remark}

This generalizes to Pauli rotations $R_P$ on more than one qubit, since the above reasoning does only depend on the logical meaning of the Pauli operator $P$ at time step (1). Only the computation of this logical meaning as $\varrho(P)\circledast F(C)=2s\circledast\varrho(P)\cdot F(C)$ gets more involved.

\begin{proposition}\label{prop:multiqubitpauli_withauxies}
    In the situation of Prop.~\ref{prop:with_auxies_single_qubit} let $R_P$ be an arbitrary Pauli rotation and $\logicalflowp=2s\circledast\varrho(P)\cdot F(C)$ its flow label (see Prop.~\ref{prop:representation_pullback_flow} for the sign correction term $s$). Then the pullback of $R_P$ through $\bar{C}$ has the same properties as stated in Prop.~\ref{prop:with_auxies_single_qubit}.
\end{proposition}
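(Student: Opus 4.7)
The plan is to reduce this statement to Prop.~\ref{prop:with_auxies_single_qubit} by observing that the pullback through $\mathrm{init}_{\ket{\bm{x}}}$ is determined entirely by the logical Pauli operator $\bar{P}=C^\ast P$ at time step $(1)$, independent of whether the physical rotation $R_P$ acts on one qubit or on many. The only thing that changes in the multi-qubit setting is how we compute the logical representative $\varrho(\bar{P})$; the subsequent stabilizer-preservation analysis is identical.

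First, I would apply Prop.~\ref{prop:representation_pullback_flow} to compute $\varrho(\bar{P})=\varrho(C^\ast P)=\varrho(P)\circledast F(C)=2s\circledast \varrho(P)\cdot F(C)$, yielding the flow label $\logicalflowp$ that represents $\bar{P}$ at time step $(1)$, i.e., right after $\mathrm{init}_{\ket{\bm{x}}}$ but before $C$. This is exactly where the reasoning of Prop.~\ref{prop:with_auxies_single_qubit} begins: from that point on, only the coefficients $\logicalxi[\alpha], \logicalzeta[\alpha]$ on auxiliary indices $\alpha$ and the phase $\ep$ are relevant.

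Next, I would separate $\bar{P}$ into its non-auxiliary and auxiliary parts, $\bar{P}=\bar{P}_{\mathrm{nonaux}}\cdot\bar{P}_{\mathrm{aux}}$. Since $\mathrm{init}_{\ket{\bm{x}}}$ only touches the auxiliary qubits, the pullback of $\bar{P}$ through $\mathrm{init}_{\ket{\bm{x}}}$ exists iff $\bar{P}_{\mathrm{aux}}$ stabilizes $\ket{\bm{x}}$. By Cor.~\ref{cor:proper_product_of_proper} and the standard stabilizer criterion, this is equivalent to $\bar{P}_{\mathrm{aux}}$ commuting with every single-qubit stabilizer $\pm\bar{Z}_\alpha$ (for $x_\alpha\in\{0,1\}$) or $\pm\bar{X}_\alpha$ (for $x_\alpha=\pm$). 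Translating this commutation condition via the symplectic inner product $\omega$ gives precisely $\logicalxi[\alpha]=0$ in the first case and $\logicalzeta[\alpha]=0$ in the second, which is the stated criterion.

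Finally, assuming the criterion holds, every surviving auxiliary factor is of the form $\textcolor{zlog}{\bar{Z}_\alpha}$ (with $x_\alpha\in\{0,1\}$) or $\textcolor{xlog}{\bar{X}_\alpha}$ (with $x_\alpha=\pm$) and acts as $\pm\idmatrix$ on the corresponding qubit of $\ket{\bm{x}}$. Removing these factors from the label restricts $\bmlogicalxi$ and $\bmlogicalzeta$ to their first $\npq_1$ components, while each factor acting with eigenvalue $-1$, i.e., each $\bar{Z}_\alpha$ with $x_\alpha=1$ and each $\bar{X}_\alpha$ with $x_\alpha=-$, contributes a sign flip, summed into $\ep'=\ep\oplus_4 2\cdot(\cdots)$ as given. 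The only subtlety, which is the mild obstacle, is that one must commute the removed auxiliary $\bar{Z}_\alpha$ and $\bar{X}_\alpha$ factors past each other to expose each one acting on its own qubit; but the surviving auxiliary Pauli factors act on pairwise distinct qubits and in the standard form of Eq.~\eqref{eq:standard_form_pauli} all $\bar{X}$'s already precede all $\bar{Z}$'s, so no additional sign corrections arise. This gives exactly the phase update stated in Prop.~\ref{prop:with_auxies_single_qubit}, completing the reduction.
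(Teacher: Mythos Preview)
Your proposal is correct and follows essentially the same route as the paper: the paper states this proposition without a formal proof, relying on the observation in the preceding paragraph that ``the above reasoning does only depend on the logical meaning of the Pauli operator $P$ at time step $(1)$'', with the only change being the more involved computation of $\varrho(P)\circledast F(C)$. Your write-up is simply a more detailed unpacking of that same reduction; the one superfluous ingredient is the reference to Cor.~\ref{cor:proper_product_of_proper}, which is not actually needed for the commutation criterion you invoke.
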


\begin{remark}\label{rmk:combined_flow_tableau_auxies}
    The markers $\bullet$ and $\times$ for the trivial and the stabilizer violating indices can be transferred easily to the flow tableau:
    \begin{equation}\label{eq:combined_flow_tableau_auxies}
\begin{pNiceArray}{c|cIcc|cIcc}[first-row,first-col,last-row,last-col]
        & \phantom{\overset{\scalebox{0.5}{{$\bullet, \pm$}}}{{\bar{X}_{\ipq}}}}\ep
        & \Block[color=xlog, draw=white, line-width=0pt, rounded-corners]{1-1}{\overset{\scalebox{0.5}{\phantom{$\bullet, \pm$}}}{\bar{X}_{\ipq}}}
        & \Block[color=xlog, draw=white, line-width=0pt, rounded-corners]{1-1}{\violate{\bar{X}_{\alpha}}} 
        & \Block[color=xlog, draw=white, line-width=0pt, rounded-corners]{1-1}{\trivialpm{\bar{X}_{\alpha'}}} 
        & \Block[color=zlog, draw=white, line-width=0pt, rounded-corners]{1-1}{\overset{\scalebox{0.5}{\phantom{$\bullet, \pm$}}}{\bar{Z}_{\ipq}}}
        & \Block[color=zlog, draw=white, line-width=0pt, rounded-corners]{1-1}{\trivialpm{\bar{Z}_{\alpha}}}
        & \Block[color=zlog, draw=white, line-width=0pt, rounded-corners]{1-1}{\violate{\bar{Z}_{\alpha'}}}& \\
        \Block{1-1}{\text{cp}}&1&\eta_{\ipq}&\eta_{\alpha}&\eta_{\alpha'}&\eta_{\ipqhat}&\eta_{\hat{\alpha}}&\eta_{\hat{\alpha}'}&\Block[color=clgrey]{1-1}{\eta}\\\hline
        \Block[color=xphys, draw=white, line-width=0pt, rounded-corners]{1-1}{{X}_{i}}&\ep_{i}&\logicalxi[i,\ipq]&\logicalxi[i,\alpha]&\logicalxi[i,\alpha']
            &\logicalzeta[i,\ipqhat]&\logicalzeta[i,\hat\alpha]&\logicalzeta[i,\hat\alpha']&\Block{1-1}{\color{clgrey}{{Z}_{i}}}\\\hline
        \Block[color=zphys, draw=white, line-width=0pt, rounded-corners]{1-1}{{Z}_{i}}& \ep_{\hat\imath}
        &\logicalxi[\hat\imath,\ipq]&\logicalxi[\hat\imath,\alpha]&\logicalxi[\hat\imath,\alpha']&\logicalzeta[\hat\imath,\ipqhat]&\logicalzeta[\hat\imath,\hat{\alpha}]&\logicalzeta[\hat\imath,\hat{\alpha}']&\Block{1-1}{\color{clgrey}{{X}_{i}}}\\
        &\Block[color=clgrey]{1-1}{\phantom{\overset{\scalebox{0.5}{\phantom{$\bullet, \pm$}}}{\phantom{\bar{X}_{\ipq}}}}\text{fp}}&\Block[color=clgrey]{1-1}{\overset{\scalebox{0.5}{\phantom{$\bullet, \pm$}}}{\bar{Z}_{\ipq}}}&\Block[color=clgrey]{1-1}{\trivialpm{\bar{Z}_{\alpha}}}&\Block[color=clgrey]{1-1}{\violate{\bar{Z}_{\alpha'}}}&\Block[color=clgrey]{1-1}{\overset{\scalebox{0.5}{\phantom{$\bullet, \pm$}}}{\bar{X}_{\ipq}}}&\Block[color=clgrey]{1-1}{\violate{\bar{X}_{\alpha}}}&\Block[color=clgrey]{1-1}{\trivialpm{\bar{X}_{\alpha'}}}&
    \end{pNiceArray},
    \end{equation}
    where the columns with index $\ipq$ correspond to non-auxiliary qubits, the columns with index $\alpha$ to $\ket{0}$ or $\ket{1}$ initialized auxiliary qubits and those with index $\alpha'$ correspond to $\ket{\pm}$ initialized auxiliary qubits. The markers $\bullet$ get complemented by a $\pm$ to indicate, whether the action of this operator on the initialized auxiliary is $+\idmatrix$ or $-\idmatrix$.

    Checking whether a physical multi-qubit Pauli rotation $R_P$ is stabilizer violating then reduces to checking for each column with a $\times$, whether the sum of all entries without the phase is $0$. But caution is needed for the calculation of the pullback through the concatenation $\bar{C}$ of the initialization and $C$: If we set all entries of the columns marked with $\bullet$ to $0$, with the argument that these act trivially anyways, we would lose their information for the following calculations. We still need to compute the flow ``vector'' $\logicalflowp=2s\circledast\varrho(P)\cdot F(C)$, where disregarding the label $\trivial{\alpha}$ might give rise to a change of commutativity properties and therefore sign errors. We would also lose the ability to compensate the sign if qubits were initialized in a $-1$-eigenstate of a Pauli $X$ or $Z$ operator. Furthermore, the iterative updates of $F_c(C)$ would become erroneous, since we would ignore labels, which might give rise to sign errors in the phases.

    Of course, if the phases are not needed or definitely trivial, e.g., in case of a $\CNOT$-only circuit, this is not an issue and the $\bullet$ columns may be disregarded, while the $\times$ columns will be needed to check for violating operations. 
\end{remark}

\begin{remark}\label{rmk:run_time_allowed_logical_timestep}
    Of course, if we are only interested in checking, whether $R_P$ is an allowed operation, we do not need the phases of the tableau. We only need to $\oplus_2$-add all rows of the phaseless tableau corresponding to entries of $1$ in the physical Pauli $P$ giving a sum of $h$ summands of length $2\npq$, where $h=h(P)$ is the Hamming weight of $P$. In this sum we have to check, whether the entries corresponding to the $\violate{\alpha}$ labels are all zero, giving $\npq_2$ checks. Therefore, it is enough to calculate a $\oplus_2$-sum of $h$ summands of length $\npq_2$, giving a run time of $\mathcal{O}(h\npq_2)$. If only single-qubit rotations are used, for instance because of the hardware restrictions, $h$ is bounded and the run time reduces to $\mathcal{O}(\npq_2)$.

    On the other hand, if we are also interested in the logical meaning of the physical rotation including the phase, we have to take care of the sign corrections. This gives a run time of $\mathcal{O}(\max(h\npq, h^\omega))$ for the $\circledast$-sum of $h$ summands of length $2\npq+1$. The $\npq_2$ checks for stabilizer violating entries $\violate{\alpha}$, the $\npq_2$ entries $\trivial{\alpha}$, which have to be set to $0$ since they correspond to a trivial action, as well as the sign correction for auxiliaries initialized in $-1$-eigenstates of $\bar{X}$ or $\bar{Z}$ are only linear in $\npq_2$ leaving the overall run time at $\mathcal{O}(\max(h\npq, h^\omega))$, e.g., if $h=\mathcal{O}(\npq)$ we get $\mathcal{O}(\npq^\omega)$, while for bounded Hamming weights $h$ we get $\mathcal{O}(\npq)$. 
\end{remark}

Let us look at a small example with an auxiliary qubit.

\begin{example}\label{exp:heisenberg_with_auxy}
We can introduce an auxiliary qubit in Exp.~\ref{exp:heisenberg} and get the alternative implementation in Fig.~\ref{fig:exp_heisenberg_extended_flow_with_auxy} (see Fig.~\ref{fig:auxiliary_qubits}(b) in the main text). The goal is to find Clifford gates producing the desired labels $\flowlabel{}{12}{}$, $\flowlabel{-}{12}{12}$ and $\flowlabel{}{}{12}$ either as $\physicalpauliop{}{ }{}$, $\physicalpauliop{}{}{ }$ or $Y$ label at some qubit at some time step. Here this is even possible at the same time step, but in general the rotations do not have to be in parallel. The fact that in this example the physical and the logical rotations are with respect to the same type of Pauli operator, e.g., $R_{\bar Y_1\bar Y_2}$ is realized by $R_Y$, is merely a coincidence, in general, as seen in Fig.~\ref{fig:exp_heisenberg_combined_flow} this need not be the case. Here, introducing the auxiliary increases the 2-qubit gate depth of the circuit, but in more complex use-cases the combined flow formalism reveals usages of auxiliary qubits which lead to reduced circuit depth, as for example seen in the QAOA on a ladder \cite{klaver2024}.
\begin{figure*}
    \centering
\includegraphics{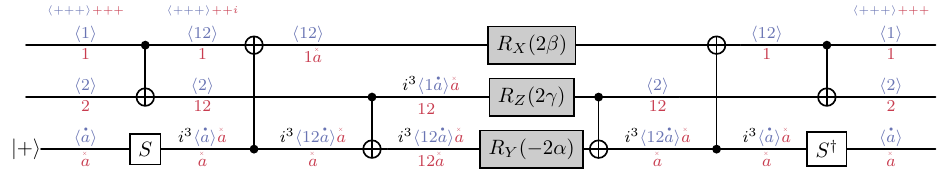}
   \caption{Adding an auxiliary qubit to the 1D-Heisenberg example: The physical rotation $R_X$ is allowed, since there is no $\violate{a}$ in its $X$ label. Its logical meaning is $R_{\bar{X}_1\bar{X}_2}(2\beta)$ since the $X$ label is $\textcolor{xlog}{\yourfavouriteleftparanthesis 12\yourfavouriterightparanthesis}$. Analogously, $R_Z$ is allowed and corresponds to a logical $R_{\bar{Z}_1\bar{Z}_2}(2\gamma)$ rotation. For the physical $R_Y$ rotation we have to calculate the $Y$ label as $i\circledast{i^3}\textcolor{xlog}{\yourfavouriteleftparanthesis12\trivial{a}\yourfavouriterightparanthesis}\textcolor{zlog}{\violate{a}}\circledast\textcolor{zlog}{12\violate{a}}=\textcolor{xlog}{\yourfavouriteleftparanthesis 12\trivial{a}\yourfavouriterightparanthesis}\textcolor{zlog}{12}\mapsto\textcolor{xlog}{\yourfavouriteleftparanthesis 12\yourfavouriterightparanthesis}\textcolor{zlog}{12}$. Since it does not contain an $\violate{a}$, the rotation is allowed and its logical meaning is $R_{-\bar{Y}_1\bar{Y}_2}(-2\alpha)=R_{\bar{Y}_1\bar{Y}_2}(2\alpha)$ as desired.
   } \label{fig:exp_heisenberg_extended_flow_with_auxy}
\end{figure*}
\end{example}

There is another way of checking, whether the physical Pauli rotation $R_P$ is stabilizer violating, which uses the logical to physical direction, i.e., the Clifford meaning of the combined flow labels. We can compute the pushforward of the stabilizers corresponding to the initialization state $\ket{\bm{x}}$ to the time step of $R_P$ and then check, whether $P$ commutes with the pushforwards of all the stabilizers. In order to get the Clifford meaning out of the combined flow labels, we have to swap $X$ and $Z$, i.e., use the gray Clifford meanings in Eq.~\eqref{eq:combined_flow_tableau_auxies}. Note that the markings as $\bullet$ and $\times$ do also swap their places. This is the case, since the property of being a (de-)stabilizer of the initial state is a property of the Pauli operator and therefore, stays with the operator, not with the column. The swapping of $X$ and $Z$ for the Clifford meaning is just a way to easily deduce the entries of the Clifford tableau from the flow tableau and does not change the action of the Pauli operator on its auxiliary qubit. 

As proved in the section on the combined flow tableau, we can read off the pushforward $C_\ast\logicalp_{\alpha}$ of a stabilizer $\logicalp_{\alpha}$ to some time step from the combined flow tableau $F_c(C)$ by finding the stabilizer in the gray Clifford meanings in the footer row, read off this column in the order $\textcolor{clgrey}{\eta}, \textcolor{clgrey}{\bm{X}}, \textcolor{clgrey}{\bm{Z}}$ (upper, lower, middle), as seen in the initialization of the Clifford phases in Eq.~\eqref{eq:init_auxy_lables}. In the flow labels that means finding the corresponding $\violate{\alpha}$ label ($\flowlabel{}{\violate{\alpha}}{}$ resp.\ $\flowlabel{}{}{\violate{\alpha}}$) in the time step, read off the corresponding Clifford phase, take the physical $\physicalpauliop{}{\ipq}{}$ operators for every qubit that has the $\violate{\alpha}$ label below the wire and then take the physical $\physicalpauliop{}{}{\ipq}$ operators for every qubit that has the $\violate{\alpha}$ label above the wire. The rotation $R_P$ is an allowed operation iff $P$ commutes with these physical meanings of $\logicalp_{\alpha}$ for every auxiliary qubit $\alpha$. This gives rise to exactly the same $\npq_2$ checks on $\oplus_2$-sums with $h=h(P)$ summands as explained in Rmk.~\ref{rmk:run_time_allowed_logical_timestep}, still giving a run time of $\mathcal{O}(h\npq_2)$. Since the auxiliary stabilizer $\logicalp_{\alpha}\in\{\pm\bar{X}_\alpha, \pm\bar{Z}_\alpha\}$ has Hamming weight $1$, its physical meaning can be simply read off of the appropriate column of the tableau (upper, lower, middle). 

This can be generalized to additional stabilizers, which were already present before the $\mathrm{init}$-gate by using Prop.~\ref{rmk:pushforward_logicalp_via_combinedtableau}. In the flow labels this means the following: 
\begin{itemize}
    \item Write the stabilizer as $\logicalp=i^\ep\textcolor{xlog}{\bm{\bar{X}}^{\bm{\xi}}}\textcolor{zlog}{\bm{\bar{Z}}^{\bm{\zeta}}}$.
    \item For every $\logicalpauliop{}{\ipq}{}$ appearing in $\logicalp$ find the corresponding \emph{flow} labels $\flowlabel{}{}{\ipq}$ at the time step of interest and read off its physical meaning as the product of the Clifford phase in column $\flowlabel{}{}{\ipq}$, an $\physicalpauliop{}{\ipq'}{}$ if there is a $\flowlabel{}{}{\ipq}$ below the ${j'}^{\text{th}}$ wire and a $\physicalpauliop{}{}{\ipq'}$ if there is a $\flowlabel{}{}{\ipq}$ above the wire.
    \item For every $\logicalpauliop{}{}{\ipq}$ appearing in $\logicalp$ find the corresponding \emph{flow} labels $\flowlabel{}{\ipq}{}$ and read off its physical meaning as above (Clifford phase, below wire, above wire).
    \item Calculate the product of the phase $i^\ep$ of $\logicalp$, the physical meanings of the $\logicalpauliop{}{\ipq}{}$ and those of the $\logicalpauliop{}{}{\ipq}$.
\end{itemize} 

Let us conclude this section by comparing the two different ways of checking whether an arbitrary stabilizer $\logicalp\in\bar\Pauli^{(1)}$ at time step $(1)$ commutes with an arbitrary multi-qubit physical Pauli operator $Q\in\Pauli^{(\itime)}$ at time step $(\itime)$ for the combined flow tableau $F_c(C)$:

Write the presentation $\varrho(Q)$ of the physical Pauli operator $Q$ as a column vector in front of the tableau, its $1$-entries indicating which rows of the tableau are of interest. Write the $\logicalpauliop{}{ }{}$-$\logicalpauliop{}{}{ }$-swapped presentation $\varrho(\logicalp)\cdot\Omega$ of the stabilizer $\logicalp$ as a row vector below the tableau, its $1$-entries indicating which columns of the tableau are of interest. We have to use the Clifford meaning of the tableau, hence the swapping of the middle and last part of $\varrho(\logicalp)$ with the help of $\Omega=\begin{psmallmatrix}1&0&0\\0&0&\idmatrix\\0&\idmatrix&0\end{psmallmatrix}$.
If we choose to use the time step $(1)$ for checking whether the physical $Q$ is compatible with the stabilizer $\logicalp$, we have to calculate the pullback $C^\ast Q$ as the $\circledast$-sum of the rows filtered out by $\varrho(Q)^T$, giving the row ``vector'' $(\ep'\vert\bmlogicalxi'\vert\bmlogicalzeta')$, whose symplectic product with $\varrho(\logicalp)$ can be easily calculated as the inner product with the already swapped row ``vector'' $\varrho(\logicalp)\cdot\Omega$. This corresponds to first calculating the $\oplus_2$-sums of all columns of interest filtered out by the swapped $\varrho(\logicalp)\Omega$, i.e., summing in the vertical direction, and afterwards calculating the $\oplus_2$-sum of the results in the horizontal direction. Alternatively, if we choose to use the time step $(\itime)$ for checking the commutativity, we have to calculate the pushforward $C_\ast\logicalp$ as the $\circledast$-sum of the columns filtered out by the swapped $\varrho(\logicalp)\Omega$, giving the swapped column ``vector'' $(\eta'\vert\bmphysicalzeta'\vert\bmphysicalxi')^T$, whose inner product with $\varrho(Q)$ gives the symplectic product of interest. This corresponds to first calculating the $\oplus_2$-sum of all rows of interest, i.e., summing in the horizontal direction, and afterwards calculating the $\oplus_2$-sum of the results in the vertical direction. Since these two different checks are just calculations in the finite field $\mathbb{F}_2$ the result is just the sum of all entries of $F_c(C)$ at the intersections of the interesting rows and columns and the physical Pauli $Q$ is compatible with the logical Pauli $\logicalp$ iff this sum is $0$. In the labels this corresponds to counting the corresponding labels of interest, i.e., we have to consider all qubit labels at time step $(\itime)$ that are present in the physical Pauli $Q$ and count the number of all interesting labels as indicated by the $X$/$Z$-swapped labels appearing in $\logicalp$. The physical $Q$ is compatible with the logical $\logicalp$ iff this number is even. 

If only the compatibility is to be checked, the two ways are equivalent. But each way does give additional information corresponding to the two questions posed at the beginning of these appendices, i.e., the logical to physical question and the physical to logical question. While in principle the Clifford phases, if needed, could always be computed from the flow phases via the inverse tableau, the additional memory requirements and the additional computations in each update step are very low compared to the computation of the inverse phases. Therefore, the combined flow labels and the combined flow tableau can answer questions in the physical to logical as well as in the logical to physical direction, without the need of calculating the expensive inverse of a tableau. 

%

\end{document}